\renewcommand\paragraph{\@startsection{paragraph}{4}{\z@}%
                                    {1ex \@plus1ex \@minus.2ex}%
                                    {-1em}%
                                    {\normalfont\normalsize\bfseries}}
\newcommand{\supp}{\mathrm{supp}}
\newtheorem{theorem}{Theorem}
\newtheorem{proposition}{Proposition}
\newtheorem{lemma}{Lemma}
\newtheorem{corollary}{Corollary}
\theoremstyle{definition}
\newtheorem{definition}{Definition}
\newtheorem{assumption}{Assumption}
\theoremstyle{remark}
\newtheorem{remark}{Remark}
\newtheorem{fact}{Fact}
\providecommand{\sref}[1]{Section~\ref{#1}}
\providecommand{\appref}[1]{Appendix~\ref{#1}}
\providecommand{\thref}[1]{Theorem~\ref{#1}}
\providecommand{\defref}[1]{Definition~\ref{#1}}
\providecommand{\lemref}[1]{Lemma~\ref{#1}}
\providecommand{\assumpref}[1]{Assumption~\ref{#1}}
\providecommand{\factref}[1]{Fact~\ref{#1}}
\providecommand{\propref}[1]{Proposition~\ref{#1}}
\providecommand{\R}{\ensuremath{\mathbb{R}}}
\providecommand{\C}{\ensuremath{\mathbb{C}}}
\providecommand{\N}{\ensuremath{\mathbb{N}}}
\renewcommand{\vec}[1]{\ensuremath{\boldsymbol{#1}}}
\providecommand{\vu}{\vec{u}}
\providecommand{\vv}{\vec{v}}
\DeclareMathOperator{\Tr}{Tr}
\DeclareMathOperator{\op}{op}
\newcommand{\explain}[2]{\overset{\text{\tiny{#1}}}{#2}} 
\newcommand{\bydef}{\stackrel{\mathrm{def}}{=}}
\newcommand{\E}{\mathbb{E}} 
\newcommand{\Var}{\mathrm{Var}}
\newcommand{\Cov}{\mathrm{Cov}}
\newcommand{\ac}{\overset{\text{a.s.}}{\longrightarrow}} 
\newcommand{\wc}{\overset{W}{\longrightarrow}}
\renewcommand{\tilde}{\widetilde}
\renewcommand{\hat}{\widehat}
\DeclareFontFamily{U}{mathx}{\hyphenchar\font45}
\DeclareFontShape{U}{mathx}{m}{n}{
      <5> <6> <7> <8> <9> <10>
      <10.95> <12> <14.4> <17.28> <20.74> <24.88>
      mathx10
      }{}
\DeclareSymbolFont{mathx}{U}{mathx}{m}{n}
\DeclareMathAccent{\widecheck}{0}{mathx}{"71}
\DeclareMathAccent{\wideparen}{0}{mathx}{"75}
\newcommand{\serv}[1]{\mathsf{#1}} 
\newcommand{\BE}{\begin{equation}}
\newcommand{\EE}{\end{equation}}
\newcommand{\BS}{\begin{subequations}}
\newcommand{\ES}{\end{subequations}}
\newcommand{\UT}{\mathsf{T}}   
\providecommand{\muM}{\mu}
\providecommand{\muN}{\widetilde{\mu}}
\providecommand{\muM}{\mu}
\newcommand{\paraB}[1]{\bm{\mathsf{#1}}}
\providecommand{\ConvMode}{\mathrm{a.s.}}
\DeclareMathOperator{\sign}{\mathrm{sign}\,}
\providecommand{\ase}{\stackrel{\mathrm{a.s.}}{=}}
\providecommand{\Smu}{\mathcal{S}_{\mu}}
\providecommand{\Su}{\mathcal{S}_{\nu_1}}
\providecommand{\Sv}{\mathcal{S}_{\nu_2}}
\providecommand{\Suv}{\mathcal{S}_{\nu_3}}
\providecommand{\CT}{\mathcal{C}}
\newcommand{\wpc}{\overset{W_p}{\longrightarrow}}
\providecommand{\card}[1]{\left\lvert #1 \right\rvert}
\newcommand{\spp}{\mathrm{sp}\,}
\newcommand{\indep}{\mathrel{\perp\mspace{-10mu}\perp}}
\newcommand{\argmax}{\mathrm{argmax}\,}
\newcommand{\proofseeapp}[2]{%
  \begin{proof}%
  See Appendix~\ref{#2}.%
  \end{proof}%
}
\title{Orthogonal Approximate Message Passing with Optimal Spectral Initializations for Rectangular Spiked Matrix Models}
\author{Haohua Chen\thanks{Academy of Mathematics and Systems Science, Chinese Academy of Sciences. Email: \texttt{chenhaohua25@mails.ucas.ac.cn}}
\and Songbin Liu\thanks{Department of Electrical Engineering, Columbia University. Email: \texttt{sl5878@columbia.edu}}
\and Junjie Ma\thanks{Academy of Mathematics and Systems Science, Chinese Academy of Sciences. Email: \texttt{majunjie@lsec.cc.ac.cn}}}
\begin{document}
\maketitle
\begin{abstract}

We propose an orthogonal approximate message passing (OAMP) algorithm for signal estimation in the rectangular spiked matrix model with general rotationally invariant (RI) noise. We establish a rigorous state evolution that precisely characterizes the algorithm’s high-dimensional dynamics and enables the construction of iteration-wise optimal denoisers. Within this framework, we accommodate spectral initializations under minimal assumptions on the empirical noise spectrum. In the rectangular setting, where a single rank-one component typically generates multiple informative outliers, we further propose a procedure for combining these outliers under mild non-Gaussian signal assumptions. For general RI noise models, the predicted performance of the proposed optimal OAMP algorithm agrees with replica-symmetric predictions for the associated Bayes-optimal estimator, and we conjecture that it is statistically optimal within a broad class of iterative estimation methods.
\end{abstract}
\tableofcontents

\section{Introduction}

We study the estimation of asymmetric rank-one signals
$\bm{u}_*\in\mathbb{R}^M$ and $\bm{v}_*\in\mathbb{R}^N$ generated from the \emph{rectangular spiked model}
\begin{equation}
    \bm{Y}
    = \frac{\theta}{\sqrt{MN}}
      \bm{u}_* \bm{v}_*^\UT + \bm{W}
      \in \mathbb{R}^{M\times N},
    \label{eq:rectangular spiked model}
\end{equation}
where $\theta>0$ is a signal-to-noise ratio (SNR) and $\bm{W}$ is a noise
matrix.  This model is widely used to analyze high-dimensional data in which
the number of features and samples are of comparable scale, with applications
ranging from financial data analysis~\cite{plerou2002financial,Bouchaud2009Financial}
to community detection~\cite{Krzakala2013clustering,Abbe2016block}.

In the classical setting where $\bm{W}$ has i.i.d.\ Gaussian entries, the
fundamental behavior of PCA is well understood.  A sharp phase transition
governs when the leading singular vectors correlate with the underlying
signals~\cite{baik2005phase,feral2007largest,paul2007asymptotics,Bai2010SpectralAO}.
These guarantees can be improved by incorporating structural priors such as
sparsity~\cite{zou2006sparse,deshpande2014information,lesieur2015phase,
montanari2015non,lesieur2017constrained} or Bayesian assumptions
\cite{bishop1999bayesian,lawrence2009nonlinear,fletcher2018iterative}.
Approximate Message Passing (AMP) algorithms
\cite{donoho2009message,bayati2011dynamics,rangan2012iterative,krzakala2016mutual,dia2016mutual,ma2017orthogonal,rangan2019vector,maillard2019high,takeuchi2019rigorous,miolane2017fundamental,takeuchi2020convolutional,montanari2021estimation,takeuchi2021bayes,fan2022approximate,liu2022memory,gerbelot2023graph,dudeja2023universality,wang2024universality,lovig2025universality}, play an important role in these settings: in the
high-dimensional limit, their empirical performance is exactly described by a
deterministic state evolution (SE) recursion.  This property has enabled
rigorous optimality guarantees among large classes of first order methods
\cite{montanari2021estimation} and low-degree polynomial estimators
\cite{montanari2025equivalence}, and agreement with replica predictions for the
minimum mean square error (MMSE) in certain regimes~\cite{dia2016mutual,miolane2017fundamental,
el2018estimation,lelarge2019Fundamental}.

Naturally, in many practical high-dimensional settings, the noise structure
deviates from the idealized i.i.d.\ Gaussian assumption.  Rotationally invariant
(RI) noise models, in which the singular vectors of $\bm{W}$ are Haar
distributed and independent of its singular values, admit arbitrary limiting
spectra and therefore form a broad and expressive class for high-dimensional
noise. For such models, the PCA outlier behavior is well understood
\cite{Benaych-Georges2011eigenvalues,Benaych-Georges2012singular,capitaine2018spectrum,noiry2021spectral}, and several
AMP-type extensions have been proposed
\cite{fan2022approximate,zhong2021approximate,barbier2023fundamental}, though their MSE optimality
remains unresolved. For the \textit{symmetric}
counterpart of \eqref{eq:rectangular spiked model},
\cite{dudeja2024optimality} introduced an orthogonal approximate message
passing (OAMP) algorithm \cite{ma2017orthogonal,rangan2019vector} in which each
iteration applies a matrix denoiser tailored to the limiting noise spectrum, and
established a state evolution characterization together with optimality
guarantees. The form of these matrix denoisers is closely related to classical
matrix denoising and covariance shrinkage procedures
\cite{ledoit2012nonlinear,Nada2014Optshrink,Bun2016RIE,pourkamali2023rectangular},
and the performance achieved in \cite{dudeja2024optimality} was shown to match
replica-symmetric predictions for the Bayes-optimal performance in certain regimes \cite{barbier2023fundamental,Barbier2025TAP}.

When turning to rectangular models, an additional difficulty arises from the
structure of spectral information itself.  Unlike the i.i.d.\ Gaussian case,
where a rank-one signal produces a single informative outlier (see
Remark~\ref{rem:mp-no-left}), rectangular RI models can generate multiple outlier singular values
\cite{Benaych-Georges2012singular,belinschi2017outliers}, distributing the
signal energy across them.  Standard PCA is therefore suboptimal, as the
principal components may not be the most informative
\cite{Nadakuditi2013informative}, with the signal energy often spread across
several singular directions \cite{monasson2015estimating,fan2021principal}.
These works point to the need for combining all informative singular vectors,
while the development of a practically executable aggregation scheme remains
open.

A natural and effective initialization for AMP is based on the principal components (PCs) of the data \cite{mondelli2019fundamental,mondelli2021approximate,venkataramanan2022estimation,lu2020phase,chen2021spectral}. However, this approach induces dependence on the noise matrix $\bm{W}$, violating a crucial assumption underlying standard state evolution analysis. In the i.i.d.\ Gaussian setting, this difficulty was resolved using a decoupling technique that separates the PCs from the spectral bulk \cite{montanari2021estimation}. This approach relies critically on the entrywise independence of $\bm{W}$ and does not directly extend to general rotationally invariant (RI) ensembles. For RI noise, subsequent work developed a different
approach based on a two-phase artificial AMP construction
\cite{mondelli2021pca,zhong2021approximate}, in which an auxiliary AMP with a
noise-independent initialization is designed to converge to the empirical PCA
estimator.  Variants of this method have also proved useful in Gaussian
generalized linear models \cite{mondelli2021approximate,venkataramanan2022estimation},
although the analyses in both the matrix and GLM settings rely on additional
technical conditions, such as non-negative free cumulants \cite{mondelli2021pca}
or sufficiently large SNR \cite{zhong2021approximate}. 

\paragraph{Our Contributions.}
This paper develops and analyzes an OAMP algorithm for the rectangular spiked model with rotationally invariant (RI) noise. Our main contributions are as follows.
\begin{itemize}
  \item \textbf{Optimal OAMP for Rectangular RI Models:} We extend the OAMP framework \cite{dudeja2024optimality} to the rectangular setting and establish a rigorous state evolution (SE) that characterizes the joint dynamics of the left and right singular vector estimates. This analysis allows us to derive iteration-wise Bayes-optimal matrix and scalar denoisers. We demonstrate that the algorithm's fixed point aligns with replica symmetric predictions for the minimum mean square error (MMSE) {\cite{Chenqun2025}} in the absence of statistical-computational gaps, and in the specific case of i.i.d. Gaussian noise, it recovers the performance of standard AMP.

\item \textbf{Optimal Spectral Initialization with Multiple Outliers:} A key challenge in rectangular RI models is that a single rank-one signal generally generates multiple informative outlier singular values. Standard PCA (using only the top singular vector) is therefore suboptimal. We characterize the theoretically optimal linear combination of all informative outliers. To implement this in practice, we solve the ``relative sign alignment" problem, where the signs of the outliers are unknown, by proposing two methods: a Maximum Likelihood Estimator (MLE) and a computationally efficient Non-Gaussian Moment Contrast (NGMC) scheme. The NGMC method requires only a mild condition (the existence of an even moment distinct from the Gaussian) to asymptotically match oracle performance.
  \item \textbf{Spectrally-Initialized OAMP:} We integrate the optimal spectral estimator as a principled initialization for OAMP. Based on a resolvent reformulation of the singular equation, we show that the spectral step can be viewed as a single OAMP update, thereby incorporating spectral initialization seamlessly into the OAMP framework and eliminating the need for artificial two-phase constructions or additional assumptions (such as nonnegative free cumulants \cite{mondelli2021pca} or sufficiently large signal-to-noise ratios \cite{zhong2021approximate}) used in earlier analyses of spectrally initialized AMP. This formulation yields a state evolution characterization for spectrally initialized OAMP that explicitly accounts for the intrinsic global-sign ambiguity and applies to general rotationally invariant noise models.
\end{itemize}

We conclude this section by introducing the notations used in this paper.\\
\emph{Algebra.} Let $\mathbb{N}$, $\mathbb{R}$, $\mathbb{R_+}$ and $\mathbb{C}$ denote the sets of positive integers, real numbers, non-negative real numbers and complex numbers, respectively. For any $N \in \mathbb{N}$, we define the set $[N] := \{1, 2, 3, \dots, N\}$, and let $\mathbb{O}(N)$ denote the set of $N \times N$ orthogonal matrices. We use bold-face font for vectors and matrices whose dimensions diverge, such as signal vectors $\bm{u}_* \in \mathbb{R}^M, \bm{v}_* \in \mathbb{R}^N$ or a noise matrix $\bm{W} \in \mathbb{R}^{M \times N}$. For objects in a fixed finite dimension $k$, we use regular font. Specifically, for vectors $x, y \in \mathbb{R}^k$, let $\|x\|_2$ denote the $\ell_2$ norm, $\langle x, y \rangle = \sum_{i=1}^k x_i y_i$ be the standard inner product, and $\operatorname{diag}(x)$ be the $k \times k$ diagonal matrix formed by the entries of $x$. For any vector $s\in\mathbb{R}^k$, we write $[s]_j$ for its $j$-th coordinate. For a matrix $M \in \mathbb{R}^{k \times k}$, we write $\operatorname{Tr}(M)$, $\|M\|_{\text{op}}$, and $\|M\|_F$ for its trace, operator (spectral) norm, and Frobenius norm, respectively. If $M$ is symmetric, its eigenvalues are ordered $\lambda_1(M) \ge \dots \ge \lambda_k(M)$, with corresponding eigenvectors $u_1(M), \dots, u_k(M)$. The spectrum of $\bm{M}$ is denoted $\spp (\bm{M})$. $\card{\mathcal{A}}$ denotes the cardinality of a finite set $\mathcal{A}$. $\Im(z)$ denotes the imaginary part of a complex number $z\in\mathbb{C}$.\\
\noindent \emph{Probability and Analysis.} We denote the Gaussian distribution on $\mathbb{R}^k$ with mean vector $\mu \in \mathbb{R}^k$ and covariance matrix $\Sigma \in \mathbb{R}^{k \times k}$ by $\mathcal{N}(\mu, \Sigma)$. For a finite set $A$, $\operatorname{Unif}(A)$ represents the uniform distribution on $A$, and for any $x \in \mathbb{R}$, the measure $\delta_{\{x\}}$ denotes the point mass (or Dirac measure) at $x$. By extension, $\operatorname{Unif}(\mathbb{O}(N))$ denotes the Haar measure on the orthogonal group $\mathbb{O}(N)$. Furthermore, for sequences of random variables, convergence almost surely and in distribution are denoted by $\xrightarrow{a.s.}$ and $\xrightarrow{d}$, respectively. For random variables $X$ and $Y$, we write $X \indep Y$ to denote that $X$ and $Y$ are independent. For a finite measure $\chi$ of bounded variation on a space $\mathcal{E}$ and any bounded, Borel-measurable function $f: \mathcal{E} \to \mathbb{R}$, we denote its integral by $\langle f(x) \rangle_{\chi} \bydef \int_\mathcal{E} f(x)\, d\chi(x)$. The support of a measure $\chi$ is denoted as $\supp(\chi)$. We use $\sign(\cdot)$ to denote the signum function, which returns $1$, $-1$, or $0$ if its argument is positive, negative, or zero, respectively.

\section{Preliminaries}
\label{sec:preliminaries}

We collect here the main probabilistic and spectral tools used in our analysis of the rectangular spiked
model. We first formalize the high-dimensional asymptotic regime and convergence notions, and then review
the spectral transform machinery associated with the noise spectrum. We introduce signal-projected spectral
measures that encode how the eigenspace of the observation aligns with the true signal directions, and
summarize their limiting behavior and the resulting outlier eigenvalues. These results will be used in the
spectral estimators and as initialization for the OAMP state evolution in
Sections~\ref{sec: spec method} and~\ref{sec: spec OAMP}.

\subsection{Rectangular spiked model and assumptions}

We recall the rectangular spiked model introduced in \eqref{eq:rectangular spiked model}:
\[
\bm{Y}
= \frac{\theta}{\sqrt{MN}}\,\bm{u}_*\bm{v}_*^\UT + \bm{W} \in \mathbb{R}^{M\times N},
\]
and detail the asymptotic regime and structural assumptions on the signal and the noise.

\begin{assumption}
\label{assump:main}
We make the following assumptions on the signal and noise in the model \eqref{eq:rectangular spiked model}.
\begin{enumerate}
\item[(a)] We consider the asymptotic regime where $M,N\to\infty$ such that the aspect ratio converges,
\(
M/N \to \delta \in (0,1].
\)

\item[(b)] The signal and side information, represented by random vector pairs, converge in Wasserstein
distance:
\[
(\bm{u}_*,\bm{a}) \wc (\serv{U}_*,\serv{A}),
\qquad
(\bm{v}_*,\bm{b}) \wc (\serv{V}_*,\serv{B}),
\]
where $(\serv{U}_*,\serv{A},\serv{V}_*,\serv{B})$ have finite moments of all orders. Without loss of generality, we assume
\[
\mathbb{E}[\serv{U}_*^2] = \mathbb{E}[\serv{V}_*^2] = 1.
\]
\item[(c)] The noise matrix $\bm{W}\in\mathbb{R}^{M\times N}$ is independent of $(\bm{u}_\ast,\bm{a},\bm{v}_\ast,\bm{b})$ and is orthogonally invariant. Specifically, its singular value
decomposition $\bm{W}=\bm{U_W}\mathrm{diag}(\bm{\sigma})\bm{V_W}^\UT$. The matrices
$\bm{U_W}\in\mathbb{O}(M)$ and $\bm{V_W}\in\mathbb{O}(N)$ are independent and Haar distributed on
their respective orthogonal groups, and $\mathrm{diag}(\bm{\sigma})$ is deterministic. We assume
$\|\bm{W}\|_{\mathrm{op}}\le C$ for some constant $C$ independent of $(M,N)$ and that the empirical
spectral distribution of $\bm{W}\bm{W}^\UT$ converges weakly to a deterministic probability measure
$\mu$. We define the limiting spectral measure of $\bm{W}^\UT\bm{W}$ as
\[
\widetilde{\mu} \;\triangleq\; \delta\mu + (1-\delta)\delta_{\{0\}}.
\]
We assume $\mu$ is absolutely continuous with a H\"older continuous density and has compact support
$\supp(\mu)\subset\mathbb{R}_+$.

\item[(d)] Let $\spp(\bm{W}\bm{W}^\UT)$ denote the sets of empirical eigenvalues of
$\bm{W}\bm{W}^\UT$. We assume that
the empirical spectrum of $\bm{W}\bm{W}^\UT$ is asymptotically contained in any small neighborhood
of the limiting support:
\begin{equation*}
\lim_{M\to\infty} \sup_{\lambda\in\spp(\bm{W}\bm{W}^\UT)}
   \mathrm{d}(\lambda,\supp(\mu)) \;\xrightarrow{\text{a.s.}}\; 0,
\end{equation*}
where $\mathrm{d}(\lambda,S)\bydef\inf_{x\in S}|\lambda-x|$ denotes the distance from a point $\lambda$ to a
set $S$.
\end{enumerate}
\end{assumption}

\subsection{High-dimensional asymptotics and notation}

We next specify the mode of convergence used throughout to describe limits of empirical distributions
of vector entries and asymptotic equivalence of high-dimensional random vectors. More information can
be found in \cite{bayati2011dynamics,fan2022approximate,feng2022unifying}.

\begin{definition}[Wasserstein convergence]
\label{def:W2}
Let $(\vv_1,\dots,\vv_\ell)$ be a collection of random vectors in $\mathbb{R}^d$. We say that the
empirical distribution of the entries of $(\vv_1,\dots,\vv_\ell)$ converges to random variables
$(\mathsf{V}_1,\dots,\mathsf{V}_\ell)$ in the Wasserstein space of order $p$ if for any test function
$h:\mathbb{R}^\ell\to\mathbb{R}$ satisfying
\begin{equation}
\label{eq:PL2-testfunc}
|h(v)-h(v')|
 \le L\,(1+\|v\|^{p-1}+\|v'\|^{p-1})\|v-v'\|,
 \quad \forall v,v'\in\mathbb{R}^\ell,
\end{equation}
for some $L<\infty$, we have
\[
\frac{1}{d}\sum_{i=1}^d h(v_1[i],\dots,v_\ell[i])
  \xrightarrow{\text{a.s.}} \mathbb{E}[h(\mathsf{V}_1,\dots,\mathsf{V}_\ell)],
\quad \text{as } d\to\infty.
\]
We denote convergence in this sense by
$(\vv_1,\dots,\vv_\ell)\wpc(\mathsf{V}_1,\dots,\mathsf{V}_\ell)$. If this convergence holds for all
$p\ge 1$, we write $(\vv_1,\dots,\vv_\ell)\wc(\mathsf{V}_1,\dots,\mathsf{V}_\ell)$.
\end{definition}

Following \cite{dudeja2024optimality}, we also introduce a notion of asymptotic equivalence for
high-dimensional vectors.

\begin{definition}[Asymptotic equivalence]
\label{def:eq}
Two $d$-dimensional random vectors $\vu$ and $\vv$ are asymptotically equivalent if
\[
\frac{\|\vu-\vv\|^2}{d}\xrightarrow{\text{a.s.}} 0
\quad \text{as } d\to\infty.
\]
We denote this by $\vu \explain{$d\to\infty$}{\simeq} \vv$.
\end{definition}

\subsection{Stieltjes Transform, Hilbert Transform, and C-transform}

The analysis of the singular spectrum of the rectangular spiked model is most naturally expressed in
terms of transforms of spectral measures. We recall the Stieltjes transform and associated Hilbert
transform for finite signed measures, and define the C-transform that will appear in the master
equation governing outlier eigenvalues.

A \emph{signed measure} $\chi$ on the Borel subsets of $\mathbb{R}$ generalizes a measure by allowing
both positive and negative values. There is a well-known bijection between finite signed measures on
$\mathbb{R}$ and right continuous functions of bounded variation
\cite[Proposition~4.4.3]{Cohn2013MeasureTheory}.

\begin{definition}[Stieltjes transform of a finite signed measure]
\label{def: Stieltjes Transform for BV}
Let $\chi$ be a finite signed measure on $\mathbb{R}$, and let
$F_\chi(x)\bydef\chi((-\infty,x])$ be its right continuous function of bounded variation. The
Stieltjes transform $\mathcal{S}_\chi$ is defined for $z\in\mathbb{C}\setminus\supp(\chi)$ by
\begin{equation}
\label{eq:def of Stieltjes Transform}
\mathcal{S}_\chi(z)
  \;\bydef\; \int_{\mathbb{R}}\frac{1}{z-\lambda}\,dF_\chi(\lambda),
\qquad z\in\mathbb{C}\setminus\supp(\chi).
\end{equation}
\end{definition}

This transform uniquely determines the measure $\chi$ (and hence $F_\chi$); see, e.g.,
\cite[Theorem~B.8]{Bai2010SpectralAO}. The Hilbert transform of $\chi$, denoted by
$\mathcal{H}_\chi$, is defined by the Cauchy principal value integral
\begin{equation}
\label{eq: Hilbert Transform Def}
\mathcal{H}_\chi(x)
  \;\bydef\; \frac{1}{\pi}\,\mathrm{P.V.}\int_{\mathbb{R}}\frac{1}{x-\lambda}\,dF_\chi(\lambda),
\qquad x\in\mathbb{R}.
\end{equation}
When $\chi$ is absolutely continuous with respect to Lebesgue measure with a H\"older continuous
density, the integral \eqref{eq: Hilbert Transform Def} exists and $\mathcal{H}_\chi$ is itself
H\"older continuous \cite[Section~2.1]{pastur2011eigenvalue}.

In what follows, we write $\Smu$ and $\mathcal{H}$ for the Stieltjes and Hilbert transforms
associated with the spectral measure $\mu$ in Assumption~\ref{assump:main}. The Stieltjes and
Hilbert transforms are linked on the real axis, and this relationship will allow us to express the
densities of certain limiting measures in closed form.

The C-transform plays a central role in the master equation governing the emergence of outlier
eigenvalues. Its structure is closely related to the $D$-transform appearing in the analysis of
deformed random matrix models; see, e.g., \cite[Section~2.3]{Benaych-Georges2012singular}.

\begin{definition}[C-transform]
\label{Def:C_transform}
Let $\mu$ be the limiting spectral measure of the noise matrix in Assumption~\ref{assump:main}. The
C-transform associated with $\mu$ is defined for $z\in\mathbb{C}\setminus\supp(\mu)$ by
\begin{equation}
\label{eq:Ctrans}
\mathcal{C}(z)
  \;\bydef\; z\,\Smu(z)\Bigl[\delta\,\Smu(z) + \frac{1-\delta}{z}\Bigr],
\end{equation}
where $\Smu$ denotes the Stieltjes transform of $\mu$.
\end{definition}

\subsection{Signal–eigenspace Spectral Measures}

We now introduce spectral measures that project the eigenspace of the observed matrix onto the spans
of the true signals $\bm{u}_*$ and $\bm{v}_*$. These measures encode how signal energy is distributed
across the empirical spectrum and will be central in our state evolution analysis of OAMP.

\begin{definition}[Signal–eigenspace spectral measures]
\label{def:spectral_measures}
Let $(\lambda_i(\cdot),\bm{u}_i(\cdot))$ be the eigenvalue/eigenvector pairs of a symmetric matrix.
\begin{enumerate}
\item[(a)] \textbf{Parallel spectral measures.}  
To analyze quadratic forms in the signal directions, we define the parallel spectral measures
$\nu_{M,1}$ and $\nu_{N,2}$ as the weighted empirical measures
\begin{align*}
\nu_{M,1}
 &\;\bydef\; \frac{1}{M}\sum_{i=1}^M \langle \bm{u}_i(\bm{Y}\bm{Y}^\UT),\bm{u}_* \rangle^2
    \,\delta_{\lambda_i(\bm{Y}\bm{Y}^\UT)}, \\
\nu_{N,2}
 &\;\bydef\; \frac{1}{N}\sum_{i=1}^N \langle \bm{u}_i(\bm{Y}^\UT\bm{Y}),\bm{v}_* \rangle^2
    \,\delta_{\lambda_i(\bm{Y}^\UT\bm{Y})}.
\end{align*}

\item[(b)] \textbf{Cross spectral measure.}  
To analyze bilinear forms coupling the two signal directions, we construct the symmetric dilation
\[
\widehat{\bm{Y}}
  \;\bydef\; \begin{bmatrix} 0 & \bm{Y}\\ \bm{Y}^\UT & 0 \end{bmatrix} \in\mathbb{R}^{L\times L},
\quad L=M+N,
\]
and define the cross spectral measure $\nu_{L,3}$ by
\[
\nu_{L,3}
  \;\bydef\; \frac{1}{L}\sum_{i=1}^L
    \langle \bm{u}_i(\widehat{\bm{Y}}),\widehat{\bm{u}}_* \rangle
    \langle \bm{u}_i(\widehat{\bm{Y}}),\widehat{\bm{v}}_* \rangle
    \,\delta_{\lambda_i(\widehat{\bm{Y}})},
\]
where the zero-padded vectors are
$\widehat{\bm{u}}_* \bydef[\bm{u}_*^\UT,\bm{0}^\UT]^\UT$ and
$\widehat{\bm{v}}_* \bydef [\bm{0}^\UT,\bm{v}_*^\UT]^\UT$.
\end{enumerate}
\end{definition}

\paragraph{Shrinkage functions.}

In order to state the limiting characterization of the measures $\nu_i$, we introduce shrinkage
functions $\varphi_i:\mathbb{R}\to\mathbb{R}$ that describe their absolutely continuous components.
Let $\mathcal{H}(\lambda)$ be the Hilbert transform of $\mu$ at $\lambda$. Using the
Sokhotski–Plemelj formula~\cite{Blanchard2003MathematicalPhysics}, one can compute boundary values
of $1-\theta^2\mathcal{C}(\lambda-i\epsilon)$ as $\epsilon\to 0^+$:
\BS
\begin{align}
&\lim_{\epsilon \to 0^+} \big|1-\theta^2\CT(\lambda - \mathrm{i} \epsilon)\big|^2 \\
&\qquad = \Big\{1 - \delta\theta^2\pi^2\lambda \mathcal{H}^2(\lambda)
                  + \delta\theta^2\pi^2\lambda \mu^2(\lambda)
                  - (1-\delta)\theta^2\pi \mathcal{H}(\lambda)\Big\}^2 \\
&\qquad\quad + \Big\{\pi\theta^2\mu(\lambda)\big[(1-\delta) + 2\delta\pi\lambda\mathcal{H}(\lambda)\big]\Big\}^2.
\end{align}
\label{eq: norm of the master equation}
\ES
We then define the shrinkage functions $\varphi_1,\varphi_2,\varphi_3:\mathbb{R}\to\mathbb{R}$ by
\BS
\label{eq:shrikage-function}
\begin{align}
    \varphi_1(\lambda)
     &\bydef\frac{1 + \delta\theta^2\pi^2\lambda\left(\mathcal{H}(\lambda)^2 + \mu(\lambda)^2\right)}
                 {\lim_{\epsilon \to 0^+} \left|1-\theta^2\CT(\lambda-\mathrm{i}\epsilon)\right|^2},
       \label{eq:phiu} \\
    \varphi_3(\lambda)
     &\bydef\frac{\theta\left(1-\delta + 2\delta\pi\lambda\mathcal{H}(\lambda)\right)}
                 {\lim_{\epsilon \to 0^+} \left|1-\theta^2\CT(\lambda-\mathrm{i}\epsilon)\right|^2}
                 \cdot \mathbf{1}_{\{\lambda \neq 0\}}, \label{eq:phiuv} \\
    \varphi_2(\lambda)
     &\bydef
    \begin{cases}
      \delta \varphi_1(\lambda) + \dfrac{\theta(1-\delta)}{\lambda}\varphi_3(\lambda),
        & \lambda > 0, \\[0.4em]
      \dfrac{\delta}{1 - \theta^2(1-\delta)\pi\mathcal{H}(0)},
        & \lambda = 0.
    \end{cases}
    \label{eq:phiv}
\end{align}
\ES

\subsection{The Master Equation and Outlier Location}
The emergence of outlier singular-values in the rectangular spiked model is governed by the 
\textit{master equation}
\begin{equation}
\label{Eqn:master0}
\Gamma(z) \;\bydef\; 1 - \theta^{2}\,\mathcal{C}(z)=0,
\qquad z \in \mathbb{C}\setminus\supp(\mu),
\end{equation}
where $\mu$ is the limiting spectral measure of $\bm{W}\bm{W}^{\UT}$ and 
$\mathcal{C}$ is the C-transform defined in Definition~\ref{Def:C_transform}.
Real solutions of \eqref{Eqn:master0} outside the support of $\mu$ correspond to isolated spectral components, while the analytic properties of $\Gamma$ determine where such solutions may occur and ensure that any such solution is isolated and simple. We therefore begin by establishing the basic analytic properties of $\Gamma$.

\begin{lemma}[Analytic structure and zeros of the master equation]
\label{lem:Gamma-analytic}
Under Assumption~\ref{assump:main}(c), $\Gamma$ has the following properties:
\begin{enumerate}
\item $\Gamma$ is holomorphic on $\mathbb{C}\setminus\supp(\mu)$ and not identically zero.
      In particular, $\Gamma(z)\to 1$ as $|z|\to\infty$.
\item Every real zero of $\Gamma$ lying in $\mathbb{R}\setminus\supp(\mu)$ is isolated and simple;
      in particular, if $\Gamma(\lambda^{\*})=0$ for some
      $\lambda^{\*}\in\mathbb{R}\setminus\supp(\mu)$, then
      $\Gamma'(\lambda^{\*})\neq 0$. 
\item $\Gamma(\lambda)\neq 0$ for all $\lambda$ in the interior of $\supp(\mu)$, and moreover $\Gamma(0)\neq 0$. Consequently, any real solution of the
master equation that produces an isolated spectral component must lie in $\mathbb{R} \setminus (\operatorname{supp}(\mu) \cup \{0\})$.
\item Let $\lambda\in\R\setminus\supp(\mu)$ such that $\Gamma(\lambda)=0$. Then,
$$\operatorname{sign}\!\big(\Gamma'(\lambda)\big)=\operatorname{sign}\!\big(\Smu(\lambda)\big)$$
Equivalently, $\operatorname{sign}\!\big(\CT'(\lambda)\big)=-\operatorname{sign}\!\big(\Smu(\lambda)\big)$.
In particular, $\Gamma'(\lambda)\neq0$.      
\end{enumerate}
\end{lemma}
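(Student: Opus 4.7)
My plan is to exploit the explicit algebraic form
\[
\CT(z)=\delta z\,\Smu(z)^{2}+(1-\delta)\,\Smu(z),
\]
reducing the four claims to analytic and monotonicity properties of $\Smu$ on $\mathbb{C}\setminus\supp(\mu)$. Claim (1) is immediate: $\Smu$ is holomorphic on $\mathbb{C}\setminus\supp(\mu)$ with expansion $\Smu(z)=z^{-1}+O(z^{-2})$ at infinity, so $\CT(z)=O(z^{-1})$, $\Gamma(z)\to 1$, and $\Gamma\not\equiv 0$. Claim (2) then follows by combining (1) with (4): the identity theorem forces any zero of $\Gamma$ in $\mathbb{C}\setminus\supp(\mu)$ to be isolated, and simplicity at a real zero is the content of $\Gamma'(\lambda)\neq 0$ proved in (4).

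For (3), in the interior of $\supp(\mu)$ I interpret $\Gamma$ via its Sokhotski--Plemelj boundary value $\Smu(\lambda-\mathrm{i}0^{+})=\pi\mathcal{H}(\lambda)+\mathrm{i}\pi\mu(\lambda)$. Substituting into $\CT$ produces the sum-of-squares decomposition of $|\Gamma(\lambda-\mathrm{i}0^{+})|^{2}$ in \eqref{eq: norm of the master equation}. When the density $\mu(\lambda)>0$, the imaginary-part square vanishes only if $2\delta\pi\lambda\mathcal{H}(\lambda)+(1-\delta)=0$; substituting this relation back into the real-part square yields
\[
1+\frac{\theta^{2}(1-\delta)^{2}}{4\delta\lambda}+\theta^{2}\delta\lambda\pi^{2}\mu(\lambda)^{2}>0,
\]
ruling out simultaneous vanishing and hence $\Gamma(\lambda-\mathrm{i}0^{+})\neq 0$. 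For $\Gamma(0)\neq 0$: since $\supp(\mu)$ is compact, if $0\notin\supp(\mu)$ then $\Smu(0)=-\int\lambda^{-1}\dd\mu(\lambda)<0$, so $\CT(0)=(1-\delta)\Smu(0)\le 0$ and $\Gamma(0)\ge 1$.

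Part (4) is the technical crux. Writing $q(\lambda)\bydef \lambda\Smu(\lambda)=1+\int t/(\lambda-t)\,\dd\mu(t)$ and differentiating under the integral sign, the assumption $\supp(\mu)\subset\mathbb{R}_{+}$ with positive mass on $(0,\infty)$ gives the strict monotonicities $\Smu'(\lambda)<0$ and $q'(\lambda)<0$ for all $\lambda\in\mathbb{R}\setminus\supp(\mu)$. Differentiating $\CT=\delta\lambda\Smu^{2}+(1-\delta)\Smu$ and using the zero condition $\theta^{2}(1-\delta)\Smu(\lambda)=1-\theta^{2}\delta\lambda\Smu(\lambda)^{2}$ to rewrite $(1-\delta)\Smu'$ as $\Smu'/(\theta^{2}\Smu)-\delta\lambda\Smu\Smu'$ causes the cross terms to cancel and leaves
\[
\Gamma'(\lambda)\;=\;-\theta^{2}\delta\,\Smu(\lambda)\,q'(\lambda)\;-\;\frac{\Smu'(\lambda)}{\Smu(\lambda)},
\]
valid at any zero $\lambda$ of $\Gamma$ (note $\Smu(\lambda)\neq 0$ there, else $\CT(\lambda)=0\neq\theta^{-2}$). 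Since $-\theta^{2}\delta q'(\lambda)>0$ and $-\Smu'(\lambda)>0$, both summands share the sign of $\Smu(\lambda)$, which simultaneously establishes the sign identity and $\Gamma'(\lambda)\neq 0$; the equivalent statement for $\CT'$ follows from $\Gamma=1-\theta^{2}\CT$.

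The main obstacle is the algebraic collapse in (4): the raw expression $\CT'(\lambda)=\delta\Smu^{2}+2\delta\lambda\Smu\Smu'+(1-\delta)\Smu'$ mixes three terms whose individual signs are not synchronized, and a direct monotonicity or Cauchy--Schwarz bound does not suffice. Only the precise substitution of the zero condition into the $(1-\delta)\Smu'$ term causes the awkward cross terms to collapse into $\delta\Smu(\lambda)\cdot q'(\lambda)+\Smu'(\lambda)/(\theta^{2}\Smu(\lambda))$, whose two summands are sign-synchronized by the single positivity input $\mu((0,\infty))>0$.
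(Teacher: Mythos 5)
Your treatment of the crux, part (4), is correct and essentially identical to the paper's: you substitute the zero condition into the $(1-\delta)\Smu'$ term to collapse $\CT'$ into $\delta\Smu q'+\Smu'/(\theta^2\Smu)$ with $q=\lambda\Smu$, which is exactly the paper's identity $\CT'(\lambda)=\Smu^{-1}\bigl(\delta\Smu^2[\Smu+\lambda\Smu']+\Smu'/\theta^2\bigr)$ written multiplicatively, and the sign conclusion follows from the same two strict inequalities $\Smu'<0$ and $\Smu+\lambda\Smu'<0$. Parts (1) and the interior-of-support argument in (3) also match the paper. For part (2) you take a genuinely different (and arguably cleaner) route to simplicity: you deduce $\Gamma'(\lambda)\neq 0$ directly from part (4), whereas the paper argues that $\Su=\Smu/\Gamma$ is the Stieltjes transform of the finite measure $\nu_1$ and hence can only have simple poles. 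Your route avoids importing Lemma~\ref{lem:spectral_measures_properties} into the proof of Lemma~\ref{lem:Gamma-analytic}, at the cost of making (2) logically depend on (4); since (4) is proved independently, this is fine.

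The one genuine gap is the claim $\Gamma(0)\neq 0$ in part (3): you only treat the case $0\notin\supp(\mu)$, where $\Smu(0)$ is finite and negative. Assumption~\ref{assump:main} allows $0\in\supp(\mu)$ (e.g.\ the Mar\v{c}enko--Pastur law with $\delta=1$ has lower edge $a_-=0$), in which case $\Smu(0)$ need not be finite and your evaluation $\CT(0)=(1-\delta)\Smu(0)$ breaks down. The paper handles this by interpreting $\Gamma(0)$ as the non-tangential boundary value $\lim_{\varepsilon\downarrow0}\Gamma(-\mathrm{i}\varepsilon)=1-\theta^2(1-\delta)\pi\mathcal{H}(0)$, using $z\Smu(z)\to0$ and the Sokhotski--Plemelj formula, and then observing $\mathcal{H}(0)<0$ since $\supp(\mu)\subset\R_+$. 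You should either add this limiting argument or restrict your claim to the case $0\notin\supp(\mu)$ and supply the boundary-value computation for the remaining case.
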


\proofseeapp{\lemref{lem:Gamma-analytic}}{App:master_eqn_analytic}

As an immediate consequence, any real solution of the master equation
\eqref{Eqn:master0} that generates an isolated spectral component must lie in
$\mathbb{R}\setminus(\supp(\mu)\cup{0})$ and corresponds to a simple zero of $\Gamma$.
These properties will be used repeatedly in the characterization of the singular
parts of the limiting spectral measures.

\subsection{Limiting Spectral Measures and Outlier Behavior}

We are now ready to state the limiting behavior of the signal–eigenspace spectral measures, and to
connect them with the isolated outlier eigenvalues and singular vectors of the rectangular spiked model.

\begin{lemma}
\label{lem:spectral_measures_properties}
Under Assumption~\ref{assump:main}, in the rectangular spiked model 
\eqref{eq:rectangular spiked model}, the following hold.
\begin{enumerate}
\item \textbf{Weak convergence.}  
The measures $\nu_{M,1},\nu_{N,2}$ and $\nu_{L,3}$ from Definition~\ref{def:spectral_measures}
converge weakly almost surely to deterministic, compactly supported measures $\nu_1,\nu_2,\nu_3$,
respectively. The limiting measures $\nu_1$ and $\nu_2$ are probability measures on $\mathbb{R}_+$, and
$\nu_3$ is a finite signed measure on $\mathbb{R}$.

\item \textbf{Stieltjes transforms.}  
For $z\in\mathbb{C}\setminus\mathbb{R}$, their Stieltjes transforms are
\begin{equation}
\label{eq:nu-Stieltjes}
\Su(z) = \frac{\Smu(z)}{1-\theta^2\CT(z)}, \quad
\Sv(z) = \frac{\delta \Smu(z) +\frac{1-\delta}{z} }{1-\theta^2 \CT(z)}, \quad 
\Suv(z) = \frac{\sqrt{\delta}}{1+\delta}\cdot\frac{\theta \CT(z^2)}{1- \theta^2 \CT(z^2)}.
\end{equation}

\item \textbf{Absolutely continuous parts.}  
Let the Lebesgue decomposition of each measure be $\nu_i = \nu_i^{\parallel} + \nu_i^{\perp}$ for 
$i\in\{1,2,3\}$, where $\nu_i^{\parallel}$ is the absolutely continuous component and
$\nu_i^{\perp}$ is the singular component. Let $\varphi_1,\varphi_2,\varphi_3$ be the shrinkage
functions in \eqref{eq:shrikage-function}. Then
\begin{equation}
\label{eq:nu-ac}
\frac{d\nu_{1}^{\parallel}}{d\lambda} = \mu(\lambda)\varphi_1(\lambda), \quad
\frac{d\nu_{2}^{\parallel}}{d\lambda} = \mu(\lambda)\varphi_2(\lambda), \quad
\frac{d\nu_{3}^{\parallel}}{d\sigma}
    = \frac{\sqrt{\delta}}{1+\delta}\,\mathrm{sign}(\sigma)\,\mu(\sigma^2)\varphi_{3}(\sigma^2).
\end{equation}

\item \textbf{Singular parts.}
Let
\[
\mathcal{K}^* \bydef \{\lambda\in\R\setminus\supp(\mu):\Gamma(\lambda)=0\},
\]
and assume $\mathcal{K}^*$ is finite.
Then the singular components are purely atomic and admit the representations
\BS
\begin{align}
\nu_1^{\perp}
  &= \sum_{\lambda_*\in\mathcal{K}^*}\nu_1(\{\lambda_*\})\,\delta_{\lambda_*},\\
\nu_2^{\perp}
  &= \sum_{\lambda_*\in\mathcal{K}^*}\nu_2(\{\lambda_*\})\,\delta_{\lambda_*}
    \;+\;\mathbf{1}_{\{\delta<1\}}\,\nu_2(\{0\})\,\delta_0,\\
\nu_3^{\perp}
 & = \sum_{\lambda_*\in\mathcal{K}^*}
    \bigl(\nu_3(\{\sigma_*\})\,\delta_{\sigma_*}
          + \nu_3(\{-\sigma_*\})\,\delta_{-\sigma_*}\bigr),\quad \sigma_*\bydef\sqrt{\lambda_*}.
\end{align}
\ES
Moreover, for $\lambda_*\in\mathcal{K}^*$,
\[
\nu_1(\{\lambda_*\})
  = -\frac{\Smu(\lambda_*)}{\theta^2 \mathcal{C}'(\lambda_*)}, \qquad
\nu_2(\{\lambda_*\})
  = -\frac{\delta \Smu(\lambda_*)+(1-\delta)/\lambda_*}
          {\theta^2 \mathcal{C}'(\lambda_*)},
\]
and
\[
\nu_3(\{\pm\sigma_*\})
  = \mp\,\frac{\sqrt{\delta}}{1+\delta}\,
         \frac{1}{2\theta^3\sigma_*\,\mathcal{C}'(\sigma_*^2)}.
\]
If $\delta<1$, $\nu_2$ has an additional atom at $0$ with mass
\[
\nu_2(\{0\})
  = \frac{1-\delta}{1-\theta^2(1-\delta)\pi\mathcal{H}(0)}.
\]
\end{enumerate}
\end{lemma}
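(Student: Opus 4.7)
My plan is to pass to the symmetric dilation $\widehat{\bm{Y}}$ and reduce everything to the analysis of a rank-two perturbation of the rotationally invariant matrix $\widehat{\bm{W}}$. Writing
$\widehat{\bm{Y}} = \widehat{\bm{W}} + \frac{\theta}{\sqrt{MN}}(\widehat{\bm{u}}_*\widehat{\bm{v}}_*^\UT + \widehat{\bm{v}}_*\widehat{\bm{u}}_*^\UT)$,
all three spectral measures from Definition~\ref{def:spectral_measures} can be represented as bilinear forms in the resolvent $\widehat{G}(z)\bydef(z\mathbf{I}_L - \widehat{\bm{Y}})^{-1}$: directly, $\Suv(z) = \tfrac{1}{L}\widehat{\bm{u}}_*^\UT \widehat{G}(z)\widehat{\bm{v}}_*$, while $\Su$ and $\Sv$ are recovered from the diagonal blocks of $\widehat{G}$ via $[\widehat{G}(z)]_{11}=z(z^2\mathbf{I}_M - \bm{Y}\bm{Y}^\UT)^{-1}$ and $[\widehat{G}(z)]_{22}=z(z^2\mathbf{I}_N - \bm{Y}^\UT\bm{Y})^{-1}$.

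I next apply the Woodbury identity, which expresses $\widehat{G}(z)$ in terms of the noise resolvent $\widehat{G}_0(z)\bydef(z\mathbf{I}_L - \widehat{\bm{W}})^{-1}$ and the inverse of a $2\times 2$ matrix whose entries are bilinear forms of $\widehat{G}_0$ in the directions $\widehat{\bm{u}}_*, \widehat{\bm{v}}_*$. By Haar invariance of $\bm{U_W}, \bm{V_W}$ and standard concentration for quadratic forms under Haar measure, these entries concentrate to deterministic limits: the two diagonal entries converge (after normalization by $M$ and $N$ respectively) to $z\Smu(z^2)$ and $z\delta\Smu(z^2)+(1-\delta)/z$, where the $(1-\delta)/z$ term accounts for the $N-M$ trivial zero eigenvalues of $\bm{Y}^\UT\bm{Y}$ when $\delta<1$; the cross-block form vanishes by independence of $\bm{U_W}$ and $\bm{V_W}$. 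A direct calculation shows that the determinant of the limiting $2\times 2$ matrix equals $1-\theta^2\CT(z^2)$ by the definition~\eqref{eq:Ctrans}. Collecting the remaining factors and substituting $w=z^2$ for $\Su,\Sv$ yields~\eqref{eq:nu-Stieltjes}.

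The Lebesgue decomposition then follows from Stieltjes--Perron inversion. For the absolutely continuous densities, I substitute the boundary values $\Smu(\lambda-\mathrm{i}0^+) = \pi\mathcal{H}(\lambda) + \mathrm{i}\pi\mu(\lambda)$ into~\eqref{eq:nu-Stieltjes}, take imaginary parts, and simplify using the squared-modulus identity~\eqref{eq: norm of the master equation}; the resulting density factors match $\mu(\lambda)\varphi_i(\lambda)$ and produce~\eqref{eq:nu-ac}. For the singular parts, each limiting Stieltjes transform is meromorphic on $\mathbb{C}\setminus\supp(\mu)$ with poles precisely where $1-\theta^2\CT=0$; by Lemma~\ref{lem:Gamma-analytic} these poles form the finite set $\mathcal{K}^*\subset\R\setminus(\supp(\mu)\cup\{0\})$ and are all simple, so residues computed via L'Hopital's rule give the stated masses $\nu_1(\{\lambda_*\}), \nu_2(\{\lambda_*\}), \nu_3(\{\pm\sigma_*\})$. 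The extra atom of $\nu_2$ at $0$ when $\delta<1$ comes from the explicit $(1-\delta)/z$ factor in the numerator of $\Sv$; evaluating $\CT(0)=(1-\delta)\pi\mathcal{H}(0)$ yields the claimed mass $(1-\delta)/(1-\theta^2(1-\delta)\pi\mathcal{H}(0))$.

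The main technical obstacle is establishing the concentration of bilinear forms of the noise resolvent against the deterministic signal directions, in particular the vanishing of the cross-block form $\bm{u}_*^\UT [\widehat{G}_0(z)]_{12}\bm{v}_*$ (which couples the two independent Haar factors $\bm{U_W}, \bm{V_W}$) uniformly on compact subsets of $\mathbb{C}$ bounded away from $\mathbb{R}$. Analogous statements in the symmetric case are proved in \cite{Benaych-Georges2012singular,dudeja2024optimality}; extending them to the rectangular model requires tracking the two Haar factors separately (via Weingarten calculus or a subordination argument from rectangular free probability) and carefully handling the $N-M$ trivial zeros of $\bm{Y}^\UT\bm{Y}$ that generate the $\delta_0$ atom of $\nu_2$. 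Once this concentration is in place, the remaining steps reduce to algebraic bookkeeping combined with the analytic structure provided by Lemma~\ref{lem:Gamma-analytic}.
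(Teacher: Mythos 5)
Your proposal is correct and follows essentially the same route as the paper's proof: a Woodbury expansion of the resolvent around the rotationally invariant noise matrix, almost-sure concentration of the Haar quadratic forms (with the cross term vanishing by independence of $\bm{U_W}$ and $\bm{V_W}$), Stieltjes--Perron inversion for the absolutely continuous densities, and residue computations at the simple zeros of $\Gamma$ supplied by Lemma~\ref{lem:Gamma-analytic}. The only packaging difference is that you route $\nu_1,\nu_2$ through the diagonal blocks of the symmetric dilation, whereas the paper applies Sherman--Morrison--Woodbury directly to $z\bm I-\bm Y^\UT\bm Y$ (resp.\ $z\bm I-\bm Y\bm Y^\UT$) and reserves the dilation for $\nu_3$; when writing this up, be sure to also record Claim~1 (tightness from $\|\bm W\|_{\mathrm{op}}\le C$ together with the Stieltjes continuity theorem, plus the total-mass limits) and the $\mathrm{sign}(\sigma)$ factor in the boundary value of $\Smu\bigl((\sigma-\mathrm i\epsilon)^2\bigr)$, which is precisely where the odd symmetry of $d\nu_3^{\parallel}/d\sigma$ originates.
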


\proofseeapp{\lemref{lem:spectral_measures_properties}}{subsec:proof_lemma2}

To connect the spectral measures in Definition~\ref{def:spectral_measures} with the eigen-structure of
the rectangular spiked model \eqref{eq:rectangular spiked model}, we next characterize the limiting
outlier eigenvalues and their associated overlaps. This extends the results of
\cite[Theorems~2.8--2.9]{Benaych-Georges2012singular} to the multi-outlier setting.

\begin{proposition}[Outlier characterization]
\label{prop:outlier_characterization}
Under Assumption~\ref{assump:main}, let $\mathcal{K}^*$ be the real zero set of the master equation
$\Gamma(\lambda)=0$ as in Lemma~\ref{lem:spectral_measures_properties}. Assume we are in a supercritical
regime in which $\mathcal{K}^*$ is finite. Then the following hold.
\begin{enumerate}
\item \textbf{Isolation of population outliers.}
There exists $\varepsilon>0$ such that the intervals
$I_k\bydef(\lambda_k-\varepsilon,\lambda_k+\varepsilon)$, $\lambda_k\in\mathcal{K}^*$, are pairwise disjoint and satisfy
\[
I_k\cap\supp(\mu)=\varnothing,
\qquad
I_k\cap\mathcal{K}^*=\{\lambda_k\},
\qquad \forall\,\lambda_k\in\mathcal{K}^*.
\]

\item \textbf{Exact spectral separation.}
Let $\varepsilon>0$ be as in item~(1), and define
\[
\mathcal{K}^*_{\varepsilon} \bydef \bigcup_{\lambda_k \in \mathcal{K}^*} (\lambda_k - \varepsilon, \lambda_k + \varepsilon),
\qquad
\supp_{\varepsilon}(\mu)\bydef \{\lambda\in\R:\ d(\lambda,\supp(\mu))<\varepsilon\}.
\]
Almost surely, there exists $M_0<\infty$ such that for all $M\ge M_0$:
\begin{align}
&\spp(\bm{WW}^\UT)\cap \mathcal{K}^*_{\varepsilon}=\varnothing, \label{eq:sep-W}\\
&\card{\left(\spp(\bm{YY}^\UT)\cap (\lambda_k - \varepsilon, \lambda_k + \varepsilon)\right)}=1,
\qquad \forall\,\lambda_k\in\mathcal{K}^*, \label{eq:uniq-outlier}\\
&\spp(\bm{YY}^\UT)\subseteq \supp_{\varepsilon}(\mu)\,\cup\, \mathcal{K}^*_{\varepsilon}. \label{eq:no-extra}
\end{align}
In words, for all sufficiently large $M$, all eigenvalues of $\bm{YY}^\UT$ are within
$\supp_{\varepsilon}(\mu)\cup\mathcal{K}^*_{\varepsilon}$, and each outlier window
$(\lambda_k-\varepsilon,\lambda_k+\varepsilon)$ contains exactly one eigenvalue.

\item \textbf{Convergence of empirical outliers.}
For each $\lambda_k\in\mathcal{K}^*$, let $\lambda_{k,M}$ denote the unique eigenvalue in
$\spp(\bm{YY}^\UT)\cap(\lambda_k-\varepsilon,\lambda_k+\varepsilon)$ (well-defined for all $M\ge M_0$ by
\eqref{eq:uniq-outlier}). Then
\[
\lambda_{k,M}\ac \lambda_k
\qquad\text{as } M\to\infty.
\]

\item \textbf{Limiting overlaps.}
Let $(\sigma_{k,M},\bm{u}_k(\bm{Y}),\bm{v}_k(\bm{Y}))$ be a singular value--vector triplet of
$\bm{Y}\in\mathbb{R}^{M\times N}$ such that $\sigma_{k,M}^2\to\lambda_k$, and write $\sigma_k\bydef\sqrt{\lambda_k}$. Then
\begin{align}
\frac{1}{M}\,\langle \bm{u}_*,\bm{u}_k(\bm{Y})\rangle^2
&\ac \nu_1(\{\lambda_k\}),\\
\frac{1}{N}\,\langle \bm{v}_*,\bm{v}_k(\bm{Y})\rangle^2
&\ac \nu_2(\{\lambda_k\}), \\
\frac{1}{\sqrt{MN}}\,
\langle \bm{u}_*,\bm{u}_k(\bm{Y})\rangle
\langle \bm{v}_*,\bm{v}_k(\bm{Y})\rangle
&\ac
\, 2\frac{1+\delta}{\sqrt{\delta}}\,\nu_3(\{\sigma_k\}),
\end{align}
where $\nu_1,\nu_2,\nu_3$ are the limiting spectral measures in
Lemma~\ref{lem:spectral_measures_properties}.
\end{enumerate}
\end{proposition}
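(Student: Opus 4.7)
The statement combines a deterministic analytic input (the finite zero set $\mathcal{K}^*$ of $\Gamma$ together with \lemref{lem:Gamma-analytic}) with the spectral information of \lemref{lem:spectral_measures_properties} and the bulk control in Assumption~\ref{assump:main}(d). I would organize the proof in three blocks corresponding to part (1), parts (2)--(3), and part (4). Part (1) is a purely topological fact: since $\mathcal{K}^*$ is a finite subset of the open set $\mathbb{R}\setminus\supp(\mu)$ by \lemref{lem:Gamma-analytic}(3), setting
\[
\varepsilon \;=\; \tfrac{1}{3}\min\!\Bigl\{\min_{k\ne\ell}|\lambda_k - \lambda_\ell|,\;\min_k \dist(\lambda_k,\supp(\mu))\Bigr\}\;>\;0
\]
simultaneously gives all three disjointness properties.

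For parts (2)--(3) I use the standard secular-equation approach. Writing $\bm{Y}\bm{Y}^\UT = \bm{W}\bm{W}^\UT + \bm{E}$ with
\[
\bm{E} \;=\; \tfrac{\theta}{\sqrt{MN}}\bigl[\bm{u}_*(\bm{W}\bm{v}_*)^\UT + (\bm{W}\bm{v}_*)\bm{u}_*^\UT\bigr] + \tfrac{\theta^2\|\bm{v}_*\|^2}{MN}\,\bm{u}_*\bm{u}_*^\UT,
\]
the perturbation has rank at most two, and I factor $\bm{E}=\bm{X}\bm{Z}^\UT$ with $\bm{X},\bm{Z}\in\mathbb{R}^{M\times 2}$. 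For any $\lambda\notin\spp(\bm{W}\bm{W}^\UT)$, the Schur-complement identity yields
\[
\lambda\in\spp(\bm{Y}\bm{Y}^\UT)\ \Longleftrightarrow\ \phi_M(\lambda)\bydef\det\!\bigl(I_2 - \bm{Z}^\UT(\lambda \mathbf{I}-\bm{W}\bm{W}^\UT)^{-1}\bm{X}\bigr)=0.
\]
Using Haar invariance of $\bm{U_W},\bm{V_W}$, concentration for quadratic forms against Haar-distributed vectors, and the deterministic equivalent $\tfrac{1}{M}\Tr(\lambda \mathbf{I}-\bm{W}\bm{W}^\UT)^{-1}\to \Smu(\lambda)$, each entry of the $2\times 2$ resolvent matrix $\bm{Z}^\UT(\lambda\mathbf{I}-\bm{W}\bm{W}^\UT)^{-1}\bm{X}$ converges almost surely, uniformly on compact subsets of $\mathbb{C}\setminus\supp(\mu)$, to a deterministic function of $\Smu(\lambda)$; a short algebraic computation (expanding the $2\times 2$ determinant and using the definition of $\mathcal{C}$) identifies the limit as $\phi_M\to\Gamma$. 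Combining this uniform convergence with \lemref{lem:Gamma-analytic}(2,4), Rouch\'e's theorem produces, for all large $M$, exactly one zero of $\phi_M$ inside each disk of radius $\varepsilon$ around $\lambda_k\in\mathcal{K}^*$, and none in any compact subset of $\mathbb{R}\setminus(\supp_{\varepsilon}(\mu)\cup\mathcal{K}^*_{\varepsilon})$. Coupled with Assumption~\ref{assump:main}(d) for the bulk of $\bm{W}\bm{W}^\UT$ and a rank-two interlacing bound to prevent escape at infinity, this yields \eqref{eq:sep-W}--\eqref{eq:no-extra}, and part (3) is then the almost-sure convergence of the unique root of $\phi_M$ in each disk to the simple zero $\lambda_k$ of $\Gamma$.

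Part (4) follows by testing the signal--eigenspace measures of \defref{def:spectral_measures} against suitable bump functions. Given parts (2)--(3), $\lambda_{k,M}$ is the only eigenvalue of $\bm{Y}\bm{Y}^\UT$ inside $(\lambda_k-\varepsilon,\lambda_k+\varepsilon)$ for all large $M$. By \lemref{lem:spectral_measures_properties}, $\nu_{M,1}\to\nu_1$ weakly, and the restriction of $\nu_1$ to this window is the pure point mass $\nu_1(\{\lambda_k\})\delta_{\lambda_k}$: the a.c.\ part $\mu(\lambda)\varphi_1(\lambda)\,d\lambda$ vanishes there because $(\lambda_k-\varepsilon,\lambda_k+\varepsilon)\cap\supp(\mu)=\varnothing$, and no other atom of $\nu_1^{\perp}$ lies in the window by part~(1). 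Testing against a continuous bump $\eta_k$ that equals $1$ on a neighborhood of $\lambda_k$ and is supported in the window gives $\tfrac{1}{M}\langle\bm{u}_*,\bm{u}_k(\bm{Y})\rangle^2 = \int\eta_k\,d\nu_{M,1}\ac\nu_1(\{\lambda_k\})$; the identical argument applied to $\nu_{N,2}$ produces the second overlap limit. For the cross measure, the eigenvectors of $\widehat{\bm{Y}}$ at $\pm\sigma_{k,M}$ are $\tfrac{1}{\sqrt{2}}(\bm{u}_k(\bm{Y})^\UT,\pm\bm{v}_k(\bm{Y})^\UT)^\UT$, so the contributions to $\nu_{L,3}$ at $+\sigma_{k,M}$ and $-\sigma_{k,M}$ combine into $\tfrac{1}{L}\langle\bm{u}_*,\bm{u}_k(\bm{Y})\rangle\langle\bm{v}_*,\bm{v}_k(\bm{Y})\rangle$; passing to the limit and using $L=M+N\sim N(1+\delta)$ versus $\sqrt{MN}\sim N\sqrt{\delta}$ produces the factor $2(1+\delta)/\sqrt{\delta}$.

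The main obstacle is the almost-sure uniform convergence $\phi_M\to\Gamma$ on a real neighborhood of each $\lambda_k$ that is needed to invoke Rouch\'e. This reduces to controlling the three quadratic forms $\bm{u}_*^\UT(\lambda\mathbf{I}-\bm{W}\bm{W}^\UT)^{-1}\bm{u}_*$, $\bm{u}_*^\UT(\lambda\mathbf{I}-\bm{W}\bm{W}^\UT)^{-1}\bm{W}\bm{v}_*$ and $\bm{v}_*^\UT\bm{W}^\UT(\lambda\mathbf{I}-\bm{W}\bm{W}^\UT)^{-1}\bm{W}\bm{v}_*$ simultaneously in $\lambda$ on a compact set disjoint from $\supp_{\varepsilon}(\mu)$. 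Assumption~\ref{assump:main}(d) keeps $(\lambda\mathbf{I}-\bm{W}\bm{W}^\UT)^{-1}$ uniformly well-conditioned there, and the remaining inputs are Haar concentration and the deterministic equivalent for the resolvent trace. These ingredients are standard for deformed rotationally invariant ensembles, but must be carried through both the rectangular geometry (requiring both $\bm{U_W}$ and $\bm{V_W}$ to be handled in concert) and the presence of potentially multiple outliers, extending the arguments of \cite{Benaych-Georges2012singular,belinschi2017outliers} to the multi-outlier setting.
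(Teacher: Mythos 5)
Your proposal follows essentially the same route as the paper: the same choice of $\varepsilon$ for part (1); the same rank-two secular-equation function (the paper's empirical master function $\Gamma_M$) with locally uniform almost-sure convergence to $\Gamma$ and a zero-counting argument for parts (2)--(3) (the paper uses the argument principle where you invoke Rouch\'e, which is equivalent here); and the same bump-function testing of $\nu_{M,1},\nu_{N,2},\nu_{L,3}$ together with the symmetric-dilation bookkeeping yielding the factor $2(1+\delta)/\sqrt{\delta}$ for part (4). The one detail worth adding is that the secular equation characterizes only the \emph{positive} eigenvalues of $\bm{Y}\bm{Y}^\UT$ lying outside $\spp(\bm{W}\bm{W}^\UT)$, so to complete \eqref{eq:no-extra} the paper must separately rule out a zero eigenvalue of $\bm{Y}\bm{Y}^\UT$ landing off the bulk, which it does via a Haar-measure-zero argument.
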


\proofseeapp{\propref{prop:outlier_characterization}}{sec:spectral_measures_properties}

\begin{figure}[htbp]
  \centering

  \begin{minipage}[t]{0.33\textwidth}
    \centering
    \includegraphics[width=\linewidth]{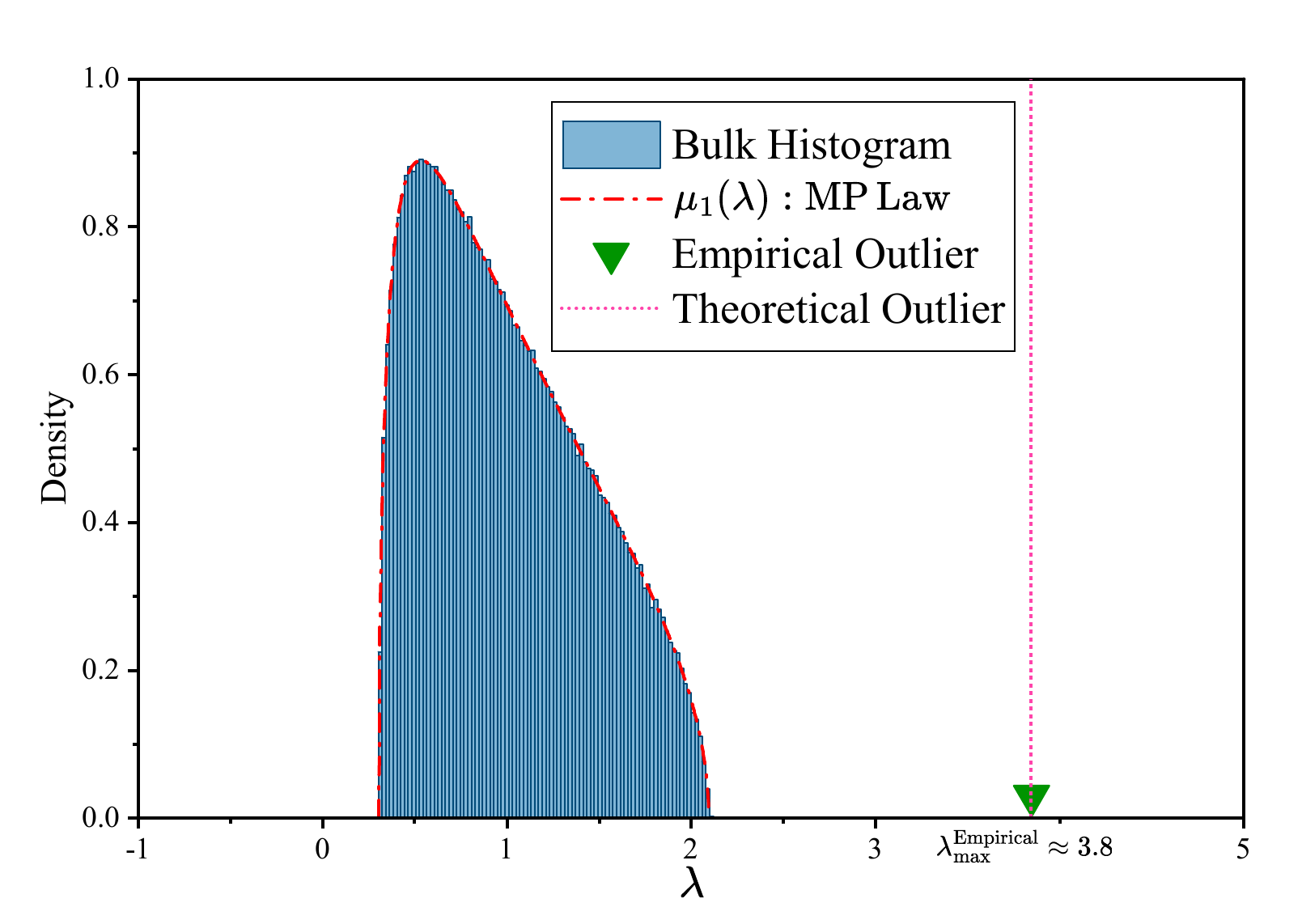}
  \end{minipage}\hfill
  \begin{minipage}[t]{0.33\textwidth}
    \centering
    \includegraphics[width=\linewidth]{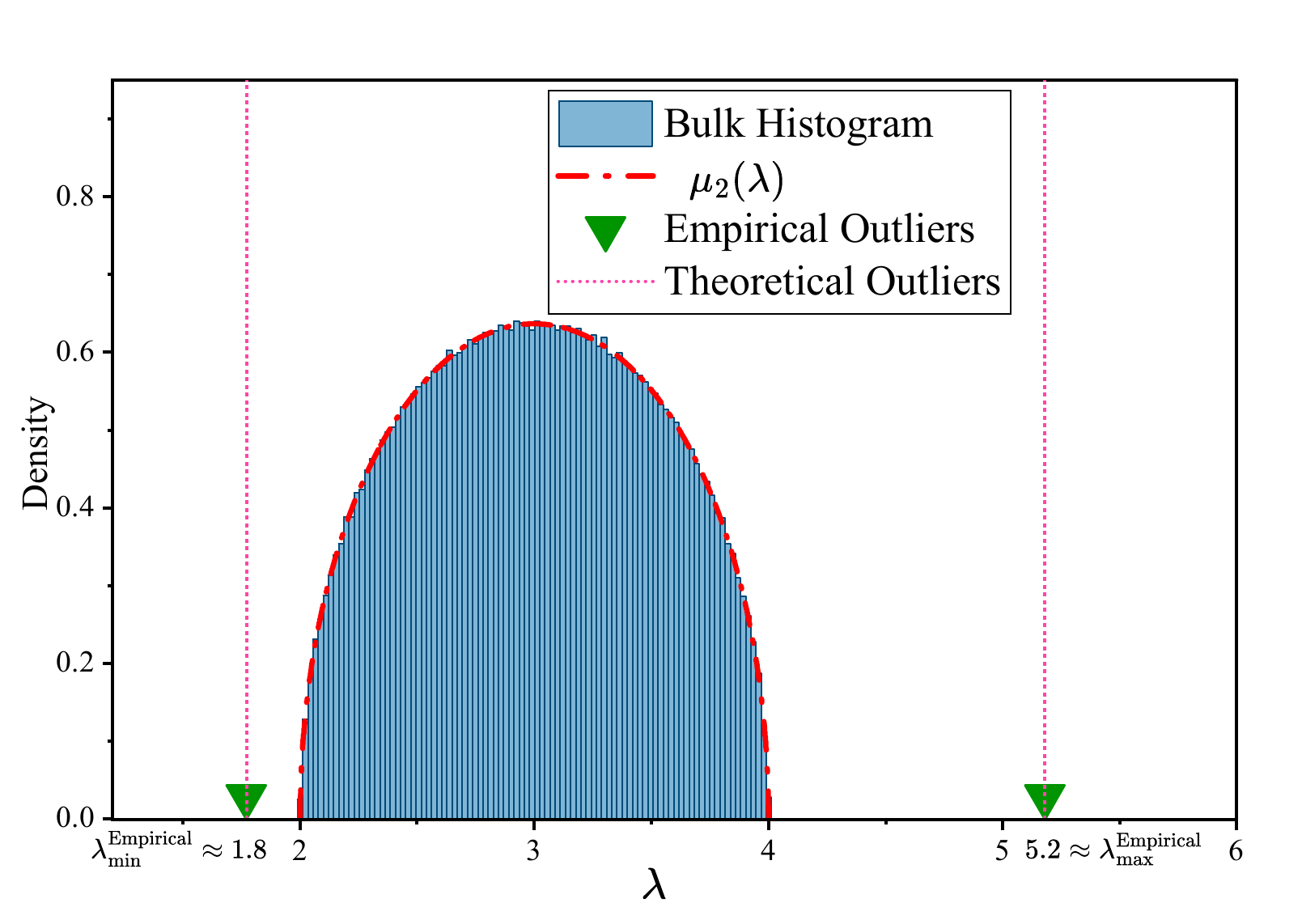}
  \end{minipage}\hfill 
  \begin{minipage}[t]{0.33\textwidth} 
    \centering
    \includegraphics[width=\linewidth]{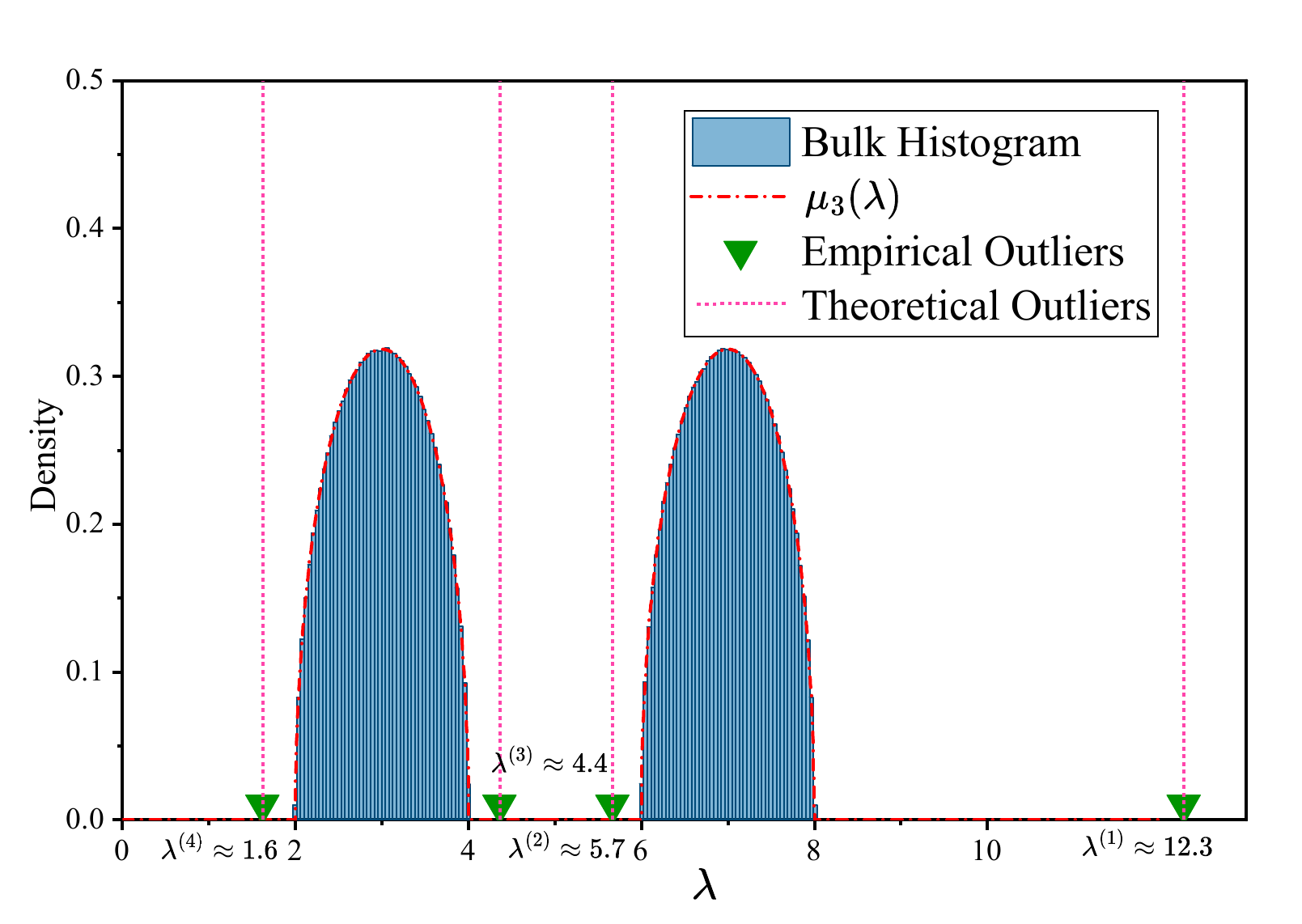}
  \end{minipage}

  \caption{Spectral behavior of the rectangular spiked model in super critical $\theta$-regime under different noise distributions. 
Left: Gaussian noise; the bulk follows the Mar\v{c}enko--Pastur density $\mu_1(\lambda)$ and exhibits a single outlier. Center: Non-Gaussian noise with bulk $\mu_2(\lambda)=\frac{2}{\pi}\sqrt{(\lambda-2)(4-\lambda)}\,\mathbf{1}_{[2,4]}(\lambda)$, producing two outliers. Right: Non-Gaussian noise with bulk $\mu_3(\lambda)=\frac{1}{\pi}\sqrt{(\lambda-2)(4-\lambda)}\,\mathbf{1}_{[2,4]}(\lambda)+\frac{1}{\pi}\sqrt{(\lambda-6)(8-\lambda)}\,\mathbf{1}_{[6,8]}(\lambda)$, producing multiple outliers. Dashed vertical lines indicate the real roots of the master equation in Lemma~\ref{lem:Gamma-analytic}.}
  \label{fig:combined-outliers}
\end{figure}

\begin{remark}[Multiplicity of outliers and spectral behavior]
\label{rem:multiple_outliers}
In the rectangular rotationally invariant (RI) model, a single rank-one spike may give rise to
\emph{multiple} outlier singular values.
This behavior can be understood through the analytic structure of the master equation
$\mathcal{C}(\lambda)=1/\theta^2$, where the associated C-transform
$\mathcal{C}(\lambda)$ need not be monotone on $\R\setminus\supp(\mu)$.
The importance of possible non-monotonicity of such transforms has already been noted in the
general theory of rectangular low-rank perturbations
(e.g., in the work of Benaych-Georges and Nadakuditi~\cite{Benaych-Georges2012singular}),
and it naturally allows the outlier equation to admit multiple real solutions, each
corresponding to a distinct outlier.

This phenomenon appears in two qualitatively different regimes, illustrated in
Figure~\ref{fig:combined-outliers}:
\begin{itemize}

    \item \textit{Single-interval bulk.}
    Even when the noise spectrum $\supp(\mu)$ consists of a single connected interval,
    the effective rank-two structure of the perturbation $\bm Y\bm Y^\UT$ can generate
    \emph{two} distinct outliers, typically appearing on opposite sides of the bulk;
    see the center panel of Fig.~\ref{fig:combined-outliers}.

    \item \textit{Multi-interval bulk.}
    When $\supp(\mu)$ consists of finitely many disjoint intervals, outliers typically emerge
    within the spectral gaps separating these intervals, as illustrated in the right panel of
    Fig.~\ref{fig:combined-outliers}.
\end{itemize}
\end{remark}

\begin{remark}[MP law and absence of sub-bulk solutions]
\label{rem:mp-no-left}
As noted in Remark~\ref{rem:multiple_outliers}, even when the noise spectrum consists of a
single interval, a rank-one rectangular spike may generate \emph{two} outliers, reflecting the
effective rank-two structure of the perturbation $\bm Y\bm Y^\UT$.
In the classical spiked Mar\v{c}enko--Pastur (MP) setting, however, the potential outlier below the
lower edge of the MP law does not occur: when the spike is supercritical, the unique outlier
lies strictly above the upper edge.
See the left panel of Fig.~\ref{fig:combined-outliers} for illustration.
While this conclusion follows immediately from the Mar\v{c}enko--Pastur specialization of the
general outlier equation, we are not aware of a reference where it is stated explicitly; for
this reason, we record it as Lemma~\ref{lem:MP-right-only} (in Appendix \ref{sec:example}).
\end{remark}

\section{Orthogonal Approximate Message Passing Algorithms}\label{Sec:OAMP_main}

This section introduces a family of Orthogonal Approximate Message Passing (OAMP) algorithms for rank-one rectangular matrix estimation. The construction relies on a set of spectral denoisers, iterate denoisers, and side information, together with \textit{trace-free} and \textit{divergence-free} conditions that ensure a closed-form state evolution description.

\subsection{Orthogonal AMP for Rectangular Spiked Models}

\begin{definition}[OAMP algorithm]\label{def:OAMP}
Given the observation matrix $\bm{Y}\in\R^{M\times N}$ and side information vectors $\bm{a}\in\R^{M\times k}$ and $\bm{b}\in\R^{N\times k}$, an OAMP algorithm generates iterates $(\bm{u}_t)_{t\ge 1}$ and $(\bm{v}_t)_{t\ge 1}$ through the updates
\begin{align}
\bm{u}_t &=
F_t(\bm{Y}\bm{Y}^\top)\,f_t(\bm{u}_{1},\dots,\bm{u}_{t-1};\bm{a})
\;+\;
\widetilde{F}_t(\bm{Y}\bm{Y}^\top)\,\bm{Y}\,g_t(\bm{v}_{1},\dots,\bm{v}_{t-1};\bm{b}), 
\label{eq:OAMP Algo u}\\
\bm{v}_t &=
G_t(\bm{Y}^\top\bm{Y})\,g_t(\bm{v}_{1},\dots,\bm{v}_{t-1};\bm{b})
\;+\;
\widetilde{G}_t(\bm{Y}^\top\bm{Y})\,\bm{Y}^\top f_t(\bm{u}_{1},\dots,\bm{u}_{t-1};\bm{a}),
\label{eq:OAMP Algo v}
\end{align}
for $t\ge 1$. Here $F_t,\widetilde{F}_t,G_t,\widetilde{G}_t$ are \emph{spectral denoisers}: if
$\bm{Y}\bm{Y}^\top=\bm{U}\mathrm{diag}(\lambda_i)\bm{U}^\top$, then
\[
F_t(\bm{Y}\bm{Y}^\top)
= \bm{U}\mathrm{diag}(F_t(\lambda_i))\bm{U}^\top,
\]
and similarly for the other spectral denoisers. The functions $f_t,g_t$ are \emph{iterate denoisers} applied entrywise to vector inputs. At iteration $t$, the estimates of the signals $\bm{u}_*$ and $\bm{v}_*$ are produced by postprocessing maps $\phi_{u,t}$ and $\phi_{v,t}$,
\begin{align}
\hat{\bm{u}}_t &= \phi_{u,t}(\bm{u}_1,\dots,\bm{u}_t;\bm{a}), 
\label{eq:Esitimate u}\\
\hat{\bm{v}}_t &= \phi_{v,t}(\bm{v}_1,\dots,\bm{v}_t;\bm{b}).
\label{eq: Esitimate v}
\end{align}
\end{definition}

\paragraph{Regularity and orthogonality constraints.}
We require the spectral denoisers be dimension-independent and continuous on $\mathrm{supp}(\mu)$ and $F_t,G_t$ (but not $\tilde{F}_t$ and $\tilde{G}_t$) satisfy the \emph{trace-free constraint}
\begin{align}\label{eq:trace free}
\frac{1}{M}\mathrm{Tr}\,F_t(\bm{Y}\bm{Y}^\top)
&\xrightarrow{\mathrm{a.s.}} \langle F_t(\lambda)\rangle_\mu = 0, \\
\frac{1}{N}\mathrm{Tr}\,G_t(\bm{Y}^\top\bm{Y})
&\xrightarrow{\mathrm{a.s.}} \langle G_t(\lambda)\rangle_{\widetilde{\mu}} = 0.
\end{align}
The iterate denoisers $f_t,g_t$ and the postprocessing functions must be dimension-independent, Lipschitz, and continuously differentiable. Furthermore, the sequence $(f_t)_{t\ge1}$ and $(g_t)_{t\ge1}$ satisfy the \emph{divergence-free condition}
\begin{align}\label{eq: divergence free constraint}
\mathbb{E}[\partial_s f_t(\mathsf{U}_1,\dots,\mathsf{U}_{t-1};\mathsf{A})]=0,\qquad
\mathbb{E}[\partial_s g_t(\mathsf{V}_1,\dots,\mathsf{V}_{t-1};\mathsf{B})]=0,
\end{align}
for every $s<t$, where the expectations are taken under the limiting joint laws $(\mathsf{U}_*,\mathsf{A})\sim\pi_u$ and $(\mathsf{V}_*,\mathsf{B})\sim\pi_v$.

The trace-free and divergence-free conditions in OAMP (and vector AMP) algorithms \cite{ma2017orthogonal,rangan2019vector} ensure that the effective noise in each update is asymptotically orthogonal to all past iterates, thereby removing the Onsager correction and enabling a valid state evolution characterization.

\paragraph{State Evolution Random Variables.} Each OAMP algorithm is associated with a collection of state evolution random variables. It describes the joint asymptotic behavior of the signal, the iterates and the side information \( \bm{a}, \bm{b} \). Let $\mathsf{Z}_{u,t},\mathsf{Z}_{v,t}$ be Gaussian random variables, the distributions are given by
\begin{subequations} \label{eq: SE R.V.-original} 
\begin{align}
(\mathsf{U}_{*}, \mathsf{A}) &\sim \pi_u, \quad \mathsf{U}_t = \mu_{u,t} \mathsf{U}_{*} + \mathsf{Z}_{u,t} \quad \forall t \in \mathbb{N},  \label{eq:SEuchannel} \\  
(\mathsf{V}_{*}, \mathsf{B}) &\sim \pi_v, \quad \mathsf{V}_t = \mu_{v,t}  \mathsf{V}_{*} + \mathsf{Z}_{v,t} \quad \forall t \in \mathbb{N}.  \label{eq:SEvchannel}  
\end{align}
\end{subequations}
The random variables of the iterative denoisers with side information are formally defined as
\BS
\begin{align}\label{eq:longmemory-F-G}
\mathsf{F}_t &= f_t(\mathsf{U}_1,\ldots,\mathsf{U}_{t-1};\mathsf{A}),  \quad \mathsf{G}_t = g_t(\mathsf{V}_1,\ldots,\mathsf{V}_{t-1};\mathsf{B}).
\end{align}
The alignment metrics between estimates and ground truth are characterized by
\begin{align}\label{eq:alignment-alphabeta}
\alpha_t &= \mathbb{E}[\mathsf{U}_*\mathsf{F}_t], \quad \beta_t = \mathbb{E}[\mathsf{V}_*\mathsf{G}_t].
\end{align}
Consequently, the residual covariances are defined as
\begin{align}\label{eq:SE-sigma-fg}
\sigma^2_{f,st} &= \mathbb{E}[\mathsf{F}_s \mathsf{F}_t] - \alpha_s\alpha_t, \quad \sigma^2_{g,st} = \mathbb{E}[\mathsf{G}_s\mathsf{G}_t] - \beta_s\beta_t. 
\end{align}
\ES
The coefficients in \eqref{eq:SEuchannel} and \eqref{eq:SEvchannel} are defined via the recursion
\BS \label{eq: mean value of SE1}
\begin{align}
 \mu_{u,t}
 &\bydef \alpha_t \big\langle F_t(\lambda) \big\rangle_{\nu_1}
   + \beta_t (1+\delta^{-1}) \big\langle \sigma\,\widetilde{F}_t(\sigma^2) \big\rangle_{\nu_3},
 \\
 \mu_{v,t}
 &\bydef \beta_t \big\langle G_t(\lambda) \big\rangle_{\nu_2}
   + \alpha_t (1+\delta) \big\langle \sigma\,\widetilde{G}_t(\sigma^2) \big\rangle_{\nu_3}.
\end{align}
\ES
The variables \( (\mathsf{Z}_{u,t}, \mathsf{Z}_{v,t})_{t\in \mathbb{N}} \) are zero-mean jointly Gaussian random variables, sampled independently of the true signals. Their covariance matrix entries are given by the following recursions for any $s, t \in \mathbb{N}$
\BS \label{eq: Cov of SE1}
\begin{align}
\Sigma_{u,st} \bydef \mathbb{E}[\mathsf{Z}_{u,s}\mathsf{Z}_{u,t}] 
&= \alpha_s\alpha_t \big\langle F_s(\lambda) F_t(\lambda) \big\rangle_{\nu_1}
  + \beta_s\beta_t \delta^{-1} \big\langle \lambda \,\tilde{F}_s(\lambda)\tilde{F}_t(\lambda) \big\rangle_{\nu_2}
  - \mu_{u,s}\mu_{u,t} \nonumber\\
&\quad+ (1+\delta^{-1})\Big(
    \alpha_s\beta_t \big\langle \sigma\,F_s(\sigma^2)\tilde{F}_t(\sigma^2) \big\rangle_{\nu_3}
  + \alpha_t\beta_s \big\langle \sigma\,F_t(\sigma^2)\tilde{F}_s(\sigma^2) \big\rangle_{\nu_3}
  \Big) \nonumber\\
&\quad+ \sigma_{f,st}^2 \big\langle F_s(\lambda) F_t(\lambda) \big\rangle_{\mu}
  + \delta^{-1} \sigma_{g,st}^2 \big\langle \lambda\,\tilde{F}_s(\lambda)\tilde{F}_t(\lambda) \big\rangle_{\widetilde{\mu}},
  \label{eq:cov_Z_u} \\
\Sigma_{v,st} \bydef \mathbb{E}[\mathsf{Z}_{v,s}\mathsf{Z}_{v,t}] 
&= \beta_s\beta_t \big\langle G_s(\lambda) G_t(\lambda) \big\rangle_{\nu_2}
  + \alpha_s\alpha_t \delta \big\langle \lambda\,\tilde{G}_s(\lambda)\tilde{G}_t(\lambda) \big\rangle_{\nu_1}
  - \mu_{v,s}\mu_{v,t} \nonumber\\
&\quad+ (1+\delta)\Big(
    \beta_s\alpha_t \big\langle \sigma\,G_s(\sigma^2)\tilde{G}_t(\sigma^2) \big\rangle_{\nu_3}
  + \beta_t\alpha_s \big\langle \sigma\,G_t(\sigma^2)\tilde{G}_s(\sigma^2) \big\rangle_{\nu_3}
  \Big) \nonumber\\
&\quad+ \sigma_{g,st}^2 \big\langle G_s(\lambda) G_t(\lambda) \big\rangle_{\widetilde{\mu}}
  + \delta \sigma_{f,st}^2 \big\langle \lambda\,\tilde{G}_s(\lambda)\tilde{G}_t(\lambda) \big\rangle_{\mu}.
  \label{eq:cov_Z_v}
\end{align}
\ES

Our first main result is the following theorem on the state evolution of the proposed OAMP algorithm for spiked matrix models.
\begin{theorem}[State evolution]\label{Thm: State Evolution}
Consider the OAMP algorithm in Definition~\ref{def:OAMP}, and let the state evolution random variables be defined as in~\eqref{eq: SE R.V.-original}. Then for each fixed $t\in\mathbb{N}$,
\begin{align}
(\bm{u}_*,\bm{u}_1,\dots,\bm{u}_t;\bm{a})
&\xrightarrow{W_2}
(\mathsf{U}_*,\mathsf{U}_1,\dots,\mathsf{U}_t;\mathsf{A}),\\
(\bm{v}_*,\bm{v}_1,\dots,\bm{v}_t;\bm{b})
&\xrightarrow{W_2}
(\mathsf{V}_*,\mathsf{V}_1,\dots,\mathsf{V}_t;\mathsf{B}),
\end{align}
where the convergence is in the Wasserstein sense of Definition~\ref{def:W2}.
\end{theorem}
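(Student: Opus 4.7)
The plan is to reduce the rectangular OAMP state evolution to a symmetric OAMP analysis by exploiting the symmetric dilation $\widehat{\bm{Y}}$ from Definition~\ref{def:spectral_measures}. First I would stack the iterates as $\bm{z}_t = (\bm{u}_t^\top, \bm{v}_t^\top)^\top \in \mathbb{R}^{M+N}$ and recast the updates \eqref{eq:OAMP Algo u}--\eqref{eq:OAMP Algo v} as spectral operations on $\widehat{\bm{Y}}$. Because $\widehat{\bm{Y}}$ has eigenvalues $\pm \sigma_i(\bm{Y})$ (with extra zeros when $\delta<1$), any block-diagonal spectral function of $\bm{Y}\bm{Y}^\top$ and $\bm{Y}^\top\bm{Y}$ arises from an even function of $\widehat{\bm{Y}}$, while the cross terms $\bm{Y}\,g_t$ and $\bm{Y}^\top f_t$ arise from odd functions of $\widehat{\bm{Y}}$. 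Since $\widehat{\bm{Y}}$ is orthogonally invariant under the conjugation action of block-diagonal Haar matrices, this packaging converts the recursion into the symmetric OAMP template analyzed in \cite{dudeja2024optimality}, and the trace-free and divergence-free constraints translate correctly to the symmetric side.

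Next I would proceed by induction on $t$ using the standard conditioning technique for Haar-invariant ensembles. At each step, conditional on the past iterates $\{\bm{u}_s,\bm{v}_s\}_{s\le t}$ together with their inner products with $\bm{u}_*,\bm{v}_*$, the Haar factors $\bm{U_W},\bm{V_W}$ decompose into a deterministic part that reproduces the observed history and a fresh Haar factor on the orthogonal complement. The trace-free condition forces the diagonal contribution of $F_t,G_t$ against the past iterates to vanish, and the divergence-free condition \eqref{eq: divergence free constraint} kills the Onsager term coming from the dependence of $f_t,g_t$ on earlier iterations. The remaining contribution splits into a projection onto the signal directions (producing the means $\mu_{u,t},\mu_{v,t}$ via the outlier overlaps of Proposition~\ref{prop:outlier_characterization}) plus a Gaussian piece whose covariance is obtained by evaluating the spectral denoisers against $\mu,\widetilde{\mu},\nu_1,\nu_2,\nu_3$, yielding precisely \eqref{eq: mean value of SE1} and \eqref{eq: Cov of SE1}. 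Wasserstein-$2$ convergence (rather than mere weak convergence) follows from the operator-norm bound in Assumption~\ref{assump:main}(c) and the Lipschitz regularity of $f_t,g_t$, which give the needed uniform integrability.

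The main obstacle is the signal contribution itself: the rank-one spike is not washed out by orthogonal conditioning and must be tracked at every iteration through the atomic parts of $\nu_1,\nu_2,\nu_3$. In the symmetric setting of \cite{dudeja2024optimality} a single supercritical spike generates only one outlier, whereas here, by Remark~\ref{rem:multiple_outliers}, each signal can produce several outliers, each carrying a share of the signal energy. To handle this I would use the spectral separation in parts (1)--(2) of Proposition~\ref{prop:outlier_characterization} to isolate each outlier eigenvector of $\bm{Y}\bm{Y}^\top$ (and correspondingly of $\widehat{\bm{Y}}$), and then decompose the action of $F_t(\bm{Y}\bm{Y}^\top)$, $\widetilde{F}_t(\bm{Y}\bm{Y}^\top)\bm{Y}$, and their transposes into a bulk piece, which integrates against $\mu$ and $\widetilde{\mu}$, plus a finite sum of outlier pieces, which by part (4) of Proposition~\ref{prop:outlier_characterization} integrate against the atoms of $\nu_1,\nu_2,\nu_3$. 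Summing the bulk and atomic contributions yields the compact integral expressions $\langle \cdot\rangle_{\nu_i}$ in the SE recursion. The residual technical bookkeeping is to verify that the odd spectral functions entering through $\widehat{\bm{Y}}$ generate exactly the cross terms coupled by $\nu_3$ with the prefactor $\sqrt{\delta}/(1+\delta)$, and that the extra atom of $\nu_2$ at $\lambda=0$ when $\delta<1$ (from Lemma~\ref{lem:spectral_measures_properties}) is absorbed automatically by the integration against $\widetilde{\mu}=\delta\mu+(1-\delta)\delta_{\{0\}}$.
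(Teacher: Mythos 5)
There is a genuine gap at the heart of your plan: you propose to run the Haar-conditioning argument directly on the $\bm{Y}$-driven recursion, isolating the spike through a ``bulk plus outliers'' decomposition of $F_t(\bm{Y}\bm{Y}^\top)$, $\widetilde F_t(\bm{Y}\bm{Y}^\top)\bm{Y}$, etc. But the conditioning technique only yields tractable conditional laws for the Haar factors $\bm{U_W},\bm{V_W}$ of the \emph{noise} matrix given linear constraints; the singular vectors of $\bm{Y}$ are correlated with $(\bm{u}_*,\bm{v}_*)$ in a way that destroys this structure. Proposition~\ref{prop:outlier_characterization} controls scalar overlaps of the outlier eigenvectors, which is far from the vector-level ($o(\sqrt{M})$ in $\ell_2$) control you would need to split the action of a spectral denoiser on an arbitrary past iterate into a bulk piece integrating against $\mu$ and a finite atomic piece. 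The paper sidesteps this entirely: it first replaces the matrix denoisers by polynomials (Weierstrass), then uses the algebraic transfer identity of Lemma~\ref{lem:aux1} — e.g.\ $(\bm{Y}\bm{Y}^\top)^d\bm{u}_*\simeq Q_u^d(\bm{W}\bm{W}^\top)\bm{u}_*+Q_v^d(\bm{W}\bm{W}^\top)\bm{W}\bm{v}_*$, and the fact that orthogonal residuals pass through $\bm{Y}$-spectral functions as if through $\bm{W}$-spectral functions — to construct an auxiliary recursion driven only by $\bm{W}$ (eq.~\eqref{eq:aux-OAMP}), to which the general bi-rotationally-invariant state evolution (Theorem~\ref{The:SE_app}) applies. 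The measures $\nu_1,\nu_2,\nu_3$ then enter only through deterministic limits of quadratic/bilinear forms (Proposition~\ref{prop:integral_representation}), never through an eigenvector-level decomposition of $\bm{Y}$. Without this transfer step or an equivalent device, your induction does not close.

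A secondary issue is the proposed reduction to the symmetric OAMP of \cite{dudeja2024optimality} via the dilation $\widehat{\bm{Y}}$. The dilated noise $\widehat{\bm{W}}$ is invariant only under conjugation by \emph{block-diagonal} orthogonal matrices $\mathrm{diag}(\bm{O}_1,\bm{O}_2)$, not under the full Haar conjugation on $\mathbb{O}(M+N)$ assumed in the symmetric analysis; its limiting spectral measure, the trace-free normalizations ($\langle F_t\rangle_\mu=0$ versus $\langle G_t\rangle_{\widetilde\mu}=0$ separately), and the two independent Gaussian noise processes in the $u$- and $v$-channels all reflect this. So the symmetric result cannot be quoted as a black box; one must redo the conditioning with two independent Haar factors, which is exactly what the paper's Appendix~\ref{app:general-SE} does. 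Your intuitions about the roles of the trace-free and divergence-free constraints and about the signal energy being carried by the atoms of $\nu_1,\nu_2,\nu_3$ are correct, but the argument as written does not constitute a proof.
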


\begin{proof}
See Appendix~\ref{Sec:OAMP_SE_proof}.%
\end{proof}%


\subsection{The Optimal OAMP Algorithm}\label{sec:optimal OAMP}

We now specialize the OAMP framework to derive an algorithm that achieves the Bayes-optimal performance predicted by state evolution. The resulting procedure uses MMSE-based scalar denoisers, optimal spectral shrinkage functions derived from the limiting spectral measures, and cosine similarity parameters that track alignment with the true signals.

\paragraph{Final algorithm.}
The optimal OAMP iterates $(\bm{u}_t^*)_{t\ge1}$ and $(\bm{v}_t^*)_{t\ge1}$ are defined using the squared cosine similarities 
$w_{1,t}, w_{2,t}\in[0,1)$ and take the form
\begin{align}
\bm{u}_t^*
&= \frac{1}{\sqrt{w_{1,t}}} \Big[
F_t^*(\bm{Y}\bm{Y}^\top)\,
\bar{\phi}\big(\bm{u}_{t-1}^*;\bm{a}\,|\,w_{1,t-1}\big)
+
\widetilde{F}_t^*(\bm{Y}\bm{Y}^\top)\,
\bm{Y}\,\bar{\phi}\big(\bm{v}_{t-1}^*;\bm{b}\,|\,w_{2,t-1}\big)
\Big], \label{eq:u-opt}\\[0.3em]
\bm{v}_t^*
&= \frac{1}{\sqrt{w_{2,t}}} \Big[
G_t^*(\bm{Y}^\top\bm{Y})\,
\bar{\phi}\big(\bm{v}_{t-1}^*;\bm{b}\,|\,w_{2,t-1}\big)
+
\widetilde{G}_t^*(\bm{Y}^\top\bm{Y})\,
\bm{Y}^\top\,\bar{\phi}\big(\bm{u}_{t-1}^*;\bm{a}\,|\,w_{1,t-1}\big)
\Big].\label{eq:v-opt}
\end{align}
The estimates at iteration $t$ are
\[
\hat{\bm{u}}_t^* \,=\, \phi(\bm{u}_t^*;\bm{a}\,|\,w_{1,t}),
\qquad
\hat{\bm{v}}_t^* \,=\, \phi(\bm{v}_t^*;\bm{b}\,|\,w_{2,t}).
\]

\begin{remark}
The prefactors $1/\sqrt{w_{1,t}}$ and $1/\sqrt{w_{2,t}}$ normalize the iterates so that the corresponding state-evolution variables satisfy $\mathbb{E}[(\mathsf{U}_t^*)^2]=\mathbb{E}[(\mathsf{V}_t^*)^2]=1$. This is a convention.
\end{remark}

\subsubsection*{1. Scalar MMSE and DMMSE denoisers}

The scalar MMSE function $\phi$ and the divergence-free MMSE (DMMSE) function $\bar\phi$
follow \cite[Definition~3]{dudeja2024optimality}.  
For a scalar Gaussian channel
\[
\mathsf{X} \mid (\mathsf{X}_*,\mathsf{C}) \sim 
\mathcal{N}\!\left(\sqrt{\omega}\,\mathsf{X}_*,\,1-\omega\right),
\qquad \omega\in[0,1),
\]
the denoisers are
\begin{align}\label{eq:mmse-scalar}
\phi(x;c\,|\,\omega)
&\bydef \mathbb{E}[\mathsf{X}_* \mid \mathsf{X} = x,\,\mathsf{C}=c],\\
\bar{\phi}(x;c\,|\,\omega)
&\bydef
\frac{\phi(x;c\,|\,\omega) - \frac{1}{\sqrt{1-\omega}}\mathbb{E}[\,\mathsf{Z}\,\phi(\mathsf{X};\mathsf{C}\,|\,\omega)\,]\,x}
{1 - \frac{\sqrt{\omega}}{\sqrt{1-\omega}}\mathbb{E}[\,\mathsf{Z}\,\phi(\mathsf{X};\mathsf{C}\,|\,\omega)\,]},
\quad \mathsf{Z}\sim\mathcal{N}(0,1).
\end{align}
The DMMSE denoiser enforces the divergence-free condition required by OAMP.

\begin{assumption}\label{assump:lip-MMSE}
For every $\omega\in[0,1)$, the MMSE estimator 
$\phi(\cdot |\omega)$ is continuously differentiable and Lipschitz.
\end{assumption}

\subsubsection*{2. Optimal spectral denoisers}

Let the shrinkage functions $\varphi_1,\varphi_2,\varphi_3$ be given by
\eqref{eq:phiu}–\eqref{eq:phiv}.  
For SE parameters $\rho_{1,t},\rho_{2,t}>0$, define
\begin{align}
P_t^*(\lambda)
&\bydef
\frac{\lambda(\rho_{2,t}\varphi_2(\lambda)+\delta)}
{(\rho_{1,t}\varphi_1(\lambda)+1)\,(\rho_{2,t}\varphi_2(\lambda)+\delta)\,\lambda
 - \rho_{1,t}\rho_{2,t}\varphi_3(\lambda)^2},\\[0.2em]
\widetilde{P}_t^*(\lambda)
&\bydef
\frac{\sqrt{\delta}\,\rho_{2,t}\varphi_3(\lambda)}
{(\rho_{1,t}\varphi_1(\lambda)+1)\,(\rho_{2,t}\varphi_2(\lambda)+\delta)\,\lambda
 - \rho_{1,t}\rho_{2,t}\varphi_3(\lambda)^2},\\[0.2em]
Q_t^*(\lambda)
&\bydef
\frac{\delta\lambda(\rho_{1,t}\varphi_1(\lambda)+1)}
{(\rho_{1,t}\varphi_1(\lambda)+1)\,(\rho_{2,t}\varphi_2(\lambda)+\delta)\,\lambda
 - \rho_{1,t}\rho_{2,t}\varphi_3(\lambda)^2},\\[0.2em]
\widetilde{Q}_t^*(\lambda)
&\bydef
\frac{\sqrt{\delta}\,\rho_{1,t}\varphi_3(\lambda)}
{(\rho_{1,t}\varphi_1(\lambda)+1)\,(\rho_{2,t}\varphi_2(\lambda)+\delta)\,\lambda
 - \rho_{1,t}\rho_{2,t}\varphi_3(\lambda)^2}.
\end{align}

The trace-free optimal matrix denoisers are then
\begin{align}
F_t^*(\lambda)
&\bydef
\left(1+\frac{1}{\rho_{1,t}}\right)\left(1 - \frac{P_t^*(\lambda)}{\langle P_t^*\rangle_{\mu}}\right),\qquad
\widetilde{F}_t^*(\lambda)
\bydef
\left(1+\frac{1}{\rho_{2,t}}\right)\frac{\widetilde{P}_t^*(\lambda)}{\langle P_t^*\rangle_{\mu}},\label{eq:Fstar}\\
G_t^*(\lambda)
&\bydef
\left(1+\frac{1}{\rho_{2,t}}\right)\left( 1 - \frac{Q_t^*(\lambda)}{\langle Q_t^*\rangle_{\widetilde{\mu}}}\right),\qquad
\widetilde{G}_t^*(\lambda)
\bydef
\left(1+\frac{1}{\rho_{1,t}}\right)\frac{\widetilde{Q}_t^*(\lambda)}{\langle Q_t^*\rangle_{\widetilde{\mu}}}.\label{eq:tGstar}
\end{align}

\subsubsection*{3. Recursion for $(w_{i,t},\rho_{i,t})$}

Let $\mathrm{mmse}_{\mathsf{X}}(w)
\bydef
\mathbb{E}\!\left[(\mathsf{X}_* - \mathbb{E}[\mathsf{X}_*|\mathsf{X}])^2\right]$, where $\mathsf{X}= \sqrt{w}\mathsf{X}_*+\sqrt{1-w}\mathsf{Z}$. Then
\begin{align}\label{eq:OptimalRecursion}
\rho_{1,t}
&= \frac{1}{\mathrm{mmse}_{\mathsf{U}}(w_{1,t-1})}
  - \frac{1}{1-w_{1,t-1}},&
\rho_{2,t}
&= \frac{1}{\mathrm{mmse}_{\mathsf{V}}(w_{2,t-1})}
  - \frac{1}{1-w_{2,t-1}},\\
w_{1,t}
&= 1 -
\frac{1-\langle P_t^*\rangle_{\mu}}
{\langle P_t^*\rangle_{\mu}}\,\frac{1}{\rho_{1,t}},&
w_{2,t}
&= 1 -
\frac{1-\langle Q_t^*\rangle_{\widetilde{\mu}}}
{\langle Q_t^*\rangle_{\widetilde{\mu}}}\,\frac{1}{\rho_{2,t}}.
\end{align}
The recursion is initialized via $w_{1,0},w_{2,0}\in(0,1)$.

\begin{proposition}[State evolution: optimal OAMP]\label{prop: optimal SE}
Let $(\mathsf{U}_*,\mathsf{U}_t^*;\mathsf{A})$ and $(\mathsf{V}_*,\mathsf{V}_t^*;\mathsf{B})$ denote the state-evolution variables associated with the optimal OAMP iterates. Then:

\begin{enumerate}
\item For $i\in\{1,2\}$, we have $w_{i,t}\in(0,1)$ and $\rho_{i,t}>0$, and
$(\mathsf{U}_*,\mathsf{U}_t^*)$ and $(\mathsf{V}_*,\mathsf{V}_t^*)$
form scalar Gaussian channels with similarities $w_{1,t}$ and $w_{2,t}$, respectively. Moreover,
\[
\lim_{M\to\infty}\frac{\|\hat{\bm{u}}_t^*-\bm{u}_*\|^2}{M}
\overset{\mathrm{a.s.}}{=}
\mathrm{mmse}_{\mathsf{U}}(w_{1,t}),\qquad
\lim_{N\to\infty}\frac{\|\hat{\bm{v}}_t^*-\bm{v}_*\|^2}{N}
\overset{\mathrm{a.s.}}{=}
\mathrm{mmse}_{\mathsf{V}}(w_{2,t}).
\]

\item The sequence $(\rho_{1,t},\rho_{2,t},w_{1,t},w_{2,t})$ is monotone and converges to $(\rho_1^*,\rho_2^*,w_1^*,w_2^*)\in(0,\infty)^2\times[0,1)^2$ satisfying
\BS\label{eq: fixed point equation of OAMP-general}
\begin{align}
\rho_{1}
&= \frac{1}{\mathrm{mmse}_{\mathsf{U}}(w_{1})} - \frac{1}{1-w_{1}}, &
\mathrm{mmse}_{\mathsf{U}}(w_{1})
&= \frac{1}{\rho_{1}}\left(1-\langle P^*\rangle_{\mu}\right), \\[0.3em]
\rho_{2}
&= \frac{1}{\mathrm{mmse}_{\mathsf{V}}(w_{2})} - \frac{1}{1-w_{2}}, &
\mathrm{mmse}_{\mathsf{V}}(w_{2})
&= \frac{1}{\rho_{2}}\left(1-\langle Q^*\rangle_{\widetilde{\mu}}\right).
\end{align}
\ES
Consequently,
\[
\lim_{t\to\infty}\lim_{M\to\infty}
\frac{\|\hat{\bm{u}}_t^*-\bm{u}_*\|^2}{M}
\overset{\mathrm{a.s.}}{=}
\mathrm{mmse}_{\mathsf{U}}(w_1^*),\qquad
\lim_{t\to\infty}\lim_{N\to\infty}
\frac{\|\hat{\bm{v}}_t^*-\bm{v}_*\|^2}{N}
\overset{\mathrm{a.s.}}{=}
\mathrm{mmse}_{\mathsf{V}}(w_2^*).
\]
\end{enumerate}
\end{proposition}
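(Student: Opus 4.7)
The plan is to derive Proposition~\ref{prop: optimal SE} by specializing the general state evolution of Theorem~\ref{Thm: State Evolution} to the optimal denoisers constructed in Section~\ref{sec:optimal OAMP}, and then analyzing the resulting deterministic recursion on $(w_{1,t}, w_{2,t})$. The first step is to verify admissibility: the trace-free condition \eqref{eq:trace free} holds by construction, since $F_t^*, G_t^*$ in \eqref{eq:Fstar}--\eqref{eq:tGstar} take the form $c\,(1 - P/\langle P\rangle)$; the divergence-free property \eqref{eq: divergence free constraint} is built into the affine normalization in \eqref{eq:mmse-scalar} and is equivalent (by Stein's identity) to $\E[\partial_x\bar\phi] = 0$; and Assumption~\ref{assump:lip-MMSE} together with continuity of $\varphi_1,\varphi_2,\varphi_3$ on $\supp(\mu)$ supplies the remaining regularity. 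Basic positivity of the denominators in $P_t^*, Q_t^*$ on $\supp(\mu)$ would also need to be checked, which follows from positivity of $\rho_{i,t}$ and $\varphi_i$.

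The heart of part~(1) is an induction on $t$ showing that $(\mathsf{U}_\ast,\mathsf{U}_t^*)$ and $(\mathsf{V}_\ast,\mathsf{V}_t^*)$ form scalar Gaussian channels of squared cosine similarities $w_{1,t}, w_{2,t}\in(0,1)$ and that $\rho_{1,t},\rho_{2,t}>0$. Granting this at step $t$, I would apply Theorem~\ref{Thm: State Evolution} with $f_{t+1}=\bar\phi(\cdot;\cdot|w_{1,t})$, $g_{t+1}=\bar\phi(\cdot;\cdot|w_{2,t})$ and the optimal matrix denoisers, obtaining SE variables $\widetilde{\mathsf{U}}_{t+1}, \widetilde{\mathsf{V}}_{t+1}$ for the unnormalized brackets in \eqref{eq:u-opt}--\eqref{eq:v-opt}. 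Standard scalar DMMSE identities (cf.\ \cite[Section~3]{dudeja2024optimality}) express $\E[\mathsf{U}_\ast\bar\phi]$ and $\E[\bar\phi^2]$ in closed form in terms of $\mathrm{mmse}_\mathsf{U}(w_{1,t})$ and $w_{1,t}$, yielding the alignment coefficient $\alpha_{t+1}$ and residual variance $\sigma_{f,t+1,t+1}^2$ appearing in \eqref{eq: mean value of SE1}--\eqref{eq: Cov of SE1}. Simultaneously, substituting the densities \eqref{eq:nu-ac} into the bulk integrals of \eqref{eq:cov_Z_u}--\eqref{eq:cov_Z_v} collapses the integrals against $\nu_1,\nu_2,\nu_3$ to integrals against $\mu$ and $\widetilde\mu$ of rational expressions in $P_{t+1}^*, \widetilde P_{t+1}^*, Q_{t+1}^*, \widetilde Q_{t+1}^*$. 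These simplify to
\[
\E[\widetilde{\mathsf{U}}_{t+1}\,\mathsf{U}_\ast] = w_{1,t+1}, \qquad \E[\widetilde{\mathsf{U}}_{t+1}^{\,2}] = w_{1,t+1},
\]
with $w_{1,t+1}$ given by the recursion \eqref{eq:OptimalRecursion}, and analogously for $\widetilde{\mathsf{V}}_{t+1}$. The prefactor $1/\sqrt{w_{1,t+1}}$ in \eqref{eq:u-opt} then recovers a Gaussian channel of similarity $w_{1,t+1}$, which closes the induction. Positivity $\rho_{i,t+1}>0$ follows from the inequality $\mathrm{mmse}_\mathsf{U}(w)\le 1-w$, and $w_{i,t+1}\in(0,1)$ from the estimate $\langle P_{t+1}^*\rangle_\mu \in (0,1)$. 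The MSE statement then follows from continuous mapping applied to $\phi(\cdot;\cdot|w_{1,t})$ under the $W_2$-convergence of Theorem~\ref{Thm: State Evolution}.

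For part~(2), the recursion \eqref{eq:OptimalRecursion} defines a continuous map $T:[0,1)^2\to[0,1)^2$ via $(w_1,w_2)\mapsto(\rho_1,\rho_2)\mapsto(w_1',w_2')$. Monotonicity follows from two facts: by I-MMSE, $w\mapsto 1/\mathrm{mmse}_\mathsf{U}(w)-1/(1-w)$ is nondecreasing; and raising $\rho_1, \rho_2$ pointwise raises $P^*, Q^*$ (hence $\langle P^*\rangle_\mu, \langle Q^*\rangle_{\widetilde\mu}$), which in turn raises $w_1', w_2'$. Since $(w_{1,t},w_{2,t})$ is monotone and bounded in $[0,1)^2$, it converges to a fixed point $(w_1^*,w_2^*)$ satisfying \eqref{eq: fixed point equation of OAMP-general} by continuity; exchange of $\lim_{t\to\infty}$ and $\lim_{M\to\infty}$ in the final MSE display follows from monotone convergence of $w_{i,t}$ and continuity of $\mathrm{mmse}_\mathsf{U}$. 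The principal obstacle throughout is the algebraic reduction in the induction step: showing that the apparently intricate combination of integrals of $\varphi_1,\varphi_2,\varphi_3$ against $\mu,\widetilde\mu,\nu_1,\nu_2,\nu_3$ in \eqref{eq:cov_Z_u}--\eqref{eq:cov_Z_v} collapses exactly to the pre-normalization moments above. This collapse relies on the Sokhotski--Plemelj identity \eqref{eq: norm of the master equation} for $|1-\theta^2\mathcal{C}|^2$ and on the representation \eqref{eq:nu-ac} of the absolutely continuous parts of $\nu_i$ via the shrinkage functions. A secondary difficulty is the cross-coupling term $\rho_1\rho_2\varphi_3^2$ in the denominators of $P^*, Q^*$, which links the two streams and must be handled carefully in the coordinate-wise monotonicity argument.
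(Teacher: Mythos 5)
Your treatment of part~(1) follows essentially the same route as the paper: induct on $t$, invoke Theorem~\ref{Thm: State Evolution} with the DMMSE iterate denoisers and the optimal spectral denoisers, and reduce the mean/variance formulas \eqref{eq: mean value of SE1}--\eqref{eq: Cov of SE1} to the recursion \eqref{eq:OptimalRecursion}. You correctly identify the algebraic collapse as the crux; in the paper this collapse is the normalization identity $\langle Q_t^*\rangle_{\nu_2}-(1+\delta)\langle\sigma\widetilde Q_t^*(\sigma^2)\rangle_{\nu_3}=\rho_{2,t}^{-1}(1-\langle Q_t^*\rangle_{\widetilde\mu})$, and it is verified not by brute-force Sokhotski--Plemelj substitution but by exploiting the first-order optimality conditions defining $(Q_t^*,\widetilde Q_t^*)$ (the bulk stationarity equations, the outlier matching identity, and the atom-at-zero formula), which make the outlier contributions cancel term by term. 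That is a difference of organization rather than of substance.

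Part~(2) contains a genuine gap. You claim that raising $\rho_1,\rho_2$ \emph{pointwise raises} $P^*$ and $Q^*$, and deduce monotonicity of the map $(\rho_1,\rho_2)\mapsto(w_1',w_2')$ from this. This is false in the $\rho_1$ direction: writing $P^*(\lambda)=a(\rho_2;\lambda)/D(\rho_1,\rho_2;\lambda)$ with $D=\rho_1 b+a$ and $a,b>0$ on $\supp(\mu)$, the numerator of $P^*$ does not depend on $\rho_1$ while the denominator is strictly increasing in $\rho_1$, so $P^*$ is pointwise \emph{decreasing} in $\rho_1$ and hence so is $\langle P^*\rangle_\mu$. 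In the update $w_1'=1-\frac{1-\langle P^*\rangle_\mu}{\rho_1\langle P^*\rangle_\mu}$ the two factors then move in opposite directions, and no conclusion follows from your argument. The paper resolves this by writing $w_1'=1-\langle b/D\rangle_\mu/\langle a/D\rangle_\mu$ and proving that this ratio is non-increasing in $\rho_1$ via Chebyshev's association inequality (with $X=a/b$ and weight $Z=b^2/D^2$); the cross-term $\rho_1\rho_2\varphi_3^2$ you flag enters only through the identity $ba'_{\rho_2}-ab'_{\rho_2}=\delta\lambda\varphi_3^2$ used for the (easier) $\rho_2$-monotonicity. Relatedly, your claim that $w_{i,t}\in(0,1)$ follows from $\langle P^*\rangle_\mu\in(0,1)$ only yields the upper bound $w_{1,t}<1$; the lower bound requires $1-\langle P^*\rangle_\mu<\rho_1\langle P^*\rangle_\mu$, which the paper obtains precisely from the monotonicity just discussed together with the limit $\lim_{\rho_1,\rho_2\to0}\mathcal F_3=1-\nu_1^\parallel(\R)\ge0$. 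Without repairing the monotonicity step, both the range claim in part~(1) and the convergence claim in part~(2) are unsupported.
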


\proofseeapp{\propref{prop: optimal SE}}{sec:optimal_SE}

\begin{remark}[Connection with replica-symmetric Bayes-risk predictions]
The fixed-point equations in \eqref{eq: fixed point equation of OAMP-general} match the replica-symmetric characterization of the Bayes risk for the rectangular spiked rotationally-invariant model, whenever \eqref{eq: fixed point equation of OAMP-general} has a unique solution. Further details will appear in a forthcoming paper~\cite{Chenqun2025}.
\end{remark}

\subsection{Example: I.I.D Gaussian Noise}\label{sec:IIDGaussian}

We now specialize our results to the noise matrix $\bm{W}$ with i.i.d.\ $\mathcal{N}(0, 1/N)$ entries. In this canonical setting, the limiting spectral measure $\mu$ of $\bm{W}\bm{W}^\UT$ is the Mar\v{c}henko--Pastur law~\cite{Bai2010SpectralAO} with aspect ratio $\delta\in(0,1)$, whose density is
\[
\mu_{\mathrm{MP}}(\lambda)
= \frac{\sqrt{(b_+-\lambda)(\lambda-a_-)}}{2\pi\delta\,\lambda}\,
  \mathbf{1}_{[a_-,b_+]}(\lambda),
\qquad
a_- \bydef (1-\sqrt{\delta})^2,\quad
b_+ \bydef (1+\sqrt{\delta})^2.
\]
\noindent As an application of \propref{prop:outlier_characterization}, a detailed spectral analysis in such I.I.D.\ Gaussian noise model, which derives the phase transition and the outlier location, is provided in Appendix~\ref{app:pf-MP-right-only}.

We demonstrate that for this model, the fixed-point equations \eqref{eq: fixed point equation of OAMP-general} governing our optimal OAMP algorithm coincides with that of the standard AMP \cite{rangan2012iterative,montanari2021estimation} up to a  re-parameterization.
\begin{proposition}[]\label{thm:OAMP_Wigner_FP}
For the rectangular spiked model \eqref{eq:rectangular spiked model} with i.i.d. Gaussian noise matrix, the fixed point equations \eqref{eq: fixed point equation of OAMP-general} can be simplified to
\begin{align}
\frac{w_1}{1-w_1} & = \frac{\theta^2}{\delta} \left(1 -\mathrm{mmse}(w_2)\right), \\
\frac{w_2}{1-w_2}  & = \theta^2 \left(1 -\mathrm{mmse}(w_1)\right).
\end{align}
\end{proposition}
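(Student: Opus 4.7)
The plan is to compute the spectral expectations $\langle P^*\rangle_\mu$ and $\langle Q^*\rangle_{\widetilde{\mu}}$ appearing in the OAMP fixed-point equations \eqref{eq: fixed point equation of OAMP-general} in closed form for the Mar\v{c}enko--Pastur (MP) law, and then eliminate $\rho_1,\rho_2$ algebraically to recover the claimed AMP-type relations. Throughout I will write $m_i \bydef \mathrm{mmse}(w_i)$.

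The central simplification is specific to MP. Starting from the quadratic identity
\[
\delta\,z\,\mathcal{S}_\mu(z)^2 - \bigl(z - (1-\delta)\bigr)\mathcal{S}_\mu(z) + 1 = 0,
\]
substitution into Definition~\ref{Def:C_transform} collapses the C-transform to $\mathcal{C}(z) = z\,\mathcal{S}_\mu(z) - 1$, so that $1-\theta^2\mathcal{C}(z) = (1+\theta^2) - \theta^2 z\,\mathcal{S}_\mu(z)$. Combined with the Sokhotski--Plemelj boundary values $\pi\mu(\lambda) = \mathrm{Im}\,\mathcal{S}_\mu(\lambda-\mathrm{i}0)$ and $\pi\mathcal{H}(\lambda) = \mathrm{Re}\,\mathcal{S}_\mu(\lambda-\mathrm{i}0)$, and the algebraic identity $(1+\lambda-\delta)^2 + (b_+-\lambda)(\lambda-a_-) = 4\lambda$ on $\mathrm{supp}(\mu)$ which yields $\pi^2(\mathcal{H}^2+\mu^2) = 1/(\delta^2\lambda)$, the first shrinkage function reduces to
\[
\varphi_1(\lambda) = \frac{1+\theta^2/\delta}{|1-\theta^2\mathcal{C}(\lambda-\mathrm{i}0)|^2},
\]
and $\varphi_2,\varphi_3$ become similarly explicit rational expressions with the same denominator.

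Substituting into the definitions of $P^*$ and $Q^*$, the factors $|1-\theta^2\mathcal{C}(\lambda-\mathrm{i}0)|^2$ cancel strategically, leaving integrands that are rational in $\lambda$ against the MP density. To evaluate $\langle P^*\rangle_\mu$ and $\langle Q^*\rangle_{\widetilde{\mu}}$ in closed form, I would use $\pi\mu(\lambda)\,d\lambda = \mathrm{Im}\,\mathcal{S}_\mu(\lambda-\mathrm{i}0)\,d\lambda$ to convert each integral into a contour integral around a thin loop encircling $\mathrm{supp}(\mu)$, then deform the contour to pick up the residues at the real zeros of $1-\theta^2\mathcal{C}(z)$ outside the support together with the contribution from infinity; the additional atom of $\widetilde{\mu}$ at zero contributes a separate term to $\langle Q^*\rangle_{\widetilde{\mu}}$ when $\delta<1$. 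The resulting closed-form expressions depend only on $(\rho_1,\rho_2,\theta,\delta)$.

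Finally, substituting into the two fixed-point conditions $m_i\rho_i = 1-\langle P^*\rangle_\mu$ (resp.\ $1-\langle Q^*\rangle_{\widetilde{\mu}}$) and using $\rho_i = 1/m_i - 1/(1-w_i)$, the MP-specific collapse should force the cleaner relations $\rho_1 = (\theta^2/\delta)(1-m_2)$ and $\rho_2 = \theta^2(1-m_1)$ at any fixed point, which combined with the defining expressions for $\rho_i$ yield the claimed AMP fixed-point equations. The main obstacle I anticipate is the algebraic bookkeeping in the residue computation and the subsequent cancellations: tracking the branch of $\sqrt{(z-a_-)(z-b_+)}$, the orientation of contours, and the atom at $0$ consistently is delicate, but I expect the identity $\mathcal{C} = z\mathcal{S}_\mu - 1$ to drive enough telescoping that the final reduction to the AMP form is clean.
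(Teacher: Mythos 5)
Your high-level strategy --- exploit the MP collapse $\mathcal{C}(z) = z\,\mathcal{S}_\mu(z) - 1$ so that the shrinkage functions become rational, then compute the spectral averages in closed form and eliminate $(\rho_1,\rho_2)$ --- is the right idea and is indeed what drives the paper's argument (Lemmas~\ref{lem:limit_cos_MP}--\ref{lem:deformed_MP_system}). However, your sketch has a concrete algebraic error and a more fundamental missing step. The algebraic error: the MP quadratic $\delta\lambda\,\mathcal{S}_\mu^2 - (\lambda+\delta-1)\mathcal{S}_\mu + 1 = 0$ has $\mathcal{S}_\mu(\lambda-\mathrm{i}0)$ and its conjugate as its two roots, so the product of roots gives $|\mathcal{S}_\mu(\lambda-\mathrm{i}0)|^2 = \pi^2\bigl(\mathcal{H}^2+\mu^2\bigr) = 1/(\delta\lambda)$, not $1/(\delta^2\lambda)$; plugging in then gives $\varphi_1 = (1+\theta^2)/|1-\theta^2\mathcal{C}|^2$ rather than $(1+\theta^2/\delta)/|1-\theta^2\mathcal{C}|^2$, and the one-$\delta$ discrepancy would corrupt the downstream algebra (the correct answer is $\varphi_1(\lambda) = (\delta+\delta/\theta^2)/(\lambda_*-\lambda)$, matching \eqref{eq: bulk densities in IID Gaussian Case}). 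Relatedly, once the $\varphi_i$ simplify to $c_i/(\lambda_*-\lambda)$, the denoisers $P^*$, $Q^*$ become rational in $\lambda$ with a \emph{single} pole at $T(\lambda_*) = \rho_1(\delta+\delta/\theta^2)+\rho_2(1+\delta/\theta^2)+\rho_1\rho_2\,\delta/\theta^2+\lambda_*$ lying off the bulk, so $\langle P^*\rangle_\mu$ and $\langle Q^*\rangle_{\tilde\mu}$ are simply Stieltjes-transform evaluations at $T(\lambda_*)$; the residue you need is at $T(\lambda_*)$, not at the outlier $\lambda_*$ nor at zeros of $1-\theta^2\mathcal{C}(z)$, and contour deformation is unnecessary machinery.

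The larger gap concerns your proposed intermediate $\rho_1 = (\theta^2/\delta)(1-m_2)$ and $\rho_2 = \theta^2(1-m_1)$. Paired with the defining relation $\rho_i = 1/m_i - 1/(1-w_i)$, this would force $\rho_i = w_i/(1-w_i)$, equivalently $m_i = (1-w_i)/(1+w_i)$ --- a constraint on the prior which does not hold in general (for a Gaussian prior one has $m_i = 1-w_i$ and hence $\rho_i = 0$ identically, while $w_i/(1-w_i)>0$). The OAMP precision $\rho_i$ and the channel SNR $w_i/(1-w_i)$ are genuinely different quantities, and your intermediate conflates them. The correct chain, which is what the paper executes, runs through the point $T(\lambda_*)$: (i) the closed-form spectral averages give $m_1 = b(\rho_2)\,\mathcal{S}_\mu(T)$ and $m_2 = a(\rho_1)\bigl[\delta\mathcal{S}_\mu(T) + (1-\delta)/T\bigr]$ for explicit affine $a,b$; (ii) the MP self-consistent quadratic evaluated at $z=T(\lambda_*)$ is shown (Lemma~\ref{lem:deformed_MP_system}) to be equivalent to a coupled system linking $\mathcal{S}_\mu(T)$ and $\delta\mathcal{S}_\mu(T)+(1-\delta)/T$ --- this is the nontrivial elimination that discards $T$ entirely and couples the two channels; (iii) substituting (i) into (ii) and using $\rho_i = 1/m_i - 1/(1-w_i)$ yields $m_1 = 1 - \theta^{-2}\,w_2/(1-w_2)$ directly, which is exactly the target. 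Step (ii) is the crux and is absent from your plan; without it, the closed-form expressions in $(\rho_1,\rho_2,\theta,\delta)$ do not by themselves telescope to the stated AMP relations.
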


\proofseeapp{\propref{thm:OAMP_Wigner_FP}}{sec:example}

\section{Optimal Spectral Estimation Under Multiple Outliers}\label{sec: spec method}

In the absence of a nonzero mean or side information, a random initialization fails for the OAMP algorithm: its state evolution converges to a trivial fixed point, as observed previously in phase retrieval \cite{ma2019optimization,mondelli2021approximate} and spiked models \cite{montanari2021estimation,mondelli2021pca,zhong2021approximate}. A spectral initialization is therefore required to produce a nontrivial estimate.

In this section, we study spectral estimation for the rectangular spiked model. As detailed in Remark~\ref{rem:multiple_outliers} and Figure~\ref{fig:combined-outliers}, a single rank-one signal in this setting typically generates \emph{multiple} informative outlier singular values. In such regimes, relying solely on the leading singular vector (standard PCA) is suboptimal because it discards the signal energy carried by secondary outliers. Prior work \cite{Nadakuditi2013informative} notes this phenomenon but does not provide an optimal method for combining the outlier components. Here, we develop a data-driven estimator that aggregates the informative outliers optimally under mild non-Gaussian assumptions on the signal.

\subsection{Optimal Oracle Spectral Estimators}

As established in Proposition~\ref{prop:outlier_characterization}, each outlying
singular vector of $\bm Y$ retains a nonvanishing asymptotic correlation with the
true signal directions. In the multiple-outlier regime, it is therefore natural
to consider linear combinations of all informative components rather than relying
on a single leading singular vector. This subsection characterizes the optimal
such combination.

Let the singular-value decomposition of $\bm Y$ be
\[
\bm Y=\sum_{i=1}^{M}\sigma_{i,M}(\bm Y)\,\bm u_i(\bm Y)\,\bm v_i(\bm Y)^\UT,
\qquad 
\sigma_{1,M}\ge\cdots\ge\sigma_{M,M},
\]
and define
$\lambda_{i,M}\triangleq\sigma_{i,M}(\bm Y)^2$.  
Under Proposition~\ref{prop:outlier_characterization} (supercritical $\theta$), 
let $\mu$ denote the limiting spectral distribution of $\bm W\bm W^\UT$, and let 
$\mathcal{K}^*$ be the finite set of population outliers.

Choose $\varepsilon>0$ small enough so that the $\varepsilon$-neighborhood of 
points in $\mathcal{K}^*$ are disjoint and lie outside $\supp(\mu)$; denote this 
union by $\mathcal{K}^*_\varepsilon$.  
We then define the \emph{empirical outlier index set}
\begin{equation}\label{eq:emp-outlier-index}
\mathcal{I}_M 
    \,\bydef\,\{\,i:\,\lambda_{i,M}\in
        \spp(\bm Y\bm Y^\UT)\cap\mathcal{K}^*_\varepsilon\,\}.
\end{equation}

As ensured by Proposition~\ref{prop:outlier_characterization}, Claim~(2), for
all sufficiently large $M$, the empirical outliers in $\mathcal{I}_M$ correspond
one-to-one with the population outliers in $\mathcal{K}^*$. With these
informative components reliably identified, we consider linear spectral
estimators supported on $\mathcal{I}_M$:
\begin{align}\label{eq:linear_spectral_estimators}
\bm u_{\mathrm{PCA}}(\bm c_u)
    &\bydef \sqrt{M}\sum_{i\in\mathcal{I}_M} c_{u,i}\,\bm u_i(\bm Y), &
\bm v_{\mathrm{PCA}}(\bm c_v)
    &\bydef \sqrt{N}\sum_{i\in\mathcal{I}_M} c_{v,i}\,\bm v_i(\bm Y),
\end{align}
where $\bm c_u,\bm c_v\in\mathbb{R}^{\card{\mathcal{I}_M}}$ denote the combination
coefficients. Since each outlying singular vector carries nonvanishing alignment
with the true signal, an appropriate linear combination may improve the overall
directional accuracy compared to using any single component.

The next proposition characterizes the \textit{oracle} asymptotic squared cosine
similarity achievable by this class of estimators, which equals the projection
of the true signal onto the outlier eigenspace. The optimal coefficients
attaining this limit depend on the unknown signal and are therefore not
implementable in practice, but the result serves as the fundamental
performance benchmark for all linear spectral methods.

\begin{proposition}\label{prop:optimal_linear_estimator_cos}
Consider the class of estimators in \eqref{eq:linear_spectral_estimators}.
For any $\bm c_u,\bm c_v\in\mathbb{R}^{\card{\mathcal{I}_M}}$, almost surely,
\begin{align}\label{eq: max overlap}
\lim_{M\to\infty}
\frac{\langle \bm u_{\mathrm{PCA}}(\bm c_u),\bm u_* \rangle^2}
     {\|\bm u_{\mathrm{PCA}}(\bm c_u)\|^2\,\|\bm u_*\|^2}
    \le \sum_{\lambda_i\in\mathcal{K}^*}\nu_1(\{\lambda_i\}),
\qquad
\lim_{N\to\infty}
\frac{\langle \bm v_{\mathrm{PCA}}(\bm c_v),\bm v_* \rangle^2}
     {\|\bm v_{\mathrm{PCA}}(\bm c_v)\|^2\,\|\bm v_*\|^2}
    \le \sum_{\lambda_i\in\mathcal{K}^*}\nu_2(\{\lambda_i\}),
\end{align}
where $\nu_1(\{\lambda_k\})$ and $\nu_2(\{\lambda_k\}$ are defined in
Lemma~\ref{lem:spectral_measures_properties}, Claim~(4).
Moreover, these upper bounds are asymptotically attained by the oracle
combinations
\begin{align}\label{eq:oracle_optimal_spectral_est.}
\bm u_{\mathrm{ora}}^*
    &\bydef \sqrt{M}\sum_{i\in\mathcal{I}_M}
        \langle \bm u_*,\bm u_i(\bm Y)\rangle\,\bm u_i(\bm Y), \\
\bm v_{\mathrm{ora}}^*
    &\bydef \sqrt{N}\sum_{i\in\mathcal{I}_M}
        \langle \bm v_*,\bm v_i(\bm Y)\rangle\,\bm v_i(\bm Y).
\end{align}
\end{proposition}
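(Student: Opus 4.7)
The plan is to prove the upper bound by a direct Cauchy--Schwarz argument on a finite-dimensional combination, identify the asymptotic values of the overlaps via \propref{prop:outlier_characterization}, and then verify that the oracle coefficients saturate the Cauchy--Schwarz inequality.

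First I would fix a generic $\bm c_u \in \mathbb{R}^{|\mathcal{I}_M|}$ and unfold the squared cosine similarity. Using orthonormality of $\{\bm u_i(\bm Y)\}_{i \in \mathcal{I}_M}$ (which follows because they are distinct left singular vectors for distinct singular values, guaranteed by the exact separation statement \eqref{eq:uniq-outlier} of \propref{prop:outlier_characterization}), we have
\begin{equation*}
\|\bm u_{\mathrm{PCA}}(\bm c_u)\|^2 = M \sum_{i \in \mathcal{I}_M} c_{u,i}^2,
\qquad
\langle \bm u_{\mathrm{PCA}}(\bm c_u),\bm u_*\rangle
   = \sqrt{M}\sum_{i \in \mathcal{I}_M} c_{u,i}\,\langle \bm u_i(\bm Y),\bm u_*\rangle.
\end{equation*}
Combined with $\|\bm u_*\|^2/M \to \mathbb{E}[\mathsf{U}_*^2]=1$ (by \assumpref{assump:main}(b)), the squared cosine similarity reduces, up to a negligible factor tending to $1$, to
\begin{equation*}
\frac{\bigl(\sum_{i \in \mathcal{I}_M} c_{u,i}\,\langle \bm u_i(\bm Y),\bm u_*\rangle\bigr)^2}
     {\bigl(\sum_{i \in \mathcal{I}_M} c_{u,i}^2\bigr)\,\|\bm u_*\|^2/M}.
\end{equation*}

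Next I would apply the finite-dimensional Cauchy--Schwarz inequality to the numerator, which yields the pointwise bound
\begin{equation*}
\frac{\langle \bm u_{\mathrm{PCA}}(\bm c_u),\bm u_*\rangle^2}
     {\|\bm u_{\mathrm{PCA}}(\bm c_u)\|^2\,\|\bm u_*\|^2}
\;\le\;
\frac{1}{\|\bm u_*\|^2}\sum_{i \in \mathcal{I}_M} \langle \bm u_i(\bm Y),\bm u_*\rangle^2,
\end{equation*}
valid for every $\bm c_u$. To pass to the limit, I would invoke \propref{prop:outlier_characterization}: claim (2) gives, almost surely for all sufficiently large $M$, a bijection between $\mathcal{I}_M$ and $\mathcal{K}^*$ matching each empirical outlier $\lambda_{i,M}$ to its population limit $\lambda_k$; claim (4) then gives $\frac{1}{M}\langle \bm u_*,\bm u_i(\bm Y)\rangle^2 \ac \nu_1(\{\lambda_k\})$. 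Summing over the finite set $\mathcal{K}^*$ produces the claimed bound. The argument for $\bm v$ is identical with $\nu_2$ in place of $\nu_1$.

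Finally, for the achievability I would choose $c_{u,i} = \langle \bm u_*,\bm u_i(\bm Y)\rangle$ so that the Cauchy--Schwarz inequality above becomes an equality. Both numerator and denominator then reduce to $M \sum_{i\in\mathcal{I}_M}\langle \bm u_i(\bm Y),\bm u_*\rangle^2$ times appropriate normalizations, and applying \propref{prop:outlier_characterization}(4) termwise yields convergence to $\sum_{\lambda_k\in\mathcal{K}^*}\nu_1(\{\lambda_k\})$, matching the upper bound. The analogous computation for $\bm v_{\mathrm{ora}}^*$ closes the proof. The only real care needed is the justification that all asymptotic statements involve a \emph{finite} sum indexed by $\mathcal{K}^*$, which lets termwise almost-sure convergence of overlaps pass through both the Cauchy--Schwarz step and the final ratio; this is precisely what the exact separation and bijection in \propref{prop:outlier_characterization}(2) provide, so no further probabilistic argument is needed.
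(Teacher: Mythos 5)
Your proposal is correct and follows essentially the same route as the paper's proof: expand the cosine similarity using orthonormality of the outlier singular vectors, apply Cauchy--Schwarz over the finite index set $\mathcal I_M$, pass to the limit via the overlap convergence and the outlier--population bijection in \propref{prop:outlier_characterization}, and note that the oracle coefficients saturate the Cauchy--Schwarz inequality. No gaps.
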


\proofseeapp{\propref{prop:optimal_linear_estimator_cos}}{app:pf-opt-linear-cos}

\begin{remark}[Connection to RIE estimators \cite{Bun2016RIE}]
\label{rem:RIE}
Our construction of optimal spectral estimators is structurally related to the 
\emph{rotationally invariant estimator} (RIE) framework developed for 
extensive–rank matrix denoising in \cite{Bun2016RIE} and for rectangular models
in \cite{pourkamali2023rectangular}. In both settings, one first characterizes
an oracle estimator and then constructs a data-driven procedure that
asymptotically attains the oracle performance.

There are, however, important differences. RIE operates in the extensive rank
regime, where the signal information is distributed across the whole spectrum
and the optimal estimator applies an eigenvalue-dependent shrinkage to all
singular values. In contrast, our model is rank one, and all informative content
is concentrated in a finite number of outlier singular components; optimal
estimation thus requires combining only these outliers. A second distinction concerns the estimation objective. Whereas RIE aims to
reconstruct the underlying low-rank matrix, our goal is to recover the rank-one
signal vectors. In this setting, the optimal linear combination of the outlier
components involves relative signs that cannot be inferred from random matrix
theory alone. As a result, accurate aggregation of multiple outliers requires an
explicit sign-resolution procedure, addressed in
Section~\ref{sec:pratical-optimal}.
\end{remark}
\vspace{5pt}

\subsection{Data-Driven Optimal Linear Spectral Estimators}\label{sec:pratical-optimal}

The oracle estimator in \eqref{eq:oracle_optimal_spectral_est.} achieves the
linear-spectral performance bound of Proposition~\ref{prop:optimal_linear_estimator_cos},
but its coefficients depend on the unknown signal through the outlier--signal
overlaps. Thus Proposition~\ref{prop:optimal_linear_estimator_cos} provides an
\emph{oracle} benchmark for what any linear spectral method based solely on
$\bm Y$ can achieve. In this subsection, we construct a data-driven estimator
that asymptotically attains this benchmark. As a first step, we derive a
signal--plus--noise limit law for each outlier direction.

\begin{proposition}
\label{prop:signal_plus_noise_sv}
Under the assumptions of Proposition~\ref{prop:outlier_characterization}, let
$\mathcal{K}^\ast$ denote the finite set of outlier eigenvalues. Let
$\mathcal{K}^\ast=\{\lambda_1,\ldots,\lambda_K\}$ and
$\sigma_k=\sqrt{\lambda_k}$ for $1\le k\le K$. For each $k\in\{1,\ldots,K\}$ and
all sufficiently large $M$, let $\lambda_{k,M}$ be the empirical outlier associated with $\lambda_k \in \mathcal{K}^\ast$ as in \propref{prop:outlier_characterization}, and $\bm{u}_k(\bm{Y}), \bm{v}_k(\bm{Y})$ the corresponding left and right singular vectors of unit norm.
As $M,N\to\infty$, we have the joint convergence
\begin{align}
\bigl(\,\langle \bm{u}_*,\bm{u}_1(\bm{Y})\rangle\,
     \bm{u}_1(\bm{Y}),\ \ldots,\ 
     \langle \bm{u}_*,\bm{u}_K(\bm{Y})\rangle\,
     \bm{u}_K(\bm{Y})\,\bigr)
&\xrightarrow[]{W}
\bigl(\,\mathsf{U}_1^{\mathrm{OUT}},\ldots,\mathsf{U}_K^{\mathrm{OUT}}\,\bigr)^\UT,
\label{eq:joint-distribution-projector-u}
\\
\bigl(\,\langle \bm{v}_*,\bm{v}_1(\bm{Y})\rangle\,
     \bm{v}_1(\bm{Y}),\ \ldots,\ 
     \langle \bm{v}_*,\bm{v}_K(\bm{Y})\rangle\,
     \bm{v}_K(\bm{Y})\,\bigr)
&\xrightarrow[]{W}
\bigl(\,\mathsf{V}_1^{\mathrm{OUT}},\ldots,\mathsf{V}_K^{\mathrm{OUT}}\,\bigr)^\UT,
\label{eq:joint-distribution-projector-v}
\end{align}
where the random variables appearing on the RHS satisfy the following:
\begin{itemize}
\item[1] \textbf{Signal-plus-noise decomposition.}
For every $k\in\{1,\ldots,K\}$ we have
\BS\label{eq: distribution of projector}
\begin{align} \label{eq: distribution of projector u}
\mathsf{U}_k^{\mathrm{OUT}}
    &=\nu_1(\{\lambda_k\})\,\mathsf{U}_*
      +\sqrt{\nu_1(\{\lambda_k\})-\nu_1^2(\{\lambda_k\})}\,\mathsf{Z}_{u,k},\\
\mathsf{V}_k^{\mathrm{OUT}}
    &=\nu_2(\{\lambda_k\})\,\mathsf{V}_*
      +\sqrt{\nu_2(\{\lambda_k\})-\nu_2^2(\{\lambda_k\})}\,\mathsf{Z}_{v,k},
\end{align}
\ES
where $(\mathsf{U}_*,\mathsf{V}_*)$ are the limiting signal distributions and
$\{\mathsf{Z}_{u,k}\}_{k=1}^K$, $\{\mathsf{Z}_{v,k}\}_{k=1}^K$ are Gaussian
noise variables satisfying
\[
\bigl(\mathsf{Z}_{u,1},\ldots,\mathsf{Z}_{u,K}\bigr)
\indep \mathsf{U}_*,
\qquad
\bigl(\mathsf{Z}_{v,1},\ldots,\mathsf{Z}_{v,K}\bigr)
\indep \mathsf{V}_*.
\]

\item[2] \textbf{Gaussian noise and covariance.}
The vectors
$\bigl(\mathsf{Z}_{u,1},\ldots,\mathsf{Z}_{u,K}\bigr)$ and
$\bigl(\mathsf{Z}_{v,1},\ldots,\mathsf{Z}_{v,K}\bigr)$ are centered jointly
Gaussian. For each $k\in\{1,\ldots,K\}$,
\begin{align}\label{eq:sv-diag-cov}
\mathbb{E}[\mathsf{Z}_{u,k}]
=\mathbb{E}[\mathsf{Z}_{v,k}]
=0,
\qquad
\mathbb{E}[\mathsf{Z}_{u,k}^2]
=\mathbb{E}[\mathsf{Z}_{v,k}^2]
=1,    
\end{align}
and for all $1\le k< \ell\le K$,
\begin{align}\label{eq:sv-parral-cov}
\mathbb{E}[\mathsf{Z}_{u,k}\mathsf{Z}_{u,\ell}]
=-\frac{\nu_1(\{\lambda_k\})\,\nu_1(\{\lambda_\ell\})}
       {\sqrt{\nu_1(\{\lambda_k\})-\nu_1^2(\{\lambda_k\})}\,
        \sqrt{\nu_1(\{\lambda_\ell\})-\nu_1^2(\{\lambda_\ell\})}},    
\end{align}
\begin{align}
\mathbb{E}[\mathsf{Z}_{v,k}\mathsf{Z}_{v,\ell}]
=-\frac{\nu_2(\{\lambda_k\})\,\nu_2(\{\lambda_\ell\})}
       {\sqrt{\nu_2(\{\lambda_k\})-\nu_2^2(\{\lambda_k\})}\,
        \sqrt{\nu_2(\{\lambda_\ell\})-\nu_2^2(\{\lambda_\ell\})}},    
\end{align}
In particular, the covariance matrices
$\Sigma_u^{\mathrm{OUT}}=(\mathbb{E}[\mathsf{Z}_{u,k}\mathsf{Z}_{u,\ell}])_{1\le k<\ell\le K}$
and
$\Sigma_v^{\mathrm{OUT}}=(\mathbb{E}[\mathsf{Z}_{v,k}\mathsf{Z}_{v,\ell}])_{1\le k<\ell\le K}$
are positive definite. For $i\in\{1,2\}$, $\nu_i(\{\lambda_k\})$ denotes the point mass of the
parallel spectral measure defined in Lemma~\ref{lem:spectral_measures_properties}.
\end{itemize}

\end{proposition}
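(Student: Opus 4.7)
The strategy is to decompose each outlier eigenvector into a component along the signal and a residual in the orthogonal complement, analyze the two pieces separately, and then recover the inter-outlier covariance structure from the orthonormality of the eigenvectors.

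Fix an outlier $\lambda_k\in\mathcal{K}^*$, labeled so that $\lambda_{k,M}\to\lambda_k$ almost surely (\propref{prop:outlier_characterization}(3)), and write the projection decomposition
\[
\langle \bm{u}_*,\bm{u}_k(\bm{Y})\rangle\,\bm{u}_k(\bm{Y})
\;=\;\frac{\langle \bm{u}_*,\bm{u}_k(\bm{Y})\rangle^2}{\|\bm{u}_*\|^2}\,\bm{u}_*
\;+\;\langle \bm{u}_*,\bm{u}_k(\bm{Y})\rangle\,\bm{w}_k^u,
\]
where $\bm{w}_k^u$ is the projection of $\bm{u}_k(\bm{Y})$ onto $\bm{u}_*^\perp$. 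By \propref{prop:outlier_characterization}(4) together with $\|\bm{u}_*\|^2/M\to 1$, the scalar prefactor of $\bm{u}_*$ converges almost surely to $\nu_1(\{\lambda_k\})$, matching the signal coefficient in \eqref{eq: distribution of projector u}. The identity $\|\bm{w}_k^u\|^2=1-\langle \bm{u}_*,\bm{u}_k(\bm{Y})\rangle^2/\|\bm{u}_*\|^2$ pins down $\|\langle \bm{u}_*,\bm{u}_k(\bm{Y})\rangle\bm{w}_k^u\|^2/M\to\nu_1(\{\lambda_k\})-\nu_1(\{\lambda_k\})^2$, which is the prescribed noise variance.

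I next establish joint asymptotic Gaussianity of the residuals and their independence from $\mathsf{U}_*$ using the orthogonal invariance of $\bm{W}$. For any $\bm{O}\in\mathbb{O}(M)$ with $\bm{O}\bm{u}_*=\bm{u}_*$, \assumpref{assump:main}(c) gives $\bm{O}\bm{Y}\stackrel{d}{=}\bm{Y}$ conditional on $(\bm{u}_*,\bm{v}_*)$, so $\bm{O}\bm{u}_k(\bm{Y})\stackrel{d}{=}\bm{u}_k(\bm{Y})$ jointly in $k$. Conditional on $(\bm{u}_*,\bm{v}_*)$, on the singular values of $\bm{W}$, on the Haar factor $\bm{V}_W$, and on the scalars $(\langle \bm{u}_*,\bm{u}_k(\bm{Y})\rangle)_{k=1}^K$, the residual tuple $(\bm{w}_k^u)_{k=1}^K$ is therefore uniform on the set of $K$-frames in $\bm{u}_*^\perp\simeq\mathbb{R}^{M-1}$ with prescribed Gram matrix (determined by the eigenvector orthogonality and the scalar overlaps). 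Spherical-to-Gaussian concentration on this Stiefel-type manifold gives entrywise joint convergence of $(\langle\bm{u}_*,\bm{u}_k(\bm{Y})\rangle\bm{w}_k^u)_{k=1}^K$ to a centered Gaussian vector with the prescribed diagonal variances, jointly with $\bm{u}_*\wc\mathsf{U}_*$ and independent of $\mathsf{U}_*$.

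The cross-covariance is then pinned down by the orthogonality constraints $\langle \bm{u}_k(\bm{Y}),\bm{u}_\ell(\bm{Y})\rangle=0$ for $k\neq\ell$, which give
\[
\frac{1}{M}\sum_{i=1}^M\bigl(\langle \bm{u}_*,\bm{u}_k(\bm{Y})\rangle[\bm{u}_k(\bm{Y})]_i\bigr)\bigl(\langle \bm{u}_*,\bm{u}_\ell(\bm{Y})\rangle[\bm{u}_\ell(\bm{Y})]_i\bigr)=0.
\]
Passing to the Wasserstein limit yields $\mathbb{E}[\mathsf{U}_k^{\mathrm{OUT}}\mathsf{U}_\ell^{\mathrm{OUT}}]=0$ for $k\neq\ell$. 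Expanding this identity through \eqref{eq: distribution of projector u}, using $\mathbb{E}[\mathsf{U}_*^2]=1$ and $(\mathsf{Z}_{u,\cdot})\indep\mathsf{U}_*$, uniquely identifies $\mathbb{E}[\mathsf{Z}_{u,k}\mathsf{Z}_{u,\ell}]$ as the expression in \eqref{eq:sv-parral-cov}. Positive definiteness of $\Sigma_u^{\mathrm{OUT}}$ follows because $B\,\Sigma_u^{\mathrm{OUT}}\,B=\mathrm{diag}(\nu_1(\{\lambda_k\}))-\nu\nu^\UT$ with $B=\mathrm{diag}\bigl(\sqrt{\nu_1(\{\lambda_k\})-\nu_1(\{\lambda_k\})^2}\bigr)$ and $\nu=(\nu_1(\{\lambda_k\}))_k$, and the right-hand side is positive definite since $\sum_k \nu_1(\{\lambda_k\})<1$ (the limiting measure $\nu_1$ has a non-trivial absolutely continuous part by \lemref{lem:spectral_measures_properties}). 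The argument for $(\bm{v}_k(\bm{Y}))$ and $\mathsf{V}_k^{\mathrm{OUT}}$ proceeds identically with $\bm{u}_*\to\bm{v}_*$, $\nu_1\to\nu_2$, and $M\to N$.

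The main obstacle is the rigorous justification of the spherical-to-Gaussian step: promoting $\mathrm{Stab}(\bm{u}_*)$-invariance of the residual frame on a finite-dimensional Stiefel-type manifold to a quantitative entrywise Wasserstein limit, jointly across outliers that may lie on opposite sides of the bulk (cf.\ \remref{rem:multiple_outliers}) and simultaneously with the empirical convergence $\bm{u}_*\wc\mathsf{U}_*$. This requires concentration estimates on the joint law of orthogonal frames in $\bm{u}_*^\perp$ beyond the second-moment identities furnished by \propref{prop:outlier_characterization}.
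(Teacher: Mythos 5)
Your proposal is correct in its conclusions and in the covariance/positive-definiteness arguments (which coincide with the paper's: orthogonality of the projected eigenvectors passed to the limit identifies the off-diagonal entries, and $B\,\Sigma_u^{\mathrm{OUT}}B=\mathrm{diag}(\nu_1(\{\lambda_k\}))-\nu\nu^\UT\succ0$ because $\sum_k\nu_1(\{\lambda_k\})<1$), but it takes a genuinely different route to the central signal-plus-Gaussian structure. The paper does not decompose $\bm u_k(\bm Y)$ by projecting onto $\bm u_*$; it instead starts from the singular equation and writes $\langle\bm u_*,\bm u_k\rangle\bm u_k=(\lambda_{k,M}\bm I-\bm W\bm W^\UT)^{-1}(\cdots)$, splits the resolvent into its normalized trace (which produces the deterministic coefficient $\nu_1(\{\lambda_k\})$ via $2\theta\sigma_k\frac{1+\delta}{\sqrt\delta}\nu_3(\{\sigma_k\})\Smu(\lambda_k)=\nu_1(\{\lambda_k\})$) plus a centered fluctuation, and then obtains joint Gaussianity by expressing all $K$ noise vectors as $\bm U_W\bm Q$ for a coefficient matrix $\bm Q$ whose rows are affine in $(\bm U_W^\UT\bm u_*,\bm V_W^\UT\bm v_*)$, so that the rank-$K$ Haar-mixing lemmas of \cite{fan2022approximate,zhong2021approximate} apply directly. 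Your route — stabilizer invariance of the law of $\bm Y$ under $\bm O\in\mathbb O(M)$ fixing $\bm u_*$, hence conditional uniformity of the residual frame in $\bm u_*^\perp$ given its Gram matrix, followed by a frame-to-Gaussian limit — is sound in principle and avoids the resolvent computation entirely, at the cost of needing the row-wise CLT for a uniform $K$-frame in a hyperplane jointly with $\bm u_*\wc\mathsf U_*$. The obstacle you flag is real but not deep: it is exactly the content of the Haar-conditioning/mixing lemmas the paper already invokes (e.g.\ the conditioning representation of \cite{rangan2019vector} and \cite[Lemma~G.5]{zhong2021approximate}), applied to a fixed number $K$ of directions; the paper's resolvent reformulation is essentially a device for reducing to that same lemma while simultaneously producing the explicit value of the signal coefficient, which your projection decomposition obtains more directly from Proposition~\ref{prop:outlier_characterization}(4). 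To make your argument complete you would need to supply that frame-to-Gaussian step (and note that your symmetry argument only needs the sign-invariant quantities $\langle\bm u_*,\bm u_k\rangle\bm u_k$, so the eigenvector sign ambiguity causes no trouble).
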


\proofseeapp{\propref{prop:signal_plus_noise_sv}}{app:pf-signal-plus-noise-sv}

\begin{remark}[Heuristic derivation of the outlier limit law]\label{Rem:outlying_distribution}
At a heuristic level, the decompositions
\eqref{eq:joint-distribution-projector-u}--\eqref{eq:joint-distribution-projector-v}
describe each projected outlier component as a deterministic multiple of the
signal plus an asymptotically Gaussian noise term. In the i.i.d.\ Gaussian
noise case, closely related eigenvector asymptotics are well known; see
\cite[Appendix~C]{montanari2021estimation}. In our setting, a convenient
starting point is the singular-equation for the outlying vector
\[
\langle\bm u_*,\bm u_k\rangle\,\bm u_k =
(\lambda_{k,M} \bm{I}_M-\bm{W}\bm{W}^\UT)^{-1}
\biggl(
  \frac{\theta\sigma_{k,M}}{\sqrt{MN}}
  \langle\bm{v}_*,\bm{v}_k\rangle
  \langle\bm{u}_*,\bm{u}_k\rangle\,\bm{u}_*
  + \frac{\theta}{\sqrt{MN}}
    \langle\bm{u}_*,\bm{u}_k\rangle^2\,\bm{W}\bm{v}_*
\biggr),
\]
and the analogous equation for $\langle\bm v_*,\bm v_k\rangle\,\bm v_k$.
The overlaps
$\langle\bm{v}_*,\bm{v}_k\rangle\langle\bm{u}_*,\bm{u}_k\rangle$
and $\langle\bm{u}_*,\bm{u}_k\rangle^2$
can be shown, via standard resolvent and concentration arguments, to converge to
deterministic limits determined by the spectral measures. Moreover, by Haar
invariance and the independence between $\bm W$ and $(\bm u_*,\bm v_*)$, the full
vector $\langle\bm u_*,\bm u_k\rangle\,\bm u_k$ can be analyzed using arguments
similar to those in Appendix \ref{app:general-SE}, yielding convergence of the empirical law to the
signal-plus-Gaussian form stated in Proposition~\ref{prop:signal_plus_noise_sv}.
Its proof is provided in the appendix.
\end{remark}

\medskip

To construct a practical estimator that attains the oracle bound
\eqref{eq:oracle_optimal_spectral_est.}, it suffices to approximate the oracle
linear combination of the informative outlier singular vectors. Since empirical singular vectors are defined only up to a global sign, we fix their orientation via a randomized sign convention (cf.~\cite[Remark~3.6]{feng2022unifying}), which simplifies the theoretical analysis.
For each $k\in\mathcal I_M$, let $\bm u_k(\bm Y)$ and $\bm v_k(\bm Y)$ be any choice of unit outlier singular vectors. Let $\{\xi_k\}_{k\in\mathcal I_M}$ be i.i.d.\ Rademacher random variables, independent of all other random elements in the model \eqref{eq:rectangular spiked model}. Define the $M\times K$ matrix of randomized scaled singular vectors
\begin{align}\label{eq:scaled-outlier-matrix-rand}
  \bm U^\sharp
  \;\bydef\;
  \begin{bmatrix}
    \bm u_{1}^\sharp & \cdots & \bm u_{K}^\sharp
  \end{bmatrix},
  \qquad
  \bm u_{k}^\sharp
  \;\bydef\;
  \sqrt{M}\,\xi_k\,\bm u_k(\bm Y),
  \qquad k\in\mathcal I_M.
\end{align}
Proposition~\ref{prop:signal_plus_noise_sv} shows that the associated asymptotic
signal magnitudes $\sqrt{\nu_1(\{\lambda_k\})}$ and $\sqrt{\nu_2(\{\lambda_k\})}$
are deterministic functions of the noise spectrum (see
Lemma~\ref{lem:spectral_measures_properties}).  The only remaining unknowns are
the \emph{relative signs} of the overlaps
\[
\{\langle\bm u_i^\sharp,\bm u_*\rangle\}_{i\in\mathcal I_M},
\qquad
\{\langle\bm v_i^\sharp,\bm v_*\rangle\}_{i\in\mathcal I_M},
\]
which determine the alignment of the outlier directions with the signal.  Let
$s_i^u,s_i^v\in\{\pm1\}$ denote sign variables (defined up to a global flip in
each channel), and define the practical spectral estimators
\BS\label{eq: optimal PCA estimators}
\begin{align}
 \bm{u}^*_{\mathrm{PCA}}
   \bydef \sum_{i\in\mathcal{I}_M}
   s_i^u \sqrt{\nu_1(\{\lambda_i\})}\,\bm{u}_i^\sharp, \quad
 \bm{v}^*_{\mathrm{PCA}}
   \bydef \sum_{i\in\mathcal{I}_M}
   s_i^v \sqrt{\nu_2(\{\lambda_i\})}\,\bm{v}_i^\sharp .
\end{align}
\ES
The next proposition shows that these estimators attain the oracle performance
whenever the signs $s_i^u,s_i^v$ are chosen consistently with the true overlaps
(up to global sign flips). Its proof can be found in
Appendix~\ref{app:optimal_data_driven_estimators}.

\begin{proposition}[Optimality via Consistent Sign Estimation]
\label{prop:optimal_data_driven_estimators}
Assume the setting of Proposition~\ref{prop:outlier_characterization} with
supercritical $\theta$ and Assumptions~\ref{assump:main}, and let
$\bm{u}^*_{\mathrm{PCA}}$ and $\bm{v}^*_{\mathrm{PCA}}$ be defined in
\eqref{eq: optimal PCA estimators}, with $\mathcal I_M$, $\lambda_i$, and
$\nu_1(\{\lambda_i\}),\nu_2(\{\lambda_i\})$ as above.  Suppose the signs satisfy
\[
s_i^u \ase \sign\langle\bm u_i^\sharp,\bm u_*\rangle,\qquad
s_i^v \ase \sign\langle\bm v_i^\sharp,\bm v_*\rangle,\quad \forall i\in\mathcal{I}_M,
\]
up to a common global flip in each channel. Then
\begin{align*}
 \lim_{M\to \infty} 
 \frac{\langle \bm{u}^*_{\mathrm{PCA}}, \bm{u}_* \rangle^2}
      {\| \bm{u}^*_{\mathrm{PCA}} \|^2 \|\bm{u}_*\|^2}
 &\ase   \sum_{\lambda_i \in \mathcal{K}^*} \nu_1(\{\lambda_i\}), \\
 \lim_{N\to \infty} 
 \frac{\langle \bm{v}^*_{\mathrm{PCA}}, \bm{v}_* \rangle^2}
      {\| \bm{v}^*_{\mathrm{PCA}} \|^2 \|\bm{v}_*\|^2}
 &\ase   \sum_{\lambda_i \in \mathcal{K}^*} \nu_2(\{\lambda_i\}).
\end{align*}
\end{proposition}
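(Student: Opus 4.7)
The plan is a direct computation of the squared cosine similarity, exploiting the orthonormality of the outlier singular vectors together with the almost-sure overlap limits from Proposition~\ref{prop:outlier_characterization}(4). The key observation is that once the relative signs are recovered consistently, the cross-terms in the numerator become nonnegative, so the practical estimator behaves asymptotically as the oracle $\bm{u}_{\mathrm{ora}}^*$ of Proposition~\ref{prop:optimal_linear_estimator_cos}. I therefore expect this to be a short computation, not requiring any tool beyond what is already in the excerpt.

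First I would unfold the numerator. Writing $\bm{u}_i^\sharp = \sqrt{M}\,\xi_i\,\bm{u}_i(\bm{Y})$ and expanding,
\[
\langle \bm{u}^*_{\mathrm{PCA}}, \bm{u}_*\rangle
= \sum_{i\in\mathcal{I}_M} s_i^u\,\sqrt{\nu_1(\{\lambda_i\})}\,\langle \bm{u}_i^\sharp,\bm{u}_*\rangle.
\]
The sign-consistency hypothesis gives $s_i^u \ase \epsilon_u\,\sign\langle \bm{u}_i^\sharp,\bm{u}_*\rangle$ for some global flip $\epsilon_u\in\{\pm1\}$, which is irrelevant after squaring; the numerator therefore coincides asymptotically with $\epsilon_u \sum_{i\in\mathcal{I}_M} \sqrt{\nu_1(\{\lambda_i\})}\,|\langle \bm{u}_i^\sharp,\bm{u}_*\rangle|$. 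By Proposition~\ref{prop:outlier_characterization}(2)--(4), $\mathcal{I}_M$ is in bijection with $\mathcal{K}^*$ for $M$ large and $M^{-1}\langle \bm{u}_i(\bm{Y}),\bm{u}_*\rangle^2 \ac \nu_1(\{\lambda_i\})$, whence $M^{-1}|\langle \bm{u}_i^\sharp,\bm{u}_*\rangle| \ac \sqrt{\nu_1(\{\lambda_i\})}$ and
\[
\frac{1}{M^2}\langle \bm{u}^*_{\mathrm{PCA}},\bm{u}_*\rangle^2 \ac \Bigl(\sum_{\lambda_i\in\mathcal{K}^*}\nu_1(\{\lambda_i\})\Bigr)^2.
\]

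Next I would compute the denominator. Since $\{\bm{u}_i(\bm{Y})\}_{i\in\mathcal{I}_M}$ are orthonormal and $\xi_i^2 = 1$, the cross terms vanish and
\[
\|\bm{u}^*_{\mathrm{PCA}}\|^2
= \sum_{i,j\in\mathcal{I}_M} s_i^u s_j^u\,\sqrt{\nu_1(\{\lambda_i\})\nu_1(\{\lambda_j\})}\,M\,\xi_i\xi_j\,\delta_{ij}
= M\sum_{i\in\mathcal{I}_M} \nu_1(\{\lambda_i\}),
\]
while Assumption~\ref{assump:main}(b) yields $\|\bm{u}_*\|^2/M \ac 1$. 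Dividing the numerator limit by the product of denominator limits then produces exactly $\sum_{\lambda_i\in\mathcal{K}^*}\nu_1(\{\lambda_i\})$, matching the oracle bound. The argument for $\bm{v}^*_{\mathrm{PCA}}$ is entirely symmetric with $\nu_2$ replacing $\nu_1$.

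The hard part is not analytical but bookkeeping: one must intersect the almost-sure events on which Proposition~\ref{prop:outlier_characterization}(2) identifies $\mathcal{I}_M$ with $\mathcal{K}^*$ for all large $M$, Proposition~\ref{prop:outlier_characterization}(4) delivers the individual overlap limits for each $\lambda_k\in\mathcal{K}^*$ simultaneously, Assumption~\ref{assump:main}(b) controls $\|\bm{u}_*\|^2$, and the hypothesis on $s_i^u$ fixes the signs, so that all convergences hold on a single probability-one event. Once this event has been exhibited and $|\mathcal{I}_M|=|\mathcal{K}^*|$, the computation above goes through verbatim, and no deeper probabilistic machinery is required.
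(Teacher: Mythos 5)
Your proposal is correct and follows essentially the same route as the paper's proof: expand the numerator, use the orthonormality $\langle\bm u_i^\sharp,\bm u_j^\sharp\rangle=M\,\mathbf 1\{i=j\}$ for the denominator, invoke Proposition~\ref{prop:outlier_characterization}(2)--(4) for the bijection $\mathcal I_M\leftrightarrow\mathcal K^*$ and the overlap limits, and note that the sign consistency makes the global flip cancel after squaring. The only cosmetic difference is that the paper phrases the sign cancellation via the relative-sign vector $[\bm s_{u,*}^{\mathrm R}]_i$ rather than via absolute values and a global $\epsilon_u$, which is equivalent.
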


Thus, the final step is to find consistent estimators of the relative signs,
which we address next.


\subsection{Estimation of Relative Signs}\label{sec:relativesign}

This section addresses the estimation of the relative signs required for the spectral estimators in \eqref{eq: optimal PCA estimators}. We work under the setting of Proposition~\ref{prop:signal_plus_noise_sv}.
Let $\mathcal{I}_M$ denote the set of empirical outlier indices in \eqref{eq:emp-outlier-index}, with $K = \card{\mathcal{I}_M}$. We fix a reference index $r\in\mathcal I_M$ and encode the true relative sign by the vector $ \bm{s}^{\mathrm{R}}_{u,*}\in\{\pm1\}^K$ defined, for
$\ell\in[K]$, as
\begin{align}\label{eq:SRu_def}
[\bm{s}^{\mathrm{R}}_{u,*}]_\ell
\bydef
\sign\!\big(\langle \bm u_{\ell}^\sharp,\bm u_*\rangle\big)\,
\sign\!\big(\langle \bm u_{r}^\sharp,\bm u_*\rangle\big),
\qquad\text{so that }[\bm{s}^{\mathrm{R}}_{u,*}]_r\equiv+1,
\end{align}
which is well-defined in the supercritical regime. Analogously we define \(\bm{s}^{\mathrm{R}}_{v,*}\) as the true relative sign vectors in the \(\bm v\)-channel. We next characterize the row-wise limiting law of the randomized outlying singular vectors. It can be shown that conditioned on $\bm{s}_{u,*}^\mathrm{R} \in \R^K$, the following convergence holds (see Appendix~\ref{sec:NonGaussian-MLE})
\begin{align}\label{eq:limit-matrix-U-rand}
\bigl(\bm{u}_{1}^{\sharp},\ldots,\bm{u}_{K}^{\sharp}\bigr)
  &\wc
  \bigl(
    \mathsf{U}_{1}^{\sharp},
    \ldots,
    \mathsf{U}_{K}^{\sharp}
  \bigr)\bydef
  \bm{\mathsf{U}}^{\sharp}
  \in \mathbb R^K,
\end{align}
with
\begin{align}\label{eq:limit-vector-U-true-body}
  \mathsf{U}_{\ell}^{\sharp}
  \;&\bydef \;
  [\bm{s}^{\mathrm{R}}_{u,*}]_\ell \sqrt{\nu_1(\{\lambda_\ell\})}\,\mathsf{U}_*
  \;+\;
  \sqrt{1-\nu_1(\{\lambda_\ell\})}\,\mathsf{Z}_\ell,
  \qquad \ell \in \mathcal{I}_M,
\end{align}
with $\{\mathsf Z_\ell\}_{\ell\in\mathcal I_M}$ standard Gaussian independent of $\mathsf U_*$.

We consider two estimators of the relative signs: (i) a maximum likelihood estimator (MLE) based on the full prior, and (ii) a non-Gaussian moment contrast (NGMC) estimator which exploits higher order moments and is computationally simpler.

\begin{proposition}[MLE for relative signs]\label{prop:MLE_relative_sign}
 Let $\bm{s} \in \{\pm1\}^K$ and $[\bm{s}]_r=1$ be any fixed relative sign vector. Denote by $P_s$ the joint probability density function of
 \begin{equation}\label{eq:Usharps}
 \bigl([\bm{s}]_\ell \sqrt{\nu_1(\{\lambda_\ell\})}\,\mathsf{U}_*
  \;+\;
  \sqrt{1-\nu_1(\{\lambda_\ell\})}\,\mathsf{Z}_\ell\bigr)_{\ell \in \mathcal{I}_M},    
 \end{equation}
where $\mathsf{U}_*$ and $ (\mathsf{Z}_\ell)_{\ell \in \mathcal{I}_M}$ are distributed as in \eqref{eq:limit-vector-U-true-body}. Denote the $i$-th row of the matrix $\bm U^\sharp \in\mathbb R^{M\times K}$ \eqref{eq:limit-vector-U-true-body} by $\bm{U}^\sharp_{i,:}$. Let $\hat{\bm{s}}_{u}^{\mathrm{MLE}}$ be the maximum likelihood estimator of $\bm{s}^{\mathrm{R}}_{u,*}$
\begin{equation}\label{eq:MLE-est-global}
  \hat{\bm{s}}_{u}^{\mathrm{MLE}}
  \in 
  \underset{\bm{s}\in\mathcal S_r}{\mathrm{argmax}}\ 
  \sum_{i=1}^M
  \log P_s( \bm{U}^\sharp_{i,:} ),
\end{equation}
and analogously $\hat{\bm{s}}_{v}^{\mathrm{MLE}}$ the maximum likelihood estimator of $\bm{s}^{\mathrm{R}}_{v,*}$.
We have:
\begin{enumerate}
    \item If either the law of $\mathsf{U}_*$ or $\mathsf{V}_*$ is not standard Gaussian, then 
    \[
    \hat{\bm{s}}_{u}^{\mathrm{MLE}} \overset{a.s.}{\longrightarrow} \bm{s}^{\mathrm{R}}_{u,*},
    \qquad
    \hat{\bm{s}}_{v}^{\mathrm{MLE}} \overset{a.s.}{\longrightarrow} \bm{s}^{\mathrm{R}}_{v,*}.
    \]
    \item If both $\mathsf{U}_*$ and $\mathsf{V}_*$ are standard Gaussian, then $P_s \sim \mathcal{N}(\bm{0},\bm{I}_K)$ for any $\bm{s}\in\mathcal S_r$, and consistent estimation of the relative signs via MLE is impossible.
\end{enumerate}
\end{proposition}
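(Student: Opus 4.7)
The plan is to frame the statement as a standard MLE consistency/impossibility analysis on the finite parameter set $\mathcal S_r \bydef \{\bm s\in\{\pm1\}^K:[\bm s]_r=1\}$, requiring two ingredients: convergence of the empirical log-likelihood, and identifiability of the family $\{P_s\}_{\bm s\in\mathcal S_r}$. For the former, the Wasserstein convergence \eqref{eq:limit-matrix-U-rand} of the rows of $\bm U^\sharp$ to $\bm{\mathsf U}^\sharp$, together with a truncation controlling the polynomial growth of $\log P_s$, yields pointwise a.s.\ convergence
\begin{equation*}
\frac{1}{M}\sum_{i=1}^M\log P_s(\bm U^\sharp_{i,:})\ac L(\bm s)\bydef\mathbb E_{P_{\bm s^{\mathrm R}_{u,*}}}\!\bigl[\log P_s(\bm{\mathsf U}^\sharp)\bigr].
\end{equation*}
Gibbs' inequality, $L(\bm s)=L(\bm s^{\mathrm R}_{u,*})-\mathrm{KL}(P_{\bm s^{\mathrm R}_{u,*}}\|P_s)$, identifies $\bm s^{\mathrm R}_{u,*}$ as a population maximizer of $L$, uniquely so when $\bm s\mapsto P_s$ is injective on $\mathcal S_r$. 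Since $\mathcal S_r$ is finite, pointwise convergence plus uniqueness upgrade to a.s.\ convergence of $\hat{\bm s}_u^{\mathrm{MLE}}$ to the population maximizer.

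The two parts of the proposition correspond to the two possibilities for identifiability. Because $\bm{\mathsf U}^{\sharp,\bm s}$ decomposes as a signal contribution $[\bm s]_\ell\sqrt{\nu_1(\{\lambda_\ell\})}\mathsf U_*$ plus an independent Gaussian vector, its log-characteristic function admits the cumulant expansion
\begin{equation*}
\log\phi_s(\bm t)=\sum_{m\ge 1}\frac{\kappa_m(\mathsf U_*)}{m!}\Bigl(\textstyle\sum_\ell t_\ell[\bm s]_\ell\sqrt{\nu_1(\{\lambda_\ell\})}\Bigr)^m-\tfrac12\bm t^\UT \Sigma^{\mathsf Z}(\bm s)\bm t,
\end{equation*}
with $\Sigma^{\mathsf Z}(\bm s)$ the noise covariance forced by the joint law in \eqref{eq:limit-vector-U-true-body}. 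When $\mathsf U_*\sim\mathcal N(0,1)$, all $\kappa_m$ with $m\ge 3$ vanish and the quadratic signal contribution $[\bm s]_k[\bm s]_\ell\sqrt{\nu_1(\{\lambda_k\})\nu_1(\{\lambda_\ell\})}$ is exactly cancelled by $\Sigma^{\mathsf Z}(\bm s)$ --- itself of this form by the asymptotic orthogonality of the outlier singular vectors (Proposition~\ref{prop:outlier_characterization}). Thus $P_s\equiv\mathcal N(\bm 0,\bm I_K)$ for every $\bm s\in\mathcal S_r$, $L$ is constant, and consistent estimation is impossible, giving part~(2). Conversely, if $\mathsf U_*$ is non-Gaussian, then Assumption~\ref{assump:main}(b) together with the standard cumulant characterization of Gaussianity gives $\kappa_m(\mathsf U_*)\neq 0$ for some $m\ge 3$. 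Extracting the coefficient of $t_r^{m-1}t_\ell$ in $\log\phi_s(\bm t)$ picks out a term proportional to $\kappa_m[\bm s]_\ell\nu_1(\{\lambda_r\})^{(m-1)/2}\sqrt{\nu_1(\{\lambda_\ell\})}$, which is strictly sign-sensitive in $[\bm s]_\ell$ whenever $\nu_1(\{\lambda_r\}),\nu_1(\{\lambda_\ell\})>0$ (guaranteed by supercriticality). Hence $\phi_s=\phi_{s'}$ forces $[\bm s]_\ell=[\bm s']_\ell$ for every $\ell$, yielding injectivity and $\hat{\bm s}_u^{\mathrm{MLE}}\ac\bm s^{\mathrm R}_{u,*}$; the $\bm v$-channel argument is symmetric under non-Gaussianity of $\mathsf V_*$.

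The main obstacle will be this identifiability step: making the cumulant-matching argument rigorous at the level of characteristic functions rather than merely formal power series, which requires the cumulant generating function of $\mathsf U_*$ to be analytic on a neighborhood of the origin --- a mild consequence of the finite-moment Assumption~\ref{assump:main}(b) --- and carefully separating the quadratic and higher-cumulant contributions to $\log\phi_s$ since $\Sigma^{\mathsf Z}(\bm s)$ itself depends on $\bm s$. Two additional minor subtleties are symmetric $\mathsf U_*$, for which only even $m$ are available but the exponent pattern $(m-1,1)$ still yields a contribution linear in $[\bm s]_\ell$, and the Wasserstein tail control needed so that $\log P_s$ is a legitimate test function in Step~1. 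Once injectivity is secured, the LLN and discrete argmax continuity on $\mathcal S_r$ give both conclusions by routine arguments.
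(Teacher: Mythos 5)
Your overall architecture matches the paper's: pointwise a.s.\ convergence of the empirical log-likelihood via the Wasserstein limit of the rows of $\bm U^\sharp$ plus a quadratic growth bound on $\log P_s$, identifiability of $\{P_{\bm s}\}_{\bm s\in\mathcal S_r}$, Gibbs/KL to identify the population maximizer, and finiteness of $\mathcal S_r$ to conclude. Your identifiability step, however, is genuinely different. The paper factorizes the characteristic function as $\Phi_{\bm s}(\bm w)=e^{-\|\bm w\|^2/2}\Psi(\langle \bm w,\bm\gamma(\bm s)\rangle)$ with $\Psi(\omega)=\phi_{U_*}(\omega)e^{\omega^2/2}$, notes that $\Psi$ is non-constant iff $\mathsf U_*$ is non-Gaussian, and separates $\bm s\neq\bm t$ by choosing $\bm w_0$ with $\langle\bm w_0,\bm\gamma(\bm s)\rangle\neq\langle\bm w_0,\bm\gamma(\bm t)\rangle$ hitting two values where $\Psi$ differs (using that $\bm\gamma(\bm s)\not\parallel\bm\gamma(\bm t)$ on $\mathcal S_r$). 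You instead extract the mixed cumulant of order $(m-1,1)$ in coordinates $(r,\ell)$, which for $m\ge 3$ is $\gamma_r^{m-1}\gamma_\ell[\bm s]_\ell\,\kappa_m(\mathsf U_*)$ and hence sign-sensitive. This works, and your worry about analyticity of the cumulant generating function is unnecessary: since all moments are finite, $\log\phi_s$ is $C^\infty$ at the origin and equality of densities forces equality of all mixed derivatives there, which is all you use; you never need the power series to converge. You do implicitly need that non-Gaussianity forces some $\kappa_m\neq0$ with $m=1$ or $m\ge3$ --- this follows from moment-determinacy of the standard Gaussian (Carleman), which you should state, since it is the one place your route requires a fact the paper's $\Psi$-argument avoids entirely. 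Your Gaussian-impossibility computation is the paper's covariance cancellation $\bm\gamma\bm\gamma^\UT+(\bm I-\bm\gamma\bm\gamma^\UT)=\bm I$ rewritten in Fourier language.

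There is one genuine gap: claim~(1) asserts that if \emph{either} prior is non-Gaussian then \emph{both} $\hat{\bm s}_u^{\mathrm{MLE}}$ and $\hat{\bm s}_v^{\mathrm{MLE}}$ are consistent, and your closing remark that ``the $\bm v$-channel argument is symmetric under non-Gaussianity of $\mathsf V_*$'' only covers the case where $\mathsf V_*$ itself is non-Gaussian. If $\mathsf U_*$ is non-Gaussian but $\mathsf V_*$ is standard Gaussian, the raw $\bm v$-channel likelihood is flat (by your own part~(2) computation applied to that channel), so no argmax of it can be consistent; the estimator must be constructed by transferring $\hat{\bm s}_u^{\mathrm{MLE}}$ across channels. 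The paper does this via the inter-channel sign coupling (Lemma~\ref{lem:inter_channel_sign_coupling}): $\sign(\langle\bm u_k,\bm u_*\rangle\langle\bm v_k,\bm v_*\rangle)\ac\sign(\nu_3(\{\sigma_k\}))\neq 0$, so one sets $[\hat{\bm s}_v]_j=[\hat{\bm s}_u]_j\,\sign(\nu_3(\{\sigma_r\})\nu_3(\{\sigma_j\}))$. Without this coupling step your proof establishes a strictly weaker statement than the proposition.
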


\proofseeapp{\propref{prop:MLE_relative_sign}}{app:MLE_relative_sign}

\begin{remark}[Well-posedness of the likelihood]\label{rem:LLR-well-posed}
The random vector $\bm{\mathsf{U}}^{\sharp}$ is constructed by adding an independent Gaussian vector with non-degenerate covariance to the signal $\mathsf{U}_*$. Consequently, for any sign configuration $\bm{s}$, the joint law of $\bm{\mathsf{U}}^{\sharp}$ is the convolution of the prior measure of $\mathsf{U}_*$ with a non-degenerate Gaussian distribution on $\mathbb{R}^K$. This ensures that the distribution admits a smooth, strictly positive density $P_s$ with respect to the Lebesgue measure. Hence, the log-likelihood terms in \eqref{eq:MLE-est-global} are well-defined.
\end{remark}

Proposition~\ref{prop:MLE_relative_sign} establishes that any non-Gaussianity in the prior $\mathsf U_*$ renders the relative signs identifiable, yielding an asymptotically consistent MLE. However, minimizing the objective over $\mathcal{S}_r$ can be computationally intensive when the prior lacks a closed-form Gaussian convolution. This motivates a simpler alternative that specifically exploits non-Gaussianity through an appropriate moment contrast, called the \textit{non-Gaussian moment contrast} (NGMC) scheme.

\begin{assumption}[Non-Gaussian even moment]\label{assump: Non-Gaussian Prior}
There exists an even integer \(k \ge 0\) such that
\[
    \mathbb{E}[\,\mathsf{U}_*^{\,k+2}\,] \;\neq\; (k+1)!! .
\]
In other words, at least one even-order moment of \(\mathsf{U}_*\) differs from the corresponding moment of a standard Gaussian random variable.
\end{assumption}

\begin{proposition}[NGMC estimator for relative signs]\label{prop:NGMC-est}
Assume the setting of Proposition~\ref{prop:signal_plus_noise_sv}, and suppose Assumption~\ref{assump: Non-Gaussian Prior} holds. Let $k \ge 2$ be the smallest even integer admissible in Assumption~\ref{assump: Non-Gaussian Prior}, 
and let \(f(x) \bydef x^{k+1}\) be the corresponding entrywise moment-contrast function. Fix an arbitrary reference outlier index \(r \in \mathcal{I}_M\), and for any other outlier \(j \in \mathcal{I}_M \setminus \{r\}\) define
\begin{subequations}\label{eq:NGMC-est}
\begin{align}
    \hat{s}_{u,j}^{\mathrm{NGMC}}
    &\bydef
    \sign\!\Big(
        f\big(\bm u_r^{\sharp}\big)^{\UT}\,
        \bm u_j^{\sharp}
    \Big)
    \cdot
    \sign\!\Big(
        \mathbb{E}[\mathsf U_*^{\,k+2}] - (k+1)!!
    \Big),
    \\
    \hat{s}_{v,j}^{\mathrm{NGMC}}
    &\bydef
    \hat{s}_{u,j}^{\mathrm{NGMC}}
    \cdot
    \sign\!\big(\nu_3(\{\sigma_r\})\,\nu_3(\{\sigma_j\})\big).
\end{align}
\end{subequations}
Then the NGMC estimators are consistent: $\hat{s}_{u,j}^{\mathrm{NGMC}} \overset{a.s.}{\longrightarrow}  [\bm{s}^{\mathrm{R}}_{u,*}]_j$ and $\hat{s}_{v,j}^{\mathrm{NGMC}} \overset{a.s.}{\longrightarrow} [\bm{s}^{\mathrm{R}}_{v,*}]_j$.
\end{proposition}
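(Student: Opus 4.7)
The plan is to reduce the NGMC statistic to a population moment via the Wasserstein convergence of the randomized rows \eqref{eq:limit-matrix-U-rand}, and then to identify its sign using the joint structure of \eqref{eq:limit-vector-U-true-body}. Write $a_\ell\bydef\sqrt{\nu_1(\{\lambda_\ell\})}$, $b_\ell\bydef\sqrt{1-\nu_1(\{\lambda_\ell\})}$, so that $\mathsf U_\ell^\sharp=[\bm s^{\mathrm R}_{u,*}]_\ell a_\ell\mathsf U_*+b_\ell\mathsf Z_\ell$ with $\mathsf U_*\indep(\mathsf Z_r,\mathsf Z_j)$. Since the test function $h(x,y)=x^{k+1}y$ is pseudo-Lipschitz of order $k+2$ and the convergence in \eqref{eq:limit-matrix-U-rand} holds at all Wasserstein orders,
\[
\tfrac{1}{M}\,f(\bm u_r^\sharp)^{\UT}\bm u_j^\sharp
\ac
\mathbb{E}\!\left[(\mathsf U_r^\sharp)^{k+1}\,\mathsf U_j^\sharp\right].
\]
It therefore suffices to show that this limit has sign $[\bm s^{\mathrm R}_{u,*}]_j\,\sign(\mathbb{E}[\mathsf U_*^{k+2}]-(k+1)!!)$.

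The moment computation proceeds in two steps. First, exact orthogonality of distinct singular vectors gives $(\bm u_r^\sharp)^{\UT}\bm u_j^\sharp/M\equiv 0$, so $\mathbb{E}[\mathsf U_r^\sharp\mathsf U_j^\sharp]=0$, which pins down the Gaussian covariance
\[
\rho_{rj}\bydef\mathbb{E}[\mathsf Z_r\mathsf Z_j]=-\frac{[\bm s^{\mathrm R}_{u,*}]_j\,a_r a_j}{b_r b_j}.
\]
Second, using $[\bm s^{\mathrm R}_{u,*}]_r=1$, the independence $\mathsf U_*\indep(\mathsf Z_r,\mathsf Z_j)$, a binomial expansion of $(a_r\mathsf U_*+b_r\mathsf Z_r)^{k+1}$, Stein's lemma applied conditionally on $\mathsf U_*$, and the fact that by minimality of $k$ in Assumption~\ref{assump: Non-Gaussian Prior} every even moment $\mathbb{E}[\mathsf U_*^{j+2}]$ for $j<k$ matches $(j+1)!!$, one obtains
\[
\mathbb{E}\!\left[(\mathsf U_r^\sharp)^{k+1}\mathsf U_j^\sharp\right]
=[\bm s^{\mathrm R}_{u,*}]_j\Bigl\{(k+1)!!\,(a_r a_j+b_r b_j\rho_{rj})+a_j a_r^{k+1}\bigl(\mathbb{E}[\mathsf U_*^{k+2}]-(k+1)!!\bigr)\Bigr\}.
\]
The Gaussian prefactor $(k+1)!!(a_r a_j+b_r b_j\rho_{rj})$ vanishes identically by the value of $\rho_{rj}$, leaving only the non-Gaussian excess term. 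Since $a_r,a_j>0$ in the supercritical regime, the NGMC definition \eqref{eq:NGMC-est} yields $\hat s_{u,j}^{\mathrm{NGMC}}\ac[\bm s^{\mathrm R}_{u,*}]_j$.

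For the $v$-channel I avoid a second moment computation by a sign-transfer argument. Because the Rademacher sign $\xi_k$ acts consistently on the SVD pair $(\bm u_k(\bm Y),\bm v_k(\bm Y))$, one has
$\sign\langle\bm u_k^\sharp,\bm u_*\rangle\sign\langle\bm v_k^\sharp,\bm v_*\rangle
=\sign\bigl(\langle\bm u_k(\bm Y),\bm u_*\rangle\langle\bm v_k(\bm Y),\bm v_*\rangle\bigr)$, which converges almost surely to $\sign\nu_3(\{\sigma_k\})$ by Proposition~\ref{prop:outlier_characterization}(4). Evaluating at $k=r$ and $k=j$ gives the identity $[\bm s^{\mathrm R}_{v,*}]_j=[\bm s^{\mathrm R}_{u,*}]_j\sign\bigl(\nu_3(\{\sigma_r\})\nu_3(\{\sigma_j\})\bigr)$, which together with the $u$-channel consistency produces $\hat s_{v,j}^{\mathrm{NGMC}}\ac[\bm s^{\mathrm R}_{v,*}]_j$. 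The main obstacle is the cancellation step of the $(k+1)!!$-prefactor Gaussian term: the minimal-order moment-matching below $k$ is what confines the non-Gaussian contribution to the pure $a_r^{k+1}$ monomial, while the orthogonality-driven value of $\rho_{rj}$ is what exactly annihilates the $a_r a_j(k+1)!!$ coming from the signal-direction moment; together they isolate $\mathbb{E}[\mathsf U_*^{k+2}]-(k+1)!!$ as the unique driver of the sign.
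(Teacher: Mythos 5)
Your proposal follows essentially the same route as the paper's proof: reduce the statistic to $\E[(\mathsf U_r^\sharp)^{k+1}\mathsf U_j^\sharp]$ via Wasserstein convergence with a pseudo-Lipschitz test function, pin down $\E[\mathsf Z_r\mathsf Z_j]$ from the exact orthogonality of distinct singular vectors, evaluate the moment by binomial expansion plus conditional Stein, use minimality of $k$ to isolate $\E[\mathsf U_*^{k+2}]-(k+1)!!$, and transfer to the $v$-channel through the $\nu_3$ inter-channel sign coupling. The only issue is a sign misplacement in your displayed intermediate formula: after factoring $[\bm s^{\mathrm R}_{u,*}]_j$ out of the bracket, the Gaussian piece should read $(k+1)!!\bigl(a_ra_j+[\bm s^{\mathrm R}_{u,*}]_j\,b_rb_j\rho_{rj}\bigr)$ (equivalently, it is $(k+1)!!\,\E[\mathsf U_r^\sharp\mathsf U_j^\sharp]$ before factoring), whereas as written $a_ra_j+b_rb_j\rho_{rj}=a_ra_j\bigl(1-[\bm s^{\mathrm R}_{u,*}]_j\bigr)$ does not vanish when $[\bm s^{\mathrm R}_{u,*}]_j=-1$ --- precisely the case the estimator must detect. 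Your verbal justification (the Gaussian prefactor equals $(k+1)!!$ times the vanishing covariance) identifies the correct mechanism, so the fix is immediate and the rest of the argument stands.
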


\proofseeapp{\propref{prop:NGMC-est}}{app:NGMC-est}

\section{OAMP Algorithm with Spectral Initialization} \label{sec: spec OAMP}

The optimal spectral estimator developed in the previous section naturally
suggests a way to initialize iterative algorithms. Here, we study OAMP when
initialized using this spectral estimator, following the construction of
Section~\ref{sec:optimal OAMP}.

\subsection{Spectrally-Initialized Optimal OAMP}
\label{sec:spec-optimal-oamp}

We use a tilde to distinguish the iterates of the spectrally initialized
algorithm from those of the generic OAMP recursion in
Section~\ref{sec:optimal OAMP}. Under the assumptions of
Proposition~\ref{prop:optimal_data_driven_estimators}, the initialization is
given by the unit-variance normalized versions of the optimal spectral
estimators \eqref{eq: optimal PCA estimators}. Specifically, at $t=1$,
\BE \label{eq: spectral init}
\tilde{\bm u}_1 \;=\;
\Big(\sum_{k\in\mathcal I_M}\nu_1(\{\lambda_k\})\Big)^{-1/2} \bm{u}^*_{\mathrm{PCA}}
\quad\text{and}\quad
\tilde{\bm v}_1 \;=\; \Big(\sum_{k\in\mathcal I_M}\nu_2(\{\lambda_k\})\Big)^{-1/2}
\bm{v}^*_{\mathrm{PCA}},
\EE
where $\bm{u}^*_{\mathrm{PCA}}$ and $\bm{v}^*_{\mathrm{PCA}}$ are the optimal
spectral estimates defined in \eqref{eq: optimal PCA estimators}.

For all subsequent iterations $t \ge 2$, and for fixed sign parameters
$s_1,s_2\in\{+1,-1\}$, the algorithm proceeds according to the standard
optimal OAMP update rules in \eqref{eq:u-opt}--\eqref{eq:v-opt}. The update
rules are
\BS\label{eq: spec-OAMP}
\begin{align}
\tilde{\bm{u}}_t^* &=
\frac{1}{\sqrt{\widetilde{w}_{1,t}}}\Big[
F^*_{t}(\bm{Y}\bm{Y}^\UT)\,
\bar{\phi}(\tilde{\bm{u}}_{t-1}^* \mid \widetilde{w}_{1,t-1}, s_{1})
\;+\;
\tilde{F}^*_{t}(\bm{Y}\bm{Y}^\UT)\,
\bm{Y}\,
\bar{\phi}(\tilde{\bm{v}}^*_{t-1} \mid \widetilde{w}_{2,t-1}, s_{2})
\Big], \label{eq:OAMP Algo u si} \\
\tilde{\bm{v}}_t^* &=
\frac{1}{\sqrt{\widetilde{w}_{2,t}}}\Big[
G^*_{t}(\bm{Y}^\UT\bm{Y})\,
\bar{\phi}(\tilde{\bm{v}}^*_{t-1} \mid \widetilde{w}_{2,t-1}, s_{2})
\;+\;
\tilde{G}^*_{t}(\bm{Y}^\UT\bm{Y})\,
\bm{Y}^\UT\,
\bar{\phi}(\tilde{\bm{u}}_{t-1}^* \mid \widetilde{w}_{1,t-1}, s_{1})
\Big], \label{eq:OAMP Algo v si}
\end{align}
\ES
where $F_t^*,\tilde F_t^*,G_t^*,\tilde G_t^*$ (as functions of
$\lambda,\widetilde{\rho}_{1,t},\widetilde{\rho}_{2,t}$) denote the trace-free
spectral matrix denoisers from \eqref{eq:Fstar}--\eqref{eq:tGstar}, and
$\bar\phi(\cdot\mid w,s)$ denotes the signed DMMSE denoiser associated with the
scalar Gaussian channel
\[
\mathsf{X} = s\sqrt{w}\,\mathsf{X}_* + \sqrt{1-w}\,\mathsf{Z},
\qquad s\in\{\pm1\},
\]
defined by
\begin{equation}\label{eq:signed-family}
\bar\phi(x\mid w,s) \;\bydef\; \bar\phi(sx\mid w),
\end{equation}
where $\bar\phi(\cdot\mid w)$ is the DMMSE denoiser for the standard scalar
Gaussian channel in \eqref{eq:mmse-scalar}.

\paragraph{Update of state evolution parameters.}
The scalar parameters used in the denoisers are updated, for $t\ge2$, by
\BS \label{eq:OptimalRecursion for spec}
\begin{align}
\widetilde{\rho}_{1,t}
&= \frac{1}{\mathrm{mmse}_{\mathsf{U}}(\widetilde{w}_{1,t-1})}
   - \frac{1}{1-\widetilde{w}_{1,t-1}},
&\qquad
\widetilde{\rho}_{2,t}
&= \frac{1}{\mathrm{mmse}_{\mathsf{V}}(\widetilde{w}_{2,t-1})}
   - \frac{1}{1-\widetilde{w}_{2,t-1}},\\
\widetilde{w}_{1,t}
&= 1 - \frac{1 - {\langle P_t^*(\lambda;\widetilde{\rho}_{1,t},\widetilde{\rho}_{2,t}) \rangle}_{\muM}}
{{\langle P_t^*(\lambda;\widetilde{\rho}_{1,t},\widetilde{\rho}_{2,t}) \rangle}_{\muM}}
\cdot \frac{1}{\widetilde{\rho}_{1,t}},
&\qquad
\widetilde{w}_{2,t}
&= 1 - \frac{1 - {\langle Q_t^*(\lambda;\widetilde{\rho}_{1,t},\widetilde{\rho}_{2,t}) \rangle}_{\muN}}
{{\langle Q_t^*(\lambda;\widetilde{\rho}_{1,t},\widetilde{\rho}_{2,t}) \rangle}_{\muN}}
\cdot \frac{1}{\widetilde{\rho}_{2,t}}.
\end{align}
\ES
The recursion is initialized with
\[
\widetilde{w}_{1,1} = \sum_{\lambda_i\in\mathcal{K}^*}\nu_1(\{\lambda_i\}),
\qquad
\widetilde{w}_{2,1} = \sum_{\lambda_i\in\mathcal{K}^*}\nu_2(\{\lambda_i\}).
\]

\paragraph{Choice of global sign parameters.}
The spectral initializers $(\tilde{\bm u}_1,\tilde{\bm v}_1)$ inherit a single
global Rademacher ambiguity from the randomization
\eqref{eq:scaled-outlier-matrix-rand}: their overlaps with the true signals,
$\langle\tilde{\bm u}_1,\bm u_*\rangle$ and
$\langle\tilde{\bm v}_1,\bm v_*\rangle$, are determined only up to a $\pm1$
factor. The sign parameters $s_1,s_2\in\{+1,-1\}$ in
\eqref{eq: spec-OAMP} are introduced to resolve this ambiguity; they are chosen
once according to the prior structure:
\begin{itemize}
\item \textit{Asymmetric priors.}
When the priors for $(\mathsf U_*,\mathsf V_*)$ are asymmetric, the global signs
are statistically identifiable. We estimate them using the MLE or moment-based
procedures in Appendix~\ref{app:pf-global_sign_gsmle} and
Appendix~\ref{app:pf-global_sign_GSOC}, and set $(s_1,s_2)$ to these
estimates. With this choice, the OAMP iterates have asymptotically positive
overlap with the true signals; see Fact~\ref{fact:signed-dmmse-sign}(i).

\item \textit{Symmetric priors.}
For symmetric priors, the individual global signs cannot be identified (as
noted in \cite[Remark~3.6]{feng2022unifying}). Nevertheless,
Lemma~\ref{lem:inter_channel_sign_coupling} shows that one may, without loss of
generality, adopt the convention
\[
(s_1,s_2)=\Bigl(1,\sign\bigl(\nu_3(\{\sigma_r\})\bigr)\Bigr),
\]
where $\nu_3(\{\sigma_r\})\neq 0$ denotes the point mass of the (signed) cross
measure $\nu_3$ associated with the reference outlier
$\lambda_r\in\mathcal{K}^*$. This pair is determined only up to a common global
flip; under such a flip, the state evolution is preserved in absolute value.
Equivalently, the SE recursion for the squared overlaps (and hence the cosine
similarities) is invariant; see Fact~\ref{fact:signed-dmmse-sign}(ii).
\end{itemize}
Unless stated otherwise, all subsequent state evolution results are understood
for the sign-resolved iterates obtained by the above choice of $(s_1,s_2)$.
A detailed treatment of the global sign ambiguity is provided in
Appendix~\ref{app:pf-global_sign}.

\subsection{State Evolution of Spectrally-Initialized OAMP}\label{subsec:spec-SE}

Let $(\tilde{\mu}_{u,t}, \tilde{\mathsf{Z}}_{u,t})$ and $(\tilde{\mu}_{v,t}, \tilde{\mathsf{Z}}_{v,t})$ denote the SE parameters for the spectrally-initialized iterates. The associated scalar random variables are
\begin{align}\label{eq:spec-SE-rvs}
(\mathsf{U}_{*}, \mathsf{A}) &\sim \pi_u, \qquad \tilde{\mathsf{U}}_t = \tilde{\mu}_{u,t}\,\mathsf{U}_{*} + \tilde{\mathsf{Z}}_{u,t}, \qquad \forall t\in\mathbb{N},\\
(\mathsf{V}_{*}, \mathsf{B}) &\sim \pi_v, \qquad \tilde{\mathsf{V}}_t = \tilde{\mu}_{v,t}\,\mathsf{V}_{*} + \tilde{\mathsf{Z}}_{v,t}, \qquad \forall t\in\mathbb{N},
\end{align}
where, for each $t$, the noise variables $\tilde{\mathsf{Z}}_{u,t}$ and $\tilde{\mathsf{Z}}_{v,t}$ are centered Gaussian and independent of $(\mathsf{U}_{*},\mathsf{A})$ and $(\mathsf{V}_{*},\mathsf{B})$, respectively.

\paragraph{Initialization and the global-sign issue.}
The SE is initialized at $t=1$ using the spectral estimators from~\eqref{eq: spectral init}. As usual, spectral singular vectors are only defined up to a global sign, and (prior to any convention) this sign is not identifiable from the data under symmetric priors. To make the initialization amenable to state evolution, we adopt the randomized sign convention used in the construction of the spectral initializer (cf.~\eqref{eq:scaled-outlier-matrix-rand}). Using this method, we may write the realized orientations as
\[
\sign\!\big(\langle \tilde{\bm u}_1,\bm u_*\rangle\big)=:\mathsf{S}_u\in\{\pm1\},\qquad
\sign\!\big(\langle \tilde{\bm v}_1,\bm v_*\rangle\big)=:\mathsf{S}_v\in\{\pm1\},
\]
where $(\mathsf{S}_u,\mathsf{S}_v)$ are Rademacher variables independent of $(\bm{u}_\ast,\bm{v}_\ast,\bm{W})$ and hence can be treated as fixed when conditioning. Hereafter, we work conditionally on $(\mathsf{S}_u,\mathsf{S}_v)$. Accordingly, we define the initial SE parameters in terms of the \emph{realized} orientation:
\begin{align}
\tilde{\mu}_{u,1}
&= \mathsf{S}_u\cdot
\Big(\sum_{\lambda_i\in\mathcal{K}^*}\nu_1(\{\lambda_i\})\Big)^{1/2},
\qquad
\Var(\tilde{\mathsf{Z}}_{u,1})
= 1 - \sum_{\lambda_i\in\mathcal{K}^*}\nu_1(\{\lambda_i\}),
\label{eq:spec-SE-init-u}\\
\tilde{\mu}_{v,1}
&= \mathsf{S}_v\cdot
\Big(\sum_{\lambda_i\in\mathcal{K}^*}\nu_2(\{\lambda_i\})\Big)^{1/2},
\qquad
\Var(\tilde{\mathsf{Z}}_{v,1})
= 1 - \sum_{\lambda_i\in\mathcal{K}^*}\nu_2(\{\lambda_i\}),
\label{eq:spec-SE-init-v}
\end{align}
where the point masses $\nu_1(\{\lambda\})$ and $\nu_2(\{\lambda\})$ on the limiting outliers $\lambda\in\mathcal{K}^*$ are characterized in~\propref{prop:outlier_characterization}.

\paragraph{Recursion for $t\ge 2$.}
For $t\ge 2$, the SE parameters $\{(\tilde{\mu}_{u,t},\tilde{\mathsf{Z}}_{u,t})\}$ and $\{(\tilde{\mu}_{v,t},\tilde{\mathsf{Z}}_{v,t})\}$ follow from the general update rules in~\eqref{eq: mean value of SE1} and~\eqref{eq: Cov of SE1}. Substituting the denoisers in~\eqref{eq: spec-OAMP} yields the compact scalar-channel form
\begin{align}\label{eq:spec-SE-recursion-compact}
\tilde{\mu}_{u,t}
&=
\sign\!\Biggl(
  \E\Bigl[
    \mathsf U_*\,
    \bar{\phi}\Bigl(
      \tilde{\mathsf U}_{t-1}\,\Bigm|\,
      \tilde w_{1,t-1},\, s_1
    \Bigr)
  \Bigr]
\Biggr)\,
\tilde w_{1,t}^{1/2},
\qquad
\Var(\tilde{\mathsf Z}_{u,t}) = 1-\tilde w_{1,t},
\\
\tilde{\mu}_{v,t}
&=
\sign\!\Biggl(
  \E\Bigl[
    \mathsf V_*\,
    \bar{\phi}\Bigl(
      \tilde{\mathsf V}_{t-1}\,\Bigm|\,
      \tilde w_{2,t-1},\, s_2
    \Bigr)
  \Bigr]
\Biggr)\,
\tilde w_{2,t}^{1/2},
\qquad
\Var(\tilde{\mathsf Z}_{v,t}) = 1-\tilde w_{2,t}.
\end{align}
where $\bar{\phi}(\cdot\,|\,w,s)$ is the signed DMMSE denoiser (cf.~\eqref{eq:signed_DMMSE}), and $(s_1,s_2)$ encode the convention used to fix the global-sign ambiguity (see the discussion around the initialization).

Having defined the SE parameters in~\eqref{eq:spec-SE-init-u}--\eqref{eq:spec-SE-recursion-compact}, we now state the state evolution characterization for spectrally-initialized OAMP.

\begin{theorem}[State Evolution of Spectrally-Initialized OAMP]\label{thm: spec-SE}
Consider the rectangular spiked model $\bm Y$ in~\eqref{eq:rectangular spiked model} with super-critical $\theta$, satisfying \assumpref{assump:main} and~\ref{assump:lip-MMSE} under the settings of \propref{prop:optimal_data_driven_estimators}. Let $\{(\tilde{\bm{u}}_t,\tilde{\bm{v}}_t)\}_{t\ge 1}$ be the iterates generated by~\eqref{eq: spectral init}--\eqref{eq: spec-OAMP}, and let $\{(\tilde{\mathsf U}_t,\tilde{\mathsf V}_t)\}_{t\ge1}$ be the SE variables defined by~\eqref{eq:spec-SE-rvs} with initialization~\eqref{eq:spec-SE-init-u}--\eqref{eq:spec-SE-init-v} and recursion~\eqref{eq:spec-SE-recursion-compact}.

Then, for any fixed $t\in\mathbb{N}$, conditionally on the realized phase variables $(\mathsf{S}_u,\mathsf{S}_v)$, the empirical joint distribution of the iterates converges in Wasserstein-2 distance to the law of the SE variables:
\begin{align}
(\bm{u}_*, \tilde{\bm{u}}_1, \ldots, \tilde{\bm{u}}_t; \bm{a})
&\xrightarrow{W_2}
(\mathsf{U}_*, \tilde{\mathsf{U}}_1, \ldots, \tilde{\mathsf{U}}_t; \mathsf{A}),\\
(\bm{v}_*, \tilde{\bm{v}}_1, \ldots, \tilde{\bm{v}}_t; \bm{b})
&\xrightarrow{W_2}
(\mathsf{V}_*, \tilde{\mathsf{V}}_1, \ldots, \tilde{\mathsf{V}}_t; \mathsf{B}).
\end{align}
\end{theorem}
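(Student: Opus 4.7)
My plan is to reduce the statement to the general state evolution Theorem~\ref{Thm: State Evolution} by showing that the spectrally--initialized recursion \eqref{eq: spectral init}--\eqref{eq: spec-OAMP} is asymptotically equivalent to an instance of the generic OAMP of Definition~\ref{def:OAMP} in which the spectral initialization has been absorbed into a single first OAMP update acting on the true signal vectors $\bm u_*$ and $\bm v_*$. The key building block is a resolvent/projector reformulation: using the exact spectral separation of Proposition~\ref{prop:outlier_characterization}(2), I would construct a continuous, dimension--independent function $F_1^\sharp:\mathbb R\to\mathbb R$ that vanishes on $\supp(\mu)$ and equals one on a small neighborhood of each outlier $\lambda_k\in\mathcal K^*$, and an analogous $G_1^\sharp$ supported near the outliers of $\bm Y^\UT\bm Y$; for $M$ large enough, functional calculus then gives $F_1^\sharp(\bm Y\bm Y^\UT)=\sum_{k\in\mathcal I_M}\bm u_k(\bm Y)\bm u_k(\bm Y)^\UT$, the orthogonal projector onto the empirical outlier eigenspace, and similarly for $G_1^\sharp$. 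Since $F_1^\sharp,G_1^\sharp$ vanish on the bulk supports, the trace--free constraints of Definition~\ref{def:OAMP} are automatic, i.e.\ $\langle F_1^\sharp\rangle_\mu=\langle G_1^\sharp\rangle_{\widetilde\mu}=0$.

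Combining the amplitude convergence $M^{-1}\langle\bm u_*,\bm u_k(\bm Y)\rangle^2\ac\nu_1(\{\lambda_k\})$ from Proposition~\ref{prop:outlier_characterization}(4) with the almost--sure sign consistency of the estimators in Proposition~\ref{prop:MLE_relative_sign} and Proposition~\ref{prop:NGMC-est}, and using the randomization \eqref{eq:scaled-outlier-matrix-rand} to cancel the $\xi_k$ inside $\bm u_k^{\sharp}$, I would establish
\begin{align*}
\tilde{\bm u}_1 \;&\simeq\; \mathsf S_u\,\Bigl(\sum_{k\in\mathcal K^*}\nu_1(\{\lambda_k\})\Bigr)^{-1/2} F_1^\sharp(\bm Y\bm Y^\UT)\,\bm u_*,\\
\tilde{\bm v}_1 \;&\simeq\; \mathsf S_v\,\Bigl(\sum_{k\in\mathcal K^*}\nu_2(\{\lambda_k\})\Bigr)^{-1/2} G_1^\sharp(\bm Y^\UT\bm Y)\,\bm v_*.
\end{align*}
The right--hand sides are legitimate OAMP updates (with $\widetilde F_1=\widetilde G_1=0$) in which $\bm u_*,\bm v_*$ play the role of noise--independent side information (which is admissible by Assumption~\ref{assump:main}(b)). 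Theorem~\ref{Thm: State Evolution} applied to this auxiliary recursion yields Wasserstein--2 convergence for all iterates. A direct computation via \eqref{eq: mean value of SE1}--\eqref{eq: Cov of SE1}, using $\langle F_1^\sharp\rangle_{\nu_1}=\langle (F_1^\sharp)^2\rangle_{\nu_1}=\sum_k\nu_1(\{\lambda_k\})$ (and the $\nu_2$ analogues) and the vanishing integrals against the bulk, recovers exactly the initialization~\eqref{eq:spec-SE-init-u}--\eqref{eq:spec-SE-init-v} with the prescribed phase factors $(\mathsf S_u,\mathsf S_v)$. For $t\ge 2$, the updates \eqref{eq:OAMP Algo u si}--\eqref{eq:OAMP Algo v si} already match the generic OAMP form with the optimal denoisers of Section~\ref{sec:optimal OAMP}, and the compact recursion \eqref{eq:spec-SE-recursion-compact} is a direct specialization of \eqref{eq: mean value of SE1}--\eqref{eq: Cov of SE1} once the signed DMMSE denoiser \eqref{eq:signed-family} and the sign resolution $(s_1,s_2)$ are substituted. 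The $t=1$ asymptotic equivalence is propagated to all $t$ through the almost sure bound $\|\bm Y\|_{\mathrm{op}}\le C$, the continuity of the spectral denoisers on $\supp(\mu)\cup\mathcal K^*_\varepsilon$, and the Lipschitz property of $\bar\phi(\cdot\,|\,w,s)$ granted by Assumption~\ref{assump:lip-MMSE}.

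The main technical obstacle is establishing the first--step asymptotic equivalence rigorously. This requires, beyond the spectral separation and overlap convergence of Proposition~\ref{prop:outlier_characterization}, matching both the amplitudes \emph{and} the relative signs of the weighted outlier combination $\bm u^*_{\mathrm{PCA}}$ against the projected signal $F_1^\sharp(\bm Y\bm Y^\UT)\bm u_*$, and correctly coupling the two channels so that the global phases $(\mathsf S_u,\mathsf S_v)$ inherit the cross--channel relation dictated by $\sign(\nu_3(\{\sigma_r\}))$ (cf.\ Proposition~\ref{prop:NGMC-est}). Two subsidiary technical points also need care: (a) since Definition~\ref{def:OAMP} requires continuous spectral denoisers, one must work with smooth cutoffs $F_1^\sharp,G_1^\sharp$ rather than indicator functions and justify that the resulting SE integrals are unaffected by this regularization (using Proposition~\ref{prop:outlier_characterization}(2)--(3) to choose $\varepsilon$ small enough that the smoothing region is disjoint from both $\supp(\mu)$ and the empirical spectrum); and (b) the asymptotic equivalence at $t=1$ must remain stable under the postprocessing maps $\phi_{u,t},\phi_{v,t}$ and the iterate denoisers used at later steps, which follows from their Lipschitz continuity together with uniform control on the second moments of the SE variables.
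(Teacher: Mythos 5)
Your proposal is correct in substance, but it reaches the key reduction by a genuinely different route than the paper. The paper's proof (Appendix~\ref{app:pf-thm-spec-se}) starts from the singular-vector equation: via the resolvent identity of Fact~\ref{lem:specSE:resolvent-outliers}, each outlier singular vector is expressed as $(\lambda_{k,M}\bm I-\bm W\bm W^\UT)^{-1}$ applied to a combination of $\bm u_*$ and $\bm W\bm v_*$; after replacing $\lambda_{k,M}$ and the overlaps by their deterministic limits, the initializer becomes an explicit $\bm W$-driven step $\Psi_1(\bm W\bm W^\UT)\bm u_*+\widetilde\Psi_1(\bm W\bm W^\UT)\bm W\bm v_*$ with resolvent-kernel denoisers (Lemma~\ref{lem:spec-init-as-OAMP}), and the SE follows by rerunning the spiked-to-noise induction of Appendix~\ref{Sec:OAMP_SE_proof} (Lemma~\ref{lem:specSE:reduction}). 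You instead keep the initializer $\bm Y$-driven, writing it as a smoothed outlier-indicator denoiser $F_1^\sharp(\bm Y\bm Y^\UT)$ acting on $\bm u_*$ taken as side information, and then invoke \thref{Thm: State Evolution} as a black box. This is legitimate: by \propref{prop:outlier_characterization}(2), $F_1^\sharp(\bm Y\bm Y^\UT)$ is eventually the exact outlier projector; the trace-free condition holds because $F_1^\sharp$ vanishes on $\supp(\mu)$ (and $G_1^\sharp$ vanishes at $0$, using $\Gamma(0)\neq0$ so that $0\notin\mathcal K^*$); the divergence-free condition is vacuous at $t=1$; and the generic formulas \eqref{eq: mean value of SE1}--\eqref{eq: Cov of SE1} with $\alpha_1=\beta_1=1$ and $\sigma_f^2=\sigma_g^2=0$ reproduce \eqref{eq:spec-SE-init-u}--\eqref{eq:spec-SE-init-v} exactly as you compute. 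The trade-off is this: the paper's resolvent route produces the $\bm W$-driven representation directly, so it slots into the auxiliary-recursion induction without detouring through Definition~\ref{def:OAMP}; your projector route makes the SE initialization completely transparent (the atoms of $\nu_1,\nu_2$ drop out of the integrals immediately), but it delegates the $\bm Y\to\bm W$ transfer to Lemma~\ref{lem:aux1} inside \thref{Thm: State Evolution} — whose first-iteration case with $f_1(\bm a)=\bm u_*$, hence $\bm f_1^\perp=0$, is precisely the signal-aligned branch of that lemma. The technical items you flag (matching amplitudes and relative signs of $\bm u^*_{\mathrm{PCA}}$ against the projected signal, propagating the $t=1$ equivalence through Lipschitz denoisers, and the cross-channel phase coupling via $\sign(\nu_3(\{\sigma_r\}))$) are exactly the points the paper's lemmas also address, and your treatment of them is consistent with those arguments.
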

\proofseeapp{\thref{thm: spec-SE}}{app:pf-thm-spec-se}

\begin{remark}[Relation to existing spectral initialization results]
Our approach differs from prior work on spectral initialization \cite{mondelli2021pca,zhong2021approximate} in the following respects. First, to accommodate the multi-outlier setting, we employ an optimally weighted combination of all informative outlier components, rather than relying solely on the leading eigenvector. Second, we formulate the initialization phase as a \emph{one-shot} OAMP update based on singular-vector equations, avoiding the auxiliary iterative AMP constructions used in \cite{mondelli2021pca,zhong2021approximate}. Third, this direct formulation bypasses restrictive technical conditions required for the convergence of auxiliary AMPs to sample PCs (such as nonnegative free cumulants \cite{mondelli2021pca} or sufficiently large signal-to-noise ratios \cite{zhong2021approximate}), thereby establishing validity for general rotationally invariant models under Assumption~\ref{assump:main}.
\end{remark}

\section{Simulation Results}
\label{sec:simulation}

In this section, we provide numerical evidence to validate our theoretical results. We first evaluate the finite-sample performance of the proposed spectral estimators for relative-sign recovery. Subsequently, we investigate the dynamics of the spectrally-initialized OAMP algorithm under both Gaussian and general rotationally invariant noise models.

\subsection{Spectral Estimator}

A key feature of the rectangular RI model is that a single rank-one signal typically generates \emph{multiple} informative outlier singular values once the signal strength exceeds a certain threshold. In such regimes, standard PCA (which relies solely on the top singular vector) is suboptimal because it discards the signal energy contained in secondary outliers. Our proposed method aims to remedy this by optimally combining all informative outliers.

\begin{figure}[htbp]
\centering
\begin{minipage}[t]{0.48\columnwidth}
    \centering
    \includegraphics[width=\linewidth]{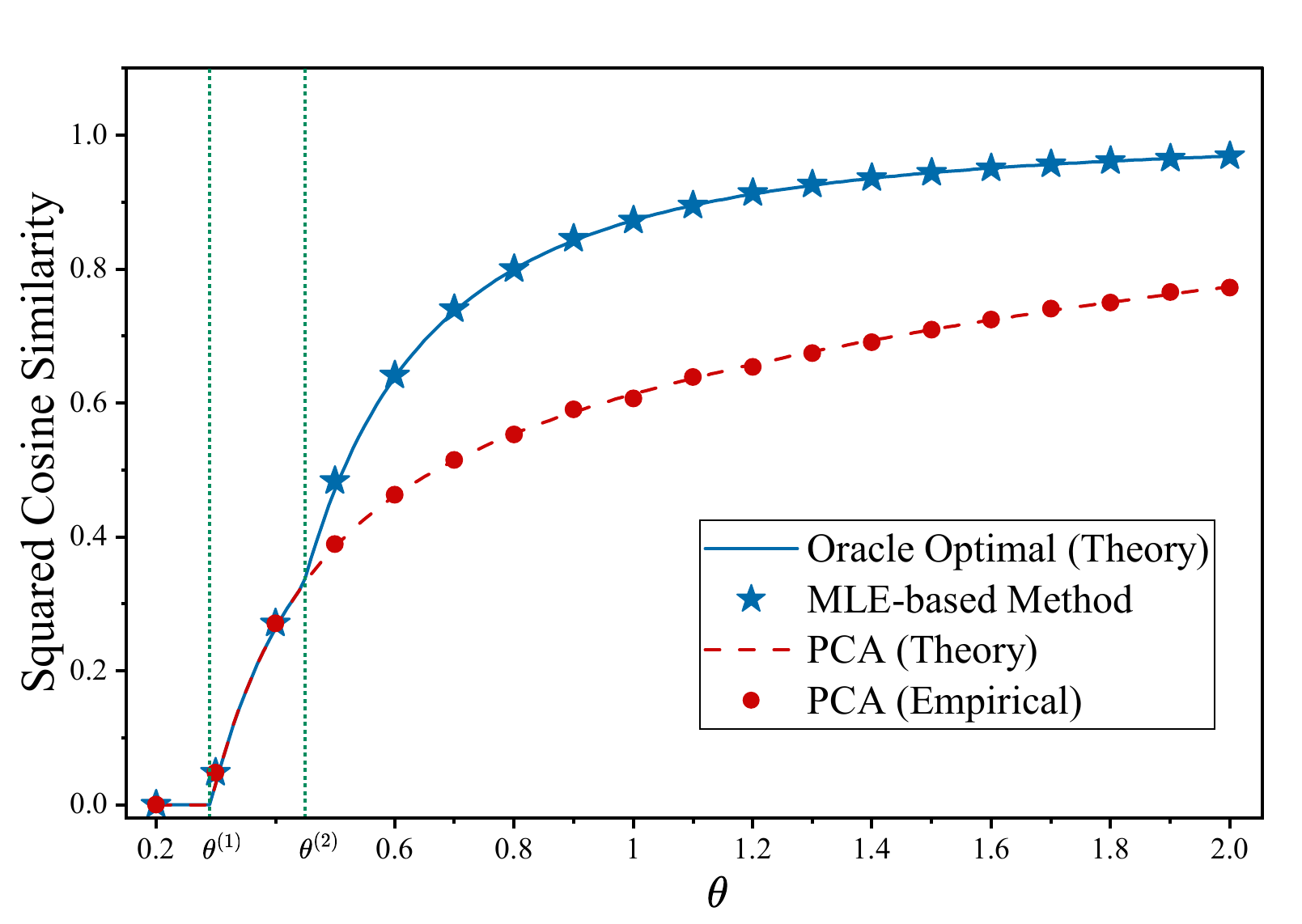}
    {\scriptsize (a) MLE: \(\bm{u}\)-channel}
\end{minipage}\hfill
\begin{minipage}[t]{0.48\columnwidth}
    \centering
    \includegraphics[width=\linewidth]{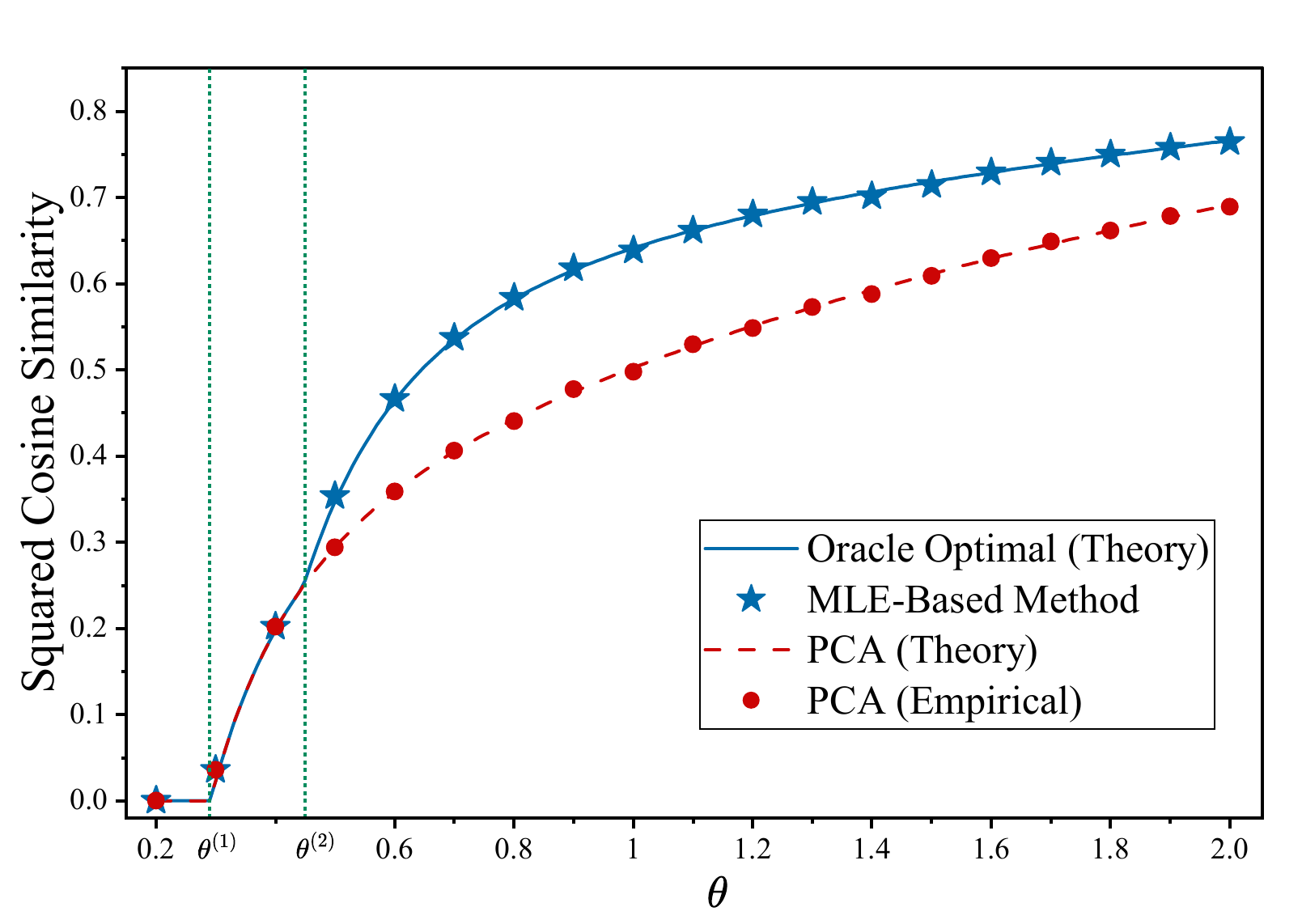}
    {\scriptsize (b) MLE: \(\bm{v}\)-channel}
\end{minipage}
\vspace{1em}
\begin{minipage}[t]{0.48\columnwidth}
    \centering
    \includegraphics[width=\linewidth]{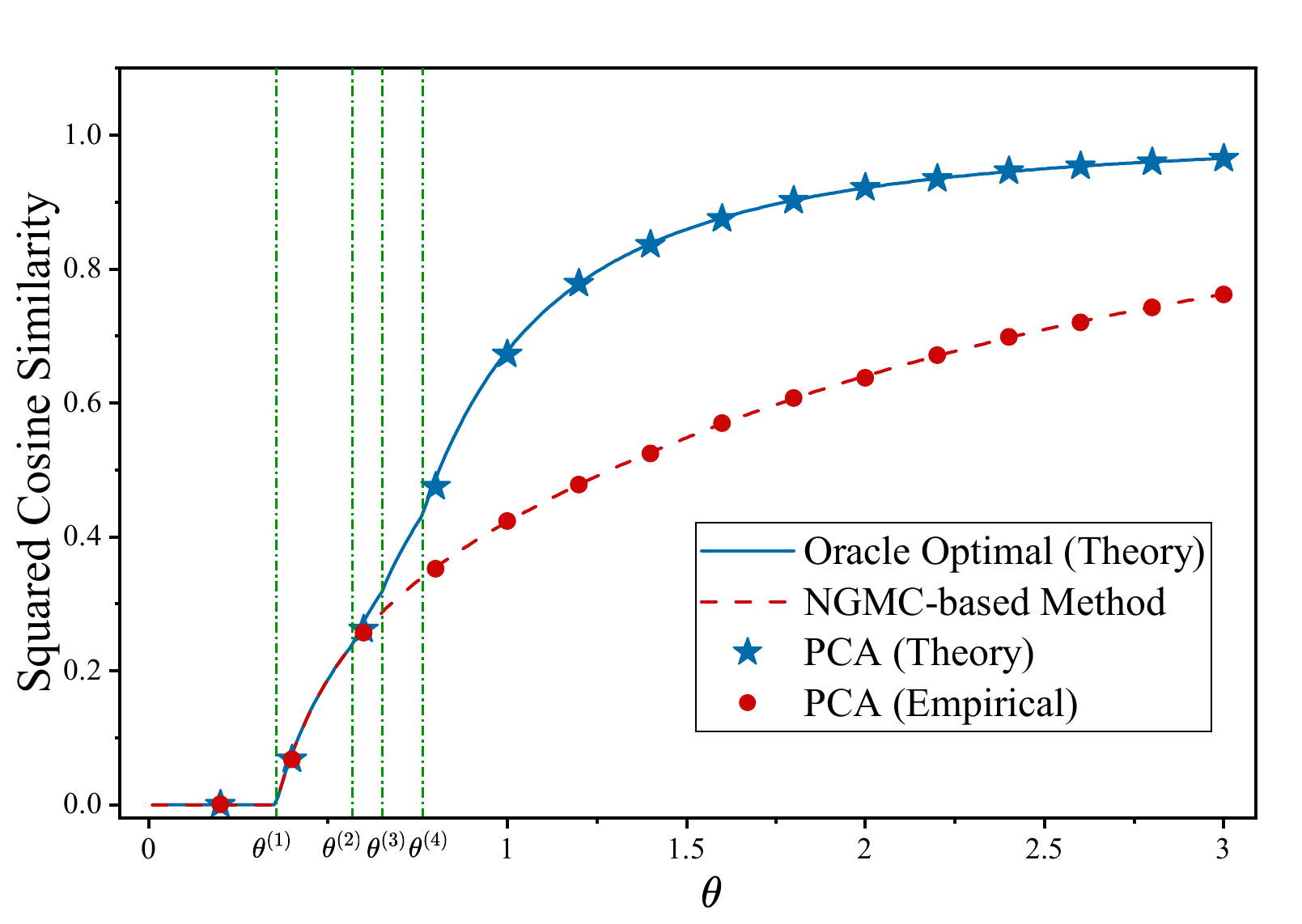}
    {\scriptsize (c) NGMC: \(\bm{u}\)-channel}
\end{minipage}\hfill
\begin{minipage}[t]{0.48\columnwidth}
    \centering
    \includegraphics[width=\linewidth]{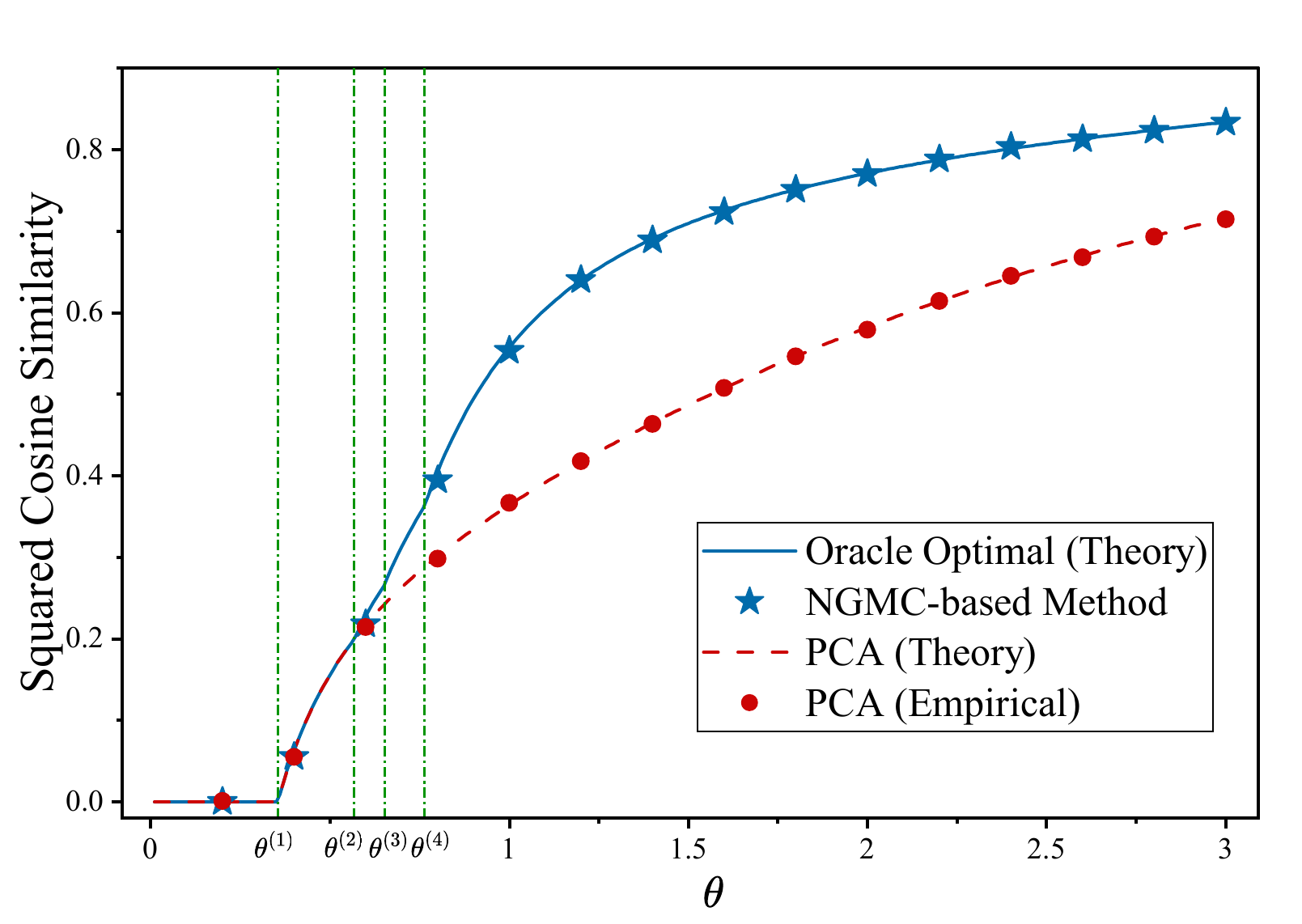}
    {\scriptsize (d) NGMC: \(\bm{v}\)-channel}
\end{minipage}
\caption{\textbf{Relative-sign estimation via MLE and NGMC.}
(a)--(b): MLE with Rademacher $\mathsf U_*$ and Gaussian $\mathsf V_*$ under noise law $\mu_2$ ($\delta=0.7$).
(c)--(d): NGMC with Student-$t$ $\mathsf U_*$ (df$=5$) under noise law $\mu_3$ ($\delta=0.8$).
Vertical green lines indicate the SNR phase transitions where successive outliers detach from the bulk (cf.~Fig.~\ref{fig:combined-outliers}).
Across all experiments, $N=5000$ and results are averaged over $50$ trials. Stars denote the proposed estimators; circles denote PCA; the solid line denotes the oracle bound.}
\label{fig:merged_estimation}
\end{figure}

We evaluate the maximum-likelihood (MLE) and non-Gaussian moment-contrast (NGMC)
sign estimators under two representative prior settings; the results are
summarized in Fig.~\ref{fig:merged_estimation}.
\begin{itemize}[leftmargin=*]
\item \emph{MLE with a Rademacher prior.}
We consider the setting of Fig.~\ref{fig:combined-outliers} with
$\mathsf U_*\sim\mathrm{Rad}(\pm1)$, $\mathsf V_*\sim\mathcal N(0,1)$, and noise
law
\[
\mu_2(\lambda)=\frac{2}{\pi}\sqrt{(\lambda-2)(4-\lambda)}\,\mathbf{1}_{[2,4]}(\lambda).
\]
The resulting MLE performance is reported in the top row of
Fig.~\ref{fig:merged_estimation}.  We take the largest outlier (index $1$) as
the reference.  For each $j\in\mathcal I_M\setminus\{1\}$, define the coordinate
pairs
$
(x_i,y_i)\bydef\bigl([\bm u_1^\sharp]_i,[\bm u_j^\sharp]_i\bigr).
$
The pairwise relative-sign MLE is then
\begin{align}\label{eq:MLE-Rad-opt}
\hat s_{u,j}^{\mathrm{MLE}}
\in \argmax_{s\in\{\pm1\}}
\sum_{i=1}^M \log p_s(x_i,y_i),
\end{align}
where, in terms of the spectral atoms $\nu_1(\{\lambda\})$ from
Lemma~\ref{lem:spectral_measures_properties}, the per-coordinate likelihood
satisfies
\begin{align}\label{eq:ps-Rad}
p_s(x,y)\ \propto\
\cosh\!\left(
\frac{\sqrt{\nu_1(\{\lambda_1\})}}{1-\nu_1(\{\lambda_1\})}\,x
+s\,\frac{\sqrt{\nu_1(\{\lambda_j\})}}{1-\nu_1(\{\lambda_j\})}\,y
\right).
\end{align}

\item \emph{NGMC with a Student-$t$ prior.}
We take $\mathsf U_*$ to be a rescaled Student-$t$ prior (df$=5$, unit variance)
and $\mathsf V_*\sim\mathcal N(0,1)$ under the noise law
\[
\mu_3(\lambda)=\frac{1}{\pi}\sqrt{(\lambda-2)(4-\lambda)}\,\mathbf{1}_{[2,4]}(\lambda)
+\frac{1}{\pi}\sqrt{(\lambda-6)(8-\lambda)}\,\mathbf{1}_{[6,8]}(\lambda),
\]
as in Fig.~\ref{fig:combined-outliers}.  Since the Gaussian convolution of this
prior is not available in closed form, we use the NGMC estimator in
Proposition~\ref{prop:NGMC-est} with the cubic contrast $f(x)=x^3$.  The bottom
row of Fig.~\ref{fig:merged_estimation} reports the resulting performance.
\end{itemize}

Fig.~\ref{fig:merged_estimation} shows that aggregating informative outliers
strictly improves upon standard PCA and closely tracks the oracle benchmark,
indicating accurate recovery of the relative signs.

\subsection{Performance of OAMP}

We first validate the theory in the classical i.i.d.\ Gaussian setting. The true signals $\bm u_*$ and $\bm v_*$ have i.i.d.\ Rademacher entries. Figure~\ref{fig:res} reports the squared cosine similarities achieved by PCA, AMP, and OAMP, together with the corresponding state evolution (SE) predictions. The OAMP iterates closely match SE and converge to the same fixed point as standard AMP, in agreement with the equivalence in Proposition~\ref{thm:OAMP_Wigner_FP}.

\begin{figure}[htbp]
\begin{minipage}[b]{.48\linewidth}
  \centering
  \centerline{\includegraphics[width=\linewidth]{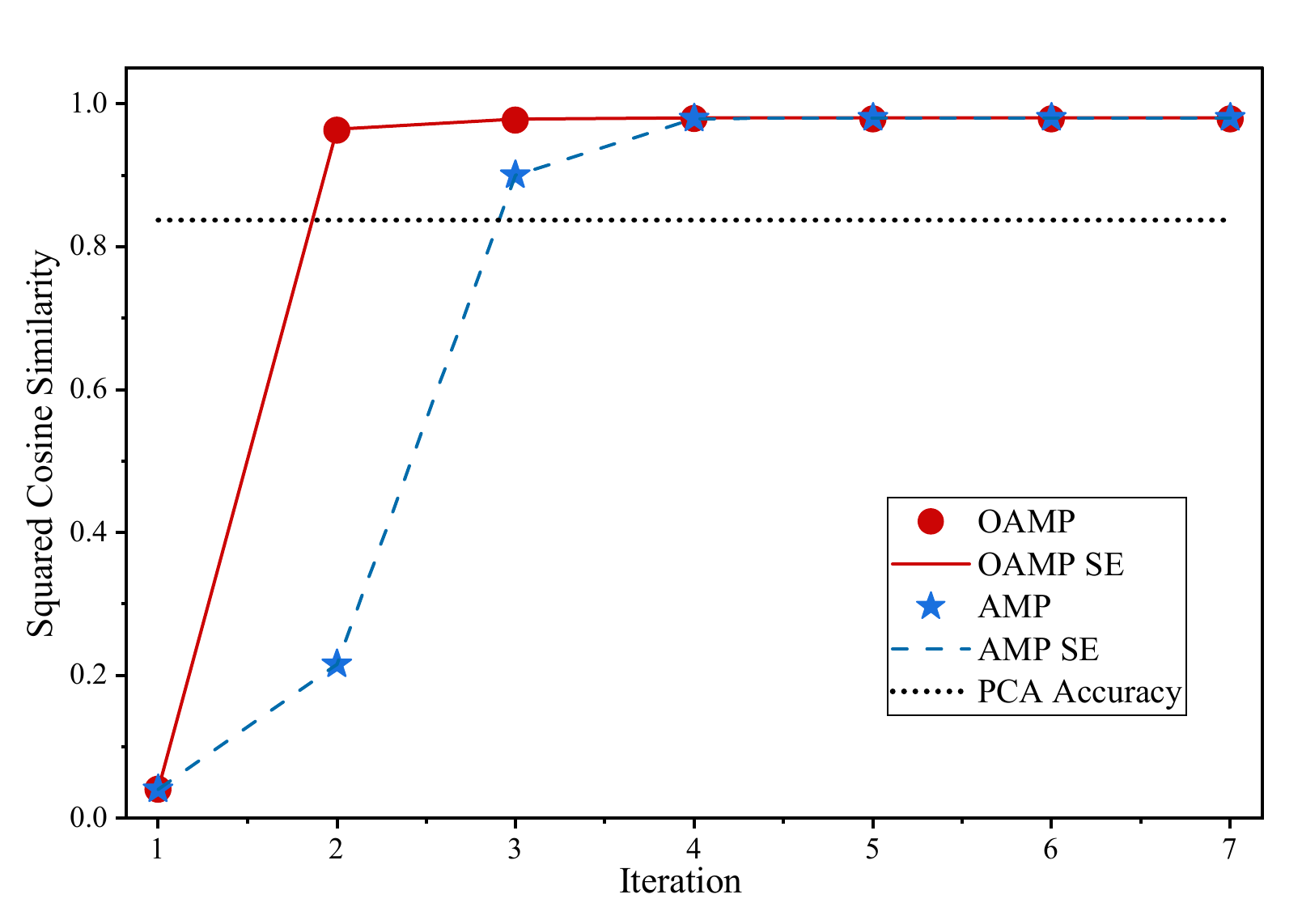}}
  \centerline{(a) $\bm u$-channel}\medskip
\end{minipage}
\hfill
\begin{minipage}[b]{0.48\linewidth}
  \centering
  \centerline{\includegraphics[width=\linewidth]{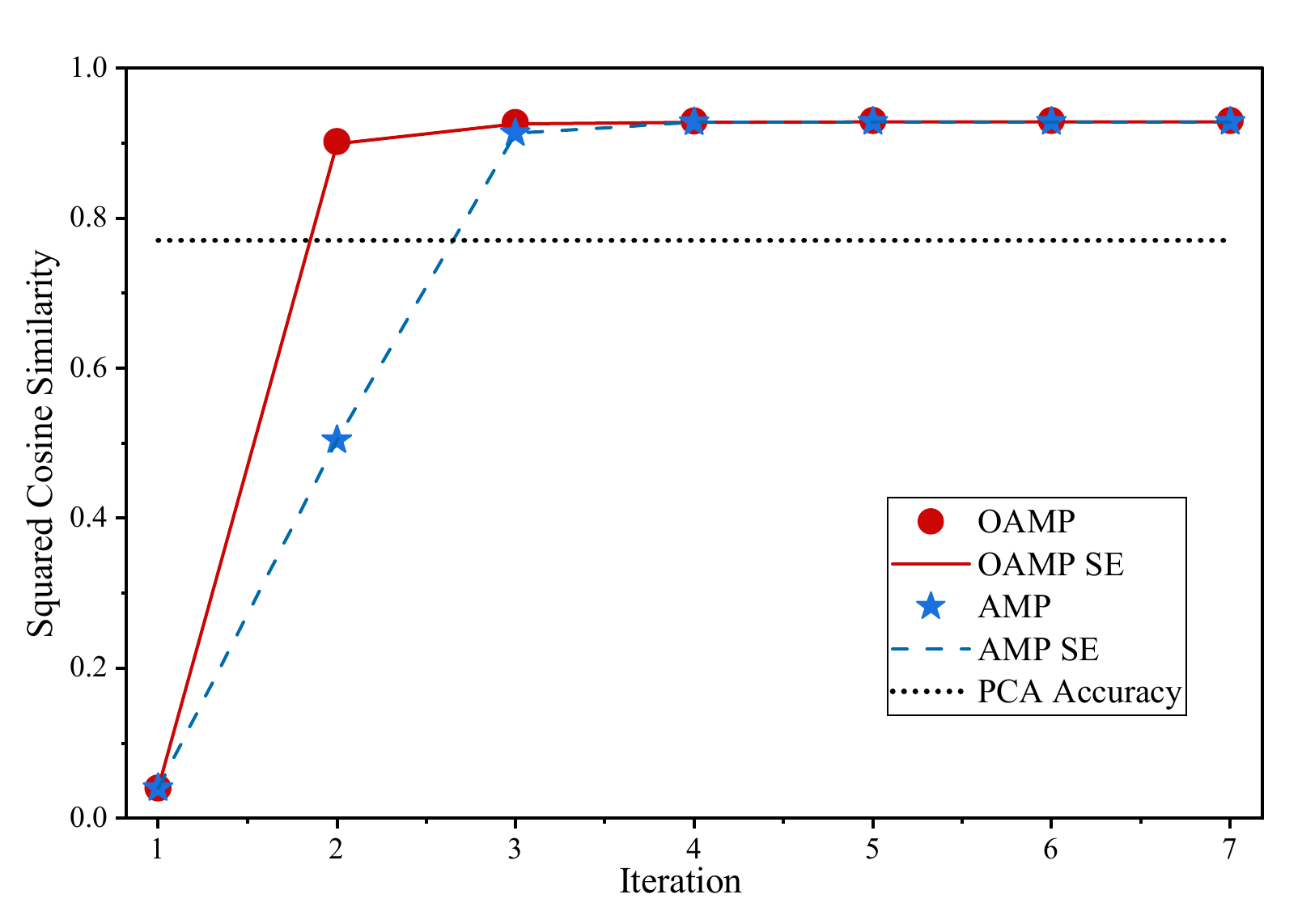}}
  \centerline{(b) $\bm v$-channel}\medskip
\end{minipage}
\caption{Performance under i.i.d.\ Gaussian noise ($N=8000,\delta=0.6$, $\theta=2$). Markers are empirical averages over $50$ trials. All methods are initialized with cosine similarity $0.2$ in both channels.}
\label{fig:res}
\end{figure}

We next study spectrally-initialized OAMP under rotationally invariant (RI) noise with bulk density $\mu_2$ in Fig.~\ref{fig:combined-outliers}. Throughout, OAMP uses the optimal spectral initialization from Section~\ref{sec: spec OAMP}. Figure~\ref{fig:threepoint-vs-rach} plots the \emph{signed} cosine similarity to highlight the intrinsic global sign ambiguity of the spectral initializer. To illustrate the impact of global sign ambiguity, we explicitly realize two possible global orientations for the spectral initializer: one with a positive initial overlap with $(\bm u_*,\bm v_*)$, and one with a negative overlap.

We compare two signal priors that necessitate different strategies for determining the denoiser signs $(s_1,s_2)$, as discussed in Section \ref{sec:spec-optimal-oamp}:

\begin{itemize}
  \item \emph{Asymmetric three-point prior.}
  Both $\bm u_*$ and $\bm v_*$ follow a three-point mixture supported on $\{-1,\,1.5,\,-0.5\}$ with probabilities $(0.2,\,0.3,\,0.5)$. The asymmetry renders the realized global signs $(s_u,s_v)$ identifiable. Using the global-sign estimators in \appref{app:pf-global_sign}, we align the denoiser signs $(s_1,s_2)$ with $(s_u,s_v)$ at initialization. Consequently, the iterates rapidly correct the orientation: the trajectories started from the two initial orientations coincide for $t\ge 2$; see Fig.~\ref{fig:threepoint-vs-rach}(a).

  \item \emph{Symmetric Rademacher prior.}
  Both $\bm u_*$ and $\bm v_*$ are Rademacher. Here the individual global signs are unidentifiable and only the \emph{relative} sign is recoverable. We adopt the convention
  \[
    (s_1,s_2)=\bigl(1,\ \mathrm{sign}(\nu_3(\{\sqrt{\lambda_r}\}))\bigr),
  \]
  where $r$ is the reference outlier index in Lemma~\ref{lem:inter_channel_sign_coupling}. Under this convention, the signed overlap tracks the realized orientation of the initializer: a positive (resp.\ negative) initial overlap remains positive (resp.\ negative), and the two trajectories are exact sign-mirrors; see Fig.~\ref{fig:threepoint-vs-rach}(b). In particular, the squared cosine similarity is invariant to the global sign.
\end{itemize}

\begin{figure}[htbp]
\centering
\vspace{0.5em}
\begin{minipage}[b]{.48\linewidth}
  \centering
  \includegraphics[width=\linewidth]{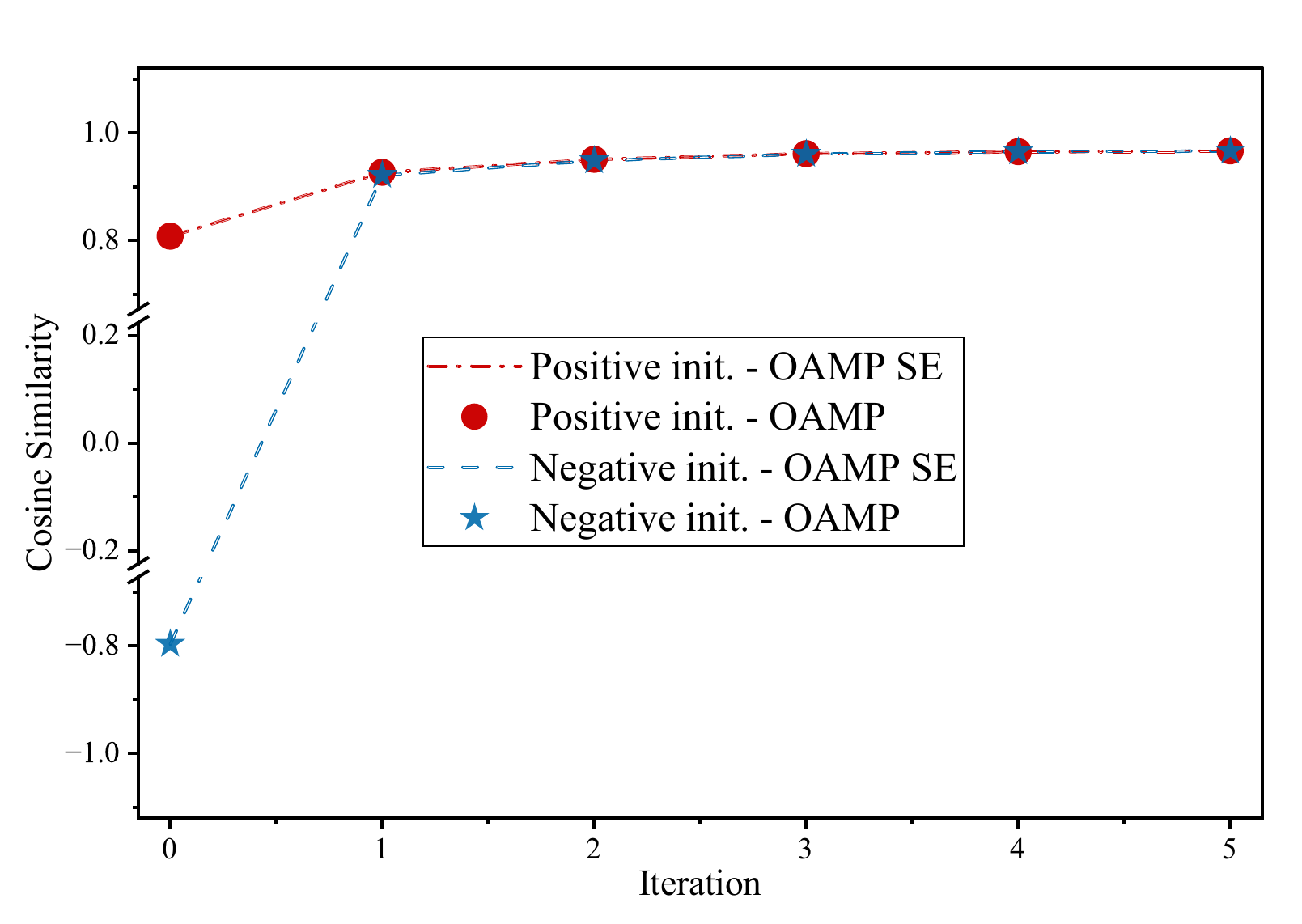}
  \centerline{(a) Three-point prior, $\bm v$-channel}
\end{minipage}
\hfill
\begin{minipage}[b]{.48\linewidth}
  \centering
  \includegraphics[width=\linewidth]{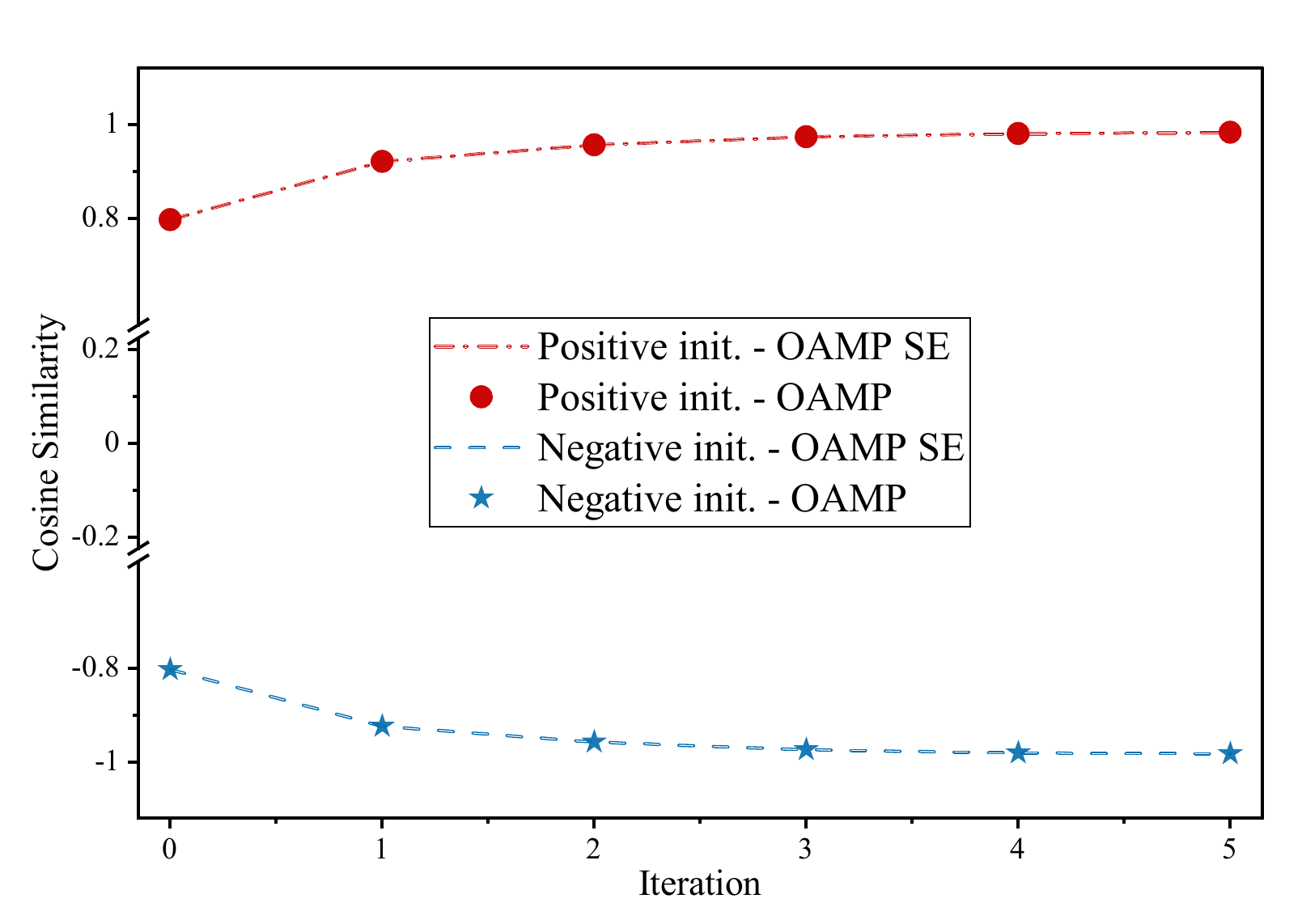}
  \centerline{(b) Rademacher prior, $\bm v$-channel}
\end{minipage}
\caption{Signed cosine similarity of spectrally-initialized OAMP under RI noise with bulk density $\mu_2(\lambda)=\frac{2}{\pi}\sqrt{(\lambda-2)(4-\lambda)}\,\mathbf{1}_{[2,4]}(\lambda)$ (cf.~Fig.~\ref{fig:combined-outliers}), with $\delta=0.7$ and $\theta=1$. Left: asymmetric three-point prior. Right: symmetric Rademacher prior. In both cases $N=20000$. Stars and circles represent the two global orientations of the spectral initializer (initial overlap positive vs.~negative).}
\label{fig:threepoint-vs-rach}
\end{figure}

We finally compare OAMP to the RI-AMP framework of \cite{fan2022approximate}. We report a single-iterate variant (AMP-S) and a multi-iterate variant (AMP-M). AMP-S applies the scalar MMSE denoiser to the current iterate, whereas AMP-M first forms the optimal linear combination of past signal-plus-noise observations using their covariance and then applies a single-iterate MMSE denoiser; see \cite[Remark~3.3]{fan2022approximate}. Figure~\ref{fig:beta_spectrum} shows that the spectrally-initialized optimal OAMP consistently outperforms PCA and both RI-AMP variants. 

\begin{figure}[htbp]
\begin{minipage}[b]{.48\linewidth}
  \centering
  \centerline{\includegraphics[width=\linewidth]{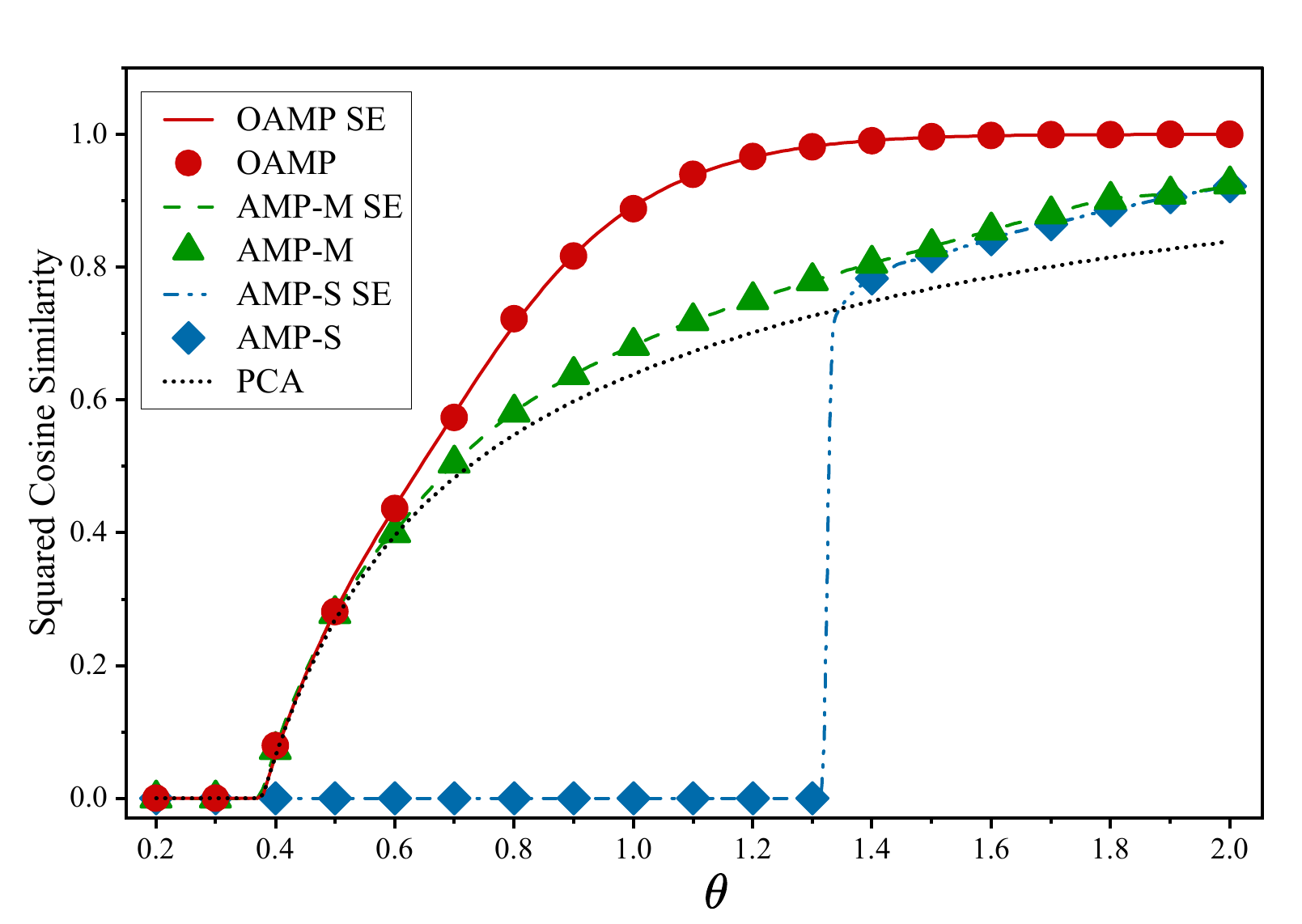}}
  \centerline{(a) $\bm u$-channel}\medskip
\end{minipage}
\hfill
\begin{minipage}[b]{0.48\linewidth}
  \centering
  \centerline{\includegraphics[width=\linewidth]{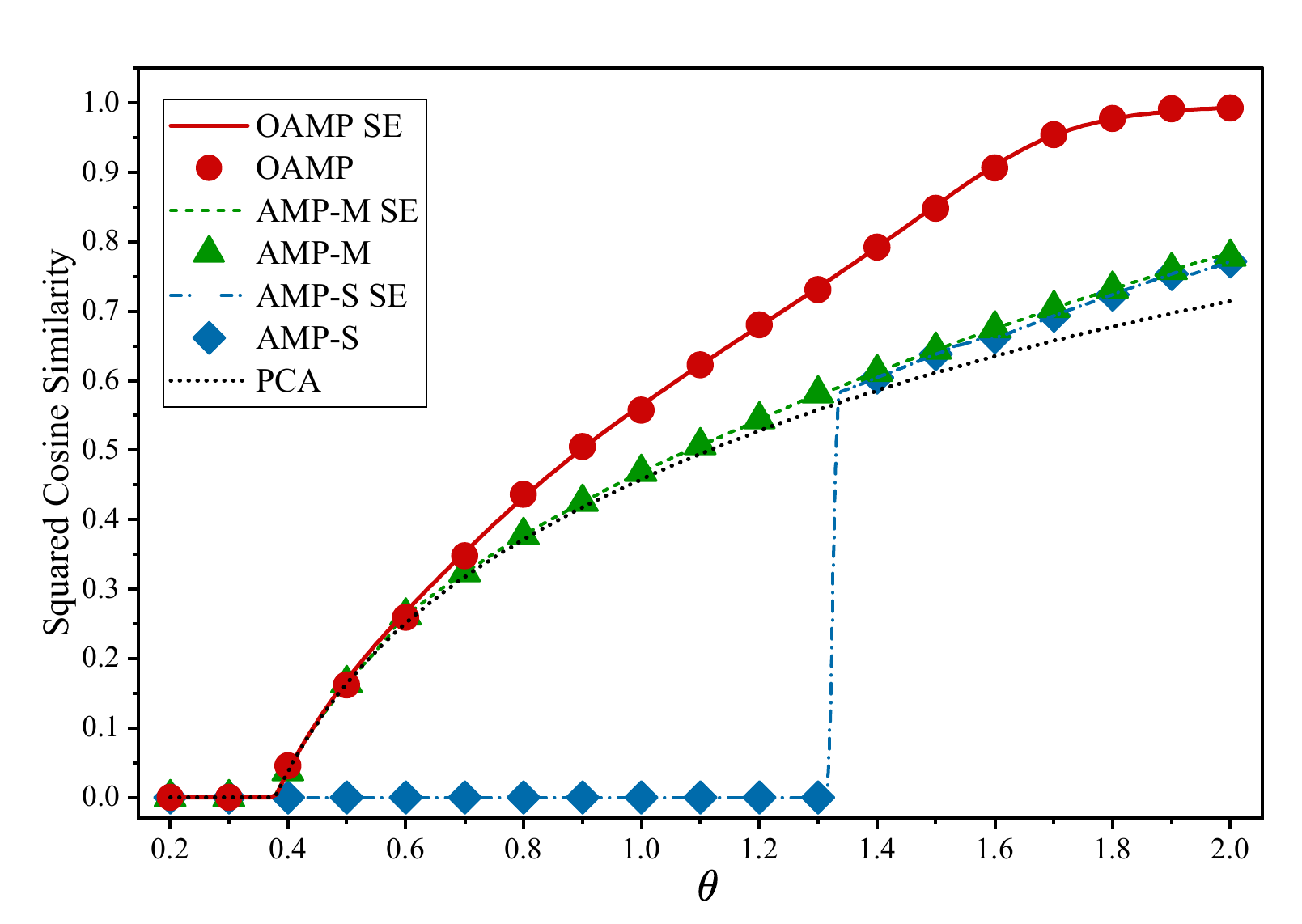}}
  \centerline{(b) $\bm v$-channel}\medskip
\end{minipage}
\caption{Non-Gaussian RI noise with bulk density $\mu_4(\lambda)=\frac{2}{\pi}\sqrt{(\lambda-1)(3-\lambda)}\,\mathbf{1}_{[1,3]}(\lambda)$ and Rademacher priors ($N=10000$, $\delta=0.5$). Markers are empirical averages over $50$ trials. When outlier eigenvectors are present, \textup{OAMP} denotes the spectrally-initialized optimal OAMP in Theorem~\ref{thm: spec-SE}, whereas \textup{AMP} uses the spectral initialization of \cite{zhong2021approximate} based only on the top eigenvector. When no outlier eigenvectors are available, all methods are initialized with cosine similarity $0.1$.}
\label{fig:beta_spectrum}
\end{figure}

\section{Conclusions and Future Work}

This paper established an optimal orthogonal approximate message passing (OAMP) framework for rectangular spiked matrix models under general rotationally invariant noise. We demonstrated that the proposed algorithm admits a rigorous state evolution characterization and incorporates an optimal spectral initialization that effectively aggregates information from multiple outlier eigenvalues. Furthermore, the algorithm achieves asymptotic performance consistent with replica-symmetric Bayes-optimal predictions \cite{Chenqun2025}, provided the model operates in a regime where there is no statistical-computational gap. These results provide a robust, computationally efficient approach to inference in high-dimensional settings where the classical i.i.d. noise assumption does not hold.

Several directions for future research emerge from this work:

\begin{itemize}
\item \textbf{Optimality:} Establishing rigorous optimality within a general class of iterative algorithms, extending recent results for symmetric models \cite{dudeja2024optimality} to the rectangular case.

\item \textbf{Finite-Rank Generalizations:} Extending the framework to finite-rank spikes to characterize the algorithmic limits of multi-signal inference.

\item \textbf{Rigorous Bayes Risk:} Providing a formal proof of the Bayes-optimal error under general rotationally invariant noise, potentially utilizing the adaptive interpolation method \cite{barbier2019adaptive} or related techniques.
\end{itemize}

\section*{Acknowledgement}
We are grateful to Rishabh Dudeja for numerous insightful comments on this work.

\bibliographystyle{ieeetr}
\bibliography{bib}

\appendix

\section{Proofs for Preliminaries and Spectral Analysis}

This section is dedicated to proving the characterizations of the signal-eigenspace spectral measures and illustrating their applications. We first proves Lemma~\ref{lem:Gamma-analytic} in \sref{App:master_eqn_analytic}, which is concerned with the analytic aspects of the master equation. We then address the properties of the measures in the signal direction in \sref{sec:spectral_measures_properties}. Finally, \sref{sec:spectral_analysis_apps} provides applications of these measures to the spectral analysis of the rectangular spiked model.

\subsection{Proof of Lemma~\ref{lem:Gamma-analytic}}\label{App:master_eqn_analytic}

\begin{proof}
We prove the four claims in order.

\paragraph{1. Analyticity and behavior at infinity.}
By Definition~\ref{Def:C_transform}, the function $\CT(z)$ is a polynomial combination of $\Smu(z)$ and $z^{-1}$.
Since $\Smu$ is holomorphic on $\C\setminus\supp(\mu)$, it follows that $\CT$ (and hence
$\Gamma(z)=1-\theta^2\CT(z)$) is holomorphic on $\C\setminus\supp(\mu)$. Moreover, as $|z|\to\infty$ we have the
standard asymptotic expansion for the Stieltjes transform of a probability measure,
\[
\Smu(z)=\frac{1}{z}+O\!\left(\frac{1}{z^2}\right),
\qquad |z|\to\infty,
\]
so $z\,\Smu(z)\to 1$ and
\[
\delta\,\Smu(z)+\frac{1-\delta}{z}=\frac{1}{z}+O\!\left(\frac{1}{z^2}\right).
\]
Therefore
\[
\CT(z)
= z\,\Smu(z)\Bigl(\delta\,\Smu(z)+\frac{1-\delta}{z}\Bigr)
= O\!\left(\frac{1}{z}\right),
\qquad |z|\to\infty,
\]
and hence $\Gamma(z)\to 1$ as $|z|\to\infty$. In particular, $\Gamma$ is not identically zero.

\paragraph{2. Isolated and simple real zeros outside $\supp(\mu)$.}
Let $\lambda_*\in\R\setminus\supp(\mu)$ satisfy $\Gamma(\lambda_*)=0$.
Since $\Gamma$ is holomorphic and not identically zero, its zeros are isolated; hence $\lambda_*$ is isolated.

To prove simplicity, recall from Lemma~\ref{lem:spectral_measures_properties} that the limiting Stieltjes transform
of $\nu_1$ is
\[
\Su(z)=\frac{\Smu(z)}{\Gamma(z)}.
\]
The function $\Su$ is the Stieltjes transform of the finite measure $\nu_1$. In particular, all poles of $\Su$ are
simple: a point mass $m\,\delta_{\lambda_*}$ contributes exactly the term $m/(z-\lambda_*)$, and higher-order poles
cannot arise from a finite measure. Since $\Smu(\lambda_*)\in\R$ is finite for $\lambda_*\notin\supp(\mu)$, the only
possible singularity of $\Su$ at $\lambda_*$ is via the denominator $\Gamma(z)$. Therefore the pole of $\Su$ at
$z=\lambda_*$ must be simple, which forces $\Gamma'(\lambda_*)\neq 0$.

\paragraph{3. Absence of real zeros in the interior of $\supp(\mu)$ and $\Gamma(0) \neq0$.}
Let $\lambda$ lie in the interior of $\supp(\mu)$. Since $\mu$ has a H\"older continuous density, the boundary value
$\Smu(\lambda-\mathrm i0^+)$ exists for Lebesgue-a.e.\ such $\lambda$ and satisfies the Sokhotski--Plemelj formula (see, e.g., \cite[Section 2.1]{pastur2011eigenvalue})
\begin{align}\label{eq:sp-lemma-ap}
\Smu(\lambda-\mathrm i0^+)=\pi\mathcal H(\lambda)+\mathrm i\pi\,\mu(\lambda).    
\end{align}
Write
\[
\Gamma(\lambda-\mathrm i0^+)=\mathcal A(\lambda)-\mathrm i\,\mathcal B(\lambda),
\]
where $\mathcal A(\lambda)$ and $\mathcal B(\lambda)$ are the real and imaginary parts of
$1-\theta^2\CT(\lambda-\mathrm i0^+)$. Substituting the boundary value of $\Smu$ into \eqref{eq:Ctrans} and
separating real and imaginary parts yields
\[
\mathcal B(\lambda)
= \pi\theta^2\mu(\lambda)\Bigl[(1-\delta)+2\delta\pi\lambda\mathcal H(\lambda)\Bigr],
\]
and $\mathcal A(\lambda)$ is the corresponding real-part expression appearing in \eqref{eq: norm of the master equation}.

Suppose $\mu(\lambda)>0$ and $\Gamma(\lambda-\mathrm i0^+)=0$. Then $\mathcal B(\lambda)=0$, which implies
\[
\pi\mathcal H(\lambda)= -\frac{1-\delta}{2\delta\lambda}.
\]
Substituting this identity into $\mathcal A(\lambda)$ gives
\[
\mathcal A(\lambda)
= 1+\theta^2\frac{(1-\delta)^2}{4\delta\lambda}+\delta\theta^2\pi^2\lambda\,\mu(\lambda)^2 \;>\;0,
\]
a contradiction. Hence $\Gamma(\lambda-\mathrm i0^+)\neq 0$ at every interior point for which $\mu(\lambda)>0$
(in particular, for Lebesgue-a.e.\ $\lambda$ in the interior of $\supp(\mu)$).  Moreover,  we have $\Gamma(0) \neq 0$ since for $z=-\mathrm{i}\varepsilon$:
\begin{align}
\lim_{\varepsilon\downarrow0}\Gamma(-\mathrm{i}\varepsilon)
&\stackrel{(a)}{=}\lim_{\varepsilon\downarrow0}\Bigl\{1-\theta^2(1-\delta)\Smu(-\mathrm{i}\varepsilon)\Bigr\}
\stackrel{(b)}{=}1-\theta^2(1-\delta)\,\pi\mathcal H(0)
\stackrel{(c)}{>}0,
\end{align}
where (a) uses the definition of $\Gamma$ in \eqref{Eqn:master0} and the non-tangential limit
$z\Smu(z)\to0$ as $z\to0$ for $\mu$ with absolutely continuous density (cf.~Fact~\ref{fact: Non-tangential limit});
(b) follows from the Sokhotski--Plemelj formula \eqref{eq:sp-lemma-ap} at $\lambda=0$; and
(c) holds since
$\pi\mathcal H(0)=\mathrm{P.V.}\!\int_{\R_+}\tfrac{1}{-t}\,\mu(t)\,\mathrm dt<0$. Consequently, any real solution of the
master equation that produces an isolated spectral component must lie in $\mathbb{R} \setminus (\operatorname{supp}(\mu) \cup \{0\})$.

\paragraph{4. Sign of the derivative at real zeros.}
Let $\lambda$ satisfy the stated conditions. Differentiating $\Gamma(\lambda)=1-\theta^2\CT(\lambda)$ yields
\begin{equation}\label{eq: C and Ctrans}
\Gamma'(\lambda)=-\theta^2\,\CT'(\lambda), \qquad
\CT'(z) = \delta \Smu^2(z) + \bigl( 2\delta z \Smu(z) + 1-\delta \bigr)\Smu'(z).
\end{equation}
If $\lambda\in\R\setminus\supp(\mu)$ satisfies $\Gamma(\lambda)=0$, then the master equation gives
\[
\CT(\lambda)=\frac{1}{\theta^2}
=\Smu(\lambda)\bigl(\delta\lambda\Smu(\lambda)+1-\delta\bigr).
\]
Since $\CT(\lambda)\neq0$, we have $\Smu(\lambda)\neq0$, and thus
\begin{equation}\label{eq: sub S}
\delta\lambda\Smu(\lambda)+1-\delta
=\frac{1}{\theta^2\Smu(\lambda)}.
\end{equation}
Substituting \eqref{eq: sub S} into \eqref{eq: C and Ctrans} yields
\begin{equation}\label{eq:final_c_prime}
\CT'(\lambda)
=\frac{1}{\Smu(\lambda)}
\left(
\delta \Smu^2(\lambda)\bigl[\Smu(\lambda)+\lambda\Smu'(\lambda)\bigr]
+\frac{\Smu'(\lambda)}{\theta^2}
\right).
\end{equation}
For any $\lambda\notin\supp(\mu)$,
\begin{align}\label{eq: S and Sprime}
\Smu'(\lambda) &= \int \frac{-1}{(\lambda - t)^2}\,d\mu(t) < 0,\\
\Smu(\lambda) + \lambda\Smu'(\lambda) &= \int \frac{-t}{(\lambda - t)^2}\,d\mu(t) < 0.
\end{align}
Hence the parenthetical term in \eqref{eq:final_c_prime} is strictly negative. Since $\Smu(\lambda)\neq0$, it follows that
\[
\operatorname{sign}\!\big(\CT'(\lambda)\big)=-\,\operatorname{sign}\!\big(\Smu(\lambda)\big),
\qquad
\operatorname{sign}\!\big(\Gamma'(\lambda)\big)
=\operatorname{sign}\!\big(-\theta^2\CT'(\lambda)\big)
=\operatorname{sign}\!\big(\Smu(\lambda)\big).
\]
In particular, $\Gamma'(\lambda)\neq0$.

This completes the proof.
\end{proof}


\subsection{Limits of Stieltjes Transforms}\label{App:limit_transform}

Before we present the proof of Lemma~\ref{lem:spectral_measures_properties} (which we collect in Section \ref{subsec:proof_lemma2}), we first derive the limiting Stieltjes transforms that will be used throughout the proof.

Define the normalized vectors
\[
\bar{\bm u}_*\bydef \frac{\bm u_*}{\sqrt M}\in\R^M,
\qquad
\bar{\bm v}_*\bydef \frac{\bm v_*}{\sqrt N}\in\R^N.
\]
By Assumption~\ref{assump:main}(b), $\|\bar{\bm u}_*\|^2\to 1$ and $\|\bar{\bm v}_*\|^2\to 1$ almost surely.
With this notation, the model reads
\[
\bm Y = \theta\,\bar{\bm u}_* \bar{\bm v}_*^\UT + \bm W.
\]
Moreover, the Stieltjes transforms of the empirical measures in
Definition~\ref{def:spectral_measures} admit the resolvent representations
\begin{align}
\mathcal S_{\nu_{M,1}}(z)
&= \frac{1}{M}\bm u_*^\UT(z\bm I_M-\bm Y\bm Y^\UT)^{-1}\bm u_*
  = \bar{\bm u}_*^\UT(z\bm I_M-\bm Y\bm Y^\UT)^{-1}\bar{\bm u}_*,
\label{eq:SM1_resolvent_norm}\\
\mathcal S_{\nu_{N,2}}(z)
&= \frac{1}{N}\bm v_*^\UT(z\bm I_N-\bm Y^\UT\bm Y)^{-1}\bm v_*
  = \bar{\bm v}_*^\UT(z\bm I_N-\bm Y^\UT\bm Y)^{-1}\bar{\bm v}_*,
\label{eq:SN2_resolvent_norm}\\
\mathcal S_{\nu_{L,3}}(z)
&= \frac{1}{L}\widehat{\bm u}_*^\UT(z\bm I_L-\widehat{\bm Y})^{-1}\widehat{\bm v}_*,
\qquad L=M+N.
\label{eq:SL3_resolvent}
\end{align}
We prove the claimed limit for $\mathcal S_{\nu_{N,2}}$; the derivation for $\mathcal S_{\nu_{M,1}}$
is completely analogous. Expanding $\bm Y^\UT\bm Y$ gives
\begin{align*}
\bm Y^\UT\bm Y
&= \bm W^\UT\bm W
 + \theta\,\bar{\bm v}_* \bar{\bm u}_*^\UT \bm W
 + \theta\,\bm W^\UT \bar{\bm u}_* \bar{\bm v}_*^\UT
 + \theta^2\,\|\bar{\bm u}_*\|^2\,\bar{\bm v}_*\bar{\bm v}_*^\UT \\
&= \bm W^\UT\bm W + \widetilde{\bm S}_e + \bm R_M,
\end{align*}
where
\[
\widetilde{\bm S}_e
\bydef \theta^2\,\bar{\bm v}_*\bar{\bm v}_*^\UT
      +\theta\,\bar{\bm v}_* \bar{\bm u}_*^\UT \bm W
      +\theta\,\bm W^\UT \bar{\bm u}_* \bar{\bm v}_*^\UT,
\qquad
\bm R_M \bydef \theta^2(\|\bar{\bm u}_*\|^2-1)\,\bar{\bm v}_*\bar{\bm v}_*^\UT.
\]
Note that $\bm R_M$ is rank one and
\[
\|\bm R_M\|_{\op}\le \theta^2\,|\|\bar{\bm u}_*\|^2-1|\,\|\bar{\bm v}_*\|^2 \xrightarrow{\text{a.s.}} 0.
\]
Let $\bm C\bydef z\bm I_N-\bm W^\UT\bm W$. Then
\[
z\bm I_N-\bm Y^\UT\bm Y = \bm C - \widetilde{\bm S}_e - \bm R_M.
\]
The resolvent identity gives
\[
(\bm C-\widetilde{\bm S}_e-\bm R_M)^{-1}
= (\bm C-\widetilde{\bm S}_e)^{-1}
 +(\bm C-\widetilde{\bm S}_e-\bm R_M)^{-1}\bm R_M(\bm C-\widetilde{\bm S}_e)^{-1}.
\]
Taking the quadratic form in direction $\bar{\bm v}_*$ and using the resolvent bounds,
\begin{align}
\Big|\mathcal S_{\nu_{N,2}}(z)
 - \bar{\bm v}_*^\UT(\bm C-\widetilde{\bm S}_e)^{-1}\bar{\bm v}_*\Big|
&\le \|\bar{\bm v}_*\|^2 \,\|(\bm C-\widetilde{\bm S}_e-\bm R_M)^{-1}\|_{\op}\,\|\bm R_M\|_{\op}\,
      \|(\bm C-\widetilde{\bm S}_e)^{-1}\|_{\op} \notag\\
&\le \frac{\|\bar{\bm v}_*\|^2}{\eta^2}\,\|\bm R_M\|_{\op}
 \xrightarrow{\text{a.s.}} 0.
\label{eq:remove_RM}
\end{align}
Hence it suffices to analyze $\bar{\bm v}_*^\UT(\bm C-\widetilde{\bm S}_e)^{-1}\bar{\bm v}_*$. Write $\widetilde{\bm S}_e=\bm A\bm B$ with $\bm A\in\R^{N\times 2}$ and $\bm B\in\R^{2\times N}$ defined by
\[
\bm A \bydef \bigl[\;\theta\,\bar{\bm v}_*,\ \bm W^\UT\bar{\bm u}_*\;\bigr],
\qquad
\bm B \bydef
\begin{bmatrix}
\theta\,\bar{\bm v}_*^\UT + \bar{\bm u}_*^\UT \bm W \\
\theta\,\bar{\bm v}_*^\UT
\end{bmatrix}.
\]
Then $\bm A\bm B=\widetilde{\bm S}_e$ is verified by direct multiplication. The Sherman--Morrison--Woodbury
formula gives
\begin{equation}
(\bm C-\bm A\bm B)^{-1}
= \bm C^{-1} + \bm C^{-1}\bm A\bigl(\bm I_2-\bm B\bm C^{-1}\bm A\bigr)^{-1}\bm B\bm C^{-1}.
\label{eq:SMW_parallel}
\end{equation}
Consequently,
\begin{equation}
\bar{\bm v}_*^\UT(\bm C-\widetilde{\bm S}_e)^{-1}\bar{\bm v}_*
= \bar{\bm v}_*^\UT\bm C^{-1}\bar{\bm v}_*
 + \bar{\bm v}_*^\UT\bm C^{-1}\bm A\bigl(\bm I_2-\bm B\bm C^{-1}\bm A\bigr)^{-1}\bm B\bm C^{-1}\bar{\bm v}_*.
\label{eq:Sv_decomp}
\end{equation}
By rotational invariance of $\bm W$ and $(\bar{\bm u}_*,\bar{\bm v}_*)\indep \bm W$, the limits follow by the same Haar-rotation/continuous-mapping argument as in \eqref{Eqn:app_SE_base_1}--\eqref{Eqn:app_SE_base_3} (cf.~\cite[Proposition E.2]{fan2022approximate}); in particular,\ for each fixed $z\in\C\setminus\R_+$,
\begin{align}
\bar{\bm v}_*^\UT(z\bm I_N-\bm W^\UT\bm W)^{-1}\bar{\bm v}_*
&\ac \delta\,\Smu(z)+\frac{1-\delta}{z},
\label{eq:qform_v_limit}\\
\bar{\bm u}_*^\UT\bm W(z\bm I_N-\bm W^\UT\bm W)^{-1}\bm W^\UT\bar{\bm u}_*
&\ac \int_{\R_+}\frac{\lambda}{z-\lambda}\,d\mu(\lambda)
  = z\,\Smu(z)-1,
\label{eq:qform_uWu_limit}\\
\bar{\bm u}_*^\UT\bm W(z\bm I_N-\bm W^\UT\bm W)^{-1}\bar{\bm v}_*
&\ac 0,
\qquad
\bar{\bm v}_*^\UT(z\bm I_N-\bm W^\UT\bm W)^{-1}\bm W^\UT\bar{\bm u}_*\ac 0.
\label{eq:cross_qform_0}
\end{align}
Define the deterministic quantities
\[
d_1(z)\bydef \delta\,\Smu(z)+\frac{1-\delta}{z},
\qquad
d_2(z)\bydef z\,\Smu(z)-1.
\]
Using \eqref{eq:qform_v_limit}--\eqref{eq:cross_qform_0}, we obtain
\begin{align*}
\bm B\bm C^{-1}\bm A
&=
\begin{bmatrix}
(\theta\bar{\bm v}_*^\UT+\bar{\bm u}_*^\UT\bm W)\bm C^{-1}(\theta\bar{\bm v}_*) &
(\theta\bar{\bm v}_*^\UT+\bar{\bm u}_*^\UT\bm W)\bm C^{-1}(\bm W^\UT\bar{\bm u}_*)\\[0.3em]
\theta\bar{\bm v}_*^\UT\bm C^{-1}(\theta\bar{\bm v}_*) &
\theta\bar{\bm v}_*^\UT\bm C^{-1}(\bm W^\UT\bar{\bm u}_*)
\end{bmatrix} \\
&\xrightarrow{\text{a.s.}}
\begin{bmatrix}
\theta^2 d_1(z) & d_2(z)\\
\theta^2 d_1(z) & 0
\end{bmatrix}.
\end{align*}
Likewise,
\[
\bar{\bm v}_*^\UT\bm C^{-1}\bm A
\xrightarrow{\text{a.s.}}
\bigl[\;\theta\,d_1(z),\ 0\;\bigr],
\qquad
\bm B\bm C^{-1}\bar{\bm v}_*
\xrightarrow{\text{a.s.}}
\begin{bmatrix}\theta\,d_1(z)\\ \theta\,d_1(z)\end{bmatrix}.
\]
A direct $2\times 2$ calculation then yields
\begin{equation}
\bar{\bm v}_*^\UT(\bm C-\widetilde{\bm S}_e)^{-1}\bar{\bm v}_*
\xrightarrow{\text{a.s.}}
\frac{d_1(z)}{1-\theta^2 d_1(z)\bigl(1+d_2(z)\bigr)}.
\label{eq:Sv_limit_intermediate}
\end{equation}
Since $1+d_2(z)=z\,\Smu(z)$, we recognize
\[
d_1(z)\bigl(1+d_2(z)\bigr)
= \Bigl(\delta\,\Smu(z)+\frac{1-\delta}{z}\Bigr)\cdot z\,\Smu(z)
= z\,\Smu(z)\Bigl(\delta\,\Smu(z)+\frac{1-\delta}{z}\Bigr)
= \CT(z).
\]
Therefore, combining \eqref{eq:remove_RM} and \eqref{eq:Sv_limit_intermediate},
\BE\label{Eqn:app_v2}
\mathcal S_{\nu_{N,2}}(z)
\xrightarrow{\text{a.s.}}
\Sv(z)\bydef \frac{\delta\,\Smu(z)+(1-\delta)/z}{1-\theta^2\CT(z)},
\qquad z\in\C\setminus\R.
\EE
The proof for $\mathcal S_{\nu_{M,1}}(z)$ is analogous and gives
\BE\label{Eqn:app_v1}
\mathcal S_{\nu_{M,1}}(z)
\xrightarrow{\text{a.s.}}
\Su(z)\bydef \frac{\Smu(z)}{1-\theta^2\CT(z)}.
\EE
This establishes the first two displays in \eqref{eq:nu-Stieltjes}.

Recall the symmetric dilation
\[
\widehat{\bm Y}\bydef
\begin{bmatrix}
\bm 0 & \bm Y\\
\bm Y^\UT & \bm 0
\end{bmatrix}\in\R^{L\times L},
\qquad L=M+N,
\]
and define $\widehat{\bm W}$ analogously. Define the {embedded normalized} signal directions
\[
\bar{\bm u}_*^{(L)} \bydef \begin{bmatrix}\bar{\bm u}_*\\ \bm 0\end{bmatrix}\in\R^L,
\qquad
\bar{\bm v}_*^{(L)} \bydef \begin{bmatrix}\bm 0\\ \bar{\bm v}_*\end{bmatrix}\in\R^L.
\]
Then $\widehat{\bm u}_*=\sqrt{M}\,\bar{\bm u}_*^{(L)}$ and $\widehat{\bm v}_*=\sqrt{N}\,\bar{\bm v}_*^{(L)}$, and
\[
\widehat{\bm Y}
= \widehat{\bm W} + \theta\bigl(\bar{\bm u}_*^{(L)}(\bar{\bm v}_*^{(L)})^\UT
                              +\bar{\bm v}_*^{(L)}(\bar{\bm u}_*^{(L)})^\UT\bigr).
\]
Let $\bm C_b\bydef z\bm I_L-\widehat{\bm W}$ and define
\[
\bm X\bydef \bigl[\bar{\bm u}_*^{(L)},\,\bar{\bm v}_*^{(L)}\bigr],
\qquad
\bm J\bydef
\begin{bmatrix}
0 & 1\\
1 & 0
\end{bmatrix}.
\]
Applying Sherman--Morrison--Woodbury to this rank-two perturbation yields
\begin{equation}
(z\bm I_L-\widehat{\bm Y})^{-1}
= \bm C_b^{-1} + \bm C_b^{-1}\bm X
\bigl(\bm I_2-\theta\,\bm J\,\bm X^\UT \bm C_b^{-1}\bm X\bigr)^{-1}
\theta\,\bm J\,\bm X^\UT\bm C_b^{-1}.
\label{eq:SMW_cross}
\end{equation}
By \eqref{eq:SL3_resolvent} and the relations above,
\begin{equation}
\mathcal S_{\nu_{L,3}}(z)
= \frac{1}{L}\widehat{\bm u}_*^\UT(z\bm I_L-\widehat{\bm Y})^{-1}\widehat{\bm v}_*
= \frac{\sqrt{MN}}{L}\,(\bar{\bm u}_*^{(L)})^\UT(z\bm I_L-\widehat{\bm Y})^{-1}\bar{\bm v}_*^{(L)}.
\label{eq:SnuL3_scaled}
\end{equation}
A Schur complement calculation gives the standard identity
\[
\bm C_b^{-1}=
\begin{bmatrix}
z^{-1}\bigl(\bm I_M+\bm W(z^2\bm I_N-\bm W^\UT\bm W)^{-1}\bm W^\UT\bigr) &
\bm W(z^2\bm I_N-\bm W^\UT\bm W)^{-1}\\[0.3em]
(z^2\bm I_N-\bm W^\UT\bm W)^{-1}\bm W^\UT &
z(z^2\bm I_N-\bm W^\UT\bm W)^{-1}
\end{bmatrix}.
\]
Hence
\[
(\bar{\bm u}_*^{(L)})^\UT \bm C_b^{-1}\bar{\bm v}_*^{(L)}
= \bar{\bm u}_*^\UT \bm W(z^2\bm I_N-\bm W^\UT\bm W)^{-1}\bar{\bm v}_*
\xrightarrow{\text{a.s.}} 0,
\]
by the same cross-term argument as \eqref{eq:cross_qform_0}. Define
\[
e_1(z)\bydef z\Bigl(\delta\,\Smu(z^2)+\frac{1-\delta}{z^2}\Bigr),
\qquad
e_2(z)\bydef z\,\Smu(z^2).
\]
The same quadratic-form concentration used above yields almost surely
\[
(\bar{\bm v}_*^{(L)})^\UT \bm C_b^{-1}\bar{\bm v}_*^{(L)} \to e_1(z),
\qquad
(\bar{\bm u}_*^{(L)})^\UT \bm C_b^{-1}\bar{\bm u}_*^{(L)} \to e_2(z),
\qquad
(\bar{\bm u}_*^{(L)})^\UT \bm C_b^{-1}\bar{\bm v}_*^{(L)} \to 0.
\]
A direct $2\times 2$ computation in \eqref{eq:SMW_cross} then gives
\[
(\bar{\bm u}_*^{(L)})^\UT(z\bm I_L-\widehat{\bm Y})^{-1}\bar{\bm v}_*^{(L)}
\xrightarrow{\text{a.s.}}
\frac{\theta\,e_1(z)e_2(z)}{1-\theta^2 e_1(z)e_2(z)}.
\]
Finally,
\[
e_1(z)e_2(z)
= z^2\,\Smu(z^2)\Bigl(\delta\,\Smu(z^2)+\frac{1-\delta}{z^2}\Bigr)
= \CT(z^2).
\]
Since $\sqrt{MN}/L\to \sqrt{\delta}/(1+\delta)$, \eqref{eq:SnuL3_scaled} yields
\begin{equation}
\label{Eqn:app_v3}
\mathcal S_{\nu_{L,3}}(z)
\xrightarrow{\text{a.s.}}
\Suv(z)\bydef \frac{\sqrt{\delta}}{1+\delta}\cdot\frac{\theta\,\CT(z^2)}{1-\theta^2\CT(z^2)},
\qquad z\in\C\setminus\R,
\end{equation}
which is the last identity in \eqref{eq:nu-Stieltjes}.

\subsection{Proof of Lemma~\ref{lem:spectral_measures_properties}}
\label{subsec:proof_lemma2}
We prove each claim separately.
\paragraph{1. Weak convergence.}
We have pointwise a.s.\ convergence of Stieltjes transforms on $\C\setminus\R$.
Moreover, by Assumption~\ref{assump:main}(c), $\|\bm W\|_{\op}$ is uniformly bounded. As a consequence, $\{\nu_{M,1}\}_M$, $\{\nu_{N,2}\}_N$, and $\{\nu_{L,3}\}_L$
are uniformly compactly supported and (in particular, tight). It follows from the Stieltjes continuity theorem
(and its signed-measure analogue) that $\nu_{M,1},\nu_{N,2},\nu_{L,3}$ converges weakly (almost surely) to $\nu_1,\nu_2,\nu_3$, respectively. Finally, $\nu_1$ and $\nu_2$ are probability measures since
$\lim_{y\to\infty}(-iy)\Su(iy)=1$ and $\lim_{y\to\infty}(-iy)\Sv(iy)=1$
(using $y\Smu(iy)\to -i$ and $\CT(iy)=O(1/y)$). Likewise, $\nu_3$ is finite since
$\limsup_{y\to\infty} y|\Suv(iy)|<\infty$.

\paragraph{2. Limiting Stieltjes transform.} They have been established in Section \ref{App:limit_transform}, specifically, \eqref{Eqn:app_v1}, \eqref{Eqn:app_v2}, and \eqref{Eqn:app_v3}.

\paragraph{3. Absolutely continuous parts.}
Let $\nu_i=\nu_i^{\parallel}+\nu_i^{\perp}$ be the Lebesgue decomposition for $i\in\{1,2,3\}$.
We use the Stieltjes inversion principle for finite signed measures (cf.\ Fact~\ref{fact:boundary-signed}):
if $\chi=\chi^\parallel+\chi^\perp$, then
\begin{equation}
\label{eq:stieltjes_inversion_finite_vs_infinite}
\begin{aligned}
\lim_{\epsilon\downarrow 0}\ \Im\bigl\{\mathcal S_\chi(\lambda-\mathrm i\epsilon)\bigr\}
&=\pi\,\frac{d\chi^\parallel}{d\lambda}(\lambda),
&& \text{for Lebesgue-a.e.\ }\lambda\in\R,\\
\lim_{\epsilon\downarrow 0}\ \Bigl|\Im\bigl\{\mathcal S_\chi(\lambda-\mathrm i\epsilon)\bigr\}\Bigr|
&=+\infty,
&& \text{for }|\chi^\perp|\text{-a.e.\ }\lambda\in\R.
\end{aligned}
\end{equation}
Moreover, since $\mu$ has a H\"older continuous density (Assumption~\ref{assump:main}(c)),
its Stieltjes transform admits boundary values given by the Sokhotski--Plemelj formula:
for Lebesgue-a.e.\ $\lambda\in\R$,
\begin{equation}
\label{eq:SP_mu}
\lim_{\epsilon\downarrow 0}\Smu(\lambda-\mathrm i\epsilon)
= \pi\,\mathcal H(\lambda)+\mathrm i\pi\,\mu(\lambda),
\end{equation}
where $\mathcal H$ is the Hilbert transform of $\mu$. Applying \eqref{eq:stieltjes_inversion_finite_vs_infinite} to $\chi=\nu_1$ and $\chi=\nu_2$, and using
\eqref{eq:nu-Stieltjes}, we obtain for Lebesgue-a.e.\ $\lambda\in\R$,
\[
\frac{d\nu_1^\parallel}{d\lambda}(\lambda)
= \frac{1}{\pi}\lim_{\epsilon\downarrow 0}\Im\{\Su(\lambda-\mathrm i\epsilon)\},
\qquad
\frac{d\nu_2^\parallel}{d\lambda}(\lambda)
= \frac{1}{\pi}\lim_{\epsilon\downarrow 0}\Im\{\Sv(\lambda-\mathrm i\epsilon)\}.
\]
Substituting the boundary value \eqref{eq:SP_mu} into \eqref{eq:nu-Stieltjes} and separating real and
imaginary parts yields the expressions in \eqref{eq:nu-ac} with the shrinkage functions
$\varphi_1$ and $\varphi_2$ defined in \eqref{eq:shrikage-function}. The required algebra is exactly the
one encoded in \eqref{eq: norm of the master equation} and \eqref{eq:shrikage-function}.

We apply \eqref{eq:stieltjes_inversion_finite_vs_infinite} to the signed measure $\chi=\nu_3$ and use the
closed-form transform $\Suv$ in \eqref{eq:nu-Stieltjes}. For $\sigma\neq 0$, the boundary value of
$\Smu((\sigma-\mathrm i0^+)^2)$ is given by Fact~\ref{fact:special_SP_hilbert_form}, which yields
\[
\frac{d\nu_3^\parallel}{d\sigma}(\sigma)
= \frac{1}{\pi}\lim_{\epsilon\downarrow 0}\Im\{\Suv(\sigma-\mathrm i\epsilon)\}
= \frac{\sqrt{\delta}}{1+\delta}\,\mathrm{sign}(\sigma)\,\mu(\sigma^2)\,\varphi_3(\sigma^2),
\]
for Lebesgue-a.e.\ $\sigma\in\R$, and this is precisely the third identity in \eqref{eq:nu-ac}.

\paragraph{4. Singular parts and atomic masses.}
Let
\[
\mathcal K_* \bydef \{\lambda\in\R\setminus\supp(\mu):\Gamma(\lambda)=0\},
\qquad
\Gamma(z)\bydef 1-\theta^2\CT(z).
\]
Notice that $\mathcal K_*$ is assumed a \textit{finite} set. By Fact~\ref{fact:boundary-signed}, the singular component $\chi^\perp$ of a finite (signed) measure
$\chi=\chi^\parallel+\chi^\perp$ can only assign mass to points where the boundary imaginary part of
$\mathcal S_\chi(\lambda-\mathrm i\epsilon)$ diverges as $\epsilon\downarrow 0$.
We apply this criterion to $\nu_1,\nu_2,\nu_3$ using the explicit Stieltjes transforms
\eqref{eq:nu-Stieltjes}.

Fix $\lambda_0\in\R\setminus\supp(\mu)$. Since $\Smu$ is analytic on $\C\setminus\supp(\mu)$, the
functions $\CT$ and $\Gamma$ are analytic at $\lambda_0$. If $\Gamma(\lambda_0)\neq 0$, then by continuity
there exists $\epsilon_0>0$ such that $\inf_{0<\epsilon<\epsilon_0}|\Gamma(\lambda_0-\mathrm i\epsilon)|>0$.
Moreover, $\Smu(\lambda_0-\mathrm i\epsilon)$ remains bounded for $0<\epsilon<\epsilon_0$.
Hence,
\[
\Su(\lambda_0-\mathrm i\epsilon)
=\frac{\Smu(\lambda_0-\mathrm i\epsilon)}{\Gamma(\lambda_0-\mathrm i\epsilon)}
\quad\text{remains bounded as }\epsilon\downarrow 0,
\]
so $\Im\Su(\lambda_0-\mathrm i\epsilon)$ cannot diverge at such a point. Therefore, outside $\supp(\mu)$,
any singular mass of $\nu_1$ can only occur at real zeros of $\Gamma$, i.e., at points in $\mathcal K_*$. The same conclusion holds for $\nu_2$ at any $\lambda_0\neq 0$, since
\[
\Sv(z)=\frac{\delta\,\Smu(z)+(1-\delta)/z}{\Gamma(z)}
\]
is analytic at $\lambda_0$ whenever $\lambda_0\in\R\setminus\supp(\mu)$ and $\Gamma(\lambda_0)\neq 0$.
When $\delta<1$, the term $(1-\delta)/z$ may create an additional pole at $z=0$, yielding a possible
atom at the origin even if $0\notin\mathcal K_*$. For $\nu_3$, we use
\[
\Suv(z)=\frac{\sqrt{\delta}}{1+\delta}\cdot\frac{\theta\,\CT(z^2)}{1-\theta^2\CT(z^2)}
       =\frac{\sqrt{\delta}}{1+\delta}\cdot\frac{\theta\,\CT(z^2)}{\Gamma(z^2)}.
\]
Thus any singularity of $\Suv$ away from the bulk can only occur when $\Gamma(z^2)=0$, i.e., at
$z=\pm\sqrt{\lambda_*}$ with $\lambda_*\in\mathcal K_*$.

Since $\mathcal K_*$ is finite, the preceding localization implies that the singular components are purely
atomic and admit the representations
\begin{align*}
\nu_1^\perp
&=\sum_{\lambda_*\in\mathcal K_*}\nu_1(\{\lambda_*\})\,\delta_{\lambda_*},\\
\nu_2^\perp
&=\sum_{\lambda_*\in\mathcal K_*}\nu_2(\{\lambda_*\})\,\delta_{\lambda_*}
  \;+\;\mathbf 1_{\{\delta<1\}}\,\nu_2(\{0\})\,\delta_0,\\
\nu_3^\perp
&=\sum_{\lambda_*\in\mathcal K_*}
  \Bigl(\nu_3(\{\sigma_*\})\,\delta_{\sigma_*}+\nu_3(\{-\sigma_*\})\,\delta_{-\sigma_*}\Bigr),
\qquad \sigma_*\bydef\sqrt{\lambda_*}.
\end{align*}
Fix $\lambda_*\in\mathcal K_*$. Then $\lambda_*\notin\supp(\mu)$, so $\Smu$ and $\CT$ are analytic at
$\lambda_*$. By Lemma~\ref{lem:Gamma-analytic}, $\lambda_*$ is a simple zero of $\Gamma$, hence
$\Gamma'(\lambda_*)\neq 0$. Therefore $\Su$ and $\Sv$ have simple poles at $z=\lambda_*$, and since a Dirac
mass $m\,\delta_{\lambda_*}$ contributes $m/(z-\lambda_*)$ to the Stieltjes transform, the corresponding atom
masses are precisely the residues:
\[
\nu_1(\{\lambda_*\})
=\operatorname*{Res}_{z=\lambda_*}\Su(z)
=\frac{\Smu(\lambda_*)}{\Gamma'(\lambda_*)}
=-\frac{\Smu(\lambda_*)}{\theta^2 C'(\lambda_*)},
\]
and
\[
\nu_2(\{\lambda_*\})
=\operatorname*{Res}_{z=\lambda_*}\Sv(z)
=\frac{\delta\,\Smu(\lambda_*)+(1-\delta)/\lambda_*}{\Gamma'(\lambda_*)}
=-\frac{\delta\,\Smu(\lambda_*)+(1-\delta)/\lambda_*}{\theta^2 C'(\lambda_*)}.
\]
These residues are nonzero: otherwise $\CT(\lambda_*)=0$, contradicting $\Gamma(\lambda_*)=0$. For $\nu_3$, the poles occur at $z=\pm\sigma_*$ with $\sigma_*=\sqrt{\lambda_*}>0$. A direct residue
calculation from $\Suv(z)=\frac{\sqrt{\delta}}{1+\delta}\frac{\theta\,\CT(z^2)}{\Gamma(z^2)}$ yields
\[
\nu_3(\{\pm\sigma_*\})
=\operatorname*{Res}_{z=\pm\sigma_*}\Suv(z)
=\mp\,\frac{\sqrt{\delta}}{1+\delta}\,\frac{1}{2\theta^3\sigma_*\,C'(\sigma_*^2)}.
\]
Finally, when $\delta<1$, $\Sv$ has a simple pole at $z=0$ due to the term $(1-\delta)/z$, and its residue is
\[
\nu_2(\{0\})
=\operatorname*{Res}_{z=0}\Sv(z)
=\lim_{z\to 0} z\,\Sv(z)
=\frac{1-\delta}{1-\theta^2(1-\delta)\pi\mathcal H(0)}.
\]
This completes the proof.

\subsection{Proof of Proposition~\ref{prop:outlier_characterization}}\label{sec:spectral_measures_properties}

\begin{proof}
We prove Claims~1–4 in order.

\medskip
\paragraph{Proof of Claim~1 (Isolation of population outliers).}
Since $\mathcal K_*\subset\R\setminus\supp(\mu)$ is finite, define
\[
d_\mu \bydef \min_{\lambda\in\mathcal K_*} \mathrm{d}(\lambda,\supp(\mu))>0,
\qquad
d_K \bydef \min_{\substack{\lambda,\lambda'\in\mathcal K_*\\ \lambda\neq\lambda'}}|\lambda-\lambda'|\in(0,\infty],
\]
with $d_K=+\infty$ if $|\mathcal K_*|=1$. Set
\begin{equation}\label{eq:eps_choice_common}
\varepsilon \bydef \min\Big\{\frac{d_\mu}{4},\frac{d_K}{4}\Big\}>0.
\end{equation}
Then for each $\lambda_k\in\mathcal K_*$,
\[
(\lambda_k-\varepsilon,\lambda_k+\varepsilon)\cap\supp(\mu)=\varnothing,
\qquad
(\lambda_k-\varepsilon,\lambda_k+\varepsilon)\cap\mathcal K_*=\{\lambda_k\},
\]
and the intervals are pairwise disjoint. Define the outlier neighborhood
\[
\mathcal{K}^*_{\epsilon} \bydef \bigcup_{\lambda_k\in\mathcal K_*}(\lambda_k-\varepsilon,\lambda_k+\varepsilon).
\]
By construction, $\mathrm{d}(\mathcal{K}^*_{\epsilon},\supp(\mu))\ge 3\varepsilon$, hence
$\mathcal{K}^*_{\epsilon}\cap\supp_{\varepsilon/2}(\mu)=\varnothing$.
This proves Claim~1.

\medskip
\paragraph{General setup: confinement of roots and uniform convergence.}
Let
\[
\bm S_1(z)\bydef(z\bm I_M-\bm W\bm W^\UT)^{-1},
\qquad
\bm S_2(z)\bydef(z\bm I_N-\bm W^\UT\bm W)^{-1},
\]
and recall the empirical master function $\Gamma_M(z)$.

Fix $\eta>0$ and a compact set $\mathcal E\subset\C\setminus\supp_\eta(\mu)$.
By Assumption~\ref{assump:main}(d), almost surely for all $M$ large enough,
$\spp(\bm W\bm W^\UT)\subset\supp_{\eta/2}(\mu)$, hence
\[
\|\bm S_1(z)\|_{\op},\ \|\bm S_2(z)\|_{\op}\le 2/\eta
\qquad\forall z\in\mathcal E.
\]
Therefore the quadratic forms appearing in $\Gamma_M$ define uniformly bounded
holomorphic families on $\mathcal E$.

For each fixed $z\in\mathcal E$, standard quadratic-form limits for Haar singular
vectors (e.g., \cite[Proposition~8.2]{Benaych-Georges2012singular}) yield almost surely
\begin{subequations}\label{eq:quadform_pointwise}
\begin{align}
\frac{\theta}{M}\bm u_*^\UT\bm S_1(z)\bm u_* &\to \theta\,\Smu(z),\\
\frac{\theta}{N}\bm v_*^\UT\bm S_2(z)\bm v_* &\to \theta\Bigl(\delta\,\Smu(z)+\frac{1-\delta}{z}\Bigr),\\
\frac{\theta}{\sqrt{MN}}\bm v_*^\UT\bm W^\UT\bm S_1(z)\bm u_* &\to 0,\\
\frac{\theta}{\sqrt{MN}}\bm u_*^\UT\bm W\bm S_2(z)\bm v_* &\to 0.
\end{align}
\end{subequations}
By Montel’s theorem and uniqueness of the holomorphic limit, the convergence is
uniform on $\mathcal E$. Substituting into $\Gamma_M$ yields
\[
\Gamma_M(z)\to \Gamma(z)\bydef 1-\theta^2\CT(z)
\qquad\text{uniformly on }\mathcal E.
\]
Since $\eta>0$ and $\mathcal E$ are arbitrary, $\Gamma_M\to\Gamma$ almost surely
locally uniformly on $\C\setminus\supp(\mu)$.

Moreover, since $\Smu(z)\sim 1/z$ as $|z|\to\infty$, we have $\Gamma(z)\to 1$.
Thus there exists $R>0$ such that $|\Gamma(z)|>0$ for $|z|\ge R$.
By local uniform convergence, the same holds for $\Gamma_M$ for all large $M$.
Hence all real zeros of $\Gamma$ and $\Gamma_M$ lie in $(-R,R)$ eventually.

\medskip
\paragraph{Proof of Claim~3 (Existence and convergence of empirical outliers).}
Fix $\lambda_k\in\mathcal K_*$. Since $\lambda_k$ is an isolated and simple real zero of $\Gamma$,
there exists $\rho\in(0,\varepsilon)$ such that the closed disk
\[
D_k \bydef \{z\in\C:\ |z-\lambda_k|\le \rho\}
\]
satisfies
\[
D_k\cap\supp(\mu)=\varnothing,
\qquad
D_k\cap\mathcal K_*=\{\lambda_k\},
\qquad
\inf_{z\in\partial D_k}|\Gamma(z)|>0.
\]
Define the contour
\[
\gamma_k \bydef \partial D_k
= \{z\in\C:\ |z-\lambda_k|=\rho\},
\]
which is the positively oriented circle of radius $\rho$ centered at $\lambda_k$.

Since $\gamma_k\subset\C\setminus\supp(\mu)$, the function $\Gamma$ is holomorphic in a neighborhood of
$\gamma_k$. By the argument principle,
\begin{equation}\label{eq:winding-Gamma}
\frac{1}{2\pi i}\oint_{\gamma_k}\frac{\Gamma'(z)}{\Gamma(z)}\,dz
=
\mathrm{Wind}\bigl(\Gamma(\gamma_k),0\bigr)
=1,
\end{equation}
where the equality follows from the simplicity of the zero at $\lambda_k$.

By the local uniform convergence $\Gamma_M\to\Gamma$ on $\C\setminus\supp(\mu)$,
there exists $M_k<\infty$ such that for all $M\ge M_k$,
\[
\inf_{z\in\gamma_k}|\Gamma_M(z)| \ge \frac12 \inf_{z\in\gamma_k}|\Gamma(z)| > 0.
\]
Moreover, by Assumption~\ref{assump:main}(d),
$\spp(\bm W\bm W^\UT)\subset\supp_{\varepsilon/2}(\mu)$ for all large $M$, hence
$\gamma_k\cap\spp(\bm W\bm W^\UT)=\varnothing$ and $\Gamma_M$ is holomorphic in a neighborhood of $\gamma_k$.
Uniform convergence of $\Gamma_M$ and $\Gamma_M'$ on $\gamma_k$ then implies
\[
\frac{\Gamma_M'(z)}{\Gamma_M(z)} \;\to\;
\frac{\Gamma'(z)}{\Gamma(z)}, \qquad \text{uniformly in }z\in\gamma_k.
\]
Therefore,
\begin{equation}\label{eq:winding-GammaM}
\lim_{M\to\infty}
\frac{1}{2\pi \mathrm{i}}\oint_{\gamma_k}\frac{\Gamma_M'(z)}{\Gamma_M(z)}\,dz
=
\frac{1}{2\pi \mathrm{i}}\oint_{\gamma_k}\frac{\Gamma'(z)}{\Gamma(z)}\,dz
=1.
\end{equation}
Since the left-hand side is integer-valued for each $M$, it follows that for all sufficiently large $M$,
\[
\frac{1}{2\pi \mathrm{i}}\oint_{\gamma_k}\frac{\Gamma_M'(z)}{\Gamma_M(z)}\,dz = 1.
\]
Hence $\Gamma_M$ has exactly one zero (counted with multiplicity) inside $D_k$.

\medskip
\paragraph{Proof of Claim~2 (Exact spectral separation).}
Let $\epsilon>0$ be fixed as in Claim~1, and recall the definitions of the outlier neighborhood $\mathcal K_\epsilon^*$ and the bulk neighborhood $\supp_\epsilon(\mu)$. By Assumption~\ref{assump:main}(d), we have $\spp(\bm W\bm W^{\UT})\subset\supp_{\epsilon/2}(\mu)$ almost surely for large $M$. Since Claim~1 ensures $\mathcal K_\epsilon^*\cap\supp_{\epsilon/2}(\mu)=\varnothing$, it follows that $\spp(\bm W\bm W^{\UT})\cap\mathcal K_\epsilon^*=\varnothing$, which proves \eqref{eq:sep-W}.

We next exclude spurious roots of the empirical master equation. Since $\Gamma(z)\to1$ as $|z|\to\infty$, there exists $R>0$ such that $|\Gamma(z)|>0$ for $|z|\ge R$. By definition of $\mathcal K_*$, the function $\Gamma$ has no real zeros on
$
[-R,R]\setminus\bigl(\supp_\epsilon(\mu)\cup\mathcal K_\epsilon^*\bigr).
$
Hence there exists $\alpha>0$ such that $|\Gamma(x)|\ge\alpha
$, $\forall x\in[-R,R]\setminus\bigl(\supp_\epsilon(\mu)\cup\mathcal K_\epsilon^*\bigr)$. By the almost sure local uniform convergence $\Gamma_M\to\Gamma$ on $\C\setminus\supp(\mu)$, for all sufficiently large $M$, $|\Gamma_M(x)|\ge\alpha/2$ on the same set, and thus $\Gamma_M$ has no real zeros outside $\supp_\epsilon(\mu)\cup\mathcal K_\epsilon^*$.

Now fix $\lambda\in\spp(\bm Y\bm Y^{\UT})\setminus\supp_\epsilon(\mu)$. We first claim that
$\lambda>0$ almost surely. If $0\in\supp(\mu)$, then $0\in\supp_\epsilon(\mu)$, and the
assumption $\lambda\notin\supp_\epsilon(\mu)$ immediately implies $\lambda\neq 0$. If instead
$0\notin\supp(\mu)$, then $\bm W\bm W^{\UT}\succ 0$ almost surely. Let
$\bm W=\bm U_W\bm\Sigma_W\bm V_W^{\UT}$ be the SVD of $\bm W$. In this case one can also show
that $\mathbb{P}\bigl(0\in\spp(\bm Y\bm Y^{\UT})\bigr)=0.$ Indeed, conditional on all randomness other than the Haar matrix $\bm V_W$, the event
$\det(\bm Y\bm Y^{\UT})=0$ is characterized by the vanishing of a nontrivial polynomial in
the entries of $\bm V_W$; equivalently, it defines a proper algebraic subset of the
orthogonal group, which has Haar measure zero. Consequently, $\lambda>0$ almost surely.
Moreover, since $\lambda\notin\supp_\epsilon(\mu)$, we also have
$\lambda\notin\spp(\bm W\bm W^{\UT})$. Fact~\ref{fact:loc of emp out} therefore applies and
yields $\Gamma_M(\lambda)=0$. On the other hand, we have shown in the above that $\Gamma_M$ has no real zeros outside $\supp_\epsilon(\mu)\cup\mathcal K_\epsilon^*$, this forces
$\lambda\in\mathcal K_\epsilon^*$, which proves~\eqref{eq:no-extra}.


Finally, fix $\lambda_k\in\mathcal K_*$. Since $0\notin\mathcal K_*$, we have $\lambda_k>0$.
Choose $\rho\in(0,\epsilon)$ such that $\rho<\lambda_k/2$, and let
$D_k=\{z\in\C:\ |z-\lambda_k|\le \rho\}$.
By Claim~3, for all sufficiently large $M$ the equation $\Gamma_M(z)=0$ has exactly one zero
(counted with multiplicity) in $D_k$. Since $\Gamma_M(\overline z)=\overline{\Gamma_M(z)}$,
this zero must be real; denote it by
$\widehat\lambda_{k,M}\in(\lambda_k-\rho,\lambda_k+\rho)\subset(0,\infty)$. Moreover, $D_k\cap\spp(\bm W\bm W^{\UT})=\varnothing$ for all large $M$. Hence
Fact~\ref{fact:loc of emp out} implies that $\bm Y\bm Y^{\UT}$ has exactly one eigenvalue in
$(\lambda_k-\rho,\lambda_k+\rho)$ (counted with multiplicity). This proves~\eqref{eq:uniq-outlier}
and completes the proof.

\medskip
\paragraph{Proof of Claim~4 (Limiting overlaps).}
We prove the first convergence; the others are analogous.

Fix $\lambda_k\in\mathcal K_*$. By Claims~1–3, $\lambda_k$ is isolated and
$\lambda_{k,M}\to\lambda_k$.
Let $h$ be continuous, supported near $\lambda_k$, with $h(\lambda_k)=1$.
By weak convergence of $\nu_{M,1}$,
\[
\int h\,d\nu_{M,1}\to\nu_1(\{\lambda_k\}),
\]
and for all large $M$ exactly one term contributes, yielding
\[
\frac{1}{M}\langle\bm u_k(\bm Y),\bm u_*\rangle^2\to\nu_1(\{\lambda_k\}).
\]
For the mixed overlap, apply the same localization argument to the symmetric
dilation of $\bm Y$, whose spectrum contains the pair
$\pm\sigma_{k,M}=\pm\sqrt{\lambda_{k,M}}$.
The signal component splits equally between the symmetric and anti-symmetric
modes, producing an additional factor of $2$.
After normalization by $L=M+N$ and using $N/M\to\delta$, this yields
\[
\frac{1}{\sqrt{MN}}
\langle\bm u_k(\bm Y),\bm u_*\rangle
\langle\bm v_k(\bm Y),\bm v_*\rangle
\;\to\;
2\frac{1+\delta}{\sqrt{\delta}}\,
\nu_3(\{\sigma_k\}),
\]
as claimed. 
\end{proof}

\subsection{Integrals of Spectral Measures}\label{sec:spectral_analysis_apps}

The signal--eigenspace spectral measures yield integral representations for the quadratic and bilinear
forms that arise in the spectral initialization and state-evolution analysis. The following proposition
records these representations; the limits follow from the weak convergence in
\lemref{lem:spectral_measures_properties}.

\begin{proposition}\label{prop:integral_representation}
Let $\bm{Y}$ follow \eqref{eq:rectangular spiked model}, and let $\nu_1,\nu_2,\nu_3$ be the limiting
measures from Definition~\ref{def:spectral_measures}.
\begin{enumerate}
\item[(1)] Let $h:\R_+\to\R$ be bounded, and continuous on $\supp(\mu) \cup \mathcal{K}^*$. Then, almost surely we have:
\BS
\begin{align}
\frac{1}{M}\bm{u}_*^\UT h(\bm{Y}\bm{Y}^\UT)\bm{u}_*
&\xrightarrow{\text{a.s.}} \int h(\lambda)\,d\nu_1(\lambda)
= \langle h(\lambda)\rangle_{\nu_1}, \label{eq: conv_u}\\
\frac{1}{N}\bm{v}_*^\UT h(\bm{Y}^\UT\bm{Y})\bm{v}_*
&\xrightarrow{\text{a.s.}} \int h(\lambda)\,d\nu_2(\lambda)
= \langle h(\lambda)\rangle_{\nu_2}. \label{eq: conv_v}
\end{align}
\ES

\item[(2)] Let $f:\R\to\R$ be bounded, continuous, and odd. Define
$f(\bm{Y})\bydef \bm{U}_{\bm{Y}}\mathrm{diag}(f(\sigma_i(\bm{Y})))\bm{V}_{\bm{Y}}^\UT$.
With $L=M+N$, almost surely we have:
\begin{align}
\frac{1}{L}\bm{u}_*^\UT f(\bm{Y})\bm{v}_*
&\xrightarrow{\text{a.s.}} \int f(\sigma)\,d\nu_3(\sigma)
= \langle f(\sigma)\rangle_{\nu_3}. \label{eq: conv_uv}
\end{align}
\end{enumerate}
\end{proposition}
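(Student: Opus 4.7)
The plan is to reduce both claims to the weak convergence of the signal–eigenspace empirical measures established in Lemma~\ref{lem:spectral_measures_properties}, using the spectral decomposition of $\bm{Y}\bm{Y}^\UT$ (resp.~the SVD of $\bm{Y}$) to rewrite each left-hand side as an integral against $\nu_{M,1}$, $\nu_{N,2}$, or $\nu_{L,3}$, and then combining weak convergence with the outlier separation/convergence statements of \propref{prop:outlier_characterization} to pass to the limit under the stated continuity assumption.

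For Claim~(1), writing $\bm{Y}\bm{Y}^\UT=\sum_{i=1}^M\lambda_i(\bm{Y}\bm{Y}^\UT)\,\bm{u}_i(\bm{Y}\bm{Y}^\UT)\bm{u}_i(\bm{Y}\bm{Y}^\UT)^\UT$ gives, directly from \defref{def:spectral_measures},
\[
\frac{1}{M}\bm{u}_*^\UT h(\bm{Y}\bm{Y}^\UT)\bm{u}_*
=\sum_{i=1}^M h(\lambda_i(\bm{Y}\bm{Y}^\UT))\cdot\frac{1}{M}\langle\bm{u}_i(\bm{Y}\bm{Y}^\UT),\bm{u}_*\rangle^2
=\int h\,d\nu_{M,1},
\]
and analogously $\frac{1}{N}\bm{v}_*^\UT h(\bm{Y}^\UT\bm{Y})\bm{v}_*=\int h\,d\nu_{N,2}$. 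I would then split the integral into bulk and outlier contributions using the $\varepsilon$-neighborhoods $\supp_\varepsilon(\mu)$ and $\mathcal{K}^*_\varepsilon$ from \propref{prop:outlier_characterization}: almost surely, for all $M$ large enough, $\spp(\bm{Y}\bm{Y}^\UT)\subset\supp_\varepsilon(\mu)\cup\mathcal{K}^*_\varepsilon$ and each outlier window contains a unique empirical eigenvalue $\lambda_{k,M}\to\lambda_k$. For the bulk, weak convergence of the bulk parts of $\nu_{M,1}$ to the absolutely continuous part of $\nu_1$ (an immediate consequence of \lemref{lem:spectral_measures_properties} and boundedness of the supports) combined with the Portmanteau theorem handles $h$ restricted to $\supp_\varepsilon(\mu)$, using continuity of $h$ on $\supp(\mu)$ and letting $\varepsilon\downarrow0$. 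For each outlier $\lambda_k\in\mathcal{K}^*$, the contribution is a single term $h(\lambda_{k,M})\cdot\frac{1}{M}\langle\bm{u}_k(\bm{Y}),\bm{u}_*\rangle^2$, and both factors converge almost surely by \propref{prop:outlier_characterization}(3)–(4) and continuity of $h$ at $\lambda_k$, giving $h(\lambda_k)\,\nu_1(\{\lambda_k\})$. Summing over outliers and adding the bulk limit gives $\int h\,d\nu_1$; the $\bm v$-channel is identical, with an additional bookkeeping term for the possible atom at $0$ when $\delta<1$.

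For Claim~(2), the SVD $\bm{Y}=\sum_i\sigma_i(\bm{Y})\bm{u}_i(\bm{Y})\bm{v}_i(\bm{Y})^\UT$ gives $f(\bm{Y})=\sum_i f(\sigma_i(\bm{Y}))\bm{u}_i(\bm{Y})\bm{v}_i(\bm{Y})^\UT$, and hence
\[
\frac{1}{L}\bm{u}_*^\UT f(\bm{Y})\bm{v}_*
=\frac{1}{L}\sum_i f(\sigma_i(\bm{Y}))\langle\bm{u}_*,\bm{u}_i(\bm{Y})\rangle\langle\bm{v}_*,\bm{v}_i(\bm{Y})\rangle.
\]
I would then match this against $\int f\,d\nu_{L,3}$ using the standard dilation identity: $\widehat{\bm{Y}}$ has eigenpairs $(\pm\sigma_i(\bm{Y}),\,\tfrac{1}{\sqrt 2}[\bm{u}_i(\bm{Y})^\UT,\pm\bm{v}_i(\bm{Y})^\UT]^\UT)$, so each $\sigma_i(\bm{Y})$ contributes to $\int f\,d\nu_{L,3}$ the term $\tfrac{1}{2L}[f(\sigma_i)-f(-\sigma_i)]\langle\bm{u}_*,\bm{u}_i(\bm{Y})\rangle\langle\bm{v}_*,\bm{v}_i(\bm{Y})\rangle$, which by the oddness of $f$ equals $\tfrac{1}{L}f(\sigma_i)\langle\bm{u}_*,\bm{u}_i(\bm{Y})\rangle\langle\bm{v}_*,\bm{v}_i(\bm{Y})\rangle$. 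Thus $\frac{1}{L}\bm{u}_*^\UT f(\bm{Y})\bm{v}_*=\int f\,d\nu_{L,3}$. Weak convergence $\nu_{L,3}\Rightarrow\nu_3$ for the signed cross measure (\lemref{lem:spectral_measures_properties}(1)) then gives the result by the same bulk/outlier split as before, now using that the empirical singular values $\sigma_{k,M}=\sqrt{\lambda_{k,M}}\to\sigma_k$ and the joint overlap convergence in \propref{prop:outlier_characterization}(4) identify each atomic contribution with $f(\sigma_k)\nu_3(\{\sigma_k\})+f(-\sigma_k)\nu_3(\{-\sigma_k\})$.

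The only real obstacle is the weak regularity assumption: $h$ is continuous only on $\supp(\mu)\cup\mathcal{K}^*$, which is weaker than continuity on $\mathbb{R}_+$ required by the standard weak-convergence criterion. This is precisely where \propref{prop:outlier_characterization} is essential: it localizes empirical spectral mass to an arbitrarily small neighborhood of $\supp(\mu)\cup\mathcal{K}^*$, while simultaneously giving strong pointwise convergence of individual empirical outliers. Together with the fact that $\nu_1,\nu_2,\nu_3$ place no mass outside $\supp(\mu)\cup\mathcal{K}^*\cup\{0\}$ (and the possible atom at $0$ is handled separately), the Portmanteau theorem applies and closes the argument; the signed nature of $\nu_3$ is handled by decomposing $f$ into its contributions on $\mathbb{R}_{\pm}$ (or via Jordan decomposition of $\nu_{L,3}$).
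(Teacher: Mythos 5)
Your proof is correct and takes essentially the same route as the paper's: both reduce the quadratic forms to integrals against the empirical measures $\nu_{M,1},\nu_{N,2},\nu_{L,3}$, invoke weak convergence from Lemma~\ref{lem:spectral_measures_properties}, and isolate the finitely many outlier atoms via Proposition~\ref{prop:outlier_characterization}, with the dilation identity and oddness of $f$ used identically in Claim~(2). You spell out the bulk/outlier split and the Portmanteau-style handling of the weak continuity assumption on $h$ more explicitly than the paper does, but the underlying argument is the same.
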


\begin{proof}
\noindent \textbf{Proof of Claim~(1).} We first prove \eqref{eq: conv_u}.
\begin{align}\label{eq:uconv-proof}
\frac{1}{M}\bm{u}_*^\UT h(\bm{Y}\bm{Y}^\UT)\bm{u}_*
&= \frac{1}{M}\sum_{i=1}^{M} h(\lambda_i(\bm{Y}\bm{Y}^\UT))
   \,\langle \bm{u}_i(\bm{Y}\bm{Y}^\UT),\bm{u}_*\rangle^2 \nonumber \\
&\stackrel{(a)}{=} \int h(\lambda)\,d\nu_{M,1}(\lambda)
\underset{\text{a.s.}}{\overset{(b)}{\longrightarrow}}
 \int h(\lambda)\,d\nu_{1}(\lambda)
 \stackrel{}{=} \langle h(\lambda)\rangle_{\nu_1},
\end{align}
where (a) is Definition~\ref{def:spectral_measures}, and
(b) follows by weak convergence on $\supp(\mu)$ and the convergence of the finitely many outlier atoms
$\lambda_{k,M}\xrightarrow{\mathrm{a.s.}}\lambda_k$ for each $\lambda_k\in\mathcal K^*$
(cf.~\lemref{lem:spectral_measures_properties}, \propref{prop:outlier_characterization}).
Finally, \eqref{eq: conv_v} follows in the same spirit, using in addition the atomic convergence
$\nu_{N,2}(\{0\})\xrightarrow{\mathrm{a.s.}}\nu_2(\{0\})$, which is implied by the H\"older continuity of $\mu$
in \assumpref{assump:main}, together with the weak convergence $\nu_{N,2}\to \nu_2$.

\noindent \textbf{Proof of Claim~(2).} Let $\widehat{\bm{Y}}\bydef\begin{bmatrix}0&\bm{Y}\\ \bm{Y}^\UT&0\end{bmatrix}$ be the symmetric dilation, and write
\[
\bm{Y}
= \bm{U}_{\bm{Y}}\,[\bm{\Sigma}_{\bm{Y}}\mid \bm{0}]\,[\bm{V}_{\bm{Y},1}\mid \bm{V}_{\bm{Y},2}]^\UT.
\]
Then $\widehat{\bm{Y}}=\bm{Q}_{\widehat{\bm{Y}}}\bm{\Lambda}_{\widehat{\bm{Y}}}\bm{Q}_{\widehat{\bm{Y}}}^\UT$ with
\[
\bm{\Lambda}_{\widehat{\bm{Y}}}
= \mathrm{diag}\!\big(\bm{\Sigma}_{\bm{Y}},-\bm{\Sigma}_{\bm{Y}},\bm{0}_{N-M}\big),
\qquad
\bm{Q}_{\widehat{\bm{Y}}}
=
\left[
\begin{array}{c|c|c}
\frac{1}{\sqrt2}\bm{U}_{\bm{Y}} & -\frac{1}{\sqrt2}\bm{U}_{\bm{Y}} & \bm{0}_{M\times (N-M)}\\
\hline
\frac{1}{\sqrt2}\bm{V}_{\bm{Y},1} & \frac{1}{\sqrt2}\bm{V}_{\bm{Y},1} & \bm{V}_{\bm{Y},2}
\end{array}
\right].
\]
Substituting into Definition~\ref{def:spectral_measures}(b) gives the explicit form
\begin{align}\label{eq:nuL3_explicit}
\nu_{L,3}
&\stackrel{(c)}{=}
\frac{1}{2L}\sum_{i=1}^{M}
\langle \bm{u}_i(\bm{Y}),\bm{u}_* \rangle \langle \bm{v}_i(\bm{Y}),\bm{v}_* \rangle
\big(\delta_{\sigma_i(\bm{Y})}-\delta_{-\sigma_i(\bm{Y})}\big),
\end{align}
where (c) uses the block forms of the eigenvectors in $\bm{Q}_{\widehat{\bm{Y}}}$ and the orthogonality of the
$N-M$ null-space eigenvectors to $\widehat{\bm{u}}_*$. Moreover,
\BS
\begin{align*}
\frac{1}{L}\bm{u}_*^\UT f(\bm{Y})\bm{v}_*
&= \frac{1}{L} \sum_{i=1}^{M} f(\sigma_i(\bm{Y}))
   \langle \bm{u}_i(\bm{Y}), \bm{u}_* \rangle \langle \bm{v}_i(\bm{Y}), \bm{v}_* \rangle \\
&\stackrel{(d)}{=}
\frac{1}{2L}\sum_{i=1}^{M}
\langle \bm{u}_i(\bm{Y}),\bm{u}_* \rangle \langle \bm{v}_i(\bm{Y}),\bm{v}_* \rangle
\big(f(\sigma_i(\bm{Y}))-f(-\sigma_i(\bm{Y}))\big) \\
&\stackrel{(e)}{=} \int f(\sigma)\,d\nu_{L,3}(\sigma)
\underset{\text{a.s.}}{\overset{(g)}{\longrightarrow}}
 \int f(\sigma)\,d\nu_3(\sigma)
 = \langle f(\sigma)\rangle_{\nu_3},
\end{align*}
\ES
where (d) uses that $f$ is odd; (e) follows from \eqref{eq:nuL3_explicit}; and (g) uses
weak convergence $\nu_{L,3}\to \nu_3$ from \lemref{lem:spectral_measures_properties} and mimics step (b) in \eqref{eq:uconv-proof}.
\end{proof}

\section{General OAMP Algorithm with Rotationally-Invariant Matrices}\label{app:general-SE}

The proof of the main result (\thref{Thm: State Evolution}) follows a reduction strategy similar to that in \cite{dudeja2024optimality}. Specifically, we transform the OAMP iteration \eqref{def:OAMP}, which depends on the signal matrix $\bm{Y}$, into an asymptotically equivalent iteration that depends only on the random matrix $\bm{W}$. The resulting algorithm admits a state evolution characterization, which can be established using standard conditioning techniques \cite{rangan2019vector,takeuchi2019rigorous,fan2022approximate}.

As we are not aware of prior work that directly addresses our specific setting, we include in this appendix the formulation of the general OAMP iteration with random $\bm{W}$ and its associated state evolution for completeness. Our derivation closely follows \cite{fan2022approximate}, and we therefore omit many technical details, emphasizing instead the key differences from that work.

\subsection{General OAMP Iteration}\label{Sec:general_OAMP}

We introduce a general OAMP iteration with bi-rotationally-invariant random matrix $\bm{W}$.

\begin{definition}[General OAMP algorithm]\label{def: Noise OAMP}
For $t\in\mathbb{N}$, the general OAMP algorithm generates the iterates $(\bm{x}_t)_{t\in\mathbb{N}}$ and $(\bm{z}_t)_{t\in\mathbb{N}}$ via
\BS\label{Eqn:OAMP_app_def}
\begin{align}
\bm{x}_t &=  {\Psi}_{t}(\bm{W}\bm{W}^\UT) m_{t}(\bm{x}_{\le t-1};\bm{a}) +  \tilde{\Psi}_{t}(\bm{WW}^\UT)\,\bm{W} q_{t}(\bm{z}_{\le t-1};\bm{b}) ,\\
\bm{z}_t &=  \Phi_{t}(\bm{W}^\UT \bm{W} ) q_{t}(\bm{z}_{\le t-1};\bm{b}) + \tilde{\Phi}_{t}(\bm{W}^\UT\bm{W})\bm{W}^\UT m_{t}(\bm{x}_{\le t-1};\bm{a}) ,
\end{align}
\ES
where the matrix denoising functions $\Psi_t$ and $\Phi_t$ satisfy (while $\tilde{\Psi}_t$ and $\tilde{\Phi}_t$ do not) the following \textit{trace-free} conditions as $M,N\to\infty$ with $M/N\to\delta\in(0,1]$
\begin{align}
\mathbb{E}\left[\Psi_t(\mathsf{D}_M^2)\right]=0,\quad \mathsf{D}_M^2\sim\mu,\\
\mathbb{E}\big[\Phi_t(\mathsf{D}_N^2)\big]=0,\quad \mathsf{D}_N^2\sim\muN.
\end{align}
and the signal denoisers $(m_t)_{t\ge1}$ and $(q_t)_{t\ge1}$ are \textit{divergence-free}:
\begin{align}
\mathbb{E}\left[\partial_i m_{t}(\mathsf{X}_{\le t-1};\mathsf{A})\right] &=0,\quad \forall t\in\mathbb{N}, i\in [t],\\
\mathbb{E}\left[\partial_i q_{t}(\mathsf{Z}_{\le t-1};\mathsf{B})\right] &=0,\quad \forall t\in\mathbb{N}, i\in [t].
\end{align}
The random variables $(\mathsf{D}_M,\mathsf{D}_N)$ and $(\mathsf{X}_{t},\mathsf{Z}_{t})_{t\in\mathbb{N}}$ are to be defined in Definition \ref{Def:SE}.
\end{definition}

Let the singular value decomposition of $\bm{W}$ be $\bm{W}=\bm{U}_W\bm{\Sigma}\bm{V}_W^\UT$. We make a change of variables:
\BE
\tilde{\bm{x}}_t \bydef \bm{U}_W^\UT\bm{x}_t\quad\text{and}\quad \tilde{\bm{z}}_t\bydef\bm{V}_W^\UT\bm{z}_t.
\EE
Using the new variables, we can write the OAMP iteration into the following factorized form (see Fig.~\ref{Fig:OAMP_diagram} for an illustration). 

\begin{definition}[General OAMP algorithm: factorized form]\label{Def:OAMP_factorized}
The factorized form OAMP algorithm proceeds as follows ($\forall t\in\mathbb{N}$):
\BS\label{Eqn:OAMP_app_full}
\begin{align}
\text{(Orthogonal transform)}\qquad &\bm{r}_t = \bm{U}_W^\UT\bm{m}_t,\quad \bm{s}_t=\bm{V}_W^\UT\bm{q}_t,\\[5pt]
\text{(Matrix denoising)}\qquad &\tilde{\bm{x}}_{t} = \psi_t(\bm{r}_t,\bm{D}\bm{s}_t|\bm{D}\bm{D}^\UT),\quad\tilde{\bm{z}}_t=\phi_t(\bm{s}_t,\bm{D}^\UT\bm{r}_t|\bm{D}^\UT\bm{D}),\\[5pt]
\text{(Orthogonal transform)}\qquad &\bm{x}_t = \bm{U}_W\tilde{\bm{x}}_t,\quad \bm{z}_t = \bm{V}_W\tilde{\bm{z}}_t,\\[5pt]
\text{(Iterate denoising)}\qquad &\bm{m}_{t+1}=m_{t+1}(\bm{x}_{\le t};\bm{a}),\quad \bm{q}_{t+1} = q_{t+1}(\bm{z}_{\le t};\bm{b}),
\end{align}
where $\psi_t$ and $\phi_t$ are defined as
\begin{align}
\psi_t(\bm{r}_t,\bm{D}\bm{s}_t|\bm{DD}^\UT) &\bydef \Psi_t(\bm{DD}^\UT)\bm{r}_t + \tilde{\Psi}_t(\bm{DD}^\UT)\bm{D}\bm{s}_t,\\
\phi_t(\bm{s}_t,\bm{D}^\UT\bm{r}_t|\bm{D}^\UT\bm{D})) &\bydef \Phi_t(\bm{D}^\UT\bm{D}))\bm{s}_t + \tilde{\Phi}_t(\bm{D}^\UT\bm{D})\bm{D}^\UT\bm{r}_t.
\end{align}
\ES
\end{definition}
\vspace{5pt}

\begin{figure}[htbp]
 \centering
 \includegraphics[width=0.65\textwidth]{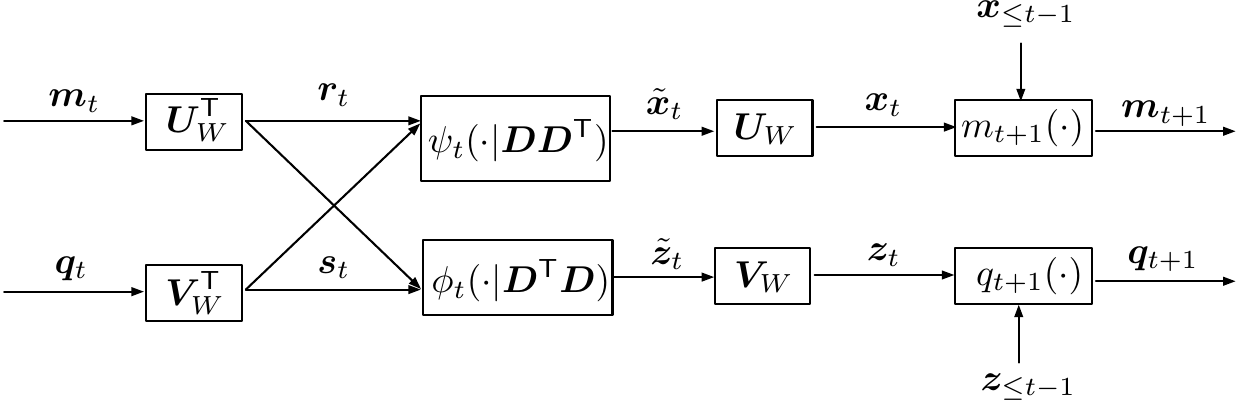}
 \caption{Diagram of the OAMP algorithm \eqref{Eqn:OAMP_app_full}. $m_t$ and $q_t$ are divergence-free. $\psi_t$ and $\phi_t$ are divergence-free with respect to the direct inputs (but not necessarily with respect to the cross input terms).}\label{Fig:OAMP_diagram}
 \end{figure}

\subsection{State Evolution of General OAMP Iteration}

To establish a high-dimensional asymptotic characterization of the OAMP algorithm, we impose the following assumptions.

\begin{assumption}\label{Ass:OAMP_app}
The following conditions hold for the factorized OAMP algorithm defined in Def.~\ref{Def:OAMP_factorized}:
\begin{itemize}
\item[(1)] The matrix $\bm{W}$ satisfies Assumption \ref{assump:main}-(c).
\item[(2)] The initialization vectors $\bm{m}_1\in\mathbb{R}^N$ and $\bm{q}_1\in\mathbb{R}^N$ are independent of $\bm{U}_W$ and $\bm{V}_W$, respectively. Moreover, $(\bm{m}_1,\bm{a})\overset{W}{\longrightarrow}(\mathsf{X}_1,\mathsf{A})$ and $(\bm{q}_1,\bm{b})\overset{W}{\longrightarrow}(\mathsf{Q}_1,\mathsf{B})$, where the limit random variables possess finite moments of all orders.
\item[(3)] The matrix denoisers $\psi_t$ and $\phi_t$ are continuous, and the iterate denoisers $m_t$  and $q_t$ are continuously differentiable and Lipschitz.
\end{itemize}
\end{assumption}

The iterates of the general OAMP algorithm introduced in Definition \ref{Def:OAMP_factorized} admit an exact asymptotic characterization via a state evolution. Before presenting the formal result, we first describe the corresponding state evolution recursion.

\begin{definition}[State evolution of OAMP: factorized form]\label{Def:SE}

Set $\paraB{\Omega}_1^{u}=\mathbb{E}[\mathsf{M}_1^2]$ and $\paraB{\Omega}_1^{v}=\mathbb{E}[\mathsf{Q}_1^2]$. Iterate the following steps for $t=1,2,\ldots$
\hfill
\begin{enumerate}[label=(\roman*)]
\item \textit{Gaussian random variables:}
\BS
\begin{align}
\mathsf{R}_{\le t}\sim\mathcal{N}(\bm{0},\paraB{\Omega}_t^u),\quad \mathsf{S}_{\le t}\sim\mathcal{N}(\bm{0},\paraB{\Omega}_t^v).
\end{align}
\item \textit{Matrix denoising:}
\BE
        \tilde{\mathsf X}_t =
          \psi_t\!\bigl(\mathsf R_t,\mathsf D_M \mathsf S_t|\mathsf{D}_M^2\bigr),
        \quad
        \tilde{\mathsf Z}_t =
          \phi_t\!\bigl(\mathsf S_t,\mathsf D_N \mathsf R_t|\mathsf{D}_N^2\bigr),
\EE
where $\mathsf{D}_M$, $\mathsf{R}_{\le t}$ and $\mathsf{S}_{\le t}$ are mutually independent. Moreover, $\mathsf D_N\explain{d}{=} \mathsf H\,\mathsf D_M$, where $\mathsf H\sim\text{Ber}(\delta)$ is independent of other random variables.
\item \textit{Covariance update:}
\BE\label{Eqn:app_SE_Sigma}
        \paraB{\Sigma}_t^{u}
          = \mathbb E\left[\tilde{\mathsf X}_{\le t}
                         \tilde{\mathsf X}_{\le t}^\UT\right],
        \quad
        \paraB{\Sigma}_t^{v}
          = \mathbb E\left[\tilde{\mathsf Z}_{\le t}
                         \tilde{\mathsf Z}_{\le t}^\UT\right].
\EE

\item \textit{Gaussian random variables:}
\begin{align}
\mathsf{X}_{\le t}\sim\mathcal{N}(\bm{0},\paraB{\Sigma}_t^u),\quad \mathsf{Z}_{\le t}\sim\mathcal{N}(\bm{0},\paraB{\Sigma}_t^v).
\end{align}

\item \textit{Iterate denoising:}
      \BE\label{Eqn:app_SE_M_Q}
        \mathsf M_{t+1} = m_{t+1}(\mathsf X_{\le t};\mathsf A),
        \quad
        \mathsf Q_{t+1} = q_{t+1}(\mathsf Z_{\le t};\mathsf B),
      \EE
where $\mathsf X_{\le t}\sim\mathcal{N}(\bm{0}_t,\paraB{\Sigma}_t^u)\indep (\mathsf{A},\mathsf{M}_1)$ and $\mathsf Z_{\le t}\sim\mathcal{N}(\bm{0}_t,\paraB{\Sigma}_t^v)\indep (\mathsf{B},\mathsf{Q}_1)$.
\item \textit{Covariance update:}
      \BE\label{Eqn:app_SE_Omega}
        \paraB{\Omega}_{t+1}^{u}
     =\mathbb E[\mathsf M_{\le t+1}\mathsf M_{\le t+1}^{\!\UT}],
        \quad
        \paraB{\Omega}_{t+1}^{v}
          = \mathbb E[\mathsf Q_{\le t+1}\mathsf Q_{\le t+1}^{\!\UT}].
      \EE
    \ES 
\end{enumerate}
In the above equations, with slight abuse of notations, $\psi_t$ and $\phi_t$ are defined as
\BS
\begin{align}
\psi_t\left(\mathsf{R}_t,\mathsf{D}_M\mathsf{S}_t|\mathsf{D}_M^2\right)&\;\bydef\;  \Psi_t(\mathsf{D}_M^2)\mathsf{R}_t + \tilde{\Psi}_t(\mathsf{D}_M^2)\mathsf{D}_M\mathsf{S}_t,\\
\phi_t(\mathsf{S}_t,\mathsf{D}_N \mathsf{R}_t|\mathsf{D}_N^2) &\;\bydef\; \Phi_t(\mathsf{D}_N^2)\mathsf{S}_t + \tilde{\Phi}_t(\mathsf{D}_N^2)\mathsf{D}_N\mathsf{R}_t.
\end{align}
\ES
\noindent
\end{definition}
\begin{remark}
Below are some remarks about Definition \ref{Def:SE}:
\begin{itemize}
\item Whenever the collections of random variables $\mathsf{R}_{\le t}$, $\mathsf{S}_{\le t}$ and $(\mathsf{D}_M,\mathsf{D}_N)$ appear jointly, they are understood as mutually independent. This independence is used in the definitions of the covariance matrices $\paraB{\Sigma}_t^u$ and $\paraB{\Sigma}_t^v$.
\item Explicit formulas for $\{\paraB{\Omega}_{t+1}^{u}[i,j],i\le j\}$ are given below:
\BS
\begin{align}
\paraB{\Omega}_{t+1}^u[i,j]=
\begin{cases}
\mathbb{E}[\mathsf{M}_1^2] & \text{ for } i=j=1,\\
\mathbb{E}[\mathsf{M}_1\cdot f_{j-1}(\mathsf{X}_{\le j-1};\mathsf{A})] & \text{ for } i=1,j>1,\\
\mathbb{E}[f_{i-1}(\mathsf{X}_{\le i-1};\mathsf{A})f_{j-1}(\mathsf{X}_{\le j-1};\mathsf{A})] & \text{ for } 1<i\le j\le t+1.\\
\end{cases}
\end{align}
\ES
In the second expectation, $(\mathsf{M}_1,\mathsf{A}) \indep \mathsf{X}_{\le t}$. Similar formulas apply analogously to $\paraB{\Omega}_{t+1}^v$.
\end{itemize}
\end{remark}

The orthogonality property stated below explains the term ``Orthogonal AMP” and serves as a key ingredient in the proof of Theorem \ref{Sec:OAMP_general_SE_proof}. This property arises from the divergence-free and trace-free constraints imposed on the OAMP denoisers.

\begin{lemma}[Orthogonality]\label{Lem:orthogonality}
Suppose that the covariance matrices $(\paraB{\Sigma}_t^u,\paraB{\Sigma}_t^v)_{t\in\mathbb{N}}$ in Definition \ref{Def:SE} are non-singular. Then, the state evolution random variables in Definition \ref{Def:SE} satisfy
\BS
\begin{align}
\mathbb{E}\left[\mathsf{R}_i\tilde{\mathsf{X}}_j\right] &=0,\quad \mathbb{E}\left[\mathsf{S}_i\tilde{\mathsf{Z}}_j\right] =0,\quad\forall i,j\in\mathbb{N},\\ 
\mathbb{E}\left[\mathsf{X}_i{\mathsf{M}}_{j}\right] &=0,\quad \mathbb{E}\left[\mathsf{Z}_i{\mathsf{Q}}_{j}\right] =0,\quad\forall i,j\in\mathbb{N}.
\end{align}
\ES
\end{lemma}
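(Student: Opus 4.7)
The plan is to reduce the four asserted identities to two elementary computations: one per ``layer'' of the state-evolution recursion. The first pair, $\mathbb{E}[\mathsf{R}_i\tilde{\mathsf{X}}_j]=0$ and $\mathbb{E}[\mathsf{S}_i\tilde{\mathsf{Z}}_j]=0$, will be handled by direct expansion using the trace-free constraint on $\Psi_t,\Phi_t$; the second pair, $\mathbb{E}[\mathsf{X}_i\mathsf{M}_j]=0$ and $\mathbb{E}[\mathsf{Z}_i\mathsf{Q}_j]=0$, will follow from Gaussian integration by parts together with the divergence-free constraint on $m_t,q_t$. By the symmetry between the $(\mathsf{R},\tilde{\mathsf{X}},\mathsf{X},\mathsf{M})$ and $(\mathsf{S},\tilde{\mathsf{Z}},\mathsf{Z},\mathsf{Q})$ chains in Definition~\ref{Def:SE}, it suffices to prove the two identities on one side.

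For the first identity, I would expand
\[
\tilde{\mathsf{X}}_j \;=\; \Psi_j(\mathsf{D}_M^2)\,\mathsf{R}_j \;+\; \widetilde{\Psi}_j(\mathsf{D}_M^2)\,\mathsf{D}_M\,\mathsf{S}_j,
\]
and then use the independence structure recorded in Definition~\ref{Def:SE}, namely $\mathsf{R}_{\le t}\indep\mathsf{S}_{\le t}\indep \mathsf{D}_M$, to factor expectations. The first term gives $\mathbb{E}[\Psi_j(\mathsf{D}_M^2)]\,\mathbb{E}[\mathsf{R}_i\mathsf{R}_j]=0$ by the trace-free hypothesis $\mathbb{E}[\Psi_t(\mathsf{D}_M^2)]=0$; the second gives $\mathbb{E}[\widetilde{\Psi}_j(\mathsf{D}_M^2)\mathsf{D}_M]\,\mathbb{E}[\mathsf{R}_i]\,\mathbb{E}[\mathsf{S}_j]=0$ by centeredness of $\mathsf{R},\mathsf{S}$. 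Note that no use of the non-singularity of $\paraB{\Sigma}_t^u,\paraB{\Sigma}_t^v$ is needed at this step.

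For the second identity, when $j=1$ the claim reduces to $\mathbb{E}[\mathsf{X}_i\mathsf{M}_1]=\mathbb{E}[\mathsf{X}_i]\,\mathbb{E}[\mathsf{M}_1]=0$ since $\mathsf{M}_1\indep\mathsf{X}_{\le t}$ by Definition~\ref{Def:SE} step (v). For $j\ge 2$, $\mathsf{M}_j=m_j(\mathsf{X}_{\le j-1};\mathsf{A})$ with $(\mathsf{X}_{\le t})\indep \mathsf{A}$ centered Gaussian. Conditioning on $\mathsf{A}$ and applying Stein's lemma (the Gaussian integration-by-parts formula, which is justified by the Lipschitz assumption on $m_j$ in Assumption~\ref{Ass:OAMP_app}(3)), I obtain
\[
\mathbb{E}\!\left[\mathsf{X}_i\,m_j(\mathsf{X}_{\le j-1};\mathsf{A})\right]
\;=\; \sum_{k=1}^{j-1}\mathbb{E}[\mathsf{X}_i\mathsf{X}_k]\,\mathbb{E}\!\left[\partial_k m_j(\mathsf{X}_{\le j-1};\mathsf{A})\right] \;=\; 0,
\]
where the last equality uses the divergence-free property $\mathbb{E}[\partial_k m_j(\mathsf{X}_{\le j-1};\mathsf{A})]=0$ for every $k<j$. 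The analogous identity for $\mathsf{Z}_i\mathsf{Q}_j$ follows verbatim with $(\phi_t,\Phi_t,q_t)$ in place of $(\psi_t,\Psi_t,m_t)$.

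The main (mild) subtlety is simply verifying that the trace-free and divergence-free conditions, stated for $(\mathsf{D}_M,\mathsf{D}_N)$ drawn from $\mu,\widetilde\mu$ and $(\mathsf{X}_{\le t-1},\mathsf{Z}_{\le t-1};\mathsf{A},\mathsf{B})$ drawn from the state-evolution laws, are precisely the right expectations to cancel the two terms produced above; once this bookkeeping is in place, there is no further obstacle. I suspect the non-singularity hypothesis on $(\paraB{\Sigma}_t^u,\paraB{\Sigma}_t^v)$ is imposed in the lemma mainly to guarantee that $\mathsf{X}_{\le t}$ has a bona fide density (so that Stein's lemma can be invoked in its classical form), and not because it is strictly required for the algebraic identities themselves.
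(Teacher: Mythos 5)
Your proposal is correct and follows essentially the same route as the paper's proof: the first pair of identities is obtained by expanding $\tilde{\mathsf{X}}_j$, factoring expectations via the independence of $\mathsf{R}$, $\mathsf{S}$, and $\mathsf{D}_M$, and invoking the trace-free condition together with zero means, while the second pair splits on $j=1$ (independence) versus $j>1$ (multivariate Stein's lemma plus the divergence-free condition). Your side observation that the non-singularity hypothesis is not actually needed for these algebraic identities is also consistent with the paper, which likewise never invokes it in this proof.
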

\begin{proof}
For $\mathbb{E}\left[\mathsf{R}_i\tilde{\mathsf{X}}_j\right]=0$, we substitute the definition of $\tilde{\mathsf{X}}_j$:
\BS
\begin{align*}
\mathbb{E}\left[\mathsf{R}_i\tilde{\mathsf{X}}_j\right] &=\mathbb{E}\left[\mathsf{R}_i\left(\Psi_j(\mathsf{D}_M^2)\mathsf{R}_j + \tilde{\Psi}_j(\mathsf{D}_M^2)\mathsf{D}_M\mathsf{S}_j\right)\right],\quad\forall i,j\in\mathbb{N},\\
&\explain{a}{=}\mathbb{E}\left[\mathsf{R}_i\mathsf{R}_j\right]\cdot\mathbb{E}\left[\Psi_j(\mathsf{D}_M^2)\right]+\mathbb{E}[\mathsf{R}_i]\cdot \mathbb{E}[\mathsf{S}_j]\cdot\mathbb{E}\left[\Psi_j(\mathsf{D}_M^2)\mathsf{D}_M\right]\\
&\explain{b}{=}0,
\end{align*}
\ES
where step (a) is due to the independence of $(\mathsf{R}_i,\mathsf{R}_j)$, $\mathsf{S}_j$ and $\mathsf{D}_M$, and step (b) is due to the trace-free property $\mathbb{E}\left[\Psi_i(\mathsf{D}_M^2)\right]=0$ and the fact that $\mathsf{R}_i,\mathsf{S}_j$ have zero mean. The term $\mathbb{E}\left[\mathsf{X}_i{\mathsf{M}}_{j}\right]$ can be computed as
\BS
\begin{align*}
\mathbb{E}\left[\mathsf{X}_i{\mathsf{M}}_{j}\right] &=
\begin{cases}
\mathbb{E}\left[\mathsf{X}_i\right] \cdot\mathbb{E}\left[{\mathsf{M}}_{1}\right]\explain{a}{=}0, & \text{if } j=1,\\
\sum_{k=1}^j \mathbb{E}\left[\mathsf{X}_i\mathsf{X}_k\right] \cdot \mathbb{E}\left[\partial_k m_{j}(\mathsf{X}_{\le j-1};\mathsf{A})\right]\explain{b}{=}0, & \text{if } j>1,
\end{cases}
\end{align*}
\ES
where step (a) is due to the independence between $\mathsf{X}_i$ and $\mathsf{M}_1$, and step (b) is a consequence of the multivariate Stein's lemma and the divergence-free properties of $m_{j}(\cdot)$. The other two properties can be proved analogously and omitted for brevity.
\end{proof}

The following theorem shows that, under Assumption \ref{Ass:OAMP_app}, the performance of the OAMP algorithm is governed by the corresponding state evolution equations. All convergence statements are understood in the limit $M,N\to\infty$ with $M/N\to\delta\in(0,1]$. Its proof is deferred to Section \ref{Sec:OAMP_general_SE_proof}.

\begin{theorem}[State evolution characterization of OAMP: factorized form]\label{The:SE_app}
Consider the OAMP algorithm in Definition \ref{Def:OAMP_factorized} with initialization $\bm{m}_1\in\mathbb{R}^M$ and $\bm{q}_t\in\mathbb{R}^N$. Suppose Assumption \ref{Ass:OAMP_app} holds. Let the covariance matrices $(\paraB{\Omega}_t^u,\paraB{\Omega}_t^v,\paraB{\Sigma}_{t+1}^u,\paraB{\Sigma}_{t+1}^v)_{t\in\mathbb{N}}$ be defined as in Definition \ref{Def:SE}. {Assume additionally that these covariance matrices are non-singular for all fixed $t\in\mathbb{N}$.} The following hold for all fixed $t\in\mathbb{N}$:
\begin{itemize}
\item[(a)]
\BS\label{Th:main_a}
\begin{align}
\left(\bm{r}_{\le t},\bm{D}{\bm{s}}_{\le t},\bm{d}_M\right) &\overset{W}{\longrightarrow}\left(\mathsf{R}_{\le t},\mathsf{D}_M{\mathsf{S}}_{\le t},\mathsf{D}_M\right),\\
\left(\bm{s}_{\le t},\bm{D}^\UT{\bm{z}}_{\le t},\bm{d}_N\right) &\overset{W}{\longrightarrow}\left(\mathsf{S}_{\le t},\mathsf{D}_N{\mathsf{Z}}_{\le t},\mathsf{D}_N\right),
\end{align}
\ES
where $\mathsf{R}_{\le t}\sim\mathcal{N}(\bm{0}_t,\paraB{\Omega}_t^u)$, $\mathsf{S}_{\le t}\sim\mathcal{N}(\bm{0}_t,\paraB{\Omega}_t^v)$; and $\mathsf{R}_{\le t}$, $\mathsf{S}_{\le t}$ and $(\mathsf{D}_M,\mathsf{D}_N)$ are mutually independent.
\item[(b)]
\BS\label{Th:main_b}
\begin{align}
\left(\bm{r}_{\le t},\tilde{\bm{x}}_{\le t},\bm{d}_M\right) &\overset{W}{\longrightarrow}\left(\mathsf{R}_{\le t},\tilde{\mathsf{X}}_{\le t},\mathsf{D}_M\right),\\
\left(\bm{s}_{\le t},\tilde{\bm{z}}_{\le t},\bm{d}_N\right) &\overset{W}{\longrightarrow}\left(\mathsf{S}_{\le t},\tilde{\mathsf{Z}}_{\le t},\mathsf{D}_N\right),
\end{align}
\ES
where $\mathsf{R}_{\le t}\sim\mathcal{N}(\bm{0}_t,\paraB{\Omega}_t^u)$, $\mathsf{S}_{\le t}\sim\mathcal{N}(\bm{0}_t,\paraB{\Omega}_t^v)$; $\tilde{\mathsf{X}}_t=\psi_t(\mathsf{R}_t,\mathsf{D}_M\mathsf{S}_t)$, $\tilde{\mathsf{Z}}_t=\phi_t(\mathsf{S}_t,\mathsf{D}_N\mathsf{R}_t)$ with $\mathsf{R}_{\le t}$, $\mathsf{S}_{\le t}$ and $(\mathsf{D}_M,\mathsf{D}_N)$ mutually independent.
\item[(c)]
\BS\label{Th:main_c}
\begin{align}
\left(\bm{x}_{\le t},\bm{m}_{\le t+1};\bm{a}\right) &\overset{W}{\longrightarrow}\left(\mathsf{X}_{\le t},{\mathsf{M}}_{\le t+1},\mathsf{A}\right),\\
\left(\bm{z}_{\le t},{\bm{q}}_{\le t+1};\bm{b}\right) &\overset{W}{\longrightarrow}\left(\mathsf{Z}_{\le t},{\mathsf{Q}}_{\le t+1};\mathsf{B}\right),
\end{align}
\ES
where $\mathsf{X}_{\le t}\sim\mathcal{N}(\bm{0}_t,\paraB{\Sigma}_t^u)$, $\mathsf{M}_{t+1}=m_{t+1}(\mathsf{X}_{\le t};\mathsf{A})$, $\mathsf{Z}_{\le t}\sim\mathcal{N}(\bm{0}_t,\paraB{\Sigma}_t^v)$, $\mathsf{Q}_{t+1}=q_{t+1}(\mathsf{Z}_{\le t};\mathsf{B})$. Moreover, $\mathsf{X}_{\le t}\indep (\mathsf{M}_1,\mathsf{A})$, and $\mathsf{Z}_{\le t}\indep (\mathsf{Q}_1,\mathsf{B})$.
\end{itemize}
\end{theorem}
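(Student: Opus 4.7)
The plan is to prove (a)--(c) by joint induction on $t$, using the canonical Haar-conditioning technique adapted to bi-rotationally-invariant noise, as in \cite{fan2022approximate,rangan2019vector,takeuchi2019rigorous}. The key observation is that the iterates depend on $\bm W=\bm U_W\bm D\bm V_W^\UT$ only through the orthogonal transforms $\bm r_s=\bm U_W^\UT\bm m_s$ and $\bm s_s=\bm V_W^\UT\bm q_s$, so conditioning on the filtration
\[
\mathcal{F}_t \,\bydef\, \sigma\bigl(\bm m_{\le t+1},\bm q_{\le t+1},\bm r_{\le t},\bm s_{\le t},\bm D,\bm a,\bm b\bigr)
\]
constrains $(\bm U_W,\bm V_W)$ to an affine sub-manifold of $\ortho(M)\times\ortho(N)$ on which the remaining uncertainty is genuinely Haar. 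Assumption~\ref{Ass:OAMP_app}(1) ensures this constraint is low-rank so that the residual directions are high-dimensional, and the non-singularity hypothesis on $\paraB{\Sigma}_t^u,\paraB{\Sigma}_t^v,\paraB{\Omega}_t^u,\paraB{\Omega}_t^v$ guarantees the QR-type reduction that implements the conditioning is well-defined.

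\textbf{Base case.} At $t=1$, $(\bm m_1,\bm q_1)$ are independent of $(\bm U_W,\bm V_W)$ by Assumption~\ref{Ass:OAMP_app}(2), so $\bm r_1=\bm U_W^\UT\bm m_1$ and $\bm s_1=\bm V_W^\UT\bm q_1$ are Haar rotations of a fixed (up to scaling) vector. Standard Haar quadratic-form concentration, in the form of \cite[Proposition~E.2]{fan2022approximate}, combined with the asymptotic orthogonality between $\bm m_1,\bm q_1$ and $\bm U_W\bm d_M,\bm V_W\bm d_N$, gives the joint Wasserstein limit in (a). Part (b) then follows from continuous mapping applied to $\psi_1,\phi_1$ after polynomial approximation on $\supp(\mu)$ (enabled by the H\"older continuity and compact support in Assumption~\ref{assump:main}(c)); part (c) follows from the orthogonal transformation $\bm x_1=\bm U_W\tilde{\bm x}_1$ and the Lipschitz assumption on $m_2,q_2$.

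\textbf{Inductive step.} Assuming (a)--(c) up to iteration $t$, condition on $\mathcal{F}_t$. The conditional law of $\bm U_W$ given $\{\bm U_W^\UT\bm m_s=\bm r_s\}_{s\le t}$ admits the representation
\[
\bm U_W \stackrel{d}{=} \bm U_W^{\|} \;+\; \bm P_{\bm m}^\perp\,\widetilde{\bm O}_M\,\bm P_{\bm r}^\perp,
\]
where $\bm U_W^{\|}$ is $\mathcal F_t$-measurable, $\bm P_{\bm m}^\perp,\bm P_{\bm r}^\perp$ are projections onto the orthogonal complements of $\Span(\bm m_{\le t})$ and $\Span(\bm r_{\le t})$, and $\widetilde{\bm O}_M$ is Haar on the reduced orthogonal group and independent of $\mathcal F_t$. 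An analogous decomposition holds for $\bm V_W$. Substituting into $\bm r_{t+1}=\bm U_W^\UT\bm m_{t+1}$ splits it into a \emph{parallel} piece, given by a linear combination of $\bm r_{\le t}$ whose coefficients are inner products of $\bm m_{t+1}$ with the Gram--Schmidt basis of $\bm m_{\le t}$, and a \emph{perpendicular} piece, which is Haar-distributed in the residual subspace of dimension $M-t$. The inductive hypothesis (c) together with Stein's lemma and the divergence-free condition on $m_{t+1}$ forces the coefficients of the parallel piece to vanish almost surely; the perpendicular piece converges in Wasserstein sense to a centered Gaussian with variance $\|\bm P_{\bm m}^\perp \bm m_{t+1}\|^2/M$, which coincides with the correct entry of $\paraB{\Omega}_{t+1}^u$. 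Combining the two channels and repeating for $\bm s_{t+1}$ establishes (a) at step $t+1$; (b) follows by continuity of $\psi_{t+1},\phi_{t+1}$ and convergence of $\bm d_M,\bm d_N$ to $\mu,\widetilde\mu$; (c) follows from the orthogonal transformation back to $\bm x_{t+1},\bm z_{t+1}$ and Lipschitz continuity of $m_{t+2},q_{t+2}$.

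\textbf{Main obstacle.} The principal difficulty is the cross-channel coupling: the matrix denoiser $\psi_t$ takes $\bm D\bm s_t$ as auxiliary input, and $\bm s_t$ involves $\bm V_W$, so after conditioning the joint law of $(\bm r_{t+1},\bm s_{t+1})$ inherits dependencies spanning both Haar factors that must be tracked through a coupled decomposition. A related subtlety is that the trace-free condition on $\Psi_t,\Phi_t$ must be leveraged together with the divergence-free condition on $m_t,q_t$ to ensure that both the ``diagonal'' Onsager corrections (arising from the same-channel action of $\bm U_W,\bm V_W$) and the ``off-diagonal'' corrections (arising from $\bm D\bm s_t$ and $\bm D^\UT\bm r_t$) cancel asymptotically. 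Finally, the passage from polynomial $\psi_t,\phi_t$ to continuous ones relies on uniform integrability supplied by the uniformly bounded $\|\bm W\|_{\op}$ and the finite-moment assumption on $(\mathsf X_1,\mathsf Q_1,\mathsf A,\mathsf B)$, which propagates through the Lipschitz iterate denoisers to all subsequent iterates.
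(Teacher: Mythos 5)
Your overall architecture (joint induction, Haar conditioning via the Rangan--Schniter--Fletcher/Takeuchi representation, continuous mapping for the matrix-denoising step) matches the paper's proof, and your base case is essentially correct. However, the inductive step contains two connected errors that would produce the wrong covariance structure.

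First, your conditional decomposition of $\bm U_W$ uses projections onto $\Span(\bm m_{\le t})$ and $\Span(\bm r_{\le t})$ only. But the filtration generated by the iterates also reveals the \emph{reverse} constraints $\bm x_s=\bm U_W\tilde{\bm x}_s$ for $s\le t$ (since $\bm m_{t+1}$ is computed from $\bm x_{\le t}$), so the correct conditional law constrains $\bm U_W$ on the $2t$-dimensional subspaces $\Span(\bm m_{\le t},\bm x_{\le t})$ and $\Span(\bm r_{\le t},\tilde{\bm x}_{\le t})$; this is exactly the representation in \eqref{eq:conditioning-haar}. Omitting the $\bm x$-constraints means you are not computing the conditional distribution of $\bm r_{t+1}$ given the algorithm's actual history, and the resulting parallel component is missing its $\tilde{\bm X}_t$-block.

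Second, and more consequentially, your claim that ``the coefficients of the parallel piece vanish almost surely'' is false. The divergence-free condition, via Stein's lemma, yields $\E[\mathsf X_i\mathsf M_{t+1}]=0$ (orthogonality of $\bm m_{t+1}$ to the past $\bm x$-iterates), so only the $\tilde{\bm X}_t$-block of the parallel part vanishes. The inner products $\tfrac1M\langle\bm m_s,\bm m_{t+1}\rangle\to\E[\mathsf M_s\mathsf M_{t+1}]$ are nonzero in general, so the surviving parallel component is $[\bm R_t](\paraB{\Omega}_t^u)^{-1}\paraB{\omega}_{t+1}^u$ as in \eqref{Eqn:r_para_app}; it is precisely this term that makes $\mathsf R_{t+1}$ correlated with $\mathsf R_{\le t}$ so that $\mathsf R_{\le t+1}\sim\mathcal N(\bm 0,\paraB{\Omega}_{t+1}^u)$ with the correct off-diagonal entries $\E[\mathsf M_s\mathsf M_{t+1}]$. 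Under your version the limit of $\bm r_{t+1}$ would be independent of $\bm r_{\le t}$ with variance equal to the Schur complement $\E[\mathsf M_{t+1}^2]-(\paraB{\omega}_{t+1}^u)^\UT(\paraB{\Omega}_t^u)^{-1}\paraB{\omega}_{t+1}^u$ rather than $\E[\mathsf M_{t+1}^2]$, contradicting the stated covariance $\paraB{\Omega}_{t+1}^u$ whenever the iterate denoisers produce correlated outputs across iterations. The roles of the two constraints should be: trace-freeness of $\Psi_t,\Phi_t$ gives $\E[\mathsf R_i\tilde{\mathsf X}_j]=0$, divergence-freeness of $m_t,q_t$ gives $\E[\mathsf X_i\mathsf M_j]=0$ (Lemma~\ref{Lem:orthogonality}); together these decouple the $\bm R$- and $\tilde{\bm X}$-blocks of the Gram matrix and kill the cross terms, but they do not annihilate the within-block parallel component.
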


\vspace{5pt}

\begin{remark}[Additional asymptotic independence]
The proof of Theorem~\ref{The:SE_app} actually shows that for each fixed $t$,
\[
(\bm{r}_t,\tilde{\bm{x}}_t,\bm{x}_t,\bm{m}_{t+1})
\;\overset{W}{\longrightarrow}\;
(\mathsf{R}_t,\tilde{\mathsf{X}}_t,\mathsf{X}_t,\mathsf{M}_{t+1}),
\]
with $(\mathsf{R}_t,\tilde{\mathsf{X}}_t)\indep(\mathsf{X}_t,\mathsf{M}_{t+1})$, and analogously for $(\bm{s}_t,\tilde{\bm{z}}_t,\bm{z}_t,\bm{q}_{t+1})$. This refinement is not used in this paper and is therefore omitted from Theorem~\ref{The:SE_app}.
\end{remark}

Theorem \ref{The:SE_app} relies on the assumption that the covariance matrices 
$(\paraB{\Omega}_t^u,\paraB{\Omega}_t^v,\paraB{\Sigma}_{t+1}^u,\paraB{\Sigma}_{t+1}^v)_{t\in\mathbb{N}}$ are non-singular, which can be hard to check. However, by the perturbation argument in \cite{berthier2020state} (see also \cite[Corollary~4.4]{fan2022approximate} and \cite[Appendix~D.1]{dudeja2024spectral}), one may drop this assumption at the cost of weakening convergence from $W_p$ for all $p\ge1$ to $W_2$. We state this result in the next corollary; its proof is almost identical to that of \cite[Corollary.~4.4]{fan2022approximate} and is omitted.

\begin{corollary}[Removing non‐degeneracy assumption]\label{Cor:app_degeneracy}
Under Assumption~\ref{Ass:OAMP_app}, the statements \eqref{Th:main_a}–\eqref{Th:main_c} continue to hold when convergence is replaced by $W_2$.
\end{corollary}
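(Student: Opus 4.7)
}
The plan is the standard perturbation-to-non-degeneracy argument of \cite{berthier2020state}, as adapted to OAMP-type recursions in \cite[Corollary~4.4]{fan2022approximate} and \cite[Appendix~D.1]{dudeja2024spectral}. The idea is to construct, for each $\epsilon>0$, a perturbed OAMP recursion whose state evolution covariance matrices are automatically non-singular, apply Theorem~\ref{The:SE_app} to the perturbed system to obtain $W_p$ convergence for all $p\ge1$, and then let $\epsilon\downarrow 0$ to deduce $W_2$ convergence of the original iterates. Fixing $W_2$ rather than $W_p$ in the conclusion is what makes the limiting step work: it is precisely the topology in which Lipschitz continuity of the denoisers translates a small second-moment perturbation into a small distance between iterates.

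First, I would build the perturbed algorithm. A convenient choice is to replace the iterate denoisers $m_t,q_t$ by
\[
m_t^{\epsilon}(\bm{x}_{\le t-1};\bm{a},\bm{g}_{\le t})
 \bydef m_t(\bm{x}_{\le t-1};\bm{a})+\epsilon\,\bm{g}_t,
\qquad
q_t^{\epsilon}(\bm{z}_{\le t-1};\bm{b},\bm{h}_{\le t})
 \bydef q_t(\bm{z}_{\le t-1};\bm{b})+\epsilon\,\bm{h}_t,
\]
where $(\bm{g}_t,\bm{h}_t)_{t\ge 1}$ are fresh i.i.d.\ standard Gaussian vectors, independent of all other randomness and treated as additional side information (so the divergence-free condition is preserved). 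I would then verify that the perturbed recursion still satisfies Assumption~\ref{Ass:OAMP_app}: the perturbed denoisers remain Lipschitz and continuously differentiable, the Wasserstein convergence of the augmented side information is immediate, and the trace-free/divergence-free conditions are untouched. The associated state evolution covariances $\paraB{\Omega}_t^{u,\epsilon},\paraB{\Omega}_t^{v,\epsilon},\paraB{\Sigma}_t^{u,\epsilon},\paraB{\Sigma}_t^{v,\epsilon}$ satisfy
\[
\paraB{\Omega}_t^{u,\epsilon}\succeq \epsilon^2\mI_t,\qquad
\paraB{\Omega}_t^{v,\epsilon}\succeq \epsilon^2\mI_t,
\]
and pushing this through the matrix-denoising step yields analogous strict lower bounds on $\paraB{\Sigma}_t^{u,\epsilon},\paraB{\Sigma}_t^{v,\epsilon}$ (after possibly adding an analogous $\epsilon$-bump inside the spectral denoisers if needed to control degeneracies coming from $\Psi_t,\Phi_t$). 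Hence Theorem~\ref{The:SE_app} applies to the perturbed iterates $(\bm{x}_t^{\epsilon},\bm{z}_t^{\epsilon})$ and produces $W_p$ convergence for all $p\ge 1$ to state evolution variables $(\mathsf{X}_{\le t}^{\epsilon},\mathsf{Z}_{\le t}^{\epsilon})$ with non-singular covariances.

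Next, I would run an induction on $t$ to show two closeness statements: (i) on the algorithmic side, $\tfrac{1}{M}\|\bm{x}_t^{\epsilon}-\bm{x}_t\|^2\to 0$ and $\tfrac{1}{N}\|\bm{z}_t^{\epsilon}-\bm{z}_t\|^2\to 0$ in probability as $M,N\to\infty$ followed by $\epsilon\to 0$; (ii) on the SE side, $(\mathsf{X}_{\le t}^{\epsilon},\mathsf{Z}_{\le t}^{\epsilon})\xrightarrow{W_2}(\mathsf{X}_{\le t},\mathsf{Z}_{\le t})$ as $\epsilon\to 0$. Both are driven by the Lipschitz property of $m_t,q_t,\psi_t,\phi_t$ together with the uniform operator-norm bound on $\bm{W}$ from Assumption~\ref{assump:main}(c), which controls how the $\epsilon$-noise propagates through one step of the recursion (the spectral denoisers are bounded on $\supp(\mu)$ by continuity, and $\|\bm{W}\|_{\op}$ is uniformly bounded). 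Combining (i) and (ii) with the triangle inequality in $W_2$,
\[
d_{W_2}\!\bigl((\bm{x}_{\le t},\bm{m}_{\le t+1};\bm{a}),\,(\mathsf{X}_{\le t},\mathsf{M}_{\le t+1};\mathsf{A})\bigr)
 \le (\text{pert.\ gap in }M,N\text{-limit})+(\text{SE gap as }\epsilon\to 0),
\]
and sending $M,N\to\infty$ first then $\epsilon\to 0$, gives the desired $W_2$ statements in \eqref{Th:main_a}--\eqref{Th:main_c}.

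The main obstacle I anticipate is step (i): showing that the perturbation gap is controllable \emph{uniformly} in $M,N$ at every iteration. The difficulty is that the recursion composes a Lipschitz map with multiplication by $\bm{W}$ and by the spectral denoisers $\Psi_t(\bm{W}\bm{W}^\UT),\tilde\Psi_t(\bm{W}\bm{W}^\UT),\Phi_t(\bm{W}^\UT\bm{W}),\tilde\Phi_t(\bm{W}^\UT\bm{W})$, so an $O(\epsilon)$ input perturbation can a priori be amplified by operator-norm factors that accumulate over $t$ steps. To handle this I would use continuity of $\Psi_t,\tilde\Psi_t,\Phi_t,\tilde\Phi_t$ on $\supp(\mu)$ and the uniform bound $\|\bm W\|_{\op}\le C$ to obtain a deterministic, dimension-free constant $K_t$ for which $\tfrac{1}{M}\|\bm{x}_t^{\epsilon}-\bm{x}_t\|^2\le K_t\epsilon^2$ eventually almost surely, and similarly for $\bm{z}_t^{\epsilon}$; iterating in $t$ (which is fixed) keeps $K_t$ finite. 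Once this uniform control is in place, the rest of the argument is the routine Berthier--Montanari--Nguyen interpolation between the perturbed and unperturbed SE systems.
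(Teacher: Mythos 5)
Your proposal is correct and follows essentially the same route as the paper, which omits the proof and defers to the perturbation argument of Berthier et al.\ as implemented in \cite[Corollary~4.4]{fan2022approximate} and \cite[Appendix~D.1]{dudeja2024spectral} — exactly the $\epsilon$-Gaussian regularization, application of Theorem~\ref{The:SE_app} to the non-degenerate perturbed system, and $\epsilon\downarrow 0$ limit in $W_2$ that you describe. Your explicit acknowledgment that the perturbation must also neutralize degeneracies introduced at the matrix-denoising step (not only at the iterate-denoiser level) is the one non-trivial point, and you handle it correctly.
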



Finally, the state evolution associated with the factorized-form OAMP (Definition \ref{Def:OAMP_factorized}) can be transformed back to the original OAMP algorithm (Definition \ref{def: Noise OAMP}). For ease of reference, we record it below.

\begin{theorem}[State evolution characterization of general OAMP: original form]\label{Th:OAMP_general_SE_app}
Consider the OAMP algorithm in Definition \ref{def: Noise OAMP} with initialization $\bm{m}_1:=m_1(\bm{x}_0;\bm{a})$ and $\bm{q}_1:= q_1(\bm{z}_0;\bm{b})$. Suppose Assumption \ref{Ass:OAMP_app} holds. The following hold for all fixed $t\in\mathbb{N}$:
\BS
\begin{align}
\left(\bm{x}_1,\ldots,\bm{x}_t,\bm{a}\right)\overset{W_2}{\longrightarrow}\left(\mathsf{X}_1,\ldots,\mathsf{X}_t,\mathsf{A}\right),\\
\left(\bm{z}_1,\ldots,\bm{z}_t,\bm{b}\right)\overset{W_2}{\longrightarrow}\left(\mathsf{Z}_1,\ldots,\mathsf{Z}_t,\mathsf{B}\right),
\end{align}
where $(\mathsf{X}_1,\ldots,\mathsf{X}_t)\sim\mathcal{N}(\bm{0},\paraB{\Sigma}_t^u)$ is independent of $\mathsf{A}$, $(\mathsf{Z}_1,\ldots,\mathsf{Z}_t)\sim\mathcal{N}(\bm{0},\paraB{\Sigma}_t^v)$ is independent of $\mathsf{B}$. Define $\Sigma_{u,st}  \bydef \mathbb{E}\left[\mathsf{X}_s\mathsf{X}_t\right]$ and $\Sigma_{v,st}  \bydef \mathbb{E}\left[\mathsf{Z}_s\mathsf{Z}_t\right]$, for $s,t\in\mathbb{N}$. Then,
\begin{align}
\Sigma_{u,st} &=\mathbb{E}\left[\Psi_s(\mathsf{D}_M^2)\Psi_t(\mathsf{D}_M^2)\right]\cdot \mathbb{E}\left[\mathsf{M}_s\mathsf{M}_t\right]+\mathbb{E}\left[\tilde{\Psi}_s(\mathsf{D}_M^2)\tilde{\Psi}_t(\mathsf{D}_M^2)\mathsf{D}_M^2\right]\cdot \mathbb{E}\left[\mathsf{Q}_s\mathsf{Q}_t\right],\\
\Sigma_{v,st} &=\mathbb{E}\left[\Phi_s(\mathsf{D}_N^2)\Phi_t(\mathsf{D}_N^2)\right]\cdot \mathbb{E}\left[\mathsf{Q}_s\mathsf{Q}_t\right]+\mathbb{E}\left[\tilde{\Phi}_s(\mathsf{D}_N^2)\tilde{\Phi}_t(\mathsf{D}_N^2)\mathsf{D}_N^2\right]\cdot \mathbb{E}\left[\mathsf{M}_s\mathsf{M}_t\right],
\end{align}
where the random variables $(\mathsf{D}_M,\mathsf{D}_N)$ are defined as in Definition \ref{Def:SE}, and $\forall s\in\mathbb{N}$
\begin{align}
\mathsf{M}_{s+1} &\bydef m_{s+1}(\mathsf{X}_1,\ldots,\mathsf{X}_{s};\mathsf{A}),\\
\mathsf{Q}_{s+1} &\bydef q_{s+1}(\mathsf{Z}_1,\ldots,\mathsf{Z}_{s};\mathsf{B}).
\end{align}
\ES
\end{theorem}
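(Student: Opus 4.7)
The plan is to derive Theorem~\ref{Th:OAMP_general_SE_app} as a direct consequence of the factorized-form state evolution (Theorem~\ref{The:SE_app}) together with Corollary~\ref{Cor:app_degeneracy}; the only additional work is to verify that the SE covariances admit the closed-form expressions stated here. Writing $\bm{W}=\bm{U}_W\bm{\Sigma}\bm{V}_W^\UT$ and setting $\tilde{\bm{x}}_t\bydef\bm{U}_W^\UT\bm{x}_t$, $\tilde{\bm{z}}_t\bydef\bm{V}_W^\UT\bm{z}_t$, $\bm{r}_t\bydef\bm{U}_W^\UT\bm{m}_t$, $\bm{s}_t\bydef\bm{V}_W^\UT\bm{q}_t$, the spectral identity $\Psi_t(\bm{W}\bm{W}^\UT)=\bm{U}_W\Psi_t(\bm{\Sigma}^2)\bm{U}_W^\UT$ (and its analogues for $\tilde{\Psi}_t,\Phi_t,\tilde{\Phi}_t$) rewrites Definition~\ref{def: Noise OAMP} verbatim as Definition~\ref{Def:OAMP_factorized}, and the trace-free and divergence-free constraints transfer under this orthogonal change of variables because they depend only on the limiting spectrum of $\bm{W}$ and on the limiting laws of the iterates. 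Since $\bm{x}_t=\bm{U}_W\tilde{\bm{x}}_t$ and $\bm{z}_t=\bm{V}_W\tilde{\bm{z}}_t$ denote the same vectors in both formulations, part (c) of Theorem~\ref{The:SE_app}---or, to drop the non-singularity hypothesis at the cost of weakening $W$ to $W_2$, its perturbation version in Corollary~\ref{Cor:app_degeneracy}---immediately yields the claimed Wasserstein-2 convergence of $(\bm{x}_{\le t};\bm{a})$ and $(\bm{z}_{\le t};\bm{b})$.

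For the covariance formulas, Definition~\ref{Def:SE}(iii)--(iv) identifies $\Sigma_{u,st}$ with $\mathbb{E}[\tilde{\mathsf{X}}_s\tilde{\mathsf{X}}_t]$, where $\tilde{\mathsf{X}}_t=\Psi_t(\mathsf{D}_M^2)\mathsf{R}_t+\tilde{\Psi}_t(\mathsf{D}_M^2)\mathsf{D}_M\mathsf{S}_t$ and $\mathsf{R}_{\le t},\mathsf{S}_{\le t},\mathsf{D}_M$ are mutually independent. Expanding the product and using that $\mathsf{R}$ and $\mathsf{S}$ are zero-mean and independent kills the two cross terms, leaving
\[
\Sigma_{u,st}=\mathbb{E}\bigl[\Psi_s(\mathsf{D}_M^2)\Psi_t(\mathsf{D}_M^2)\bigr]\,\mathbb{E}[\mathsf{R}_s\mathsf{R}_t]+\mathbb{E}\bigl[\tilde{\Psi}_s(\mathsf{D}_M^2)\tilde{\Psi}_t(\mathsf{D}_M^2)\mathsf{D}_M^2\bigr]\,\mathbb{E}[\mathsf{S}_s\mathsf{S}_t],
\]
and step (vi) of Definition~\ref{Def:SE} identifies $\mathbb{E}[\mathsf{R}_s\mathsf{R}_t]=\paraB{\Omega}_t^u[s,t]=\mathbb{E}[\mathsf{M}_s\mathsf{M}_t]$ and $\mathbb{E}[\mathsf{S}_s\mathsf{S}_t]=\mathbb{E}[\mathsf{Q}_s\mathsf{Q}_t]$, yielding the stated identity; the formula for $\Sigma_{v,st}$ follows by the symmetric computation applied to $\tilde{\mathsf{Z}}_t$.

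The substantive probabilistic content---Haar-measure conditioning on $\bm{U}_W,\bm{V}_W$, inductive Gaussianization of the iterates, and the orthogonality relations of Lemma~\ref{Lem:orthogonality}---is entirely encapsulated in the proof of Theorem~\ref{The:SE_app} (Appendix~\ref{Sec:OAMP_general_SE_proof}); granted that theorem and Corollary~\ref{Cor:app_degeneracy}, the present proof is pure bookkeeping, and no separate obstacle arises. The only point requiring care is the identification of the trace-free and divergence-free conditions across the two formulations, which is immediate because the spectra of $\Psi_t(\bm{W}\bm{W}^\UT)$ and $\Psi_t(\bm{\Sigma}^2)$ coincide and because the limiting laws of $(\bm{m}_t,\bm{a})$ and $(\bm{q}_t,\bm{b})$ are invariant under the orthogonal change of variables.
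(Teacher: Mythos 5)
Your proposal is correct and matches the paper's intent exactly: the paper omits the proof of Theorem~\ref{Th:OAMP_general_SE_app} precisely because it follows directly from Theorem~\ref{The:SE_app} (part (c)) and Corollary~\ref{Cor:app_degeneracy} via the SVD change of variables, and your covariance computation — expanding $\mathbb{E}[\tilde{\mathsf X}_s\tilde{\mathsf X}_t]$, killing the cross terms by the mutual independence and zero mean of $\mathsf R$, $\mathsf S$, and identifying $\mathbb{E}[\mathsf R_s\mathsf R_t]=\mathbb{E}[\mathsf M_s\mathsf M_t]$, $\mathbb{E}[\mathsf S_s\mathsf S_t]=\mathbb{E}[\mathsf Q_s\mathsf Q_t]$ from Definition~\ref{Def:SE} — is exactly the intended bookkeeping.
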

Theorem~\ref{Th:OAMP_general_SE_app} follows directly from Theorem~\ref{The:SE_app} and Corollary~\ref{Cor:app_degeneracy}, and its proof is therefore omitted.

\subsection{Proof of Theorem \ref{The:SE_app}}\label{Sec:OAMP_general_SE_proof}

Let $(t.a)$ and $(t.b)$ denote the claims $(a)$ and $(b)$ for iteration $t$. We prove by induction on $t=1,2,\ldots$

\paragraph{Base case: proof of claim $(1.a)-(1.c)$} Recall that $\bm{r}_1=\bm{U}_W^\UT\bm{m}_1$, $\bm{s}_1=\bm{V}_W^\UT\bm{q}_1$, where $\bm{U}_W\in\mathbb{O}(M)$ and $\bm{V}_W\in\mathbb{O}(N)$ are independent Haar random matrices. Based on standard properties of $W$ convergence of empirical probability measures and Haar random matrix (see \cite[Appendix E and Appendix F]{fan2022approximate}), together with the fact that $\bm{U}_W$ and $\bm{V}_W$ are independent and the entries of $(\bm{d}_M,\bm{d}_N)$ are bounded by dimension-independent constants, {we obtain}
\BS\label{Eqn:app_SE_base_1}
\begin{align}
(\bm{r}_1,\bm{D} \bm{s}_{1},\bm{d}_M) &\overset{W}{\longrightarrow} \left(\mathsf{R}_1,\mathsf{D}_M\mathsf{S}_{1},\mathsf{D}_M\right),\\
(\bm{s}_1,\bm{D}^\UT \bm{r}_{1},\bm{d}_N) &\overset{W}{\longrightarrow} \left(\mathsf{S}_1,\mathsf{D}_N\mathsf{R}_{1},\mathsf{D}_N\right),
\end{align}
\ES
where $\mathsf{R}_{1}\sim\mathcal{N}(0,\mathbb{E}[\mathsf{M}_1^2])$, $\mathsf{S}_{1}\sim\mathcal{N}(0,\mathbb{E}[\mathsf{Q}_1^2])$ and $(\mathsf{D}_M,\mathsf{D}_N)$ are mutually independent. Recall that
\BS\label{Eqn:app_SE_base_2}
\begin{align}
\tilde{\bm{x}}_1 &= \Psi_1(\bm{d}_M^2)\circ \bm{r}_1+\tilde{\Psi}_1(\bm{d}_M^2)\circ (\bm{Ds}_1),\label{Eqn:app_SE_prove_x1}\\
\tilde{\bm{z}}_1 &= \Phi_1(\bm{d}_N^2)\circ \bm{s}_1+\tilde{\Phi}_1(\bm{d}_N^2)\circ (\bm{D}^\UT\bm{r}_1),
\end{align}
\ES
where $\circ$ denotes Hadamard product. Combining \eqref{Eqn:app_SE_base_1} and \eqref{Eqn:app_SE_base_2}, and further noting that $(\Psi_1,\tilde{\Psi}_1,\Phi_1,\tilde{\Phi}_1)$ are continuous and the entries of $(\bm{d}_M,\bm{d}_N)$ are uniformly bounded, {it is straightforward to show that} (cf.~\cite[Proposition E.2]{fan2022approximate})
\BS\label{Eqn:app_SE_base_3}
\begin{align}
(\bm{r}_1,\tilde{\bm{x}}_1,\bm{d}_M)\overset{W}{\longrightarrow}(\mathsf{R}_1,\tilde{\mathsf{X}}_1,\mathsf{D}_M),\\
(\bm{s}_1,\tilde{\bm{z}}_1,\bm{d}_N)\overset{W}{\longrightarrow}(\mathsf{S}_1,\tilde{\mathsf{Z}}_1,\mathsf{D}_N),
\end{align}
\ES
where the joint distributions of the random variables appeared above are described in Definition \ref{Def:SE}. This proves claim (1.a) and (1.b).

To prove claim (1.c), we identify the conditional distribution of $(\bm{U}_W,\bm{V}_W)$
given $(\bm{r}_1,\bm{s}_1)$, equivalently given the linear constraints 
$\bm{r}_1 = \bm{U}_W^\UT \bm{m}_1$ and $\bm{s}_1 = \bm{V}_W^\UT \bm{q}_1$.
Let $\mathcal{G}_1 \bydef \sigma(\bm{r}_1,\bm{s}_1) $
be the $\sigma$–algebra generated by $(\bm{r}_1,\bm{s}_1)$.
For all sufficiently large $M,N$, by \cite[Lemma 4]{rangan2019vector}~\cite{takeuchi2019rigorous}
there exist independent Haar matrices 
$\tilde{\bm{U}}_W \in \mathbb{O}(M-1)$ and $\tilde{\bm{V}}_W \in \mathbb{O}(N-1)$,
independent of $\mathcal{G}_1$, such that under the conditional law
$\mathbb{P}(\,\cdot\,\mid \mathcal{G}_1)$ we have
\begin{align}
\bm{U}_W &\explain{d}{=} \frac{\bm{m}_1 \bm{r}_1^\UT}{\|\bm{m}_1\|^2}
     + \Pi_{m_1}^\perp \tilde{\bm{U}}_W \Pi_{r_1}^\perp, \\
\bm{V}_W &\explain{d}{=} \frac{\bm{q}_1 \bm{s}_1^\UT}{\|\bm{q}_1\|^2}
     + \Pi_{q_1}^\perp \tilde{\bm{V}}_W \Pi_{s_1}^\perp,
\label{eq:conditional-U-V-base}
\end{align}
where $\Pi_x^\perp$ denotes the orthogonal projector onto $\bm{x}^\perp$. Consequently, still under $\mathbb{P}(\,\cdot\,\mid \mathcal{G}_1)$, the
iterates $\bm{x}_1 = \bm{U}_W \tilde{\bm{x}}_1$ and $\bm{z}_1 = \bm{V}_W\tilde{\bm{z}}_1$ admit the decompositions
\begin{align}
\bm{x}_1
&\explain{d}{=} \frac{\bm{m}_1 \bm{r}_1^\UT \tilde{\bm{x}}_1}{\|\bm{m}_1\|^2}
  + \Pi_{m_1}^\perp \tilde{\bm{U}}_W \Pi_{r_1}^\perp \tilde{\bm{x}}_1, \label{eq:x1-decomp}\\
\bm{z}_1
&\explain{d}{=} \frac{\bm{q}_1 \bm{s}_1^\top \tilde{\bm{z}}_1}{\|\bm{q}_1\|^2}
  + \Pi_{q_1}^\perp \tilde{\bm{V}}_W \Pi_{s_1}^\perp \tilde{\bm{z}}_1. \label{eq:z1-decomp}
\end{align}

From claim (1.a), we have
\BS
\begin{align}
\frac{1}{M}\bm{r}_1^\UT\tilde{\bm{x}}_1 & \overset{\ConvMode}{\longrightarrow}\mathbb{E}[\mathsf{R}_1\tilde{\mathsf{X}}_1]\explain{a}{=}0,\\
\frac{1}{M}\bm{s}_1^\UT\tilde{\bm{z}}_1 & \overset{\ConvMode}{\longrightarrow}\mathbb{E}[\mathsf{S}_1\tilde{\mathsf{Z}}_1]\explain{b}{=}0,
\end{align}
\ES
where step (a) and step (b) are a consequence of Lemma \ref{Lem:orthogonality}. Hence, by arguments analogous to those employed in the proof of \cite[Lemma A.4]{fan2022approximate}, we obtain:
\BS
\begin{align}
(\bm{x}_1,\bm{m}_1,\bm{m}_2,\bm{a})&\overset{W}{\longrightarrow}\left(\mathsf{X}_1,\mathsf{M}_1,\mathsf{M}_2,\mathsf{A}\right),\\
(\bm{z}_1,\bm{q}_1,\bm{q}_2,\bm{b})&\overset{W}{\longrightarrow}\left(\mathsf{Z}_1,\mathsf{Q}_1,\mathsf{Q}_2,\mathsf{B}\right),
\end{align}
\ES
where $\mathsf{X}_1\sim\mathcal{N}\left(0,\mathbb{E}\left[\tilde{\mathsf{X}}_1^2\right]\right)$, $\mathsf{M}_2=m_2(\mathsf{X}_1;\mathsf{A})$, $\mathsf{X}_1\indep(\mathsf{M}_1,\mathsf{A})$, and $\mathsf{Z}_1\sim\mathcal{N}\left(0,\mathbb{E}\left[\tilde{\mathsf{Z}}_1^2\right]\right)$, $\mathsf{Q}_2=q_2(\mathsf{Z}_1;\mathsf{B})$, $\mathsf{Z}_1\indep(\mathsf{Q}_1,\mathsf{B})$.
%
%

\paragraph{Induction step: proof of claim $(t+1.a)$} We shall assume that the claims hold up to $(t.e)$. In what follows, we analyze the distribution of $\bm{r}_{t+1}$. We introduce the following matrix notations for the iterates: 
\[
\bm{M}_{t} \bydef [\bm{m}_1,\ldots,\bm{m}_{t}].
\]
We define the matrices $\bm{X}_{t}$, $\bm{R}_t$, $\tilde{\bm{X}}_{t}$, $\bm{Q}_{t}$, $\bm{Z}_t$, $\bm{S}_t$ and $\tilde{\bm{Z}}_t$ analogously. Using the matrix notations, the OAMP iterates \eqref{Eqn:OAMP_app_full} can be written as follows
\BS
\begin{align}
[\bm{M}_t,\bm{X}_{t}] &= \bm{U}_W[\bm{R}_t,\tilde{\bm{X}}_{t}],\\
[\bm{Q}_t,\bm{Z}_{t}] &=\bm{V}_W[\bm{S}_t,\tilde{\bm{Z}}_{t}].
\end{align}
\ES
Let $\mathcal{G}_t$ be the $\sigma$-algebra generated by the iterates up to $\bm{x}_t$. Note that by claim $(t.a)$ and Lemma \ref{Lem:orthogonality},
\BS\label{Eqn:app_SE_t1_inner}
\begin{align}
\frac{1}{M}
\begin{bmatrix}
{\bm{M}}_{t}^\UT{\bm{M}}_{t} & {\bm{M}}_{t}^\UT{\bm{X}}_t\\
{\bm{X}}_t^\UT\bm{M}_t&  {\bm{X}}_{t}^\UT  {\bm{X}}_{t} 
\end{bmatrix} &\overset{\ConvMode}{\longrightarrow}
\begin{bmatrix}
\paraB{\Omega}_t^u & \bm{0}_{t\times t}\\
\bm{0}_{t\times t}&  \paraB{\Sigma}_t^u
\end{bmatrix},\quad
\frac{1}{M}
\begin{bmatrix}
\bm{M}_t^\UT\bm{m}_{t+1} \\
{\bm{X}}_{t}^\UT\bm{m}_{t+1}
\end{bmatrix}
\overset{\ConvMode}{\longrightarrow}
\begin{bmatrix}
\paraB{\omega}^u_{t+1}\\
\bm{0}_{t\times1}
\end{bmatrix},
\end{align}
where the covariance matrices $\paraB{\Omega}^u_t$ and $\paraB{\Sigma}_t^u$ are defined in Definition \ref{Def:SE}, and $\paraB{\omega}^u_{t+1}$ is defined by
 \begin{align}
\paraB{\omega}^u_{t+1}[i] &\bydef \mathbb{E}\left[{\mathsf{M}}_{i}{\mathsf{M}}_{t+1}\right],\quad\forall i\in[t].
\end{align}
\ES
Moreover, $\paraB{\Omega}_t^u$ and $\paraB{\Sigma}_t^u$ are invertible by assumption. Hence, the following matrix is invertible for all sufficiently large $M,N$:
\[
\begin{bmatrix}
{\bm{M}}_{t}^\UT{\bm{M}}_{t} & {\bm{M}}_{t}^\UT{\bm{X}}_t\\
{\bm{X}}_t^\UT\bm{M}_t&  {\bm{X}}_{t}^\UT  {\bm{X}}_{t} 
\end{bmatrix} .
\]
By Lemma 4 in \cite{rangan2019vector} and \cite{takeuchi2019rigorous},
the conditional laws of $\bm{U}_W$ and $\bm{V}_W$ given $\mathcal{G}_t$ can be represented as
\BS \label{eq:conditioning-haar}
\begin{align}
\bm{U}_W &\explain{d}{=} [\bm{M}_t,\bm{X}_{t}] \begin{bmatrix}
{\bm{M}}_{t}^\UT{\bm{M}}_{t} & {\bm{M}}_{t}^\UT{\bm{X}}_t\\
{\bm{X}}_t^\UT\bm{M}_t&  {\bm{X}}_{t}^\UT  {\bm{X}}_{t} 
\end{bmatrix}
^{-1}
\begin{bmatrix}
\bm{R}_t^\UT \\
\tilde{\bm{X}}_{t}^\UT
\end{bmatrix}
+\Pi^\perp_{[\bm{M}_{t},\bm{X}_t]}\tilde{\bm{U}}_W\Pi^\perp_{[{\bm{R}}_{t},\tilde{\bm{X}}_t]},\\
\bm{V}_W& \explain{d}{=}[\bm{Q}_t,\bm{Z}_{t}]
\begin{bmatrix}
 \bm{Q}_t^\UT\bm{Q}_t & \bm{Q}_t^\UT {\bm{Z}}_{t}\\
{\bm{Z}}_{t}^\UT\bm{Q}_t & {\bm{Z}}_{t}^\UT{\bm{Z}}_{t}
\end{bmatrix}
^{-1}
\begin{bmatrix}
\bm{S}_t^\UT \\
\tilde{\bm{Z}}_{t}^\UT
\end{bmatrix}
+\Pi^\perp_{[\bm{Q}_{t},\bm{Z}_t]}\tilde{\bm{V}}_W\Pi^\perp_{[{\bm{S}}_{t},\tilde{\bm{Z}}_t]},
\end{align}
\ES
where $\tilde{\bm{U}}_W\in\mathbb{O}(M-2t)$ and $\tilde{\bm{V}}_W\in\mathbb{O}(N-2t)$ are Haar-distributed orthogonal matrices, which are mutually independent and independent of $\mathcal{G}_t$. $\Pi^\perp_{[\bm{M}_{t},\bm{X}_t]}\in\mathbb{R}^{M\times(M-2t)}$ is a matrix whose columns form an orthonormal basis for $(\text{col}[\bm{M}_{t},\bm{X}_t])^\perp$. Other projection matrices are defined analogously. Hence, conditional on $\mathcal{G}_t$, the iterate $\bm{r}_{t+1}=\bm{U}_W^\UT\bm{m}_{t+1}$ can be written as
\BS
\BE
\bm{r}_{t+1} \explain{d}{=} \bm{r}_{t+1}^{||}+ \bm{r}_{t+1}^{\perp},
\EE
with
\begin{align}
\bm{r}_{t+1}^{||} 
&\bydef[\bm{R}_t,\tilde{\bm{X}}_t]
\begin{bmatrix}
{\bm{M}}_{t}^\UT{\bm{M}}_{t} & {\bm{M}}_{t}^\UT{\bm{X}}_t\\
{\bm{X}}_t^\UT\bm{M}_t&  {\bm{X}}_{t}^\UT  {\bm{X}}_{t} 
\end{bmatrix}
^{-1}
\begin{bmatrix}
\bm{M}_t^\UT\bm{m}_{t+1} \\
{\bm{X}}_{t}^\UT\bm{m}_{t+1}
\end{bmatrix},\label{Eqn:r_parallel}\\
\bm{r}_{t+1}^{\perp}  & \bydef  \Pi^\perp_{[{\bm{R}}_{t},\tilde{\bm{X}}_t]}\tilde{\bm{U}}_W^\UT\Pi^\perp_{[\bm{M}_{t},\bm{X}_t]}\bm{m}_{t+1}\label{Eqn:r_perp}.
\end{align}
\ES
From \eqref{Eqn:app_SE_t1_inner} and the asymptotic orthogonality stated in Lemma \ref{Lem:orthogonality}, and by arguments analogous to those in the proof of \cite[Lemma A.4]{fan2022approximate}, one obtains:
\BS\label{Eqn:r_para_app}
\begin{align}
\bm{r}_{t+1}^{||} & \overset{W}{\longrightarrow}\mathsf{R}_{t+1}^{||}\bydef[\mathsf{R}_1,\ldots,\mathsf{R}_t](\paraB{\Omega}_t^u)^{-1}\paraB{\omega}^u_{t+1},\\
\bm{r}_{t+1}^{\perp} &\overset{W}{\longrightarrow} R_{t+1}^\perp\sim\mathcal{N}\left(0,\mathbb{E}[(\mathsf{M}_{t+1}^\perp)^2]\right),
\end{align}
\ES
with $\mathsf{R}_{t+1}^\perp$ is independent of $(\mathsf{R}_1,\ldots,\mathsf{R}_t)$, and $\mathsf{M}_{t+1}^\perp$ denotes the projection (in the $L_2$ sense) of $\mathsf{M}_{t+1}$ onto the orthogonal space of $\text{span}(\mathsf{M}_1,\ldots,\mathsf{M}_t)$: $\mathsf{M}_{t+1}^\perp\bydef  \Pi_{(\mathsf{M}_1,\ldots,\mathsf{M}_t)}^\perp\left(\mathsf{M}_{t+1}\right)$.
The variance of $R_{t+1}^\perp$ can be further expressed as $\mathbb{E}[(\mathsf{R}_{t+1}^\perp)^2] =\mathbb{E}[(\mathsf{M}_{t+1}^\perp)^2] =\mathbb{E}[\mathsf{M}_{t+1}^2] -\left(\paraB{\omega}^u_{t+1}\right)^\UT (\paraB{\Omega}_t^u)^{-1}\paraB{\omega}^u_{t+1}
$. One can further obtain the convergence of the joint empirical law of $(\bm{r}_{\le t},\bm{r}_{t+1},\bm{d}_M)$ based on the same reasoning as those in \cite{fan2022approximate}. First, by induction hypothesis
\BE
(\bm{r}_{\le t},\bm{d}_M)\overset{W}{\longrightarrow}\left(\mathsf{R}_{\le t},\mathsf{D}_M\right).
\EE
Note that $\bm{r}_{t+1}^{||}$ is a linear transform of $(\bm{r},\ldots,\bm{r}_t,\bm{d}_M)$ up to an error term that vanish in $W_p$ for every $p\ge1$. Applying \cite[Proposition E.4]{fan2022approximate} yields
\BE
(\bm{r}_{\le t},\bm{r}_{t+1}^{||},\bm{d}_M)\overset{W}{\longrightarrow}\left(\mathsf{R}_{\le t},\mathsf{R}_{t+1}^{||},\mathsf{D}_M\right).
\EE
Using \cite[Proposition F.2]{fan2022approximate}, we obtain
\BE
(\bm{r}_{\le t},\bm{r}_{t+1}^{||}+\bm{r}_{t+1}^{\perp},\bm{d}_M)\overset{W}{\longrightarrow}\left(\mathsf{R}_{\le t},\mathsf{R}_{t+1}^{||}+\mathsf{R}_{t+1}^{\perp},\mathsf{D}_M\right).
\EE
Let $\mathsf{R}_{t+1}\bydef\mathsf{R}_{t+1}^{||}+\mathsf{R}_{t+1}^\perp$. It is straightforward to check that $\mathsf{R}_{\le t+1}\sim\mathcal{N}(\bm{0}_{t+1},\paraB{\Omega}_{t+1}^u)$. We can apply exactly the same arguments to $\bm{s}_{t+1}$. Note that the fresh Haar random matrix $\tilde{\bm{V}}_W$ is independent of $\tilde{\bm{U}}_W$. We can repeat the above reasoning to conclude that
\BE
(\bm{r}_{\le t},\bm{r}_{t+1},\bm{D}\bm{s}_{t+1},\bm{d}_M)\overset{W}{\longrightarrow}\left(\mathsf{R}_{\le t},\mathsf{R}_{t+1},\mathsf{D}_M\mathsf{S}_{t+1},\mathsf{D}_M\right).
\EE
The analysis of the dimension-$N$ vectors are similar. The proof of claim $(t+1.a)$ is complete.

\paragraph{Induction step: proof of claim $(t+1.b)$} 

To prove claim $(t+1.b)$, we note that the map from the rows of $(\bm{r}_{\le t+1},\bm{Ds}_{\le {t+1}},\bm{d}_N)$ to those of $(\bm{r}_{\le t+1},\tilde{\bm{x}}_{\le t+1},\bm{d}_M)$ is polynomially bounded; cf.~\eqref{Eqn:app_SE_prove_x1}. Then, applying \cite[Proposition E.2]{fan2022approximate} together with claim $(t+1.a)$ shows that the joint empirical law of $(\bm{r}_{\le t+1},\tilde{\bm{x}}_{\le t+1},\bm{d}_M)$ converges. The analysis of $(\bm{s}_{\le t+1},\tilde{\bm{z}}_{\le t+1},\bm{d}_N)$ is similar.

\paragraph{Induction step: proof of claim $(t+1.c)$} 

To analyze $\bm{x}_{t+1}$ and $\bm{z}_{t+1}$, we derive the law of $\bm{U}_W$ and $\bm{V}_W$ conditional on the OAMP iterates up to $\bm{r}_{t+1}$ and $\bm{s}_{t+1}$, i.e.,
\BS
\begin{align}
[\bm{M}_{t+1},\bm{X}_{t}] &= \bm{U}_W[\bm{R}_{t+1},\tilde{\bm{X}}_{t}],\\
[\bm{Q}_{t+1},\bm{Z}_{t}] &=\bm{V}_W[\bm{S}_{t+1},\tilde{\bm{Z}}_{t}].
\end{align}
\ES
Let $\mathcal{G}_t^+$ be the $\sigma$-algebra generated by the iterates up to $\bm{r}_{t+1}$ and $\bm{s}_{t+1}$. The conditional law of $\bm{U}_W$ and $\bm{V}_W$ for large $M,N$ are given by
\BS
\begin{align}
\bm{U}_W&\explain{d}{=} [\bm{M}_{t+1},\bm{X}_{t}] \begin{bmatrix}
{\bm{M}}_{t+1}^\UT{\bm{M}}_{t+1} & {\bm{M}}_{t+1}^\UT{\bm{X}}_t\\
{\bm{X}}_t^\UT\bm{M}_{t+1}&  {\bm{X}}_{t}^\UT  {\bm{X}}_{t} 
\end{bmatrix}
^{-1}
\begin{bmatrix}
\bm{R}_{t+1}^\UT \\
\tilde{\bm{X}}_{t}^\UT
\end{bmatrix}
+\Pi^\perp_{[\bm{M}_{t+1},\bm{X}_t]}\tilde{\bm{U}}_W\Pi^\perp_{[{\bm{R}}_{t+1},\tilde{\bm{X}}_t]},\\
\bm{V}_W& \explain{d}{=}[\bm{Q}_{t+1},\bm{Z}_{t}]
\begin{bmatrix}
 \bm{Q}_{t+1}^\UT\bm{Q}_{t+1} & \bm{Q}_{t+1}^\UT {\bm{Z}}_{t}\\
{\bm{Z}}_{t}^\UT\bm{Q}_{t+1} & {\bm{Z}}_{t}^\UT{\bm{Z}}_{t}
\end{bmatrix}
^{-1}
\begin{bmatrix}
\bm{S}_{t+1}^\UT \\
\tilde{\bm{Z}}_{t}^\UT
\end{bmatrix}
+\Pi^\perp_{[\bm{Q}_{t+1},\bm{Z}_t]}\tilde{\bm{V}}_W\Pi^\perp_{[{\bm{S}}_{t+1},\tilde{\bm{Z}}_t]},
\end{align}
\ES
Hence, the conditional law of $\bm{x}_{t+1}=\bm{U}_W\tilde{\bm{x}}_{t+1}$ is
\BS
\begin{equation}
\bm{x}_{t+1}\explain{d}{=}\bm{x}_{t+1}^{||}+\bm{x}_{t+1}^{\perp},
\end{equation}
with
\begin{align}
\bm{x}_{t+1}^{||} &\bydef 
[\bm{M}_{t+1},\bm{X}_{t}] \begin{bmatrix}
{\bm{M}}_{t+1}^\UT{\bm{M}}_{t+1} & {\bm{M}}_{t+1}^\UT{\bm{X}}_t\\
{\bm{X}}_t^\UT\bm{M}_{t+1}&  {\bm{X}}_{t}^\UT  {\bm{X}}_{t} 
\end{bmatrix}
^{-1}
\begin{bmatrix}
\bm{R}_{t+1}^\UT\tilde{\bm{x}}_{t+1} \\
\tilde{\bm{X}}_{t}^\UT\tilde{\bm{x}}_{t+1}
\end{bmatrix},\\
\bm{x}_{t+1}^{\perp} &\bydef \Pi^\perp_{[\bm{M}_{t+1},\bm{X}_t]}\tilde{\bm{U}}_W\Pi^\perp_{[{\bm{R}}_{t+1},\tilde{\bm{X}}_t]}\tilde{\bm{x}}_{t+1}.
\end{align}
\ES
From claim $(t+1.b)$, and appealing to the orthogonality properties in Lemma \ref{Lem:orthogonality}, we obtain
\BS
\begin{align}
\begin{bmatrix}
{\bm{M}}_{t+1}^\UT{\bm{M}}_{t+1} & {\bm{M}}_{t+1}^\UT{\bm{X}}_t\\
{\bm{X}}_t^\UT\bm{M}_{t+1}&  {\bm{X}}_{t}^\UT  {\bm{X}}_{t} 
\end{bmatrix}
^{-1}
\begin{bmatrix}
\bm{R}_{t+1}^\UT\tilde{\bm{x}}_{t+1} \\
\tilde{\bm{X}}_{t}^\UT\tilde{\bm{x}}_{t+1}
\end{bmatrix}
&\overset{\ConvMode}{\longrightarrow}
\begin{bmatrix}
\paraB{\Omega}_{t+1}^u & \bm{0}_{(t+1)\times t}\\
\bm{0}_{(t+1)\times t}&  \paraB{\Sigma}_{t}^u 
\end{bmatrix}
^{-1}
\begin{bmatrix}
\bm{0}_{(t+1)\times 1} \\
\paraB{\sigma}^u_{t+1}
\end{bmatrix}\\
&=
\begin{bmatrix}
\bm{0}_{(t+1)\times 1} \\
(\paraB{\Sigma}_t^u)^{-1}\paraB{\sigma}^u_{t+1}
\end{bmatrix},
\end{align}
\ES
where $\paraB{\sigma}^u_{t+1}[i]\bydef\mathbb{E}\left[\tilde{\mathsf{X}}_i\tilde{\mathsf{X}}_{t+1}\right]$ and $\paraB{\Sigma}_t^u[i,j]\bydef \mathbb{E}\left[\tilde{\mathsf{X}}_i\tilde{\mathsf{X}}_{j}\right]$, $\forall i,j\in[t]$. Based on claim $(t.c)$ and similar to the analysis of $\bm{r}_{t+1}$, we obtain
\BS
\BE\label{Eqn:x_para_app}
(\bm{x}_{\le t},\bm{x}_{t+1}^{||},\bm{x}_{t+1}^\perp,\bm{a})\overset{W}{\longrightarrow}(\mathsf{X}_{\le t},\mathsf{X}_{t+1}^{\parallel},\mathsf{X}_{t+1}^{\perp},\mathsf{A}),
\EE
where 
\begin{align}
\mathsf{X}_{t+1}^{\parallel}&=[\mathsf{X}_1,\ldots,\mathsf{X}_t] (\paraB{\Sigma}_t^u)^{-1}\paraB{\sigma}^u_{t+1},\\
\mathsf{X}_{t+1}^{\perp}&\sim\mathcal{N}\left(0,\mathbb{E}\left[\mathsf{X}_{t+1}^2\right]-\left(\paraB{\sigma}^u_{t+1}\right)^\UT(\paraB{\Sigma}_t^u)^{-1}\paraB{\sigma}^u_{t+1}\right).
\end{align}
\ES
Moreover, $\mathsf{X}_{t+1}^{\perp}$ is independent of $(\mathsf{X}_1,\ldots,\mathsf{X}_t,\mathsf{A})$. Finally, since $\bm{m}_{t+1}$ is a Lipschitz continuous function of $(\bm{x}_{\le t},\bm{a})$, applying \cite[Proposition E.2]{fan2022approximate} and using the induction hypothesis $(t.c)$ yields
\BE
(\bm{x}_{\le t+1},\bm{m}_{\le t+2},\bm{a})\overset{W}{\longrightarrow}(\mathsf{X}_{\le t+1},\mathsf{M}_{\le t+2},\mathsf{A}),
\EE
where the state evolution random variable appeared on the above equation are distributed as described in Definition \ref{Def:SE}. The analysis of $(\bm{z}_{\le t+1},\bm{q}_{t+2};\bm{b})$ is similar and omitted.

\section{State Evolution of OAMP for Spiked Models (\thref{Thm: State Evolution})}\label{Sec:OAMP_SE_proof}



Recall that the OAMP algorithm for spiked matrix models consists of the following iterations (Definition \ref{def:OAMP})
\BS
\begin{align}
\bm{u}_t &= F_{t}(\bm{Y}\bm{Y}^\UT) f_{t}(\bm{u}_{<t}; \bm{a}) + \tilde{F}_{t}(\bm{Y}\bm{Y}^\UT)\bm{Y}g_{t}(\bm{v}_{<t}; \bm{b}), \\
\bm{v}_t &= G_{t}(\bm{Y}^\UT\bm{Y}) g_{t}(\bm{v}_{<t}; \bm{b}) + \tilde{G}_{t}(\bm{Y}^\UT\bm{Y})\bm{Y}^\UT f_{t}(\bm{u}_{<t}; \bm{a}).
\end{align}
\ES
A major difficulty in analyzing the above OAMP algorithm is that the matrix denoisers act on the observation matrix~$\bm{Y}$ rather than the random matrix~$\bm{W}$. Our strategy for proving \thref{Thm: State Evolution} parallels that of~\cite[Theorem~1]{dudeja2024optimality} and proceeds through the following steps:
\begin{enumerate}
\item We approximate the matrix denoisers in the OAMP algorithm by polynomial functions, which is justified by the Weierstrass approximation theorem.
\item The OAMP algorithm with polynomial matrix denoisers acting on~$\bm{Y}$ can be reformulated as an auxiliary OAMP algorithm that depends only on~$\bm{W}$, whose dynamics are characterized by existing results (cf.~Theorem~\ref{Th:OAMP_general_SE_app}).
\end{enumerate}

\paragraph{Polynomial Approximation}
Following the approach in \cite[Lemma~5]{dudeja2024optimality}, 
we can assume that the matrix denoisers 
\(F_{t}, \tilde{F}_{t}, G_{t}, \tilde{G}_{t}\) 
are polynomial functions, which is justified by the Weierstrass approximation theorem. 
The result is formalized in the following lemma, whose proof—being analogous to that of 
\cite[Lemma~5]{dudeja2024optimality}—is omitted.

\begin{lemma}\label{lem: poly approx}
It is sufficient to prove \thref{Thm: State Evolution} under the additional assumption that for each $t \in \N$, the
matrix denoisers $F_{t}(\cdot), \tilde{F}_{t}(\cdot), G_{t}(\cdot), \tilde{G}_{t}(\cdot): \R \to \R$  are polynomials.   
\end{lemma}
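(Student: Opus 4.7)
The plan follows the strategy of \cite[Lemma~5]{dudeja2024optimality}. Fix $\varepsilon > 0$. By Proposition~\ref{prop:outlier_characterization}, almost surely for all sufficiently large $M$ the spectrum of $\bm{Y}\bm{Y}^\UT$ is contained in the compact set $\supp_\varepsilon(\mu) \cup \mathcal{K}^*_\varepsilon$. Since $F_t,\widetilde{F}_t,G_t,\widetilde{G}_t$ are continuous on this set, the Weierstrass approximation theorem yields polynomials $F_t^\varepsilon, \widetilde{F}_t^\varepsilon, G_t^\varepsilon, \widetilde{G}_t^\varepsilon$ uniformly $\varepsilon$-close to the original denoisers on $\supp(\mu) \cup \mathcal{K}^*$. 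To preserve the trace-free condition \eqref{eq:trace free}, I would replace $F_t^\varepsilon$ by $F_t^\varepsilon - \langle F_t^\varepsilon\rangle_\mu$ and likewise for $G_t^\varepsilon$; since the original $F_t$ is already trace-free, this recentering is itself $O(\varepsilon)$ and so preserves the uniform approximation up to a factor of two.

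Next, I would argue by induction on $t$ that the iterates $(\bm{u}_t,\bm{v}_t)$ of the original OAMP and the iterates $(\bm{u}_t^\varepsilon,\bm{v}_t^\varepsilon)$ of the polynomial OAMP driven by the same data satisfy
\begin{equation*}
\limsup_{M\to\infty}\frac{\|\bm{u}_t-\bm{u}_t^\varepsilon\|^2}{M}
\;\le\; C_t\,\varepsilon^2,
\qquad
\limsup_{N\to\infty}\frac{\|\bm{v}_t-\bm{v}_t^\varepsilon\|^2}{N}
\;\le\; C_t\,\varepsilon^2,
\end{equation*}
where $C_t$ depends only on $t$, the Lipschitz constants of the iterate denoisers $f_s, g_s$ for $s \le t$, and the bound $\|\bm W\|_{\op}\le C$ from Assumption~\ref{assump:main}(c). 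In the inductive step, the difference $\bigl(F_t(\bm Y\bm Y^\UT)-F_t^\varepsilon(\bm Y\bm Y^\UT)\bigr)f_t(\bm{u}_{<t};\bm{a})$ has operator-norm contribution bounded by the sup-norm of $F_t-F_t^\varepsilon$ on $\spp(\bm Y\bm Y^\UT)$, which lies inside $\supp_\varepsilon(\mu) \cup \mathcal{K}^*_\varepsilon$ almost surely for large $M$; the cross term $\widetilde{F}_t(\bm Y\bm Y^\UT)\bm Y g_t(\bm{v}_{<t};\bm{b})$ is handled the same way using $\|\bm Y\|_{\op}\le\|\bm W\|_{\op}+\theta$. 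Errors from past iterates propagate linearly through the Lipschitz denoisers $f_t, g_t$, giving the stated bound.

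In parallel, the state-evolution variables $(\mathsf{U}_t^\varepsilon,\mathsf{V}_t^\varepsilon)$ associated to the polynomial denoisers can be shown to be $W_2$-close to $(\mathsf{U}_t,\mathsf{V}_t)$. The mean and covariance recursions \eqref{eq: mean value of SE1}--\eqref{eq: Cov of SE1} depend on the matrix denoisers only through integrals against $\nu_1,\nu_2,\nu_3,\mu,\widetilde{\mu}$, all supported on $\supp(\mu)\cup\mathcal{K}^*$; uniform $\varepsilon$-closeness of the denoisers on this set therefore yields $O(\varepsilon)$ closeness of $(\mu_{u,t},\mu_{v,t},\Sigma_{u,st},\Sigma_{v,st})$. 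Combining the two bounds with the triangle inequality for $W_2$ and applying the assumed polynomial case of Theorem~\ref{Thm: State Evolution} to the $\varepsilon$-OAMP gives the general conclusion upon letting $\varepsilon\to 0$.

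The main technical obstacle is the inductive propagation in the second step: the iterate $\bm{u}_t$ depends on all previous iterates through denoisers that are only Lipschitz (not contractive), so one must carefully track how both the matrix-denoiser approximation error and the past-iterate error accumulate, using in an essential way that $\spp(\bm{Y}\bm{Y}^\UT)$ is asymptotically confined to $\supp_\varepsilon(\mu)\cup\mathcal{K}^*_\varepsilon$ (Proposition~\ref{prop:outlier_characterization}(2) and Assumption~\ref{assump:main}(d)) so that the uniform-norm approximation of the polynomial on this compact set controls the operator-norm discrepancy of the spectral denoiser.
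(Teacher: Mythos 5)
Your proposal follows essentially the same route as the paper's (omitted) proof, which simply defers to \cite[Lemma~5]{dudeja2024optimality}: Weierstrass approximation of the spectral denoisers on a compact set containing the asymptotic spectrum of $\bm Y\bm Y^\UT$ (via Proposition~\ref{prop:outlier_characterization} and Assumption~\ref{assump:main}(d)), recentering to restore the trace-free constraint, inductive $O(\varepsilon)$ propagation of the iterate error through the Lipschitz denoisers and operator-norm bounds, matching $O(\varepsilon)$ perturbation of the state-evolution parameters, and a triangle inequality in $W_2$. Two small points you should still tighten. First, perform the Weierstrass approximation on a fixed enlarged compact set such as $\supp_{\varepsilon_0}(\mu)\cup\mathcal K^*_{\varepsilon_0}$ (not merely on $\supp(\mu)\cup\mathcal K^*$), so that uniform closeness actually holds on the set that eventually contains $\spp(\bm Y\bm Y^\UT)$; this is immediate since the denoisers are continuous on a neighborhood. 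Second, and more substantively, the divergence-free condition \eqref{eq: divergence free constraint} on $f_t,g_t$ is stated with respect to the state-evolution laws $(\mathsf U_1,\dots,\mathsf U_{t-1})$, which change when the matrix denoisers are replaced by their polynomial surrogates; keeping $f_t,g_t$ unchanged therefore yields an algorithm that is only approximately divergence-free, so it is not literally an instance of Definition~\ref{def:OAMP} to which the assumed polynomial case of Theorem~\ref{Thm: State Evolution} applies. The standard fix is to subtract the small linear correction $\sum_{s<t}\E[\partial_s f_t(\mathsf U^\varepsilon_{<t};\mathsf A)]\,x_s$ (and analogously for $g_t$), which is itself an $O(\varepsilon)$ perturbation and folds into your inductive error bound; you should state this step explicitly.
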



To analyze the behavior of these iterations, we decompose the functions \( f_t \) and \( g_t \) into two components: one that is aligned with the ground-truth signal and an orthogonal residual. Specifically, we write:
\BS\label{eq: ort.decomp.ap}
\begin{equation}
f_t(\bm{u}_{<t}; \bm{a}) = \alpha_t \bm{u}_* + \bm{f}_t^\perp, \quad g_t(\bm{v}_{<t}; \bm{b}) = \beta_t \bm{v}_* + \bm{g}_t^\perp.
\end{equation}
The signal alignment parameters, \( \alpha_t \) and \( \beta_t \), and the residual vectors, \( \bm{f}_t^\perp \) and \( \bm{g}_t^\perp \), are defined as follows:
\begin{align}
  \alpha_t & \bydef \mathbb{E} [\mathsf{U}_\ast f_t(\mathsf{U}_{<t};\mathsf{A})], \quad \bm{f}_t^\perp \bydef f_t(\bm{u}_{<t}; \bm{a}) - \alpha_t \bm{u}_*, \\
  \beta_t  & \bydef \mathbb{E} [\mathsf{V}_\ast g_t(\mathsf{V}_{<t};\mathsf{B})], \quad \bm{g}_t^\perp \bydef g_t(\bm{v}_{<t}; \bm{b}) - \beta_t  \bm{v}_*,
\end{align}
\ES
where $(\mathsf{U}_\ast,\mathsf{V}_\ast,\mathsf{U}_{<t},\mathsf{V}_{<t})$ are state evolution random variables defined in Section \ref{Sec:OAMP_main}.
Substituting these decompositions into the update rules yields the following expressions for \( \bm{u}_t \) and \( \bm{v}_t \):
\BS\label{eq:original-OAMP-decomp}
\begin{align}
\bm{u}_t &= \alpha_t F_{t}(\bm{Y}\bm{Y}^\UT) \bm{u}_* + \beta_t \tilde{F}_{t}(\bm{Y}\bm{Y}^\UT) \bm{Y} \bm{v}_* + F_{t}(\bm{Y}\bm{Y}^\UT) \bm{f}_t^\perp + \tilde{F}_{t}(\bm{Y}\bm{Y}^\UT) \bm{Y} \bm{g}_t^\perp, \label{eq:original-OAMP-decomp-u} \\
\bm{v}_t &= \beta_t G_{t}(\bm{Y}^\UT\bm{Y}) \bm{v}_* + \alpha_t \tilde{G}_{t}(\bm{Y}^\UT\bm{Y}) \bm{Y}^\UT \bm{u}_* + G_{t}(\bm{Y}^\UT\bm{Y}) \bm{g}_t^\perp + \tilde{G}_{t}(\bm{Y}^\UT\bm{Y}) \bm{Y}^\UT \bm{f}_t^\perp. \label{eq:original-OAMP-decomp-v}
\end{align}
\ES
A key challenge in analyzing this expression is that the matrix 
\( \bm{Y}\bm{Y}^\UT \) is not rotationally invariant. 
The following lemma, which parallels \cite[Lemma~6]{dudeja2024optimality}, 
provides a crucial tool for addressing this issue by relating the terms 
to expressions involving the rotationally invariant matrix 
\( \bm{W}\bm{W}^\UT \). Its proof is deferred to Section \ref{Sec:proof_lemma_aux1}.

\begin{lemma}\label{lem:aux1}
Let \( F, \tilde{F}, G, \tilde{G}: \mathbb{R} \to \mathbb{R} \) be dimension-independent polynomials.
\begin{enumerate}
    \item There exist polynomial functions \( \Psi^u, \tilde{\Psi}^u, \Psi^v, \tilde{\Psi}^v: \mathbb{R} \mapsto \mathbb{R} \) associated with \( F, \tilde{F}, G, \tilde{G}: \mathbb{R} \to \mathbb{R} \)  such that the following asymptotic equivalences hold:
    \BS\label{Eqn:lemma6_1}
    \begin{align}
    F(\bm{Y}\bm{Y}^\UT) \bm{u}_* & \explain{$M \rightarrow \infty$}{\simeq} \Psi^u(\bm{W}\bm{W}^\UT) \bm{u}_{*} + \Psi^v(\bm{W}\bm{W}^\UT) \bm{W}\bm{v}_{*}, \label{Eqn:lemma6_a}\\
    \tilde{F}(\bm{Y}\bm{Y}^\UT) \bm{Y} \bm{v}_* & \explain{$M \rightarrow \infty$}{\simeq} \tilde{\Psi}^v(\bm{W}\bm{W}^\UT)\bm{W} \bm{v}_{*} + \tilde{\Psi}^u(\bm{W}\bm{W}^\UT)\bm{u}_{*},\label{Eqn:lemma6_b}\\
    G(\bm{Y}^\UT\bm{Y}) \bm{v}_* & \explain{$N \rightarrow \infty$}{\simeq} \Phi^v(\bm{W}^\UT\bm{W}) \bm{v}_{*} + \Phi^u(\bm{W}^\UT\bm{W}) \bm{W}^\UT \bm{u}_{*}, \label{Eqn:lemma6_c}\\
    \tilde{G}(\bm{Y}^\UT\bm{Y}) \bm{Y}^\UT \bm{u}_* & \explain{$N \rightarrow \infty$}{\simeq} \tilde{\Phi}^u(\bm{W}^\UT\bm{W})\bm{W}^\UT\bm{u}_{*} + \tilde{\Phi}^v(\bm{W}^\UT\bm{W})\bm{v}_{*},\label{Eqn:lemma6_d}
    \end{align}
    \ES
where \( \explain{$M \rightarrow \infty$}{\simeq} \) denotes asymptotic equivalence between random vectors as defined in \defref{def:eq}. 
    \item Let $\bm{u}\in\mathbb{R}^M$ and $\bm{v}\in\mathbb{R}^N$ be two random vectors such that the following hold for all $i\in\mathbb{N}\cup\{0\}$:
   \BS\label{Eqn:lemma6_app_orthgonality}
   \begin{align}
     & \frac{\left\langle \bm{u}_\ast, (\bm{WW}^\UT)^{i}\bm{u}\right\rangle}{M}\overset{a.s.}{\longrightarrow}0,\quad  \frac{\left\langle\bm{u}_\ast,(\bm{WW}^\UT)^i\bm{W}\bm{v}\right\rangle}{M}= \frac{\left\langle\bm{u}_\ast,\bm{W}(\bm{W}^\UT\bm{W})^i\bm{v}\right\rangle}{M}\overset{a.s.}{\longrightarrow}0,\\
&\frac{\left\langle \bm{v}_\ast, (\bm{W}^\UT\bm{W})^{i}\bm{v}\right\rangle}{N}\overset{a.s.}{\longrightarrow}0,\quad \frac{\left\langle\bm{v}_\ast,(\bm{W}^\UT\bm{W})^i\bm{W}^\UT\bm{u}\right\rangle}{N}=\frac{\left\langle\bm{v}_\ast,\bm{W}^\UT(\bm{W}\bm{W}^\UT)^i\bm{u}\right\rangle}{N}\overset{a.s.}{\longrightarrow}0.
   \end{align}
   \ES
Then, the following asymptotic equivalence holds
\BS
\begin{align}
    F(\bm{Y}\bm{Y}^\UT) \bm{u} & \explain{$M\rightarrow \infty$}{\simeq} F(\bm{W}\bm{W}^\UT)\bm{u}, & \tilde{F}(\bm{Y}\bm{Y}^\UT) \bm{Y} \bm{v}& \explain{$M\rightarrow \infty$}{\simeq} \tilde{F}(\bm{W}\bm{W}^\UT) \bm{W} \bm{v}, \\
    G(\bm{Y}^\UT\bm{Y}) \bm{v} & \explain{$N\rightarrow \infty$}{\simeq} G(\bm{W}^\UT\bm{W})\bm{v}, & \tilde{G}(\bm{Y}^\UT\bm{Y}) \bm{Y}^\UT \bm{u}& \explain{$N\rightarrow \infty$}{\simeq} \tilde{G}(\bm{W}^\UT\bm{W}) \bm{W}^\UT \bm{u}.
\end{align}
\ES
\end{enumerate}
\end{lemma}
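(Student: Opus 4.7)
The plan is to prove both parts by induction on the polynomial degree, exploiting the fact that the perturbation $\bm{E} \bydef \bm{Y}\bm{Y}^\UT - \bm{W}\bm{W}^\UT$ has rank at most two with column span contained in $\operatorname{span}\{\bm{u}_*,\bm{W}\bm{v}_*\}$. Explicitly,
\[
\bm{E} = \frac{\theta}{\sqrt{MN}}\bm{W}\bm{v}_*\bm{u}_*^\UT + \frac{\theta}{\sqrt{MN}}\bm{u}_*\bm{v}_*^\UT\bm{W}^\UT + \frac{\theta^2\|\bm{v}_*\|^2}{MN}\bm{u}_*\bm{u}_*^\UT,
\]
so $\bm{E}\bm{x}$ is always a linear combination of $\bm{u}_*$ and $\bm{W}\bm{v}_*$ whose coefficients are simple scaled bilinear forms in $\bm{x}$. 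Because Assumption~\ref{assump:main} gives a uniform bound on $\|\bm{W}\|_{\op}$, and hence on $\|\bm{Y}\|_{\op}$, asymptotic equivalence (Def.~\ref{def:eq}) is preserved under left-multiplication by $\bm{W}\bm{W}^\UT$ or $\bm{Y}\bm{Y}^\UT$. By linearity it therefore suffices to treat monomials $F(x)=x^k$ (and the obvious symmetric variants on the $N$-side).

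For Part~1, I would prove by induction on $k$ the existence of polynomials $\Psi^u_k,\Psi^v_k$ of degree at most $k$ with
\[
(\bm{Y}\bm{Y}^\UT)^k \bm{u}_* \;\underset{M\to\infty}{\simeq}\; \Psi^u_k(\bm{W}\bm{W}^\UT)\bm{u}_* + \Psi^v_k(\bm{W}\bm{W}^\UT)\bm{W}\bm{v}_*.
\]
The base case $k=0$ is immediate. In the inductive step, write $(\bm{Y}\bm{Y}^\UT)^{k+1}\bm{u}_* = (\bm{W}\bm{W}^\UT+\bm{E})\,(\bm{Y}\bm{Y}^\UT)^k\bm{u}_*$, invoke the induction hypothesis, and evaluate $\bm{E}\bm{x}$ for $\bm{x}\in\{\Psi^u_k(\bm{W}\bm{W}^\UT)\bm{u}_*,\ \Psi^v_k(\bm{W}\bm{W}^\UT)\bm{W}\bm{v}_*\}$. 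The resulting scalar coefficients are quadratic and bilinear forms such as $\tfrac{1}{M}\bm{u}_*^\UT\Psi^u_k(\bm{W}\bm{W}^\UT)\bm{u}_*$, $\tfrac{1}{N}\bm{v}_*^\UT (\bm{W}^\UT\bm{W})\Psi^v_k(\bm{W}^\UT\bm{W})\bm{v}_*$, and cross terms $\tfrac{1}{\sqrt{MN}}\bm{u}_*^\UT\Psi(\bm{W}\bm{W}^\UT)\bm{W}\bm{v}_*$. By the Haar isotropy of $(\bm{U}_W,\bm{V}_W)$ and their independence from $(\bm{u}_*,\bm{v}_*)$ (as used already in Appendix~\ref{App:limit_transform}), these forms concentrate almost surely on deterministic limits that are polynomial moments of $\mu$ and $\widetilde\mu$; the cross terms vanish. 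Combined with $\|\bm{u}_*\|^2/M\to 1$ and $\|\bm{W}\bm{v}_*\|^2/M$ bounded, the residuals have normalized squared norm tending to zero, which yields \eqref{Eqn:lemma6_a}. The identity \eqref{Eqn:lemma6_b} then follows from the decomposition $\bm{Y}\bm{v}_* = \tfrac{\theta\|\bm{v}_*\|^2}{\sqrt{MN}}\bm{u}_* + \bm{W}\bm{v}_*$ and the analogous induction for $(\bm{Y}\bm{Y}^\UT)^k\bm{W}\bm{v}_*$; the $N$-dimensional assertions \eqref{Eqn:lemma6_c}--\eqref{Eqn:lemma6_d} follow by the same argument with the roles of $M$ and $N$ interchanged.

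For Part~2, a parallel induction on $k$ establishes $(\bm{Y}\bm{Y}^\UT)^k \bm{u} \simeq (\bm{W}\bm{W}^\UT)^k \bm{u}$ under the orthogonality hypothesis \eqref{Eqn:lemma6_app_orthgonality}. Writing $(\bm{Y}\bm{Y}^\UT)^{k+1}\bm{u} = (\bm{W}\bm{W}^\UT+\bm{E})\,(\bm{Y}\bm{Y}^\UT)^k\bm{u}$, and propagating the inductive equivalence through the operator-norm-bounded $\bm{W}\bm{W}^\UT$, the only new contribution is
\[
\bm{E}(\bm{W}\bm{W}^\UT)^k \bm{u} = \Bigl[\tfrac{\theta^2\|\bm{v}_*\|^2}{MN}\bm{u}_*^\UT(\bm{W}\bm{W}^\UT)^k\bm{u} + \tfrac{\theta}{\sqrt{MN}}\bm{v}_*^\UT\bm{W}^\UT(\bm{W}\bm{W}^\UT)^k\bm{u}\Bigr]\bm{u}_* + \tfrac{\theta}{\sqrt{MN}}\bigl(\bm{u}_*^\UT(\bm{W}\bm{W}^\UT)^k\bm{u}\bigr)\bm{W}\bm{v}_*,
\]
whose two scalar coefficients are precisely the quantities assumed in \eqref{Eqn:lemma6_app_orthgonality} to converge almost surely to zero (after rescaling by $\|\bm{v}_*\|^2/N\to 1$ and $\sqrt{M/N}\to\sqrt{\delta}$, both bounded). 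Hence $\bm{E}(\bm{W}\bm{W}^\UT)^k\bm{u}\simeq\bm{0}$, closing the induction and yielding Part~2 for all polynomials by linearity and the symmetric counterparts.

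The main obstacle will be the bookkeeping required to verify stability of asymptotic equivalence under left-multiplication by the uniformly bounded operators and to track the various $\sqrt{M}$, $\sqrt{N}$ scalings so that every residual has squared norm that is $o(M)$ (respectively $o(N)$). Once this is set up cleanly as a single deterministic propagation lemma—stating that $\|\bm{M}\bm{x}\|^2/M\to 0$ whenever $\|\bm{x}\|^2/M\to 0$ and $\|\bm{M}\|_{\op}$ is bounded—the two inductions become essentially mechanical, with the Haar-isotropy quadratic-form concentrations playing the role of the only nontrivial probabilistic input.
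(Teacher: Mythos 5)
Your proposal is correct and follows essentially the same route as the paper's proof: reduction to monomials, induction on the degree, the rank-two expansion of $\bm{Y}\bm{Y}^\UT-\bm{W}\bm{W}^\UT$, Haar quadratic-form concentration for the scalar coefficients (with vanishing cross terms), and operator-norm bounds to propagate asymptotic equivalence through the induction. The only cosmetic difference is that you name the perturbation $\bm{E}$ explicitly rather than expanding $\bm{Y}\bm{Y}^\UT$ in place, which changes nothing substantive.
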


Importantly, Lemma \ref{lem:aux1} shows that as long as the random vectors $\bm{u}\in\mathbb{R}^M$ and $\bm{v}\in\mathbb{R}^N$ satisfy the orthogonality conditions \eqref{Eqn:lemma6_app_orthgonality}, then they do not interact with the signal components in $\bm{Y}$. Later we shall show that the component $\bm{f}_t^\perp\bydef f_t(\bm{u}_{<t};\bm{a})-\mathbb{E}[\mathsf{U}_\ast f_t(\mathsf{U}_{<t};\mathsf{A})]\cdot\bm{u}_\ast$ and $\bm{g}_t^\perp\bydef g_t(\bm{v}_{<t};\bm{b})-\mathbb{E}[\mathsf{V}_\ast g_t(\mathsf{V}_{<t};\mathsf{B})]\cdot\bm{v}_\ast$ satisfy these orthogonality conditions. Combined with Lemma~\ref{lem:aux1}, this would yield the following asymptotic equivalence:
\BS\label{Eqn:lemma6_2}
\begin{align}
F_t(\bm{Y}\bm{Y}^\UT) \bm{f}_t^\perp({\bm{u}}_{< t};\bm{a}) & \explain{$M\rightarrow \infty$}{\simeq}F_t(\bm{W}\bm{W}^\UT) \bm{f}_t^\perp({\bm{u}}_{< t};\bm{a}) ,\\
 \widetilde{F}_{t}(\bm{Y}\bm{Y}^\UT) \bm{Y} \bm{g}_t^\perp({\bm{v}}_{< t};\bm{b})  & \explain{$M\rightarrow \infty$}{\simeq}\widetilde{F}_{t}(\bm{W}\bm{W}^\UT) \bm{W} \bm{g}_t^\perp({\bm{v}}_{< t};\bm{b}),\\
G_t(\bm{Y}^\UT\bm{Y}) \bm{g}_t^\perp({\bm{v}}_{< t};\bm{b}) & \explain{$M\rightarrow \infty$}{\simeq} G_t(\bm{W}^\UT\bm{W}) \bm{g}_t^\perp({\bm{v}}_{< t};\bm{b}) ,\\
 \widetilde{G}_{t}(\bm{Y}^\UT\bm{Y}) \bm{Y}^\UT \bm{f}_t^\perp({\bm{u}}_{< t};\bm{a}) & \explain{$M\rightarrow \infty$}{\simeq}\widetilde{G}_{t}(\bm{W}^\UT\bm{W}) \bm{W}^\UT \bm{f}_t^\perp({\bm{u}}_{< t};\bm{a}).
\end{align}
\ES

\paragraph{Auxiliary OAMP Algorithm}\label{Sec:auxiliary}

By replacing the terms in the original OAMP algorithm \eqref{eq:original-OAMP-decomp} using the corresponding asymptotic equivalence as established in \eqref{Eqn:lemma6_1} and \eqref{Eqn:lemma6_2}, we introduce the following auxiliary OAMP algorithm: 
\BS\label{eq:aux-OAMP}
\begin{align}
\tilde{\bm{u}}_t &= \Psi_t(\bm{W}\bm{W}^\UT) \bm{u}_* + \widetilde{\Psi}_{t}(\bm{W}\bm{W}^\UT) \bm{W} \bm{v}_*  + F_t(\bm{W}\bm{W}^\UT) \bm{f}_t^\perp(\tilde{\bm{u}}_{< t};\bm{a}) + \widetilde{F}_{t}(\bm{W}\bm{W}^\UT) \bm{W} \bm{g}_t^\perp(\tilde{\bm{v}}_{< t};\bm{b}), \\
\tilde{\bm{v}}_t & = \Phi_t(\bm{W}^\UT\bm{W}) \bm{v}_* + \widetilde{\Phi}_{t}(\bm{W}^\UT\bm{W}) \bm{W}^\UT \bm{u}_* +G_t(\bm{W}^\UT\bm{W}) \bm{g}_t^\perp(\tilde{\bm{v}}_{< t};\bm{b}) +\widetilde{G}_{t}(\bm{W}^\UT\bm{W}) \bm{W}^\UT \bm{f}_t^\perp(\tilde{\bm{u}}_{< t};\bm{a}),
\end{align}
\ES
where the matrix denoisers $\{\Psi_t, \widetilde{\Psi}_{t},\Phi_t,\widetilde{\Phi}_{t}\}$ are defined as linear combinations of the transformed polynomials $\{\Psi^u_t,\widetilde{\Psi}^u_t,\Phi^u_t,\widetilde{\Phi}^u_t\}$ introduced in \lemref{lem:aux1}\footnote{Note that our convention slightly differs from that in~\cite{dudeja2024optimality}: 
the deterministic scalars \(\alpha_t\) and \(\beta_t\), which appear in the orthogonal decomposition of the iterates 
(cf.~\eqref{eq: ort.decomp.ap}), are absorbed into the definitions of 
\(\{\Psi_t, \widetilde{\Psi}_t, \Phi_t, \widetilde{\Phi}_t\}\). 
}:
\BS\label{eq: aux-denoiser-decomp}
\begin{align}
    \Psi_t(\lambda)   &\bydef \alpha_t \Psi^u_t(\lambda) + \beta_t \widetilde{\Psi}^u_t(\lambda), & \widetilde{\Psi}_t(\lambda)   &\bydef \alpha_t \Psi^v_t(\lambda) + \beta_t \widetilde{\Psi}^v_t(\lambda), \\
    \Phi_t(\lambda)   &\bydef \beta_t \Phi^v_t(\lambda) + \alpha_t \widetilde{\Phi}^v_t(\lambda), & \widetilde{\Phi}_t(\lambda)   &\bydef \beta_t \Phi^u_t(\lambda) + \alpha_t \widetilde{\Phi}^u_t(\lambda).
\end{align}
\ES
%
The advantage of this auxiliary algorithm is that its dynamics are governed by the rotationally invariant matrix $\bm{W}$ instead of the observation matrix $\bm{Y}$. The dynamics of this system is tractable using existing techniques (cf.~\thref{The:SE_app}), which we summarize in the following lemma, whose proof is deferred to Section~\ref{App:lemma7_proof}.

\begin{lemma}\label{lem:aux-OAMP-dyn}
The following hold for any $t \in \mathbb{N}$:
\begin{enumerate}
\item The iterates generated by the auxiliary OAMP algorithm in \eqref{eq:aux-OAMP} satisfy
\BS\label{Eqn:aux-OAMP-dyn_SE}
\begin{align}
 (\bm{u}_{*}, \tilde{\bm{u}}_1, \dotsc, \tilde{\bm{u}}_t; \bm{a}) &\overset{W_2}{\longrightarrow} (\serv{U}_{*}, \serv{U}_1, \dotsc, \serv{U}_t; \serv{A}),\\
 (\bm{v}_{*}, \tilde{\bm{v}}_1, \dotsc, \tilde{\bm{v}}_t; \bm{b}) &\overset{W_2}{\longrightarrow} (\serv{V}_{*}, \serv{V}_1, \dotsc, \serv{V}_t; \serv{B}),
\end{align}
\ES
where $ (\serv{U}_{*}, \serv{U}_1, \dotsc, \serv{U}_t; \serv{A})$ and $ (\serv{V}_{*}, \serv{V}_1, \dotsc, \serv{V}_t; \serv{B})$ are the state evolution random variables defined for the original OAMP algorithm in \eqref{eq:SEuchannel} and \eqref{eq:SEvchannel}.
\item Denote $\bm{f}_t^\perp\bydef{f}_t^\perp(\tilde{\bm{u}}_{<t};\bm{a}) $ and $\bm{g}_t^\perp\bydef{g}_t^\perp(\tilde{\bm{v}}_{<t};\bm{v}) $. The follow holds for all $i\in\mathbb{N}$:
   \BS\label{Eqn:lemma6_app_orthgonality2}
   \begin{align}
     & \frac{\left\langle \bm{u}_\ast, (\bm{WW}^\UT)^{i}\bm{f}_t^\perp\right\rangle}{M}\overset{a.s.}{\longrightarrow}0,\quad  \frac{\left\langle\bm{u}_\ast,(\bm{WW}^\UT)^i\bm{W}\bm{g}_t^\perp\right\rangle}{M}= \frac{\left\langle\bm{u}_\ast,\bm{W}(\bm{W}^\UT\bm{W})^i\bm{g}_t^\perp\right\rangle}{M}\overset{a.s.}{\longrightarrow}0,\label{Eqn:lemma6_app_orthgonality2-u}\\
&\frac{\left\langle \bm{v}_\ast, (\bm{W}^\UT\bm{W})^{i}\bm{g}_t^\perp\right\rangle}{N}\overset{a.s.}{\longrightarrow}0,\quad \frac{\left\langle\bm{v}_\ast,(\bm{W}^\UT\bm{W})^i\bm{W}^\UT\bm{f}_t^\perp\right\rangle}{N}=\frac{\left\langle\bm{v}_\ast,\bm{W}^\UT(\bm{W}\bm{W}^\UT)^i\bm{f}_t^\perp\right\rangle}{N}\overset{a.s.}{\longrightarrow}0.\label{Eqn:lemma6_app_orthgonality2-v}
   \end{align}
   \ES
\end{enumerate}
\end{lemma}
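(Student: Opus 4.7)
The proof proceeds by induction on $t$, simultaneously establishing the state-evolution claim~(1) and the orthogonality claim~(2). The central idea is to recast the auxiliary OAMP~\eqref{eq:aux-OAMP} as an instance of the general OAMP of Definition~\ref{def: Noise OAMP}, and then invoke Theorem~\ref{Th:OAMP_general_SE_app} together with Corollary~\ref{Cor:app_degeneracy}. Using the decomposition $f_t = \alpha_t \bm{u}_* + \bm{f}_t^\perp$ and $g_t = \beta_t \bm{v}_* + \bm{g}_t^\perp$ from~\eqref{eq: ort.decomp.ap}, I would first rewrite each auxiliary iterate as
\[
\tilde{\bm{u}}_t = F_t(\bm{W}\bm{W}^\UT)f_t(\tilde{\bm{u}}_{<t};\bm{a}) + \widetilde{F}_t(\bm{W}\bm{W}^\UT)\bm{W}g_t(\tilde{\bm{v}}_{<t};\bm{b}) + A_t(\bm{W}\bm{W}^\UT)\bm{u}_* + B_t(\bm{W}\bm{W}^\UT)\bm{W}\bm{v}_*,
\]
with $A_t \bydef \Psi_t - \alpha_t F_t$ and $B_t \bydef \widetilde{\Psi}_t - \beta_t \widetilde{F}_t$ (and analogously $C_t, D_t$ for $\tilde{\bm{v}}_t$). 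This rewriting isolates a ``signal-free'' piece involving the trace-free denoisers $F_t, G_t$ and the iterate denoisers $f_t, g_t$, separated from two signal-correction pieces that depend linearly on $\bm{u}_*, \bm{v}_*$.

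\textbf{Reformulation and SE.} I would then treat $\bm{u}_*, \bm{v}_*$ as initial iterates $\bm{m}_1, \bm{q}_1$ of an enlarged general OAMP (also incorporated into the augmented side information), and split each $\tilde{\bm{u}}_t, \tilde{\bm{v}}_t$ into finitely many general-OAMP sub-iterations, one per polynomial matrix denoiser in $\{F_t, \widetilde{F}_t, A_t, B_t, G_t, \widetilde{G}_t, C_t, D_t\}$. Under the natural change of variables $\mathsf{U}_s \leftrightarrow \mu_{u,s}\mathsf{U}_* + \mathsf{Z}_{u,s}$, the original divergence-free condition~\eqref{eq: divergence free constraint} on $f_t, g_t$ translates into the general-OAMP divergence-free condition on the combined iterate denoisers, the trace-free condition~\eqref{eq:trace free} carries over verbatim to $F_t, G_t$, while the polynomials $A_t, B_t, C_t, D_t$ act on the initial iterates $\bm{u}_*, \bm{v}_*$ and need not be trace-free. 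With Assumption~\ref{assump:main} providing rotational invariance of $\bm{W}$ and Wasserstein convergence of $(\bm{u}_*, \bm{a}), (\bm{v}_*, \bm{b})$, Corollary~\ref{Cor:app_degeneracy} delivers $W_2$ joint convergence of the enlarged iterate history; summing sub-iteration outputs by linearity of $W_2$ convergence yields~\eqref{Eqn:aux-OAMP-dyn_SE}, provided the limiting covariances are identified with those in the original SE~\eqref{eq: mean value of SE1}--\eqref{eq: Cov of SE1}. This matching reduces to computing scalar Haar integrals and relating them to the signal-eigenspace measures $\nu_1, \nu_2, \nu_3$ through Lemma~\ref{lem:spectral_measures_properties} and the polynomial decomposition~\eqref{eq: aux-denoiser-decomp}.

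\textbf{Orthogonality and main obstacle.} The $i=0$ case of~\eqref{Eqn:lemma6_app_orthgonality2} is immediate from the $W_2$ convergence above together with $\alpha_t = \mathbb{E}[\mathsf{U}_* f_t(\mathsf{U}_{<t};\mathsf{A})]$ and the analogous identity for $\beta_t$. For $i \ge 1$, I would further append to the enlarged general OAMP the polynomial matrix denoisers $P_i(\lambda)=\lambda^i$ applied to $\bm{u}_*, \bm{v}_*$, so that $(\bm{W}\bm{W}^\UT)^i\bm{u}_*$ and $\bm{W}(\bm{W}^\UT\bm{W})^i\bm{v}_*$ enter as formal iterates; Haar invariance gives $M^{-1}\bm{u}_*^\UT(\bm{W}\bm{W}^\UT)^i\bm{u}_* \to \langle\lambda^i\rangle_\mu$, while the extended SE evaluates $M^{-1}\bm{u}_*^\UT(\bm{W}\bm{W}^\UT)^i f_t(\tilde{\bm{u}}_{<t};\bm{a}) \to \alpha_t\langle\lambda^i\rangle_\mu$, producing the desired cancellation against the $\alpha_t\bm{u}_*$ component of $\bm{f}_t^\perp$. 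Mixed terms involving $\bm{W}\bm{g}_t^\perp$ reduce to analogous cross-channel computations whose contributions vanish by the rotational-invariance identity~\eqref{eq:cross_qform_0}. The main obstacle is the reformulation: the signal-carrying matrix denoisers $\Psi_t, \widetilde{\Psi}_t, \Phi_t, \widetilde{\Phi}_t$ cannot be merged with the trace-free $F_t, G_t$ within a single general-OAMP update, so the decomposition into multiple sub-iterations and the associated bookkeeping of side information, covariances, and divergence-free constraints must be executed carefully, paralleling the template of~\cite[Proof of Lemma 7]{dudeja2024optimality}.
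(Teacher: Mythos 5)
Your proposal follows essentially the same route as the paper's proof: recast the auxiliary recursion as a $\bm W$-driven general OAMP in the sense of Definition~\ref{def: Noise OAMP}, invoke Theorem~\ref{Th:OAMP_general_SE_app} (via Corollary~\ref{Cor:app_degeneracy}), and identify the limiting means and covariances with \eqref{eq: mean value of SE1}--\eqref{eq: Cov of SE1} through Lemma~\ref{lem:aux1} and the spectral measures $\nu_1,\nu_2,\nu_3$, with Claim~(2) handled by viewing the $(\bm W\bm W^\UT)^i$-weighted quantities as polynomial post-processing of the same recursion and letting the divergence-free condition (via Stein's lemma) kill the cross terms. One point to tighten: your direct matrix denoiser $A_t=\Psi_t-\alpha_t F_t$ is not admissible as written, since Definition~\ref{def: Noise OAMP} requires the non-$\bm W$-composed denoisers to be trace-free; you must center it as $A_t=\langle A_t\rangle_\mu+\bigl(A_t-\langle A_t\rangle_\mu\bigr)$ and carry $\langle A_t\rangle_\mu\,\bm u_*$ as the deterministic signal term --- this centering is exactly where the SE mean $\mu_{u,t}$ arises, and it is the analogue of the decomposition \eqref{eq:decomp-aux-u} that the paper applies to $\Psi_t$ directly.
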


\begin{remark}
Note that a direct application of Theorem~\ref{The:SE_app} would yield a state evolution expressed in terms of the transformed polynomials  
\(\{\Psi^u, \Psi^v, \tilde{\Psi}^u, \tilde{\Psi}^v, \Phi^u, \Phi^v, \tilde{\Phi}^u, \tilde{\Phi}^v\}\),  
which themselves depend on the original functions  
\(\{F, \tilde{F}, G, \tilde{G}\}\) in an implicit and complicated manner.  
Fortunately, by invoking the asymptotic equivalence established in Lemma~\ref{lem:aux1},  
the state evolution can be reformulated directly in terms of the original functions  
\(\{F, \tilde{F}, G, \tilde{G}\}\) and the probability measures  
\(\{\nu_1, \nu_2, \nu_3\}\), as presented in the original OAMP algorithm in  
\eqref{eq:SEuchannel} and~\eqref{eq:SEvchannel}.
\end{remark}

\vspace{5pt}

\paragraph{Proof of \thref{Thm: State Evolution}} Building on the preceding results, we are now ready to prove Theorem~\ref{Thm: State Evolution}.
Invoking the state evolution of the auxiliary OAMP algorithm in~\eqref{Eqn:aux-OAMP-dyn_SE},
it suffices to show that the auxiliary OAMP algorithm approximates the original OAMP algorithm in the following sense:
\begin{align} \label{eq:OAMP-to-auxOAMP-induction}
\bm{u}_t \explain{$N \rightarrow \infty$}{\simeq} \tilde{\bm{u}}_t \quad \text{and} \quad \bm{v}_t \explain{$N \rightarrow \infty$}{\simeq} \tilde{\bm{v}}_t,\quad\forall t\in\mathbb{N}.
\end{align}
We prove this via induction on $t$. The base case is trivial. Suppose \eqref{eq:OAMP-to-auxOAMP-induction} holds up to iteration $t-1$. We next show \eqref{eq:OAMP-to-auxOAMP-induction} holds for $t$. We have
\begin{align*}
\bm{u}_{t} & \;\; \explain{\eqref{eq:original-OAMP-decomp}}{=} \; \; \alpha_t F_{t}(\bm{Y}\bm{Y}^\UT) \bm{u}_* + \beta_t \tilde{F}_{t}(\bm{Y}\bm{Y}^\UT) \bm{Y} \bm{v}_* + F_{t}(\bm{Y}\bm{Y}^\UT) \bm{f}_t^\perp(\bm{u}_{<t};\bm{a}) + \tilde{F}_{t}(\bm{Y}\bm{Y}^\UT) \bm{Y} \bm{g}_t^\perp(\bm{v}_{<t};\bm{b}) \\
 &\explain{$N \rightarrow \infty$}{\simeq}\; \alpha_t F_{t}(\bm{Y}\bm{Y}^\UT) \bm{u}_* + \beta_t \tilde{F}_{t}(\bm{Y}\bm{Y}^\UT) \bm{Y} \bm{v}_* + F_{t}(\bm{Y}\bm{Y}^\UT) \bm{f}_t^\perp(\tilde{\bm{u}}_{<t};\bm{a}) + \tilde{F}_{t}(\bm{Y}\bm{Y}^\UT) \bm{Y} \bm{g}_t^\perp(\tilde{\bm{v}}_{<t};\bm{b}) \\
 &\explain{$N \rightarrow \infty$}{\simeq}\; \Psi_t(\bm{W}\bm{W}^\UT) \bm{u}_* + \widetilde{\Psi}_{t}(\bm{W}\bm{W}^\UT) \bm{W} \bm{v}_* + F_t(\bm{W}\bm{W}^\UT) \bm{f}_t^\perp(\tilde{\bm{u}}_{<t};\bm{a}) + \widetilde{F}_{t}(\bm{W}\bm{W}^\UT) \bm{W} \bm{g}_t^\perp(\tilde{\bm{v}}_{<t};\bm{b}) \\
 & \;\; \explain{\eqref{eq:aux-OAMP}}{=} \; \; \tilde{\bm{u}}_t,
\end{align*}
where the second step follows from the inductive hypothesis that $\bm{u}_s\explain{$N \rightarrow \infty$}{\simeq}  \tilde{\bm{u}}_s$ and $\bm{v}_s \explain{$N \rightarrow \infty$}{\simeq}  \tilde{\bm{v}}_s$ for all $s < t$ as well as the Lipschitz continuity of the iterate denoiser and the operator norm bound on matrix denoisers (cf.~\eqref{Eqn:YY_operator}), and the third step is due to Lemma \ref{lem:aux1} and Lemma \ref{lem:aux-OAMP-dyn} (which guarantees that the orthogonality conditions \eqref{Eqn:lemma6_app_orthgonality} are met).

The analysis of $\bm{v}_t$ is completely analogous and omitted. The proof  is complete.

\subsection{Proof of Lemma \ref{lem:aux1}}\label{Sec:proof_lemma_aux1}

The proof is presented in two parts, corresponding to the two claims in the lemma statement. 

\paragraph{Proof of Claim (1).} 

In what follows, we prove~\eqref{Eqn:lemma6_a}, which we recall below for convenience:
\BE
F(\bm{Y}\bm{Y}^\UT)\bm{u}_\ast 
\explain{$M \to \infty$}{\simeq} 
\Psi^u(\bm{W}\bm{W}^\UT)\bm{u}_\ast 
+ \Psi^v(\bm{W}\bm{W}^\UT)\bm{W}\bm{v}_\ast.
\EE
Since $F$ is a polynomial, it suffices to consider a monomial term and show that, for all $d \in \mathbb{N}$, there exist polynomials $(Q_u^d, Q_v^d)$ such that the following asymptotic equivalence holds:
\begin{align} \label{eq:aux-lemma-goal1}
(\bm{Y}\bm{Y}^\UT)^d \bm{u}_\ast 
\explain{$M \to \infty$}{\simeq} 
Q_u^d(\bm{W}\bm{W}^\UT)\bm{u}_\ast 
+ Q_v^d(\bm{W}\bm{W}^\UT)\bm{W}\bm{v}_\ast.
\end{align}
We prove \eqref{eq:aux-lemma-goal1} via induction on $d$. The base case $d=0$ is immediate. We now consider the induction step. 
Assume that~\eqref{eq:aux-lemma-goal1} holds for some integer $d \ge 0$; we will show that it also holds for $d+1$. 
Left-multiplying the expression for $(\bm{Y}\bm{Y}^\UT)^d \bm{u}_\ast$ by $\bm{Y}\bm{Y}^\UT$ gives
\BS\label{Eqn:C4_tmp1}
\begin{align}
(\bm{Y}\bm{Y}^\UT)^{d+1}\bm{u}_\ast
  &= (\bm{Y}\bm{Y}^\UT)\big((\bm{Y}\bm{Y}^\UT)^d \bm{u}_\ast\big) \\
  &\bydef (\bm{Y}\bm{Y}^\UT)\Big(Q_u^d(\bm{W}\bm{W}^\UT)\bm{u}_\ast 
       + Q_v^d(\bm{W}\bm{W}^\UT)\bm{W}\bm{v}_\ast + \bm{\epsilon}\Big) \\
  &= (\bm{Y}\bm{Y}^\UT)\Big(Q_u^d(\bm{W}\bm{W}^\UT)\bm{u}_\ast 
       + Q_v^d(\bm{W}\bm{W}^\UT)\bm{W}\bm{v}_\ast\Big) 
       + (\bm{Y}\bm{Y}^\UT)\bm{\epsilon}.
\end{align}
\ES
By the induction hypothesis~\eqref{eq:aux-lemma-goal1}, we have $\|\bm{\epsilon}\|^2/M \to 0$ almost surely.
Using the definition of $\bm{Y}$, we see that $\bm{Y}\bm{Y}^\UT$ is a rank-two perturbation of $\bm{W}\bm{W}^\UT$:
\BE\label{Eqn:YYT_expansion}
\bm{Y}\bm{Y}^\UT 
= \frac{\theta^2}{M}\bm{u}_\ast\bm{u}_\ast^\UT
  + \frac{\theta}{\sqrt{MN}}\!\left(\bm{u}_\ast\bm{v}_\ast^\UT\bm{W}^\UT 
  + \bm{W}\bm{v}_\ast\bm{u}_\ast^\UT\right)
  + \bm{W}\bm{W}^\UT.
\EE
Since $\bm{W}$ has bounded operator norm, as shown in~\cite[Appendix~B.1]{dudeja2024optimality}, we have
\BE\label{Eqn:YY_operator}
\limsup_{N\to\infty}\|\bm{Y}\bm{Y}^\UT\|_{\mathrm{op}} < \infty, 
\quad \text{almost surely.}
\EE
Combining~\eqref{Eqn:C4_tmp1}--\eqref{Eqn:YY_operator} yields
\BE\label{Eqn:C4_YY_d1_u}
\begin{split}
(\bm{Y}\bm{Y}^\UT)^{d+1}\bm{u}_\ast 
&\explain{$M \to \infty$}{\simeq} 
(\bm{Y}\bm{Y}^\UT)
\Big(Q_u^d(\bm{W}\bm{W}^\UT)\bm{u}_\ast 
+ Q_v^d(\bm{W}\bm{W}^\UT)\bm{W}\bm{v}_\ast\Big) \\
&\explain{(a)}{=} 
\Bigg(\frac{\theta^2}{M}\bm{u}_\ast\bm{u}_\ast^\UT
+ \frac{\theta}{\sqrt{MN}}\!\left(\bm{u}_\ast\bm{v}_\ast^\UT\bm{W}^\UT 
+ \bm{W}\bm{v}_\ast\bm{u}_\ast^\UT\right)
+ \bm{W}\bm{W}^\UT\Bigg)
\Big(Q_u^d(\bm{W}\bm{W}^\UT)\bm{u}_\ast 
+ Q_v^d(\bm{W}\bm{W}^\UT)\bm{W}\bm{v}_\ast\Big),
\end{split}
\EE
where step~(a) uses~\eqref{Eqn:YYT_expansion}. 
Expanding the product yields six terms, which we analyze below. 
For this purpose, define the following scalar quantities arising from inner products that converge almost surely in the same spirit of \eqref{eq:qform_v_limit}-\eqref{eq:cross_qform_0}:
\BS\label{Eqn:appC4_alphabeta}
\begin{align}
 \alpha_{uu} &\triangleq \lim_{M\to\infty}\frac{1}{M}\bm{u}_\ast^\UT Q_u^d(\bm{W}\bm{W}^\UT)\bm{u}_\ast 
   = \langle Q_u^d(\lambda) \rangle_{\mu}, \\
 \alpha_{uv} &\triangleq \lim_{M\to\infty}\frac{1}{M}\bm{u}_\ast^\UT Q_v^d(\bm{W}\bm{W}^\UT)\bm{W}\bm{v}_\ast = 0, \\
 \beta_{uv} &\triangleq \lim_{M\to\infty}\frac{1}{N}\bm{v}_\ast^\UT\bm{W}^\UT Q_u^d(\bm{W}\bm{W}^\UT)\bm{u}_\ast = 0, \\
 \beta_{vv} &\triangleq \lim_{M\to\infty}\frac{1}{N}\bm{v}_\ast^\UT\bm{W}^\UT Q_v^d(\bm{W}\bm{W}^\UT)\bm{W}\bm{v}_\ast 
   = \langle  \lambda Q_v^d(\lambda) \rangle_{\widetilde{\mu}},
\end{align}
\ES
where the spectral measures $\mu$ and $\widetilde{\mu}$ are defined in~\propref{prop:integral_representation}. 
Using~\eqref{Eqn:C4_YY_d1_u} and~\eqref{Eqn:appC4_alphabeta}, it follows that
\BS
\begin{align}
(\bm{Y}\bm{Y}^\UT)^{d+1}\bm{u}_\ast 
&\explain{$M \to \infty$}{\simeq}
\Big(\bm{W}\bm{W}^\UT Q_u^d(\bm{W}\bm{W}^\UT)
   + \theta^2\alpha_{uu}\bm{I}
   + \frac{\theta}{\sqrt{\delta}}\beta_{vv}\bm{I}\Big)\bm{u}_\ast \nonumber\\
&\quad + \Big(\bm{W}\bm{W}^\UT Q_v^d(\bm{W}\bm{W}^\UT)
   + \theta\sqrt{\delta}\alpha_{uu}\bm{I}\Big)\bm{W}\bm{v}_\ast \\
&= Q_u^{d+1}(\bm{W}\bm{W}^\UT)\bm{u}_\ast 
   + Q_v^{d+1}(\bm{W}\bm{W}^\UT)\bm{W}\bm{v}_\ast, \label{Eqn:appC4_tmp2}
\end{align}
\ES
where the degree-$(d+1)$ polynomials $(Q_u^{d+1}, Q_v^{d+1})$ are defined as
\begin{align}
Q_u^{d+1}(\lambda) 
  &\triangleq \lambda Q_u^d(\lambda) 
     + \theta^2\langle Q_u^d(\lambda) \rangle_{\mu}
     + \frac{\theta}{\sqrt{\delta}}\langle \lambda Q_v^d(\lambda) \rangle_{\widetilde{\mu}}, \\
Q_v^{d+1}(\lambda) 
  &\triangleq \lambda Q_v^d(\lambda) 
     + \theta\sqrt{\delta}\langle Q_u^d(\lambda) \rangle_{\mu}.
\end{align}
Hence,~\eqref{Eqn:appC4_tmp2} establishes the induction step for~\eqref{eq:aux-lemma-goal1}, 
thereby completing the proof of~\eqref{Eqn:lemma6_a} in Claim~(1).

The analyses of~\eqref{Eqn:lemma6_b}--\eqref{Eqn:lemma6_d} in Claim~(1) are analogous and are therefore omitted.

\paragraph{Proof of Claim (2).} Again, it suffices to prove that the following holds for all $d\in\mathbb{N}\cup\{0\}$:
\BS
\begin{align}
(\bm{YY}^\UT)^d \bm{u}&\explain{$M\rightarrow \infty$}{\simeq} (\bm{WW}^\UT)^d \bm{u}, \label{Eqn:app_YYd_u}\\
(\bm{Y}\bm{Y}^\UT)^d \bm{Y} \bm{v}& \explain{$M\rightarrow \infty$}{\simeq} (\bm{W}\bm{W}^\UT)^d \bm{W} \bm{v},\\
    (\bm{Y}^\UT\bm{Y})^d \bm{v} & \explain{$N\rightarrow \infty$}{\simeq} (\bm{W}^\UT\bm{W})^d\bm{v}, \label{Eqn:app_YYd_v}\\
    (\bm{Y}^\UT\bm{Y})^d \bm{Y}^\UT \bm{u}& \explain{$N\rightarrow \infty$}{\simeq} (\bm{W}^\UT\bm{W})^d \bm{W}^\UT \bm{u}.\label{Eqn:app_YYd_v2}
\end{align}
\ES

\textit{Analysis of $(\bm{Y}\bm{Y}^\UT)^d\bm{u}$:} We prove by induction on $d$. Assume that the claim is true for an integer $d \ge 0$ and proceed to the inductive step for $d+1$. We have:
\begin{align*}
    (\bm{Y}\bm{Y}^\UT)^{d+1} \bm{u} &= (\bm{Y}\bm{Y}^\UT) \left( (\bm{Y}\bm{Y}^\UT)^{d}  \bm{u}  \right) \explain{$M\rightarrow \infty$}{\simeq}  (\bm{Y}\bm{Y}^\UT) \left( (\bm{W}\bm{W}^\UT)^d \bm{u}  \right),
\end{align*}
where the second step follows from the induction hypothesis together with \eqref{Eqn:YY_operator}.
Next, we substitute the decomposition of $\bm{Y}\bm{Y}^\UT$ in \eqref{Eqn:YYT_expansion} and expand the product:
\begin{align*}
   (\bm{Y}\bm{Y}^\UT)^{d+1}\bm{u} \explain{$M\rightarrow \infty$}{\simeq}  & \left(\frac{\theta^2}{M} \bm{u}_* \bm{u}_*^\UT + \frac{\theta}{\sqrt{MN}} (\bm{u}_* \bm{v}_*^\UT \bm{W}^\UT + \bm{W} \bm{v}_* \bm{u}_*^\UT) + \bm{W}\bm{W}^\UT\right) \left( (\bm{W}\bm{W}^\UT)^d\bm{u}\right) \\
    =\; & \underbrace{\theta^2 \left( \frac{1}{M} \bm{u}_*^\UT (\bm{W}\bm{W}^\UT)^d \bm{u} \right) \bm{u}_*}_{\text{Term (i)}} 
    + \underbrace{\frac{\theta\sqrt{N}}{\sqrt{M}} \left( \frac{1}{N} \bm{v}_*^\UT \bm{W}^\UT (\bm{W}\bm{W}^\UT)^d\bm{u} \right) \bm{u}_*}_{\text{Term (ii)}} \\
    & + \underbrace{\frac{\theta\sqrt{M}}{\sqrt{N}} \left( \frac{1}{M} \bm{u}_*^\UT (\bm{W}\bm{W}^\UT)^d \bm{u} \right) \bm{W}\bm{v}_*}_{\text{Term (iii)}} 
    + \underbrace{(\bm{W}\bm{W}^\UT)^{d+1}\bm{u}}_{\text{Term (iv)}}.
\end{align*}
Based on the assumptions \eqref{Eqn:lemma6_app_orthgonality}, and using the facts $\|\bm{u}_\ast\|^2/M\overset{a.s.}{\longrightarrow}1$ and $\|\bm{Wv}_\ast\|^2/M\overset{a.s.}{\longrightarrow}C<\infty$ (where $C$ is a constant), it is straightforward to show that 
\[
\frac{\|\text{Term (i)}\|^2+\|\text{Term (ii)}\|^2+\|\text{Term (iii)}\|^2}{N}\overset{a.s.}{\longrightarrow}0.
\]
Hence, only Term (iv) survives, yielding the desired result:
\BE
   (\bm{Y}\bm{Y}^\UT)^{d+1}\bm{u}  \explain{$M\rightarrow \infty$}{\simeq} (\bm{W}\bm{W}^\UT)^{d+1}\bm{u}.
\EE

\textit{Analysis of $ (\bm{Y}\bm{Y}^\UT)^{d} \bm{Y} \bm{v} $:} We first note that
\BS\label{Eqn:Yv}
\begin{align}
 \bm{Y} \bm{v}  
 &= \left(\frac{\theta}{\sqrt{MN}}\bm{u}_\ast\bm{v}_\ast^\UT + \bm{W}\right)\bm{v} \\
 &= \left(\frac{\theta}{\sqrt{MN}}\bm{v}_\ast^\UT\bm{v}\right)\bm{u}_\ast + \bm{W}\bm{v} \\
 &\explain{$M \to \infty$}{\simeq} \bm{W}\bm{v}.
\end{align}
\ES
Note that the assumptions in~\eqref{Eqn:lemma6_app_orthgonality} imply the asymptotic orthogonality 
\(\bm{v}_\ast^\UT\bm{v}/N \overset{a.s.}{\to} 0\). 
Using~\eqref{Eqn:Yv} and \eqref{Eqn:YY_operator}, we obtain
\BE
 (\bm{Y}\bm{Y}^\UT)^{d}\bm{Y}\bm{v} 
 \explain{$M \to \infty$}{\simeq} 
   (\bm{Y}\bm{Y}^\UT)^{d}\bm{W}\bm{v}.
\EE
A careful inspection shows that the vector 
\(\hat{\bm{u}} \bydef \bm{W}\bm{v}\) 
satisfies the requirements on \(\bm{u}\) in~\eqref{Eqn:lemma6_app_orthgonality}. 
Hence, applying~\eqref{Eqn:app_YYd_u}, which we have just established above, yields
\BE
 (\bm{Y}\bm{Y}^\UT)^{d}\bm{W}\bm{v} 
 \explain{$M \to \infty$}{\simeq} 
   (\bm{W}\bm{W}^\UT)^{d}\bm{W}\bm{v}.
\EE
Combining the above two results gives
\BE
 (\bm{Y}\bm{Y}^\UT)^{d}\bm{Y}\bm{v} 
 \explain{$M \to \infty$}{\simeq} 
 (\bm{W}\bm{W}^\UT)^{d}\bm{W}\bm{v}.
\EE

The proofs of~\eqref{Eqn:app_YYd_v} and~\eqref{Eqn:app_YYd_v2} are analogous and are therefore omitted.

\subsection{Proof of \lemref{lem:aux-OAMP-dyn}}\label{App:lemma7_proof}

\paragraph{Proof of Claim (1):} We will prove the convergence for the $\bm{u}$-channel only, as the argument for the $\bm{v}$-channel is analogous. Our goal is to show that for any $t \in \mathbb{N}$, the iterates of the auxiliary OAMP algorithm converge weakly to the state evolution random variables:
\begin{align*}
(\bm{u}_{*}, \tilde{\bm{u}}_1, \dotsc, \tilde{\bm{u}}_{t}; \bm{a}) \overset{W_2}{\longrightarrow} (\serv{U}_{*}, \serv{U}_1, \dotsc, \serv{U}_t ; \serv{A}).
\end{align*}
The strategy is to demonstrate that the auxiliary OAMP algorithm can be rewritten in a canonical \textit{signal plus noise} form, whose state evolution has already been characterized in \thref{The:SE_app}.
\BS\label{eq:decomp-aux-u}
\begin{align} 
\tilde{\bm{u}}_t &=\Psi_t(\bm{W}\bm{W}^\UT) \bm{u}_* + \widetilde{\Psi}_{t}(\bm{W}\bm{W}^\UT) \bm{W} \bm{v}_*+ F_t(\bm{W}\bm{W}^\UT) \bm{f}_t^\perp(\tilde{\bm{u}}_{< t};\bm{a}) + \widetilde{F}_{t}(\bm{W}\bm{W}^\UT) \bm{W} \bm{g}_t^\perp(\tilde{\bm{v}}_{< t};\bm{b}) \nonumber \\
&=\mathbb{E}\left[ \Psi_t(\mathsf{D}_M^2)\right]\cdot\bm{u}_\ast +\bm{e}_t+\bm{h}_t , 
\end{align}
where $\mathsf{D}_M^2\sim\mu$ and 
\begin{align}
\bm{e}_t &\bydef \left(\Psi_t(\bm{W}\bm{W}^\UT) -\mathbb{E}\left[ \Psi_t(\mathsf{D}_M^2)\right]\cdot \bm{I}_M \right)\bm{u}_\ast ,\\
\bm{h}_t &\bydef  \widetilde{\Psi}_{t}(\bm{W}\bm{W}^\UT) \bm{W} \bm{v}_*+ F_t(\bm{W}\bm{W}^\UT) \bm{f}_t^\perp(\tilde{\bm{u}}_{< t};\bm{a}) + \widetilde{F}_{t}(\bm{W}\bm{W}^\UT) \bm{W} \bm{g}_t^\perp(\tilde{\bm{v}}_{< t};\bm{b}).
\end{align}
\ES
By viewing $\bm{u}_\ast$ as a side information, the above iteration an instance of the general OAMP algorithm introduced in Definition \ref{def: Noise OAMP}. By Theorem \ref{Th:OAMP_general_SE_app}, we have
\BE
(\bm{e}_{\le t},\bm{h}_{\le t};\bm{a},\bm{u}_\ast) \overset{W_2}{\longrightarrow}\left(\mathsf{E}_{\le t},\mathsf{H}_{\le t};\mathsf{A},\mathsf{U}_\ast\right),
\EE
where
\begin{enumerate}
\item The random variables $\mathsf{E}_{\le t}$, $\mathsf{H}_{\le t}$ and $(\mathsf{A},\mathsf{U}_\ast)$ are mutually independent;
\item $\mathsf{E}_{\le t}$, $\mathsf{W}_{\le t}$ are zero-mean Gaussian with 
\begin{align}\label{eq:aux-cov-structure}
\mathbb{E}\left[\mathsf{E}_s\mathsf{E}_t\right] &=\mathbb{E}\left[ \Psi_s(\mathsf{D}_M^2)\Psi_t(\mathsf{D}_M^2)\right] -\mathbb{E}\left[ \Psi_s(\mathsf{D}_M^2)\right]\cdot \mathbb{E}\left[ \Psi_t(\mathsf{D}_M^2)\right],\quad \forall s,t\in\mathbb{N},\\
\mathbb{E}\left[\mathsf{H}_s\mathsf{H}_t\right] &=\mathbb{E}\left[ \tilde{\Psi}_s(\mathsf{D}_M^2) \tilde{\Psi}_t(\mathsf{D}_M^2) \mathsf{D}_M^2\right] + \mathbb{E}\left[F_s(\mathsf{D}_M^2)F_t(\mathsf{D}_M^2)\right]\cdot \mathbb{E}[\mathsf{F}_s^\perp\mathsf{F}_t^\perp]+ \mathbb{E}\left[\tilde{F}_s(\mathsf{D}_M^2)\tilde{F}_t(\mathsf{D}_M^2)\mathsf{D}_M^2\right]\cdot \mathbb{E}[\mathsf{G}_s^\perp\mathsf{G}_t^\perp]\nonumber
\end{align}
where
\begin{align*}
\mathsf{F}_t &\bydef  f_t\Big( \mathbb{E}\left[ \Psi_{1}(\mathsf{D}_M^2)\right]\mathsf{U}_\ast +\mathsf{E}_{1}+\mathsf{H}_{1},\ldots, \mathbb{E}\left[ \Psi_{t-1}(\mathsf{D}_M^2)\right]\mathsf{U}_\ast +\mathsf{E}_{t-1}+\mathsf{H}_{t-1}; \mathsf{A}\Big),\\
\mathsf{F}_t^\perp &\bydef \mathsf{F}_t - \mathbb{E}\left[\mathsf{U}_\ast \mathsf{F}_t \right]\cdot \mathsf{U}_\ast.
\end{align*}
The random variables $(\mathsf{G}_t^\perp)_{t\ge1}$ are similarly defined. 
\end{enumerate}
The proof for the above claims are similar to those in \cite[Appendix B.4]{dudeja2024optimality} and hence omitted.

Next, we express the inner products involving $(\Psi_t, \tilde{\Psi}_t)$ in terms of $(F, \tilde{F}, G, \tilde{G})$, in the same spirit as the treatment in \cite[Appendix~B.4]{dudeja2024optimality}. As an initial step, we identify the coefficient of the signal components in \eqref{eq:decomp-aux-u}. Recall
\begin{align} \label{eq:aux-u-signal}
\mathbb{E}\left[ \Psi_t(\mathsf{D}_M^2)\right] &\explain{(a)}{=}\lim_{N\to\infty} \frac{\langle \Psi_t(\bm{WW}^\UT)\bm{u}_\ast, \bm{u}_\ast\rangle}{M} \nonumber \\
&\explain{(b)}{=}\lim_{N\to\infty} \frac{\langle  \alpha_t F_{t}(\bm{Y}\bm{Y}^\UT) \bm{u}_* + \beta_t \tilde{F}_{t}(\bm{Y}\bm{Y}^\UT) \bm{Y} \bm{v}_*, \bm{u}_\ast\rangle}{M}\nonumber \\
&=\alpha_t\cdot \lim_{N\to\infty} \frac{\langle   F_{t}(\bm{Y}\bm{Y}^\UT) \bm{u}_* , \bm{u}_\ast\rangle}{M} + \beta_t\cdot\lim_{N\to\infty}\frac{\langle    \tilde{F}_{t}(\bm{Y}\bm{Y}^\UT) \bm{Y} \bm{v}_*, \bm{u}_\ast\rangle}{M} \\
&\explain{(c)}{=} \alpha_t \cdot \langle F_t(\lambda) \rangle_{\nu_1} + \beta_t \cdot (1+\delta^{-1}) \langle \sigma \widetilde{F}_t(\sigma^2) \rangle_{\nu_3},\nonumber
\end{align}
where step (a) follows from \eqref{Eqn:app_SE_base_1}, and all limits are taken in the almost sure sense; step (b) substitutes the definition of the transformed polynomial $\Psi_t$ (cf.~\eqref{eq: aux-denoiser-decomp}); step (c) is due to \propref{prop:integral_representation} in the following manner
\begin{itemize}
        \item The first term, $\frac{1}{M} \bm{u}_*^\UT F_t(\bm{Y}\bm{Y}^\UT)\bm{u}_*$, is a direct application of \propref{prop:integral_representation} Claim~(1) with $h(\cdot) = F_t(\cdot)$.
        \item The second term involves the bilinear form $\frac{1}{M} \bm{u}_*^\UT \widetilde{F}_{t}(\bm{Y}\bm{Y}^\UT) \bm{Y} \bm{v}_*$. To use \propref{prop:integral_representation}, we define an operator $f(\bm{Y}) = \widetilde{F}_{t}(\bm{Y}\bm{Y}^\UT)\bm{Y}$. The function $f$ acts on the singular values $\sigma$ of $\bm{Y}$ as $f(\sigma) = \sigma \widetilde{F}_t(\sigma^2)$, which is  odd by construction. We can therefore apply \propref{prop:integral_representation} Claim~(2), which gives the limit:
        $ \lim_{M\to\infty} \frac{L}{M} \cdot \frac{1}{L} \bm{u}_*^\UT \widetilde{F}_{t}(\bm{Y}\bm{Y}^\UT) \bm{Y} \bm{v}_* \ac (1+\delta^{-1})\int \sigma \widetilde{F}_t(\sigma^2) \,d \nu_3(\sigma). $
    \end{itemize}

\medskip

We now examine the covariance structure \eqref{eq:aux-cov-structure} in the sequel.
\begin{align}\label{eq:aux-main-noise}
&\mathbb{E}\!\left[ \Psi_s(\mathsf{D}_M^2)\Psi_t(\mathsf{D}_M^2)\right] + \mathbb{E}\!\left[ \widetilde{\Psi}_s(\mathsf{D}_M^2)\,\widetilde{\Psi}_t(\mathsf{D}_M^2)\,\mathsf{D}_M^2\right] \nonumber \\
&\explain{(d)}{=} \lim_{M\to\infty} \frac{1}{M}\Big( \big\langle \Psi_s(\bm{W}\bm{W}^\UT)\bm{u}_\ast,\; \Psi_t(\bm{W}\bm{W}^\UT)\bm{u}_\ast\big\rangle  +  \big\langle \widetilde{\Psi}_s(\bm{W}\bm{W}^\UT)\bm{W}\bm{v}_\ast,\; \widetilde{\Psi}_t(\bm{W}\bm{W}^\UT)\bm{W}\bm{v}_\ast\big\rangle \Big)\nonumber\\
&\explain{(e)}{=} \lim_{M\to\infty} \frac{1}{M}\Big\langle
\big(\alpha_s \Psi^u_s(\bm{W}\bm{W}^\UT)+\beta_s \widetilde{\Psi}^u_s(\bm{W}\bm{W}^\UT)\big)\bm{u}_\ast
+\big(\alpha_s \Psi^v_s(\bm{W}\bm{W}^\UT)+\beta_s \widetilde{\Psi}^v_s(\bm{W}\bm{W}^\UT)\big)\bm{W}\bm{v}_\ast,\\
&\qquad\qquad\qquad\qquad\quad
\big(\alpha_t \Psi^u_t(\bm{W}\bm{W}^\UT)+\beta_t \widetilde{\Psi}^u_t(\bm{W}\bm{W}^\UT)\big)\bm{u}_\ast
+\big(\alpha_t \Psi^v_t(\bm{W}\bm{W}^\UT)+\beta_t \widetilde{\Psi}^v_t(\bm{W}\bm{W}^\UT)\big)\bm{W}\bm{v}_\ast
\Big\rangle\nonumber\\
&\explain{(f)}{=} \lim_{M\to\infty} \frac{1}{M}\Big\langle
\alpha_s F_s(\bm{Y}\bm{Y}^\UT)\bm{u}_\ast+\beta_s \widetilde{F}_s(\bm{Y}\bm{Y}^\UT)\bm{Y}\bm{v}_\ast,\;
\alpha_t F_t(\bm{Y}\bm{Y}^\UT)\bm{u}_\ast+\beta_t \widetilde{F}_t(\bm{Y}\bm{Y}^\UT)\bm{Y}\bm{v}_\ast
\Big\rangle\nonumber\\
& \explain{(g)}{=}   \alpha_s\alpha_t \langle F_s(\lambda) F_t(\lambda) \rangle_{\nu_1}
+ \beta_s\beta_t \delta^{-1}\langle \lambda \tilde{F}_s(\lambda)\tilde{F}_t(\lambda) \rangle_{\nu_2}
+ (1+\delta^{-1})\left(
  \alpha_s\beta_t \langle \sigma F_s(\sigma^2)\tilde{F}_t(\sigma^2) \rangle_{\nu_3}
  + \alpha_t\beta_s \langle \sigma F_t(\sigma^2)\tilde{F}_s(\sigma^2) \rangle_{\nu_3}
\right)
,\nonumber
\end{align}
where step (d) follows from  \eqref{Eqn:app_SE_base_1}-\eqref{Eqn:app_SE_base_3}, and all limits are taken in the almost sure sense; step (e) substitutes the expansion of the transformed polynomial $\Psi_t,\tilde{\Psi}_t$ (cf.~\eqref{eq: aux-denoiser-decomp}) and uses the independence between $\bm{u}_*,\bm{v}_*$ to absorb the cross term; step (f) utilizes the asymptotical equivalence \eqref{Eqn:lemma6_1} in \lemref{lem:aux1}; step (g) repeats the same procedure as in step (c) in the light of \propref{prop:integral_representation}. 
Next, we investigate 
\begin{align}\label{eq:aux-pure-noise-u}
\mathbb{E}\left[F_s(\mathsf{D}_M^2)F_t(\mathsf{D}_M^2)\right]\cdot \mathbb{E}[\mathsf{F}_s^\perp\mathsf{F}_t^\perp] & = \E[(\mathsf{F}_s - \alpha_s \mathsf{U}_*)(\mathsf{F}_t - \alpha_t \mathsf{U}_*)] \cdot \lim_{M\to \infty} \frac{\langle F_s(\bm{WW^\UT})\bm{u}_*, F_t(\bm{WW^\UT})\bm{u}_* \rangle}{M}\nonumber \\
&= \big(  \E[\mathsf{F}_s \mathsf{F}_t]- \alpha_s\alpha_t  \big) \cdot  \langle F_s(\lambda) F_t(\lambda) \rangle_{\mu} \bydef \sigma_{f,st}^2 \cdot  \langle F_s(\lambda) F_t(\lambda) \rangle_{\mu}.
\end{align}
Similarly we have
\begin{align}\label{eq:aux-pure-noise-v}
\mathbb{E}\left[\tilde{F}_s(\mathsf{D}_M^2)\tilde{F}_t(\mathsf{D}_M^2)\mathsf{D}_M^2\right]\cdot &\mathbb{E}[\mathsf{G}_s^\perp\mathsf{G}_t^\perp]  = \E[(\mathsf{G}_s - \beta_s \mathsf{V}_*)(\mathsf{G}_t - \beta_t \mathsf{V}_*)] \cdot \lim_{M\to \infty} \frac{\langle \tilde{F}_s(\bm{WW^\UT})\bm{W}\bm{v}_*, \tilde{F}_t(\bm{WW^\UT})\bm{W}\bm{v}_* \rangle}{M} \nonumber \\
&\quad = \big(  \E[\mathsf{G}_s \mathsf{G}_t]- \beta_s\beta_t  \big) \cdot \delta^{-1} \langle \lambda\tilde{F}_s(\lambda) \tilde{F}_t(\lambda) \rangle_{\muN}  \bydef  \sigma_{g,st}^2 \cdot \delta^{-1} \langle \lambda\tilde{F}_s(\lambda) \tilde{F}_t(\lambda) \rangle_{\muN}.
\end{align}
Finally, let us compute the total covariance \eqref{eq:aux-cov-structure} by gathering \eqref{eq:aux-u-signal} to \eqref{eq:aux-pure-noise-v}
\begin{align}
\Sigma_{u,st} & \bydef \mathbb{E}[\mathsf{Z}_{u,s}\mathsf{Z}_{u,t}]  = \mathbb{E}\left[\mathsf{E}_s\mathsf{E}_t\right] + \mathbb{E}\left[\mathsf{H}_s\mathsf{H}_t\right]\\
&= \alpha_s\alpha_t \langle F_s(\lambda)F_t(\lambda) \rangle_{\nu_1}
  + \beta_s\beta_t \delta^{-1}\langle \lambda\tilde{F}_s(\lambda)\tilde{F}_t(\lambda) \rangle_{\nu_2}
  - \mu_{u,s}\mu_{u,t} \nonumber\\
&\quad+ (1+\delta^{-1})\left(
    \alpha_s\beta_t \langle \sigma F_s(\sigma^2)\tilde{F}_t(\sigma^2) \rangle_{\nu_3}
    + \alpha_t\beta_s \langle \sigma F_t(\sigma^2)\tilde{F}_s(\sigma^2) \rangle_{\nu_3}
  \right) \nonumber\\
&\quad+ \sigma_{f,st}^2\langle F_s(\lambda)F_t(\lambda) \rangle_{\mu}
  + \delta^{-1} \sigma_{g,st}^2 \langle \lambda\tilde{F}_s(\lambda)\tilde{F}_t(\lambda) \rangle_{\widetilde{\mu}},
\end{align}
which is precisely the claimed covariance structure in \eqref{eq: Cov of SE1}.
\vspace{10pt}

\paragraph{Proof of Claim (2).} The proof follows the same strategy as in \cite[Appendix B.4]{dudeja2024optimality}. We briefly outline the argument for \eqref{Eqn:lemma6_app_orthgonality2-u}; the proof of \eqref{Eqn:lemma6_app_orthgonality2-v} is analogous and therefore omitted.

For any fixed $t \in \mathbb{N}$ and $i \in \mathbb{N}$, the terms
\BE
    (\bm{W}\bm{W}^\UT)^{i}\bm{f}_t^\perp 
    \quad \text{and} \quad 
    (\bm{W}\bm{W}^\UT)^{i}\bm{W}\bm{g}_t^\perp
\EE
may be interpreted as post-processing steps of an OAMP algorithm (cf.~\eqref{eq:decomp-aux-u}), whose dynamics are characterized by state evolution. By (1) a simple re-indexing, (2) viewing $\bm{u}_\ast$ and $\bm{v}_\ast$ as side information, and (3) an appropriate specification of the matrix-denoising functions, these post-processed terms can be identified with the iterates of another OAMP algorithm of the form \eqref{def: Noise OAMP}. Consequently, the state-evolution results in Theorem~\ref{Th:OAMP_general_SE_app} apply.

In particular, we establish that the empirical distributions of these terms converge to Gaussian random variables that are independent of the side information (in this case, $\mathsf{U}_\ast$ and $\mathsf{V}_\ast$, which represent the underlying signals). The claimed asymptotic orthogonality therefore follows.

\section{Derivations for Optimal Denoisers} 

We adopt a greedy per--iteration design for the denoisers, focusing on the
$\bm{v}$--channel (the $\bm{u}$--channel is analogous). At iteration $t$, the
state evolution in~\eqref{eq:SEvchannel} yields a scalar Gaussian channel
$\mathsf V_t=\mu_{v,t}\mathsf V_*+\mathsf Z_{v,t}$ with variance
$\sigma_{v,t}^2\bydef\Var(\mathsf Z_{v,t})$. We design the denoisers to locally
maximize the squared cosine similarity
\begin{equation}\label{eq:local_sqcos_v}
w_{2,t}\bydef \frac{\mu_{v,t}^2}{\mu_{v,t}^2+\sigma_{v,t}^2}.
\end{equation}
The design at iteration $t$ is decoupled into two conditional optimizations:
\begin{itemize}

\item With $(f_t,g_t)$ fixed, the pair $(\mu_{v,t},\sigma_{v,t}^2)$ depends on the
matrix denoisers $(G_t,\widetilde G_t)$. Optimizing $(G_t,\widetilde G_t)$ under
the trace--free constraint~\eqref{eq:trace free} yields the optimal spectral
denoisers (Appendix~\ref{app:optim_matrix}).

\item With $(G_t^\ast,\widetilde G_t^\ast)$ fixed, the pair
$(\mu_{v,t},\sigma_{v,t}^2)$ depends on $(f_t,g_t)$ only through the residual
covariances in state evolution. Optimizing $(f_t,g_t)$ under the divergence--free
constraints~\eqref{eq: divergence free constraint} yields the DMMSE iterate
denoisers (Appendix~\ref{app:optim_iterate}).

\end{itemize}

\subsection{Optimal Matrix Denoisers}\label{app:optim_matrix}

At iteration $t$, we treat the iterate denoisers $(f_t,g_t)$ as given and optimize the spectral
denoisers $(G,\widetilde G)$ in the $\bm v$--update. Recall the SE variables and suppress $t$ for simplicity \eqref{eq:longmemory-F-G}-\eqref{eq:SE-sigma-fg}
\[
\mathsf F\bydef f_t(\mathsf U_1,\ldots,\mathsf U_{t-1};\mathsf A),\qquad
\mathsf G\bydef g_t(\mathsf V_1,\ldots,\mathsf V_{t-1};\mathsf B),\qquad
\alpha\bydef \E[\mathsf U_*\mathsf F],\qquad
\beta\bydef \E[\mathsf V_*\mathsf G],
\]
\[
\sigma_f^2\bydef \E(\mathsf F^2)-\alpha^2,\qquad
\sigma_g^2\bydef \E(\mathsf G^2)-\beta^2.
\]
Introduce the effective precisions
\begin{equation}\label{eq:rho_def_greedy}
\rho_1\bydef \frac{\alpha^2}{\sigma_f^2},\qquad
\rho_2\bydef \frac{\beta^2}{\sigma_g^2},
\end{equation}
With $(f_t,g_t)$ fixed (hence $(\alpha,\beta,\rho_1,\rho_2)$ fixed), we choose
$(G,\widetilde G)$ to maximize~\eqref{eq:local_sqcos_v} subject to the trace--free
constraint. Writing $(\mu_{v,t},\sigma_{v,t}^2)$ as functionals of $(G,\widetilde G)$,
we consider
\begin{align}\label{eq:optimal-matrix-snr}
&\max_{G,\widetilde G}\ 
  \frac{\bigl[\mu_{v,t}(G,\widetilde G)\bigr]^2}
       {\bigl[\mu_{v,t}(G,\widetilde G)\bigr]^2+\sigma^2_{v,t}(G,\widetilde G)}
&&\text{s.t.}\quad \langle G\rangle_{\widetilde\mu}=0\\
&=
\min_{G,\widetilde G}\ 
  \biggl\{
    1-
    \frac{\bigl[\mu_{v,t}(G,\widetilde G)\bigr]^2}
         {\bigl[\mu_{v,t}(G,\widetilde G)\bigr]^2+\sigma^2_{v,t}(G,\widetilde G)}
  \biggr\}
&&\text{s.t.}\quad \langle G\rangle_{\widetilde\mu}=0 \nonumber\\
&\stackrel{(a)}{=}
\min_{G,\widetilde G}\min_{c\in\R}\ 
  \bigl[1-c\,\mu_{v,t}(G,\widetilde G)\bigr]^2
  + c^2\sigma^2_{v,t}(G,\widetilde G)
&&\text{s.t.}\quad \langle cG\rangle_{\widetilde\mu}=0 \nonumber\\
&\stackrel{(b)}{=}
\min_{G,\widetilde G}\ 
  \bigl[1-\mu_{v,t}(G,\widetilde G)\bigr]^2
  + \sigma^2_{v,t}(G,\widetilde G)
&&\text{s.t.}\quad \langle G\rangle_{\widetilde\mu}=0 \nonumber\\
&\stackrel{(c)}{=}
\min_{G,\widetilde G}\ 
  \Big\langle (\beta G-1)^2\Big\rangle_{\nu_2}
  + \delta\Big\langle \lambda(\alpha\widetilde G)^2\Big\rangle_{\nu_1}
  + \frac{1}{\rho_2}\Big\langle (\beta G)^2\Big\rangle_{\widetilde\mu}
\label{eq:optimal-matrix-snr-final}
&&\\
&\qquad
  + \frac{\delta}{\rho_1}\Big\langle \lambda(\alpha\widetilde G)^2\Big\rangle_{\mu}
  + 2(1+\delta)\Big\langle (\beta G(\sigma^2)-1)\,(\alpha\widetilde G(\sigma^2))\,\sigma\Big\rangle_{\nu_3}
&&\text{s.t.}\quad \langle G\rangle_{\widetilde\mu}=0. \nonumber
\end{align}
Here $\widetilde\mu \bydef \delta \cdot \mu + (1-\delta)\cdot \delta_{\{0\}}$ is the noise measure in~\eqref{eq:trace free}. The reductions are:
\begin{enumerate}
\item[(a)] Apply $1-a^2/(a^2+b)=\min_{c\in\R}(1-ca)^2+c^2b$ and use $\langle cG\rangle_{\widetilde\mu}=c\langle G\rangle_{\widetilde\mu}$.
\item[(b)] Use the homogeneity of $(\mu_{v,t},\sigma_{v,t}^2)$ to absorb $c$ into $(G,\widetilde G)$.
\item[(c)] Substitute~\eqref{eq:SEvchannel} and~\eqref{eq:original-OAMP-decomp-v}, and eliminate $(\sigma_f^2,\sigma_g^2)$ via~\eqref{eq:rho_def_greedy}.
\end{enumerate}
The variational problem \eqref{eq:optimal-matrix-snr-final} is convex. We introduce a
Lagrange multiplier $\xi_v$ for the trace-free constraint:
\begin{align}\label{eq: Lag of v}
\mathcal{L}_v
&\bydef \langle (\beta G-1)^2\rangle_{\nu_2}
   + \delta\langle \lambda(\alpha\widetilde G)^2\rangle_{\nu_1}
   + \frac{1}{\rho_2}\langle (\beta G)^2\rangle_{\widetilde\mu}
   + \frac{\delta}{\rho_1}\langle \lambda(\alpha\widetilde G)^2\rangle_{\mu}
   + 2(1+\delta)\langle (\beta G(\sigma^2)-1)(\alpha\widetilde G(\sigma^2))\,\sigma\rangle_{\nu_3}
   - 2\xi_v\langle G\rangle_{\widetilde\mu} \nonumber\\
&= \mathcal{L}_v^{(0)} + \mathcal{L}_v^{(\mathrm{bulk})}
  + \mathcal{L}_v^{(\mathrm{out})}, \nonumber
\end{align}
where in spirit of
\lemref{lem:spectral_measures_properties} we decompose $\mathcal{L}_v$ into the contributions from the atom at $0$, the
bulk $\supp(\mu)$, and the limiting outliers $\mathcal{K}_*$ as:
\begin{equation}\label{eq:Lv0_def}
\begin{aligned}
\mathcal{L}_v^{(0)}
\bydef\ &
\nu_2(\{0\})\bigl(\beta G(0)-1\bigr)^2
+ (1-\delta)\frac{1}{\rho_2}\bigl(\beta G(0)\bigr)^2
- 2\xi_v(1-\delta)G(0).
\end{aligned}
\end{equation}
\begin{equation}\label{eq:Lvout_def}
\begin{aligned}
\mathcal{L}_v^{(\mathrm{out})}
\bydef\ &
\sum_{\lambda_*\notin\supp(\mu)\cup\{0\}}
\Big\{
\nu_2(\{\lambda_*\})\bigl(\beta G(\lambda_*)-1\bigr)^2
+ \delta\,\nu_1(\{\lambda_*\})\,\lambda_*\,\bigl(\alpha\widetilde G(\lambda_*)\bigr)^2 \\
&\qquad\qquad\qquad
+ 4(1+\delta)\,\nu_3(\{\sigma_*\})\,\sigma_*\,
\bigl(\beta G(\lambda_*)-1\bigr)\bigl(\alpha\widetilde G(\lambda_*)\bigr)
\Big\}.
\end{aligned}
\end{equation}
\begin{align}
\mathcal{L}_v^{(\mathrm{bulk})}
\bydef\ &
\int_{\supp(\mu)} \bigl(\beta G(\lambda)-1\bigr)^2\,\mathrm{d}\nu_2(\lambda)
+ \delta\int_{\supp(\mu)} \lambda\,\bigl(\alpha\widetilde G(\lambda)\bigr)^2\,\mathrm{d}\nu_1(\lambda) \nonumber \\
&\quad
+ \frac{1}{\rho_2}\int_{\supp(\mu)} \bigl(\beta G(\lambda)\bigr)^2\,\mathrm{d}\{\delta\cdot\mu(\lambda)\}
+ \frac{\delta}{\rho_1}\int_{\supp(\mu)} \lambda\,\bigl(\alpha\widetilde G(\lambda)\bigr)^2\,\mathrm{d}\mu(\lambda) \label{eq:Lvbulk_def} \\
&\quad
+ 2(1+\delta)\int_{\{\sigma:\,\sigma^2\in\supp(\mu)\}}
\bigl(\beta G(\sigma^2)-1\bigr)\bigl(\alpha\widetilde G(\sigma^2)\bigr)\sigma\,\mathrm{d}\nu_3(\sigma)
- 2\xi_v\int_{\supp(\mu)} G(\lambda)\,\mathrm{d}\{\delta\cdot\mu(\lambda)\}. \nonumber
\end{align}
\paragraph{Optimal Denoisers in $\supp(\mu)$.}
We minimize $\mathcal{L}_v^{(\mathrm{bulk})}$ pointwise over $\supp(\mu)$. Using \lemref{lem:spectral_measures_properties} and the characterization of the absolutely continuous parts
$\nu_i^{\parallel}$ via the shrinkage functions $\varphi_i$, we have
\begin{align}\label{eq:Lvbulk-mu-repr}
\mathcal{L}_v^{(\mathrm{bulk})}
&=
\Big\langle
(\beta G(\lambda)-1)^2\,\varphi_2(\lambda)
+ \frac{\delta}{\rho_2}\bigl(\beta G(\lambda)\bigr)^2
- 2\xi_v\delta\,G(\lambda)
\nonumber\\
&\qquad\quad
+ \delta\lambda\bigl(\alpha\widetilde G(\lambda)\bigr)^2\Big(\varphi_1(\lambda)+\frac{1}{\rho_1}\Big)
+ 2\sqrt{\delta}\,(\beta G(\lambda)-1)\bigl(\alpha\widetilde G(\lambda)\bigr)\varphi_3(\lambda)
\Big\rangle_{\mu,\ \supp(\mu)} .
\end{align}
Here we use the oddness of $\nu_3^{\parallel}$ and the change of variables $\lambda=\sigma^2$ on $\{\sigma:\sigma^2\in\supp(\mu)\}$.

\noindent Taking pointwise first--order conditions yields, for $\mu$--a.e.\ $\lambda\in\supp(\mu)$,
the following equations hold:
\begin{align}\label{eq:FOC-v-bulk}
\bigl(\rho_2\varphi_2(\lambda)+\delta\bigr)\bigl(\beta G(\lambda)-1\bigr)
\;+\;\sqrt{\delta}\,\rho_2\varphi_3(\lambda)\,\bigl(\alpha\widetilde G(\lambda)\bigr)
&= -\,\delta\Bigl(1-\frac{\xi_v\rho_2}{\beta}\Bigr), \nonumber\\
\sqrt{\delta}\,\rho_1\varphi_3(\lambda)\,\bigl(\beta G(\lambda)-1\bigr)
\;+\;\delta\lambda\bigl(\rho_1\varphi_1(\lambda)+1\bigr)\bigl(\alpha\widetilde G(\lambda)\bigr)
&= 0.
\end{align}
Solving \eqref{eq:FOC-v-bulk} yields the bulk minimizers
\begin{align}\label{eq:bulk-optimal-G-Gtilde}
\beta G_{\mathrm{bulk}}^*(\lambda)
&= 1-\Bigl(1-\frac{\xi_v\rho_2}{\beta}\Bigr)\,
      \frac{\delta\bigl[\rho_1\varphi_1(\lambda)+1\bigr]\lambda}
           {\bigl[\rho_1\varphi_1(\lambda)+1\bigr]\bigl[\rho_2\varphi_2(\lambda)+\delta\bigr]\lambda
            - \rho_1\rho_2\varphi_3^2(\lambda)}
   \;\bydef\; 1-\Bigl(1-\frac{\xi_v\rho_2}{\beta}\Bigr)Q^*(\lambda), \nonumber\\
\alpha\widetilde G_{\mathrm{bulk}}^*(\lambda)
&= \Bigl(1-\frac{\xi_v\rho_2}{\beta}\Bigr)\,
    \frac{\sqrt{\delta}\,\rho_1\varphi_3(\lambda)}
         {\bigl[\rho_1\varphi_1(\lambda)+1\bigr]\bigl[\rho_2\varphi_2(\lambda)+\delta\bigr]\lambda
          - \rho_1\rho_2\varphi_3^2(\lambda)}
   \;\bydef\; \Bigl(1-\frac{\xi_v\rho_2}{\beta}\Bigr)\widetilde Q^*(\lambda).
\end{align}

\paragraph{Optimal denoisers at the origin.}
We next minimize the contribution of the atom at $\{0\}$. From \eqref{eq:Lv0_def},
\begin{align}\label{eq:LV0-expand}
\mathcal{L}_v^{(0)}
= \Big[\nu_2(\{0\})+(1-\delta)\tfrac{1}{\rho_2}\Big]\bigl(\beta G(0)\bigr)^2
 - 2\Big[\nu_2(\{0\})+(1-\delta)\tfrac{\xi_v}{\beta}\Big]\bigl(\beta G(0)\bigr)
 + \nu_2(\{0\}).
\end{align}
Minimizing this quadratic gives
\begin{align}\label{eq:optimal_betaG0}
\beta G_0^*(0)
&\stackrel{(a)}{=} 1-\Bigl(1-\frac{\xi_v\rho_2}{\beta}\Bigr)\,
     \frac{1-\delta}{\rho_2\nu_2(\{0\})+(1-\delta)}
\stackrel{(b)}{=} 1-\Bigl(1-\frac{\xi_v\rho_2}{\beta}\Bigr)\,
   \frac{1-\theta^2(1-\delta)\pi\mathcal{H}(0)}
        {\rho_2+1-\theta^2(1-\delta)\pi\mathcal{H}(0)}.
\end{align}
Here (a) is the closed-form minimizer of \eqref{eq:LV0-expand}, and (b) substitutes
$\nu_2(\{0\})$ from \lemref{lem:spectral_measures_properties}.

To match the bulk formula \eqref{eq:bulk-optimal-G-Gtilde} at the origin \eqref{eq:optimal_betaG0}, note that $\beta G_0^*(0)$ is exactly the $\lambda\downarrow 0$ limit of \eqref{eq:bulk-optimal-G-Gtilde} by the explicit form of $\varphi_2$ in \eqref{eq:phiv}:
\[
\beta G_0^*(0)
= \lim_{\lambda\to 0}\beta G_{\mathrm{bulk}}^*(\lambda)
\qquad\text{since}\qquad
\lim_{\lambda\to 0}Q^*(\lambda)
= \frac{\delta}{\rho_2\varphi_2(0)+\delta}
= \frac{1-\theta^2(1-\delta)\pi\mathcal{H}(0)}
       {\rho_2+1-\theta^2(1-\delta)\pi\mathcal{H}(0)}.
\]
To enforce the trace--free constraint $\langle G\rangle_{\widetilde\mu}=0$ with
$\widetilde\mu=\delta\mu+(1-\delta)\delta_{\{0\}}$, we expand
\begin{align}\label{eq:xi_v_normalization}
0=\langle G\rangle_{\widetilde\mu}
&=\Big\langle \frac{1}{\beta}\Big(1-\Bigl(1-\frac{\xi_v\rho_2}{\beta}\Bigr)Q^*(\lambda)\Big)\Big\rangle_{\widetilde\mu}
\ \implies\
1-\frac{\xi_v\rho_2}{\beta}
=\Big\langle Q^*(\lambda)\Big\rangle_{\widetilde\mu}^{-1},
\end{align}
Substituting into \eqref{eq:bulk-optimal-G-Gtilde} gives the unified bulk/zero expressions via $Q^*(\lambda), \widetilde Q^*(\lambda)$ in \eqref{eq:bulk-optimal-G-Gtilde}
\[
\beta G^*(\lambda)
= 1-\Big\langle Q^*(\lambda)\Big\rangle_{\widetilde\mu}^{-1}Q^*(\lambda),
\qquad
\alpha\widetilde G^*(\lambda)
= \Big\langle Q^*(\lambda)\Big\rangle_{\widetilde\mu}^{-1}\widetilde Q^*(\lambda),
\]
for all $\lambda\in\supp(\mu)\cup\{0\}$.
\paragraph{Optimal denoisers at non-zero outliers.}
Fix a nonzero outlier $\lambda_*\in\mathcal{K}_*$ and let $\sigma_*=\sqrt{\lambda_*}$.
The outlier contribution to the Lagrangian is
\begin{align}\label{eq:Lv-outlier-loss}
\mathcal{L}_v^{(\{\lambda_*\})}
&=\nu_2(\{\lambda_*\})\bigl(\beta G(\lambda_*)-1\bigr)^2
  +\delta\,\nu_1(\{\lambda_*\})\,\lambda_*\bigl(\alpha\widetilde G(\lambda_*)\bigr)^2 \nonumber \\
 &\qquad  +4(1+\delta)\,\nu_3(\{\sigma_*\})\,\sigma_*\,
      \bigl(\beta G(\lambda_*)-1\bigr)\bigl(\alpha\widetilde G(\lambda_*)\bigr),
\end{align}
where we use the atomic characterizations of $\nu_1,\nu_2,\nu_3$ in \lemref{lem:spectral_measures_properties}
and the oddness $\nu_3(\{\sigma_*\})=-\nu_3(\{-\sigma_*\})$.

\noindent The first--order conditions of \eqref{eq:Lv-outlier-loss} are
\begin{align}
\nu_2(\{\lambda_*\})\bigl(\beta G(\lambda_*)-1\bigr)
+2(1+\delta)\,\nu_3(\{\sigma_*\})\,\sigma_*\,\bigl(\alpha\widetilde G(\lambda_*)\bigr)
&=0, \label{eq:FOC-v-outlier1}\\
\delta\,\nu_1(\{\lambda_*\})\,\lambda_*\,\bigl(\alpha\widetilde G(\lambda_*)\bigr)
+2(1+\delta)\,\nu_3(\{\sigma_*\})\,\sigma_*\,\bigl(\beta G(\lambda_*)-1\bigr)
&=0,\label{eq:FOC-v-outlier2}
\end{align}
and substituting the point masses from \lemref{lem:spectral_measures_properties} shows this linear system is underdetermined. Hence all solutions admit the parametrization
\begin{align}\label{eq:v-outlier-general-solution}
\bigl(\beta G(\lambda_*)-1,\ \alpha\widetilde G(\lambda_*)\bigr)
=\tau\Bigl(-2(1+\delta)\,\nu_3(\{\sigma_*\})\,\sigma_*,\ \nu_2(\{\lambda_*\})\Bigr),
\qquad \tau\in\R .
\end{align}
To match the bulk minimizer at $\lambda_*$, we verify that the bulk ratio satisfies \eqref{eq:FOC-v-outlier1}; indeed, taking $\lambda\to\lambda_*$ in the first equation of \eqref{eq:FOC-v-bulk} gives
\begin{align}\label{eq:v-outlier-matching}
\lim_{\lambda\to\lambda_*}\frac{1-\beta G_{\mathrm{bulk}}^*(\lambda)}{\alpha\widetilde G_{\mathrm{bulk}}^*(\lambda)}
=\frac{2(1+\delta)\,\nu_3(\{\sigma_*\})\,\sigma_*}{\nu_2(\{\lambda_*\})}
= \sqrt{\delta}\,\theta\,\lambda_*\,\Smu(\lambda_*),
\end{align}
and the second equation in \eqref{eq:FOC-v-bulk} yields the analogous consistency for \eqref{eq:FOC-v-outlier2}.

Combining \eqref{eq:bulk-optimal-G-Gtilde} and \eqref{eq:v-outlier-matching} yields the unified form for $ \lambda\in\supp(\mu)\cup\{0\}\cup\mathcal{K}_*,$
\begin{align}\label{eq:unfied-G-tG-ap}
 \beta G^*(\lambda)=1-\Big\langle Q^*(\lambda;\rho_1,\rho_2)\Big\rangle_{\widetilde\mu}^{-1}Q^*(\lambda),
\qquad
\alpha\widetilde G^*(\lambda)=\Big\langle Q^*(\lambda;\rho_1,\rho_2)\Big\rangle_{\widetilde\mu}^{-1}\widetilde Q^*(\lambda),   
\end{align}
where $(\alpha,\beta,\rho_1,\rho_2 )$ are induced by the iterate denoisers $(f_t,g_t)$ and will be optimized in the next subsection.

\subsection{Optimal Iterate Denoisers}\label{app:optim_iterate}
With the optimal spectral denoisers $(G_t^*,\widetilde G_t^*)$ fixed at iteration $t$,
the objective in~\eqref{eq:optimal-matrix-snr}--\eqref{eq:optimal-matrix-snr-final}
depends on the iterate denoisers $(f_t,g_t)$ only through the $(\sigma_f^2,\sigma_g^2)$ in \eqref{eq:SE-sigma-fg}
and, for fixed $(G_t^*,\widetilde G_t^*)$, only via
\begin{align}\label{eq:greedy-iterative}
\sigma_g^2\int \bigl[G_t^*(\lambda)\bigr]^2\,
\mathrm{d}\{\delta\,\mu(\lambda)+(1-\delta)\delta_0\},
\qquad
\delta\sigma_f^2\int \lambda\,\bigl[\widetilde G_t^*(\lambda)\bigr]^2\,
\mathrm{d}\mu(\lambda).
\end{align}
Therefore, maximizing the squared cosine similarity is equivalent to minimizing $(\sigma_f^2,\sigma_g^2)$ subject to the
divergence--free constraints \eqref{eq: divergence free constraint}. This yields two decoupled scalar programs:
\[
\min_{f_t}\ \E\!\left[\bigl(\mathsf U_*-f_t(\mathsf U_{t-1};\mathsf A)\bigr)^2\right]
\quad\text{s.t.}\quad \E\!\left[f_t'(\mathsf U_{t-1};\mathsf A)\right]=0,
\qquad
\min_{g_t}\ \E\!\left[\bigl(\mathsf V_*-g_t(\mathsf V_{t-1};\mathsf B)\bigr)^2\right]
\quad\text{s.t.}\quad \E\!\left[g_t'(\mathsf V_{t-1};\mathsf B)\right]=0,
\]
where $(\mathsf U_*,\mathsf A)\sim\pi_u$ and $(\mathsf V_*,\mathsf B)\sim\pi_v$ are the
scalar priors induced by state evolution. By~\cite[Definition~3]{dudeja2024optimality},
each problem is solved by the corresponding DMMSE estimator, hence
\begin{align}
f^{*}_{t}(\bm{u}; \bm{a})
&= \bar{\phi}\!\left(
      \bm{u} \big/ \sqrt{\mu_{u,t-1}^{2} + \Var(\mathsf Z_{u,t-1})};
      \bm{a} \,\middle|\, w_{1,t-1}
   \right),
&\qquad
w_{1,t-1}
&= \frac{\mu_{u,t-1}^{2}}{\mu_{u,t-1}^{2} + \Var(\mathsf Z_{u,t-1})},
\\
g^{*}_{t}(\bm{v}; \bm{b})
&= \bar{\phi}\!\left(
      \bm{v} \big/ \sqrt{\mu_{v,t-1}^{2} + \Var(\mathsf Z_{v,t-1})};
      \bm{b} \,\middle|\, w_{2,t-1}
   \right),
&\qquad
w_{2,t-1}
&= \frac{\mu_{v,t-1}^{2}}{\mu_{v,t-1}^{2} + \Var(\mathsf Z_{v,t-1})}.
\end{align}
To express the SE effective precision in closed form, define the standardized
observation
\[
\mathsf X_{v,t-1}\bydef
\frac{\mathsf V_{t-1}}{\sqrt{\mu_{v,t-1}^{2}+\Var(\mathsf Z_{v,t-1})}}
=\sqrt{w_{2,t-1}}\,\mathsf V_*+\sqrt{1-w_{2,t-1}}\,\mathsf Z,
\qquad \mathsf Z\sim\mathcal N(0,1).
\]
Since $\mathsf G_t=g_t^*(\mathsf V_{t-1};\mathsf B)=\bar\phi(\mathsf X_{v,t-1};\mathsf B\,|\,w_{2,t-1})$,
\cite[Lemma~2]{dudeja2024optimality} yields
$\E[\mathsf G_t^2]=\E[\mathsf V_*\mathsf G_t]=\beta_t$ and hence
\begin{align}\label{eq:beta-rho-expres}
\rho_{2,t}
&= \frac{\beta_t}{1-\beta_t}
 = \frac{1}{\mathrm{mmse}_{\pi_v}(w_{2,t-1})}-\frac{1}{1-w_{2,t-1}},
\qquad
\beta_t
= \frac{\rho_{2,t}}{1+\rho_{2,t}}.
\end{align}
The same identities hold for the $\bm u$--channel with similar forms for
$(\mathsf U_*,\mathsf X_{u,t-1},\mathsf A,w_{1,t-1},\alpha_t,\rho_{1,t},\pi_u)$.

\section{Proof of \propref{prop: optimal SE}}\label{sec:optimal_SE}
The state evolution parameters are defined by the following recursion, initialized with $w_{1,0} = w_{2,0} \in (0,1)$
\BS
\begin{align}
 \rho_{1,t} &= \mathcal{F}_1(w_{1,t-1}) \bydef \frac{1}{\mathrm{mmse}_{\mathsf{U}}(w_{1,t-1})} - \frac{1}{1-w_{1,t-1}}, \label{eq:Optimal_Recursion_rho1_pf} \\
 \rho_{2,t} &= \mathcal{F}_2(w_{2,t-1}) \bydef \frac{1}{\mathrm{mmse}_{\mathsf{V}}(w_{2,t-1})} - \frac{1}{1-w_{2,t-1}}, \label{eq:Optimal_Recursion_rho2_pf} \\
 w_{1,t} &= \mathcal{F}_3(\rho_{1,t},\rho_{2,t}) \bydef 1 - \frac{1 - {\langle P_t^*(\lambda;\rho_{1,t},\rho_{2,t}) \rangle}_{\muM}}{{\langle P_t^*(\lambda;\rho_{1,t},\rho_{2,t}) \rangle}_{\muM}}\cdot \frac{1}{\rho_{1,t}}, \label{eq:Optimal_Recursion_w1_pf} \\
 w_{2,t} &= \mathcal{F}_4(\rho_{1,t},\rho_{2,t}) \bydef 1 - \frac{1 - {\langle Q_t^*(\lambda;\rho_{1,t},\rho_{2,t}) \rangle}_{\muN}}{{\langle Q_t^*(\lambda;\rho_{1,t},\rho_{2,t}) \rangle}_{\muN}}\cdot \frac{1}{\rho_{2,t}}, \label{eq:Optimal_Recursion_w2_pf}
\end{align}
\ES
The proof relies on the properties of these recursive functions, summarized in the following lemma which we defer its proof in Section~\ref{app:Increasing_and_Decreasing_Function_Lemma_pf}.
\begin{lemma}
\label{lem:Increasing_and_Decreasing_Function_Lemma_pf}
The functions defining the state evolution recursion satisfy:
\begin{enumerate}
    \item $\mathcal{F}_1(w)$ and $\mathcal{F}_2(w)$ are continuous and non-decreasing functions mapping $[0, 1)$ to $[0, \infty)$, with $\lim_{w \to 1^-} \mathcal{F}_{1,2}(w) = \infty$.
    \item $\mathcal{F}_3(\rho_1, \rho_2)$ and $\mathcal{F}_4(\rho_1, \rho_2)$ are continuous and non-decreasing in both arguments on $(0, \infty)^2$, and map to $[0, 1)$, with $\lim_{\rho_1, \rho_2 \to \infty} \mathcal{F}_{3,4}(\rho_1, \rho_2) < 1$.
\end{enumerate}
\end{lemma}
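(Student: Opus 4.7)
The plan is to verify Claims~(1) and~(2) separately, using standard Gaussian-channel tools for the first and a data-processing argument to bypass the cumbersome calculus required for the second.

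For Claim~(1) on $\mathcal{F}_1$ (the argument for $\mathcal{F}_2$ is identical), I reparametrize via the effective SNR $\gamma \bydef w/(1-w)\in[0,\infty)$, under which $\mathrm{mmse}_{\mathsf U}(w)$ equals the scalar MMSE $\mathrm{MMSE}_\pi(\gamma)$ of estimating $\mathsf U_*$ from $\mathsf Y_\gamma = \sqrt{\gamma}\,\mathsf U_* + \mathsf Z$ (the side information $\mathsf A$ folds into the prior $\pi$), giving
\[
\mathcal{F}_1(w) = \frac{1}{\mathrm{MMSE}_\pi(\gamma)} - (1+\gamma).
\]
Continuity follows from smoothness of $\gamma\mapsto\mathrm{MMSE}_\pi(\gamma)$ on $[0,\infty)$, and non-negativity from the linear MMSE bound $\mathrm{MMSE}_\pi(\gamma)\le(1+\gamma)^{-1}$. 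For monotonicity, the chain rule reduces the statement to $-\mathrm{MMSE}_\pi'(\gamma)\ge \mathrm{MMSE}_\pi(\gamma)^2$, which I plan to derive from the Guo--Shamai--Verd\'u identity $-\mathrm{MMSE}_\pi'(\gamma) = \mathbb{E}[\mathrm{Var}(\mathsf U_*\mid\mathsf Y_\gamma)^2]$ combined with Jensen's inequality. The blow-up $\mathcal{F}_1(w)\to\infty$ as $w\to 1^-$ then reduces to $\mathrm{MMSE}_\pi(\gamma) = o(\gamma^{-1})$, a standard property whenever $\pi$ has a non-Gaussian component (the degenerate Gaussian case, in which $\mathcal{F}_1\equiv 0$, is handled separately).

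For Claim~(2), I use the rewrites
\[
\mathcal{F}_3(\rho_1,\rho_2) = 1 + \frac{1}{\rho_1}\!\left(1 - \frac{1}{\langle P_t^*\rangle_\mu}\right),\qquad
\mathcal{F}_4(\rho_1,\rho_2) = 1 + \frac{1}{\rho_2}\!\left(1 - \frac{1}{\langle Q_t^*\rangle_{\widetilde\mu}}\right),
\]
together with the factored identity
\[
1 - P_t^*(\lambda;\rho_1,\rho_2) = \frac{\rho_1\bigl[\varphi_1(\lambda)(\rho_2\varphi_2(\lambda)+\delta)\lambda - \rho_2\varphi_3(\lambda)^2\bigr]}{(\rho_1\varphi_1(\lambda)+1)(\rho_2\varphi_2(\lambda)+\delta)\lambda - \rho_1\rho_2\varphi_3(\lambda)^2}
\]
derived directly from the definitions in Section~\ref{sec:optimal OAMP}. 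A Cauchy--Schwarz-type pointwise bound $\varphi_3(\lambda)^2\le \lambda\,\varphi_1(\lambda)\varphi_2(\lambda)$ on $\mathrm{supp}(\mu)$, verifiable from the closed forms in \eqref{eq:shrikage-function}, yields $0<P_t^*\le 1$ and hence $\mathcal{F}_3\in[0,1]$; continuity on $(0,\infty)^2$ then follows by dominated convergence, using uniform positivity of the denominator on $\mathrm{supp}(\mu)\times K$ for each compact $K\subset(0,\infty)^2$. The limit at $(\rho_1,\rho_2)\to\infty$ is read off the expansion $P_t^*(\lambda)\sim \lambda\varphi_2/[\rho_1(\lambda\varphi_1\varphi_2 - \varphi_3^2)]$, giving $\langle P_t^*\rangle_\mu\sim C/\rho_1$ with $C<\infty$ an explicit constant and $\mathcal{F}_3\to 1-1/C<1$; $\mathcal{F}_4$ is symmetric.

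The main obstacle is joint monotonicity of $\mathcal{F}_3,\mathcal{F}_4$ in $(\rho_1,\rho_2)$: direct differentiation of $\langle P_t^*\rangle_\mu$ does not produce a clean sign, owing to the cross-term $\rho_1\rho_2\varphi_3^2$ in the denominator. My plan is to sidestep the calculus through the state-evolution interpretation from Section~\ref{sec:optimal OAMP} and Appendix~\ref{app:optim_matrix}: by the greedy design, $\mathcal{F}_3(\rho_1,\rho_2)$ is the squared cosine similarity achieved by the optimal-OAMP matrix denoisers when the scalar inputs to the update have precisions $(\rho_1,\rho_2)$. Given any $(\rho_1',\rho_2')\le(\rho_1,\rho_2)$ coordinatewise, one can degrade the higher-precision inputs to match $(\rho_1',\rho_2')$ by adding independent Gaussian noise, apply the optimal denoiser tailored to the degraded precision, and view the resulting estimator as one particular (suboptimal) choice available at the original precision. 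The variational optimality of the $P_t^*$-based denoisers then gives $\mathcal{F}_3(\rho_1,\rho_2)\ge \mathcal{F}_3(\rho_1',\rho_2')$, and $\mathcal{F}_4$ is handled identically.
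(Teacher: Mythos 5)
Your treatment of Claim~(1) matches the standard scalar-channel argument that the paper simply delegates to \cite[Lemma~4]{dudeja2024optimality}: reducing monotonicity to $-\mathrm{MMSE}'(\gamma)\ge \mathrm{MMSE}(\gamma)^2$ via the Guo--Shamai--Verd\'u identity and Jensen is correct, and your caveat about the Gaussian-prior degenerate case is appropriate. For Claim~(2), your route to joint monotonicity is genuinely different from the paper's. The paper writes $\mathcal{F}_3=1-N/M$ with $N=\langle b/D\rangle_\mu$, $M=\langle a/D\rangle_\mu$, $D=\rho_1 b+a$ (see \eqref{eq:abD}), and differentiates: monotonicity in $\rho_2$ follows from the Wronskian identity $b\,a'-a\,b'=\delta\lambda\varphi_3^2\ge 0$, and monotonicity in $\rho_1$ from Chebyshev's association inequality applied to $X=a/b$ with weight $Z=b^2/D^2$. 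Your argument instead observes that $1-\mathcal{F}_3(\rho_1,\rho_2)$ is the optimal value of the quadratic program of Appendix~\ref{app:optim_matrix}, whose objective is pointwise non-increasing in $(\rho_1,\rho_2)$ over a feasible set that does not depend on $(\rho_1,\rho_2)$; this is cleaner, avoids the association inequality entirely, and is valid once the identification of $\mathcal{F}_3,\mathcal{F}_4$ with those optimal values (supplied by the greedy derivation in Appendix~\ref{app:optim_matrix}, including the outlier and zero-atom contributions) is in place. Your continuity argument and the limit at infinity agree with the paper's \eqref{eq:F3_limit_infty_TIT}.

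The one genuine gap is the lower end of the range. From the pointwise bound $0<P_t^*\le 1$ you may conclude $\langle P_t^*\rangle_\mu\le 1$ and hence $\mathcal{F}_3\le 1$, but $\mathcal{F}_3\ge 0$ does \emph{not} follow: using $\rho_1 N+M=1$, one checks $\mathcal{F}_3\ge 0$ is equivalent to $\langle b/D\rangle_\mu\le\langle a/D\rangle_\mu$, and a pointwise inequality $b\le a$ would essentially require $\varphi_1\le 1$ pointwise, which need not hold. The paper instead combines the already-established monotonicity with the limit $\rho_1,\rho_2\to 0$, where $\mathcal{F}_3\to 1-\langle\varphi_1\rangle_\mu=1-\nu_1^{\parallel}(\mathbb{R})\ge 0$ --- an \emph{integrated} bound coming from $\nu_1$ being a probability measure, not a pointwise one (see \eqref{eq:F3_range_TIT}). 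Your variational monotonicity can be recycled for exactly this purpose by evaluating at the infimum of the parameter range, but as written, the step ``$0<P_t^*\le 1$ and hence $\mathcal{F}_3\in[0,1]$'' is a non sequitur for the lower bound.
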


\begin{proof}
\noindent\textbf{Proof of Claim~(1).}
We prove Claim~(1) for the $\bm v$--channel; the $\bm u$--channel is analogous.
We show by induction that, for every $t\ge 1$,
\begin{equation}\label{eq:claim1_channel_form_v}
w_{2,t}\in(0,1),\quad \rho_{2,t}>0,\quad
\mathsf V_t^{*}\mid \mathsf V_* \sim \mathcal{N}\!\bigl(\sqrt{w_{2,t}}\ \mathsf V_*,\,1-w_{2,t}\bigr).
\end{equation}
Assuming \eqref{eq:claim1_channel_form_v}, the MMSE postprocessing
$\widehat{\bm v}_t^*=\phi(\bm v_t^*;\bm b\,|\,w_{2,t})$ satisfies, by
Theorem~\ref{Thm: State Evolution},
\begin{align}\label{eq:claim1_mse_v}
\frac{\|\widehat{\bm v}_t^*-\bm v_*\|^2}{N}
\ac
\E\Bigl[\bigl(\phi(\mathsf V_t^{*};\mathsf B\,|\,w_{2,t})-\mathsf V_*\bigr)^2\Bigr]
=\mathrm{mmse}_{\mathsf V}(w_{2,t}),
\end{align}
Hence it remains to prove
\eqref{eq:claim1_channel_form_v} by induction.\\
\smallskip
\noindent \emph{Induction Steps.} Assume \eqref{eq:claim1_channel_form_v} holds at step $t-1$, i.e.
\begin{equation}\label{eq:IH_v}
\mathsf V_{t-1}^{*}\mid \mathsf V_* \sim \mathcal{N}\!\bigl(\sqrt{w_{2,t-1}}\ \mathsf V_*,\,1-w_{2,t-1}\bigr).
\end{equation}
At iteration $t$, state evolution yields the scalar representation
\[
\mathsf V_t^{*}=\mu_{v,t}\,\mathsf V_*+\mathsf Z_{v,t},\qquad
\mathsf Z_{v,t}\sim\mathcal N(0,\sigma_{v,t}^2)\ \indep\ \mathsf V_*,
\]
and our goal is to verify that this channel has squared cosine similarity $w_{2,t}$,
i.e.
\begin{equation}\label{eq:goal_t_v}
\mu_{v,t}=\sqrt{w_{2,t}},\qquad \sigma_{v,t}^2=1-w_{2,t}.
\end{equation}
We only prove the mean identity $\mu_{v,t}=\sqrt{w_{2,t}}$; the variance identity follows
analogously from \eqref{eq:cov_Z_v}.

Under the induction hypothesis \eqref{eq:IH_v}, the iterate denoiser equals
$g_t^*(\cdot;\bm b)=\bar\phi(\cdot;\bm b\,|\,w_{2,t-1})$. Hence the DMMSE precision identities \eqref{eq:beta-rho-expres} hold; see Appendix~\ref{app:optim_iterate}. Substituting the explicit forms of $G_t^*$ and $\widetilde G_t^*$ in \eqref{eq:tGstar} into the general SE formula \eqref{eq: mean value of SE1} yields the following representation
\begin{align}\label{eq:mu_v_Qrepr}
\mu_{v,t}
=\frac{1}{\sqrt{w_{2,t}}}\Bigg[
1-\frac{1}{\langle Q_t^*\rangle_{\widetilde\mu}}
\Big\{
\big\langle Q_t^*(\lambda)\big\rangle_{\nu_2}
-(1+\delta)\big\langle \sigma\,\widetilde Q_t^*(\sigma^2)\big\rangle_{\nu_3}
\Big\}
\Bigg].
\end{align}
We claim that for every $t\ge1$,
\begin{equation}\label{eq:SE-normalization-claim}
\Big\{  \big\langle Q_t^*(\lambda)\big\rangle_{\nu_2}
-(1+\delta)\big\langle \sigma\,\widetilde Q_t^*(\sigma^2)\big\rangle_{\nu_3}\Big\}
=\frac{1}{\rho_{2,t}}\Bigl(1-\big\langle Q_t^*(\lambda)\big\rangle_{\widetilde\mu}\Bigr).
\end{equation}
Assuming \eqref{eq:SE-normalization-claim}, \eqref{eq:mu_v_Qrepr} and recalling the definition of $w_{2,t}$ in the recursion
\eqref{eq:OptimalRecursion} give
\[
\mu_{v,t}
=\frac{1}{\sqrt{w_{2,t}}}\Bigl(1-\frac{1-\langle Q_t^*\rangle_{\widetilde\mu}}{\langle Q_t^*\rangle_{\widetilde\mu}}\cdot\frac{1}{\rho_{2,t}}\Bigr)
=\frac{w_{2,t}}{\sqrt{w_{2,t}}}
=\sqrt{w_{2,t}}.
\]
The variance identity $\sigma_{v,t}^2=1-w_{2,t}$ follows analogously from \eqref{eq:cov_Z_v}, completing the induction step and hence Claim~(1).\\
\noindent \emph{It remains to prove \eqref{eq:SE-normalization-claim}.}
Fix $t$ and abbreviate $\rho_2=\rho_{2,t}$, $Q=Q_t^*$, $\widetilde Q=\widetilde Q_t^*$. Then
\begin{align}\label{eq:SE-normalization-claim-proof}
\big\langle Q(\lambda)\big\rangle_{\nu_2}
-(1+\delta)\big\langle \sigma\,\widetilde Q(\sigma^2)\big\rangle_{\nu_3}
&\stackrel{(a)}{=}
\underbrace{\Big\langle \varphi_2(\lambda)\,Q(\lambda)\Big\rangle_{\mu}
-\sqrt{\delta}\Big\langle \varphi_3(\lambda)\,\widetilde Q(\lambda)\Big\rangle_{\mu}
\vphantom{\sum_{\lambda_*\in\mathcal K_*}\Big\{\nu_2(\{\lambda_*\})Q(\lambda_*)
-2(1+\delta)\nu_3(\{\sigma_*\})\sigma_*\,\widetilde Q(\lambda_*)\Big\}}}_{\mathclap{\text{a.c.\ part}}}
\nonumber\\[-0.2em]
&\qquad
+\underbrace{\sum_{\lambda_*\in\mathcal K_*}\Big\{\nu_2(\{\lambda_*\})Q(\lambda_*)
-2(1+\delta)\nu_3(\{\sigma_*\})\sigma_*\,\widetilde Q(\lambda_*)\Big\}
\vphantom{\Big\langle \varphi_2(\lambda)\,Q(\lambda)\Big\rangle_{\mu}}}_{\mathclap{\text{outliers}}}
+\underbrace{\nu_2(\{0\})Q(0)
\vphantom{\sum_{\lambda_*\in\mathcal K_*}\Big\{\nu_2(\{\lambda_*\})Q(\lambda_*)
-2(1+\delta)\nu_3(\{\sigma_*\})\sigma_*\,\widetilde Q(\lambda_*)\Big\}}}_{\mathclap{\text{atom at }0}}
\nonumber\\
&\stackrel{(b)}{=}
\frac{\delta}{\rho_2}\Bigl(1-\langle Q(\lambda)\rangle_{\mu}\Bigr)+\nu_2(\{0\})Q(0)
\stackrel{(c)}{=}
\frac{1}{\rho_2}\Bigl(1-\langle Q(\lambda)\rangle_{\widetilde\mu}\Bigr).
\end{align}
where: (a) applies Lemma~\ref{lem:spectral_measures_properties} for Lebesgue decomposition of $\nu_2,\nu_3$ and the change of variables $\lambda=\sigma^2$ on $\nu_3^\parallel$.
Step (b) simplifies the a.c.\ part by substituting the explicit formulas \eqref{eq:bulk-optimal-G-Gtilde} of $Q,\widetilde Q$ and integrating the resulting pointwise identity against $\mu$, and uses \eqref{eq:v-outlier-matching} to cancel the outlier terms for each $\lambda_*\in\mathcal K_*$.
Step (c) uses \eqref{eq:optimal_betaG0} to eliminate $\nu_2(\{0\})$ and rewrites $ \delta\langle Q\rangle_\mu+(1-\delta)Q(0)=\langle Q\rangle_{\widetilde\mu}$.

\noindent \textbf{Proof of Claim~(2).} With respect to Claim~(2), we now prove that the sequences $\{w_{2,t}\}$ and $\{\rho_{2,t}\}$ (and their $\bm{u}$-channel counterparts) are non-decreasing and converge to the specified fixed point. 
\begin{itemize} 
\item \textbf{Monotonicity:} We proceed by induction. We initialize with $w_{1,0}=w_{2,0}=0$. Then $\rho_{1,1} = \mathcal{F}_1(w_{1,0})$ and $\rho_{2,1} = \mathcal{F}_2(w_{2,0})$. The next iterate is $w_{1,1} = \mathcal{F}_3(\rho_{1,1}, \rho_{2,1})$. Since $\mathcal{F}_3 \ge 0$, we have $w_{1,1} \ge w_{1,0}=0$. Now, assume $w_{1,t-1} \le w_{1,t}$ and $w_{2,t-1} \le w_{2,t}$. The monotonicity of the functions $\mathcal{F}_i$ given by \lemref{lem:Increasing_and_Decreasing_Function_Lemma_pf} ensures that $\rho_{1,t} \le \rho_{1,t+1}$ and $\rho_{2,t} \le \rho_{2,t+1}$. Consequently, $w_{2,t+1} = \mathcal{F}_4(\rho_{1,t+1}, \rho_{2,t+1}) \ge \mathcal{F}_4(\rho_{1,t}, \rho_{2,t}) = w_{2,t}$. The same logic applies to $w_{1,t}$. Therefore, all four sequences are non-decreasing.
\item \textbf{Convergence:} The sequences $\{w_{1,t}\}$ and $\{w_{2,t}\}$ are non-decreasing and bounded above by 1. By the monotone convergence theorem, they must converge to limits $w_1^*$ and $w_2^*$. \lemref{lem:Increasing_and_Decreasing_Function_Lemma_pf} guarantees that the limits are strictly less than 1. Consequently, the sequences $\{\rho_{1,t}\}$ and $\{\rho_{2,t}\}$ also converge to finite limits $\rho_1^* = \mathcal{F}_1(w_1^*)$ and $\rho_2^* = \mathcal{F}_2(w_2^*)$. By the continuity of all functions, the limit point $(\rho_1^*, \rho_2^*, w_1^*, w_2^*)$ must be a solution to the fixed-point equations given in the proposition. 
\end{itemize}

\end{proof}

\subsection{Proof of \lemref{lem:Increasing_and_Decreasing_Function_Lemma_pf}}\label{app:Increasing_and_Decreasing_Function_Lemma_pf}

The properties of $\mathcal{F}_1$ and $\mathcal{F}_2$ follow from standard scalar Gaussian MMSE arguments; see \cite[Lemma~4]{dudeja2024optimality}. We focus on $\mathcal{F}_3$ and $\mathcal{F}_4$, and only treat $\mathcal{F}_3(\rho_1,\rho_2)$, as the argument for $\mathcal{F}_4$ is identical.

Recall that $\mathcal{F}_3$ is defined via $P^*(\lambda;\rho_1,\rho_2)$. For $\rho_1,\rho_2>0$ and $\lambda\in\supp(\mu)$, set
\BS\label{eq:abD}
\begin{align}
    a(\rho_2;\lambda)
    &\bydef \rho_2\lambda\varphi_2(\lambda)+\delta\lambda,\\
    b(\rho_2;\lambda)
    &\bydef \rho_2\bigl[\lambda\varphi_1(\lambda)\varphi_2(\lambda)-\varphi_3^2(\lambda)\bigr]
          +\delta\lambda\varphi_1(\lambda),\\
    D(\rho_1,\rho_2;\lambda)
    &\bydef(\rho_1\varphi_1(\lambda)+1)(\rho_2\varphi_2(\lambda)+\delta)\lambda
        -\rho_1\rho_2\varphi_3^2(\lambda)
      =\rho_1 b(\rho_2;\lambda)+a(\rho_2;\lambda).
\end{align}
\ES
By the explicit formulas for $\varphi_1,\varphi_2,\varphi_3$ in
\eqref{eq:phiu}–\eqref{eq:phiv}, one verifies that
\[
a(\rho_2;\lambda)>0,\qquad b(\rho_2;\lambda)>0,\qquad D(\rho_1,\rho_2;\lambda)>0
\quad\text{for all }\lambda\in\supp(\mu),\ \rho_1,\rho_2>0.
\]
Hence
\[
P^*(\lambda)
=\frac{a(\rho_2;\lambda)}{D(\rho_1,\rho_2;\lambda)},
\qquad
\frac{1}{\rho_1}\bigl(1-P^*(\lambda)\bigr)
=\frac{b(\rho_2;\lambda)}{D(\rho_1,\rho_2;\lambda)}.
\]
We define
\[
N(\rho_1,\rho_2)\bydef\Big\langle\frac{b(\rho_2;\lambda)}{D(\rho_1,\rho_2;\lambda)}\Big\rangle_{\mu},
\qquad
M(\rho_1,\rho_2)\bydef\Big\langle\frac{a(\rho_2;\lambda)}{D(\rho_1,\rho_2;\lambda)}\Big\rangle_{\mu},
\]
so that
\begin{align*}
\mathcal{F}_3(\rho_1,\rho_2)
&=1-\frac{\left\langle b(\rho_2;\lambda)/D(\rho_1,\rho_2;\lambda)\right\rangle_{\mu}}
          {\left\langle a(\rho_2;\lambda)/D(\rho_1,\rho_2;\lambda)\right\rangle_{\mu}}
=1-\frac{N(\rho_1,\rho_2)}{M(\rho_1,\rho_2)}.
\end{align*}

\noindent \textbf{Continuity.}
The functions $a(\rho_2;\lambda)$, $b(\rho_2;\lambda)$, and $D(\rho_1,\rho_2;\lambda)$
are continuous in $(\rho_1,\rho_2)$ for $\rho_1,\rho_2>0$, and
$D(\rho_1,\rho_2;\lambda)>0$ by construction. Since $\supp(\mu)$ is compact,
the ratios $a(\rho_2;\lambda)/D(\rho_1,\rho_2;\lambda)$ and
$b(\rho_2;\lambda)/D(\rho_1,\rho_2;\lambda)$ are uniformly bounded on compact
subsets of $(0,\infty)^2$. Dominated convergence then yields continuity of
$M$ and $N$, and hence $\mathcal{F}_3$ is continuous on $(0,\infty)^2$.

\noindent  \textbf{Monotonicity.}
We show that $\mathcal{F}_3$ is non-decreasing in each coordinate.

\noindent  \emph{(i) Monotonicity in $\rho_2$.}
Recalling $\mathcal{F}_3=1-N/M$, we obtain
\[
\frac{\partial\mathcal{F}_3}{\partial\rho_2}
=\frac{N(\rho_1,\rho_2)\,M_{\rho_2}'(\rho_1,\rho_2)
      -N_{\rho_2}'(\rho_1,\rho_2)\,M(\rho_1,\rho_2)}
      {M(\rho_1,\rho_2)^2}.
\]
Differentiating $b/D$ and $a/D$ with respect to $\rho_2$, we obtain
\begin{align*}
& N(\rho_1,\rho_2)\,M_{\rho_2}'(\rho_1,\rho_2)
- N_{\rho_2}'(\rho_1,\rho_2)\,M(\rho_1,\rho_2)
=\Big\langle\frac{\delta\lambda\varphi_3^2(\lambda)}{D^2(\rho_1,\rho_2;\lambda)}\Big\rangle_{\mu}\ge0,
\end{align*}
since
$b(\rho_2;\lambda)a'(\rho_2;\lambda)-a(\rho_2;\lambda)b'(\rho_2;\lambda)
=\delta\lambda\varphi_3^2(\lambda)$ and $D(\rho_1,\rho_2;\lambda)>0$ on
$\supp(\mu)$. Consequently,
\[
\frac{\partial\mathcal{F}_3}{\partial\rho_2}\;\ge 0,
\]
and $\mathcal{F}_3$ is non-decreasing in $\rho_2$.

\noindent  \emph{(ii) Monotonicity in $\rho_1$.}
Again writing $\mathcal{F}_3=1-N/M$, we have
\[
\frac{\partial\mathcal{F}_3}{\partial\rho_1}
=\frac{N(\rho_1,\rho_2)\,M_{\rho_1}'(\rho_1,\rho_2)
      -N_{\rho_1}'(\rho_1,\rho_2)\,M(\rho_1,\rho_2)}
      {M(\rho_1,\rho_2)^2}.
\]
Differentiating $b/D$ and $a/D$ with respect to $\rho_1$ and taking
$\langle\cdot\rangle_\mu$ yields
\begin{align*}
&N(\rho_1,\rho_2)\,M_{\rho_1}'(\rho_1,\rho_2)
- N_{\rho_1}'(\rho_1,\rho_2)\,M(\rho_1,\rho_2)
\\&\qquad=
\Big\langle\frac{b^2(\rho_2;\lambda)}{D^2(\rho_1,\rho_2;\lambda)}\Big\rangle_{\mu}
  \Big\langle\frac{a(\rho_2;\lambda)}{D(\rho_1,\rho_2;\lambda)}\Big\rangle_{\mu}
 -\Big\langle\frac{a(\rho_2;\lambda)b(\rho_2;\lambda)}{D^2(\rho_1,\rho_2;\lambda)}\Big\rangle_{\mu}
  \Big\langle\frac{b(\rho_2;\lambda)}{D(\rho_1,\rho_2;\lambda)}\Big\rangle_{\mu}.
\end{align*}
To sign this quantity, set
\[
X\bydef\frac{a(\rho_2;\lambda)}{b(\rho_2;\lambda)},
\qquad
Z\bydef\frac{b^2(\rho_2;\lambda)}{D^2(\rho_1,\rho_2;\lambda)}\ge0,
\]
and take the non-decreasing functions $f(x)=x$ and $g(x)=\rho_1+x$.
Chebyshev's association inequality \cite[Theorem~2.14]{Boucheron2013Concentration}
states that for non-decreasing $f,g$ and a non-negative random variable $Z$,
\[
\mathbb{E}[Z]\cdot\mathbb{E}[f(X)g(X)Z]
\;\ge\;
\mathbb{E}[f(X)Z]\cdot\mathbb{E}[g(X)Z].
\]
Applied to the law of $X$ induced by $\mu$ and the above choice of $Z$, this
gives exactly
\[
\Big\langle\frac{b^2(\rho_2;\lambda)}{D^2(\rho_1,\rho_2;\lambda)}\Big\rangle_{\mu}
\Big\langle\frac{a(\rho_2;\lambda)}{D(\rho_1,\rho_2;\lambda)}\Big\rangle_{\mu}
\;\ge\;
\Big\langle\frac{a(\rho_2;\lambda)b(\rho_2;\lambda)}{D^2(\rho_1,\rho_2;\lambda)}\Big\rangle_{\mu}
\Big\langle\frac{b(\rho_2;\lambda)}{D(\rho_1,\rho_2;\lambda)}\Big\rangle_{\mu}.
\]
Consequently, $\mathcal{F}_3$ is non-decreasing in $\rho_1$ since
\[
\frac{\partial\mathcal{F}_3}{\partial\rho_1}
=\frac{N M_{\rho_1}'-N_{\rho_1}'M}{M^2}\;\ge 0.
\]

\noindent \textbf{Range.}
By definition $M(\rho_1,\rho_2)>0$, hence $\mathcal F_3(\rho_1,\rho_2)=1-N/M<1$ for all $\rho_1<\infty$.
For the lower bound, using the monotonicity of $\mathcal F_3$ in $(\rho_1,\rho_2)$,
\begin{align}\label{eq:F3_range_TIT}
\mathcal F_3(\rho_1,\rho_2)
&\;\ge\;\lim_{\rho_2\to0}\,\lim_{\rho_1\to0}\mathcal F_3(\rho_1,\rho_2)\nonumber\\
&\stackrel{(a)}{=} 1-\lim_{\rho_2\to0}\Big\langle \frac{b(\rho_2;\lambda)}{a(\rho_2;\lambda)}\Big\rangle_\mu
\stackrel{(b)}{=} 1-\langle \varphi_1(\lambda)\rangle_\mu
\stackrel{(c)}{=} 1-\nu_1^\parallel(\R)\;\ge\;0.
\end{align}
\noindent where (a) lets $\rho_1\to0$ so that $D(\rho_1,\rho_2;\lambda)\to a(\rho_2;\lambda)$ and applies dominated convergence;
(b) uses $a(\rho_2;\lambda)\to\delta\lambda$ and $b(\rho_2;\lambda)\to\delta\lambda\varphi_1(\lambda)$ as $\rho_2\to0$;
(c) uses $\mathrm d\nu_1^\parallel/\mathrm d\lambda=\varphi_1(\lambda)\mu(\lambda)$ from Lemma~\ref{lem:spectral_measures_properties}
and $\nu_1^\parallel(\R)\le1$ since $\nu_1$ is a probability measure.

Finally, we compute the limit at infinity:
\begin{align}\label{eq:F3_limit_infty_TIT}
\lim_{\rho_2\to\infty}\,\lim_{\rho_1\to\infty}\mathcal F_3(\rho_1,\rho_2)
&\stackrel{(a)}{=} \lim_{\rho_2\to\infty}\,\lim_{\rho_1\to\infty} 1-
\left\langle \frac{a(\rho_2;\lambda)}{D(\rho_1,\rho_2;\lambda)}\right\rangle_\mu^{-1}
\left\langle \frac{b(\rho_2;\lambda)}{D(\rho_1,\rho_2;\lambda)}\right\rangle_\mu \nonumber\\
&\stackrel{(b)}{=}\lim_{\rho_2\to\infty}1-
\left\langle \frac{a(\rho_2;\lambda)}{b(\rho_2;\lambda)}\right\rangle_\mu^{-1} \nonumber\\
&\stackrel{(c)}{=}1-
\left\langle
\frac{\lambda\varphi_2(\lambda)}
{\lambda\varphi_1(\lambda)\varphi_2(\lambda)-\varphi_3^2(\lambda)}
\right\rangle_\mu^{-1}
\;<\;1 .
\end{align}
\noindent where (a) expands $\mathcal F_3=1-N/M$ with $N=\langle b/D\rangle_\mu$ and $M=\langle a/D\rangle_\mu$;
(b) lets $\rho_1\to\infty$ in $D=\rho_1 b+a$, so by dominated convergence $\langle b/D\rangle_\mu/\langle a/D\rangle_\mu\to\langle a/b\rangle_\mu^{-1}$; (c) lets $\rho_2\to\infty$ and uses the fact that $\lambda\varphi_1\varphi_2-\varphi_3^2>0$.

\section{Proof of \propref{thm:OAMP_Wigner_FP} (I.I.D.\ Gaussian Noise Model)}\label{sec:example}

This section analyzes a special case of the model from \eqref{eq:rectangular spiked model}, where the noise matrix entries are IID Gaussian random variables: $W_{i,j} \sim \mathcal{N}\Bigl(0, {1}/{N}\Bigr), \quad \text{for any} \quad 1 \le i \le M, 1 \le j \le N$, with an aspect ratio $\lim_{M,N \to \infty}\delta = \frac{M}{N} \in (0,1)$.

\subsection{Spectral Analysis of I.I.D.\ Gaussian Noise Model}\label{app:pf-MP-right-only}

We first recall that when $\bm{W}$ has i.i.d.\ $\mathcal{N}(0,1/N)$ entries and $M/N\to\delta\in(0,1)$, the empirical spectral distribution of $\bm{W}\bm{W}^\UT$ converges almost surely to the Mar\v{c}henko--Pastur law with density
\[
  \mu_{\mathrm{MP}}(\lambda)
  = \frac{\sqrt{(b_+ - \lambda)(\lambda - a_-)}}{2\pi\delta\,\lambda}\,
    \mathbf{1}_{[a_-,b_+]}(\lambda),
  \qquad
  a_-\bydef(1-\sqrt{\delta})^2,\;\; b_+\bydef(1+\sqrt{\delta})^2,
\]
see, e.g.,~\cite[Theorem~3.6]{Bai2010SpectralAO}. In the rectangular spiked model \eqref{eq:rectangular spiked model} with i.i.d.\ Gaussian noise $\bm{W}_{ij}\sim \mathcal{N}(0,1/N)$, the largest eigenvalue of $\bm{Y}\bm{Y}^\UT$ is known to exhibit a phase transition(\,cf.~\cite{paul2007asymptotics,Bai2010SpectralAO}).  There is a critical value $\theta^2=\sqrt{\delta}$ such that
\begin{align}
  &\theta^2\le\sqrt{\delta}
    \;\Rightarrow\;
    \lambda_1(\bm{Y}\bm{Y}^\UT)\xrightarrow{\text{a.s.}} b_+,\nonumber\\
  &\theta^2>\sqrt{\delta}
    \;\Rightarrow\;
    \lambda_1(\bm{Y}\bm{Y}^\UT)\xrightarrow{\text{a.s.}} \lambda_* \bydef 1+\delta+\theta^2+\frac{\delta}{\theta^2}\geq b_+,
    \label{eq:MP-right-outlier}
\end{align}
where equality holds if and only if $\theta^2=\delta^{1/2}$. We next examine the behavior of the master equation \eqref{Eqn:master0} on the left of the Mar\v{c}henko--Pastur bulk.

\begin{lemma}\label{lem:MP-right-only}
Let $\bm{Y}$ follow the rectangular spiked model \eqref{eq:rectangular spiked model} with $\bm{W}$ having i.i.d.\ $\mathcal{N}(0,1/N)$ entries, and let $M,N\to\infty$ with $M/N\to\delta\in(0,1)$. Then the master equation
\[
  \Gamma(\lambda)=1-\theta^2\CT(\lambda)=0
\]
admits no real solution on $[0,a_-)$.
\end{lemma}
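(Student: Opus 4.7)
}

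The plan is to reduce the statement to showing $\mathcal{C}(\lambda)\le 0$ for all $\lambda\in[0,a_-)$, which is actually a much stronger and $\theta$-independent conclusion: since $\theta^2>0$, the identity $\Gamma(\lambda)=1-\theta^2\mathcal{C}(\lambda)\ge 1$ then forces $\Gamma(\lambda)\ne 0$ throughout $[0,a_-)$. So I do not need to invoke the critical value $\theta^2=\sqrt\delta$ or the super/subcritical dichotomy at all; negativity of $\mathcal{C}$ below the bulk is enough.

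The key step is an algebraic simplification of $\mathcal{C}(\lambda)$ that is special to the Marchenko--Pastur law. Writing $m(z)\bydef\mathcal{S}_\mu(z)$ and using the paper's convention $m(z)=\int(z-\lambda)^{-1}\,d\mu(\lambda)$, one verifies that the limiting Stieltjes transform of $\bm W\bm W^\UT$ satisfies the quadratic
\[
\delta z\,m(z)^2 - (z-1+\delta)\,m(z) + 1 = 0,
\qquad z\in\mathbb{C}\setminus[a_-,b_+].
\]
(This is the standard MP equation rewritten in the convention $m(z)\sim 1/z$ at infinity; one can also deduce it directly from the resolvent identity for the Silverstein companion transform $\tilde m = \delta m + (1-\delta)/z$.) Substituting $\delta zm^2 = (z-1+\delta)m - 1$ into Definition~\ref{Def:C_transform} gives, after cancellation,
\[
\mathcal{C}(z) \;=\; \delta z\,m(z)^2 + (1-\delta)\,m(z) \;=\; z\,m(z) - 1,
\qquad z\in\mathbb{C}\setminus[a_-,b_+].
\]

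Given this identity, the conclusion is immediate. Indeed, for any $\lambda\in\mathbb{R}\setminus\mathrm{supp}(\mu)$, the integral representation of $m$ yields
\[
\mathcal{C}(\lambda)
\;=\; \lambda\int_{a_-}^{b_+}\frac{d\mu(t)}{\lambda-t} - 1
\;=\; \int_{a_-}^{b_+}\frac{t}{\lambda-t}\,d\mu(t).
\]
For $\delta\in(0,1)$ we have $a_->0$, so $\mathrm{supp}(\mu)\subset(0,\infty)$. For any $\lambda\in[0,a_-)$ and any $t\in\mathrm{supp}(\mu)$ we have $t>0$ and $\lambda-t<0$, hence the integrand is strictly negative $\mu$-almost everywhere, yielding $\mathcal{C}(\lambda)<0$ (with $\mathcal{C}(0)=-1$ as a boundary check). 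Therefore $\Gamma(\lambda)=1-\theta^2\mathcal{C}(\lambda)\ge 1>0$ on $[0,a_-)$ for every $\theta>0$, which is the claim. The only ``obstacle'' worth mentioning is the algebraic reduction $\mathcal{C}(z)=zm(z)-1$, which depends essentially on the MP quadratic and would not generalize to arbitrary rotationally invariant noise; beyond that the argument is a one-line sign analysis.
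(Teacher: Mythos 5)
Your proposal is correct and follows essentially the same route as the paper: both reduce to the identity $\mathcal{C}(\lambda)=\lambda\,\mathcal{S}_\mu(\lambda)-1=\int t/(\lambda-t)\,d\mu(t)$ (you via the MP quadratic, the paper by substituting the explicit closed-form Stieltjes transform, which is the same computation) and then conclude $\mathcal{C}(\lambda)<0$, hence $\Gamma(\lambda)>1$, on $[0,a_-)$. No gaps.
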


\begin{proof}
It suffices to show that $\Gamma(\lambda)\neq 0$ for all $\lambda<a_-$. For the Mar\v{c}henko--Pastur law, the Stieltjes transform $\Smu$ satisfies (cf.~\cite[Lemma~3.11]{Bai2010SpectralAO}),
\[
\Smu(z) =\frac{1}{2\delta z}\left[z + \delta - 1 - \sqrt{(z - \delta - 1)^2 - 4\delta}\right], \quad \forall\, z\notin[a_-,b_+]
\]
substituting into the definition of $\CT(\lambda)$ \eqref{eq:Ctrans} yields:
\[
  \CT(\lambda)= \lambda\Smu(\lambda) -1 = \int \frac{t}{\lambda-t}\,\mu_{\mathrm{MP}}(\mathrm dt),
  \qquad \lambda\notin[a_-,b_+].
\]
If $\lambda<a_-$, then $t\in[a_-,b_+]$ implies $\lambda-t<0$, so the integrand $t/(\lambda-t)$ is strictly negative and hence $\CT(\lambda)<0$. Since $\theta^2>0$, it follows that
\[
  \Gamma(\lambda)=1-\theta^2\CT(\lambda)>1,
\]
and therefore $\Gamma(\lambda)\neq 0$ for all $\lambda<a_-$. This proves that the master equation has no real solution on $[0,a_-)$.
\end{proof}


\subsection{Optimal Denoisers and its OAMP Recursion}

The spectral measures admit an outlier atom at $\lambda_*$ in \eqref{eq:MP-right-outlier} and, for the $\bm v$-channel, an additional atom at $0$ (cf.~\lemref{lem:spectral_measures_properties}). Under \lemref{lem:MP-right-only}, there is no atom on $[0,a_-)$, and the point masses are
\begin{align}\label{eq: point masses in IID Gaussian Case}
\nu_1(\{\lambda_*\}) = 1 - \frac{\delta(1+\theta^2)}{\theta^2(\delta+\theta^2)},\quad
\nu_2(\{\lambda_*\}) = 1 - \frac{\delta+\theta^2}{\theta^2(1+\theta^2)},\quad \quad \nu_2(\{0\}) &= \frac{1-\delta}{1+\theta^2}.
\end{align}
The absolutely continuous parts follow directly from \lemref{lem:spectral_measures_properties}:
\BS\label{eq: bulk densities in IID Gaussian Case}
\begin{align}
\nu_1^{\parallel}(\lambda)& = {\mu(\lambda)}{\varphi_1(\lambda)} = \frac{\delta + \frac{\delta}{\theta^2}}{\lambda_* -\lambda} \mu(\lambda),\\
\nu_2^{\parallel}(\lambda) & = {\mu(\lambda)}{\varphi_2(\lambda)} = \frac{\delta + \frac{\delta^2}{\theta^2}}{\lambda_* -\lambda} \mu(\lambda),\\
\nu_3^{\parallel}(\sigma) & = \sign(\sigma) \frac{\sqrt{\delta}}{1+\delta}\cdot{\mu(\sigma^2)}{\varphi_{3} (\sigma^2)} = \sign(\sigma) \frac{\sqrt{\delta}}{1+\delta}\cdot\frac{ \frac{\delta}{\theta} \sigma^2}{\lambda_* -\sigma^2} \mu(\sigma^2).
\end{align}
\ES
\noindent Substituting \eqref{eq: point masses in IID Gaussian Case}--\eqref{eq: bulk densities in IID Gaussian Case} into \eqref{eq:Fstar} yields the $\bm u$-channel denoisers
\BS
\begin{align*}
P_t^*(\lambda;\rho_{1,t},\rho_{2,t}) &=  \frac{\rho_{2,t}(1+\frac{\delta}{\theta^2})+\lambda_* - \lambda}{\rho_{1,t}(\delta+\frac{\delta}{\theta^2})+\rho_{2,t}(1+\frac{\delta}{\theta^2})+\rho_{1,t}\rho_{2,t}\cdot\frac{\delta}{\theta^2} +\lambda_*- \lambda},\\
\widetilde{P}_t^*(\lambda;\rho_{1,t},\rho_{2,t})  &=   \frac{\frac{\sqrt{\delta}}{\theta}\rho_{2,t}}{\rho_{1,t}(\delta+\frac{\delta}{\theta^2})+\rho_{2,t}(1+\frac{\delta}{\theta^2})+\rho_{1,t}\rho_{2,t}\cdot\frac{\delta}{\theta^2} +\lambda_*- \lambda }, 
\end{align*}
\ES
and, analogously, \eqref{eq:tGstar} gives the $\bm v$-channel denoisers
\BS\label{eq:MP-Q-star}
\begin{align}
Q_t^*(\lambda;\rho_{1,t},\rho_{2,t}) &=  \frac{\rho_{1,t}(\delta+\frac{\delta}{\theta^2})+\lambda_* - \lambda}{\rho_{1,t}(\delta+\frac{\delta}{\theta^2})+\rho_{2,t}(1+\frac{\delta}{\theta^2})+\rho_{1,t}\rho_{2,t}\cdot\frac{\delta}{\theta^2} +\lambda_*- \lambda},\\
\widetilde{Q}_t^*(\lambda;\rho_{1,t},\rho_{2,t})  &=   \frac{\frac{\sqrt{\delta}}{\theta}\rho_{1,t}}{\rho_{1,t}(\delta+\frac{\delta}{\theta^2})+\rho_{2,t}(1+\frac{\delta}{\theta^2})+\rho_{1,t}\rho_{2,t}\cdot\frac{\delta}{\theta^2} +\lambda_*- \lambda },
\end{align}
\ES
Having established the close form of optimal denoisers, we have the following representation of limit squared cosine similarities.
\begin{lemma}\label{lem:limit_cos_MP}
In the rectangular spiked matrix model \eqref{eq:rectangular spiked model} with I.I.D. Gaussian noise $\bm{W}\sim \mathcal{N}(0,{1}/{N})$, let $T(\lambda_*) = \rho_1(\delta+\frac{\delta}{\theta^2})+\rho_2(1+\frac{\delta}{\theta^2})+\rho_1\rho_2\cdot\frac{\delta}{\theta^2} +\lambda_*$, and let $\Smu(\cdot)$ be the Stieltjes transform of the Mar\v{c}henko-Pastur law. The limit squared cosine similarities $w_1$ and $w_2$ are determined by the fixed-point equations
\BS
\begin{align}
w_1 &= 1 - \frac{\left( \frac{ \delta}{\theta^2}\rho_2 + \frac{\delta}{\theta^2}+ \delta \right) \Smu(T(\lambda_*))}{1 - \rho_1 \left( \frac{ \delta}{\theta^2}\rho_2 + \frac{\delta}{\theta^2}+ \delta \right) \Smu(T(\lambda_*))}, \label{eq:relationship:etau and wu,wv} \\
w_2 &= 1 - \frac{\left( \frac{\delta}{\theta^2}\rho_1 + \frac{\delta}{\theta^2} + 1 \right)\left[ \delta \Smu(T(\lambda_*)) + {(1-\delta)}/{T(\lambda_*)} \right]}{1 - \rho_2\left( \frac{\delta}{\theta^2}\rho_1 + \frac{\delta}{\theta^2} + 1 \right) \left[ \delta \Smu(T(\lambda_*)) + {(1-\delta)}/{T(\lambda_*)} \right]}. \label{eq:relationship: etav and wu,wv}
\end{align}
\ES
\end{lemma}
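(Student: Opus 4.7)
The plan is to specialize the general fixed-point characterization in Proposition~\ref{prop: optimal SE} (\,second claim, equation~\eqref{eq: fixed point equation of OAMP-general}\,) to the MP bulk, using the explicit forms of $P_t^*$ and $Q_t^*$ from \eqref{eq:MP-Q-star} together with the Stieltjes representation of $\Smu$. In the MP case Lemma~\ref{lem:MP-right-only} guarantees that $\mathcal{K}^*$ reduces (in the supercritical regime) to the single outlier $\{\lambda_*\}$ and that $P_t^*,Q_t^*$ have no real singularity on $[0,a_-)$, so one can safely integrate $1/(T(\lambda_*)-\lambda)$ against $\mu$ over $[a_-,b_+]$.

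First, I will rewrite the numerators of $P_t^*$ and $Q_t^*$ by adding and subtracting $T(\lambda_*)-\lambda$; this yields
\[
1-P_t^*(\lambda) \;=\; \frac{\rho_1\bigl(\tfrac{\delta}{\theta^2}\rho_2+\tfrac{\delta}{\theta^2}+\delta\bigr)}{T(\lambda_*)-\lambda},
\qquad
1-Q_t^*(\lambda) \;=\; \frac{\rho_2\bigl(\tfrac{\delta}{\theta^2}\rho_1+\tfrac{\delta}{\theta^2}+1\bigr)}{T(\lambda_*)-\lambda}.
\]
Integrating against $\mu$ and recognizing the Stieltjes transform, I obtain
\[
\langle 1-P_t^*\rangle_\mu \;=\; \rho_1 C_u\,\Smu\bigl(T(\lambda_*)\bigr),
\qquad
C_u \;\bydef\; \tfrac{\delta}{\theta^2}\rho_2+\tfrac{\delta}{\theta^2}+\delta,
\]
and analogously $\langle 1-Q_t^*\rangle_\mu = \rho_2 C_v\,\Smu(T(\lambda_*))$ with $C_v\bydef \tfrac{\delta}{\theta^2}\rho_1+\tfrac{\delta}{\theta^2}+1$. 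For the $\bm v$-channel, since $\widetilde{\mu}=\delta\mu+(1-\delta)\delta_{\{0\}}$ and $1-Q_t^*(0)=\rho_2 C_v/T(\lambda_*)$, I pick up the extra atom contribution to get
\[
\langle 1-Q_t^*\rangle_{\widetilde{\mu}} \;=\; \rho_2 C_v\Bigl[\delta\,\Smu(T(\lambda_*))+\frac{1-\delta}{T(\lambda_*)}\Bigr].
\]

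Plugging these into the second-claim fixed-point relations $\mathrm{mmse}_{\mathsf U}(w_1)=\tfrac{1}{\rho_1}(1-\langle P^*\rangle_\mu)$ and $\mathrm{mmse}_{\mathsf V}(w_2)=\tfrac{1}{\rho_2}(1-\langle Q^*\rangle_{\widetilde{\mu}})$ in Proposition~\ref{prop: optimal SE}, and using the channel-wise identity $w_i = 1 - (1-\langle \cdot\rangle)/(\langle\cdot\rangle\,\rho_i)$ (equivalently, $w_i = 1 - \langle 1-\cdot\rangle/(\rho_i\langle\cdot\rangle)$), a short rearrangement gives exactly \eqref{eq:relationship:etau and wu,wv} and \eqref{eq:relationship: etav and wu,wv}. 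The derivation is purely algebraic and requires no additional analytic input beyond the MP absence of left outliers established in Lemma~\ref{lem:MP-right-only}, which ensures $T(\lambda_*)>b_+$ so that the Stieltjes transform is evaluated off the support and all manipulations are well-defined.

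I do not anticipate a genuine obstacle; the only bookkeeping care lies in (i) correctly identifying the constant in the numerator when expanding $1-P_t^*$ and $1-Q_t^*$, and (ii) remembering the $(1-\delta)/T(\lambda_*)$ contribution from the point mass of $\widetilde{\mu}$ at $0$, which is precisely the term that distinguishes the $\bm v$-channel fixed point from the $\bm u$-channel one when $\delta<1$.
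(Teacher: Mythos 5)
Your proposal is correct and follows essentially the same route as the paper: rewrite $1-P^*$ and $1-Q^*$ as constants times the resolvent kernel $1/(T(\lambda_*)-\lambda)$, integrate against $\mu$ (resp.\ $\widetilde\mu$, picking up the $(1-\delta)/T(\lambda_*)$ atom at $0$), and substitute into the fixed-point recursion \eqref{eq:Optimal_Recursion_w1_pf}--\eqref{eq:Optimal_Recursion_w2_pf}. The constants $C_u$, $C_v$ and the resulting expressions match the paper's computation in \eqref{eq: Wigner EQ} exactly.
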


\begin{proof}[Proof]
We prove the $\bm v$-channel identity \eqref{eq:relationship: etav and wu,wv}; the proof for $w_1$ is analogous. 
By the fixed-point update \eqref{eq:Optimal_Recursion_w2_pf},
\begin{align}
w_2
&\stackrel{(c)}{=}1-\frac{1-{\langle Q^*(\lambda;\rho_1,\rho_2)\rangle}_{\muN}}{\rho_2\,{\langle Q^*(\lambda;\rho_1,\rho_2)\rangle}_{\muN}}.\nonumber
\end{align}
Thus it suffices to compute ${\langle Q^*(\lambda;\rho_1,\rho_2)\rangle}_{\muN}$. Recalling $T(\lambda_*)$ and $\muN$ from \lemref{lem:spectral_measures_properties}, we have
\begin{align}
{\langle Q^*(\lambda;\rho_1,\rho_2)\rangle}_{\muN}
&\stackrel{(a)}{=}{\left\langle 1-\frac{\rho_2(1+\frac{\delta}{\theta^2})+\rho_1\rho_2\frac{\delta}{\theta^2}}{T(\lambda_*)-\lambda}\right\rangle}_{\muN}\nonumber\\
&=1-\rho_2\left(\tfrac{\delta}{\theta^2}\rho_1+\tfrac{\delta}{\theta^2}+1\right){\left\langle \frac{1}{T(\lambda_*)-\lambda}\right\rangle}_{\muN}\nonumber\\
&\stackrel{(b)}{=}1-\rho_2\left(\tfrac{\delta}{\theta^2}\rho_1+\tfrac{\delta}{\theta^2}+1\right)\left[\delta\,\Smu\!\bigl(T(\lambda_*)\bigr)+\tfrac{1-\delta}{T(\lambda_*)}\right].\label{eq: Wigner EQ}
\end{align}
\noindent where (a) uses \eqref{eq:MP-Q-star}; and (b) uses $\muN=\delta\mu+(1-\delta)\delta_0$ from \lemref{lem:spectral_measures_properties} and
${\langle (T(\lambda_*)-\lambda)^{-1}\rangle}_\mu=\Smu(T(\lambda_*))$, hence
${\langle (T(\lambda_*)-\lambda)^{-1}\rangle}_{\muN}=\delta\,\Smu(T(\lambda_*))+\frac{1-\delta}{T(\lambda_*)}$.
Substituting \eqref{eq: Wigner EQ} into \eqref{eq:Optimal_Recursion_w2_pf} yields \eqref{eq:relationship: etav and wu,wv}.
\end{proof}

For a clear comparison, the next Lemma~\ref{lem:bayes-amp-fp} reformulates the fixed point equation of AMP \cite[Theorem~3]{venkataramanan2022estimation} for the rectangular spiked model with I.I.D. Gaussian noise.
\begin{lemma}\label{lem:bayes-amp-fp}
Under \assumpref{assump:main}, run the AMP iteration of \cite[Alg.\,(4.2)–(4.3)]{venkataramanan2022estimation}
with posterior-mean iterative denoisers for the two channels. Let $(\bar\mu_{u,t},\bar\sigma_{u,t},\bar\mu_{v,t},\bar\sigma_{v,t})_{t\ge 0}$ be the
SE parameters, and define the squared cosine
similarities 
\[
\Bar{w}_{1,t}\;\bydef\;\frac{\bar\mu_{u,t}^{\,2}}{\bar\mu_{u,t}^{\,2}+\bar\sigma_{u,t}^{\,2}},
\qquad
\Bar{w}_{2,t}\;\bydef\;\frac{\bar\mu_{v,t}^{\,2}}{\bar\mu_{v,t}^{\,2}+\bar\sigma_{v,t}^{\,2}}.
\]
Assume $\Bar{w}_{1,t}\to\overline w_1$ and $\Bar{w}_{2,t}\to\overline w_2$ as $t\to\infty$. Then the limiting overlaps satisfy the coupled fixed-point equations
\begin{equation}\label{eq:fp-rect}
\frac{\overline w_2}{1-\overline w_2}
=\theta^2\!\left[1-\mathrm{mmse}_{\mathsf U}(\overline w_1)\right],\qquad
\frac{\overline w_1}{1-\overline w_1}
=\frac{\theta^2}{\delta}\!\left[1-\mathrm{mmse}_{\mathsf V}(\overline w_2)\right].
\end{equation}
\end{lemma}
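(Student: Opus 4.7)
The strategy is to specialize the Bayes-AMP state evolution of \cite{venkataramanan2022estimation} to the rectangular spiked model \eqref{eq:rectangular spiked model} with i.i.d.\ $\mathcal{N}(0,1/N)$ noise, recast it in the variables $(\Bar{w}_{1,t},\Bar{w}_{2,t})$ using Bayes-optimality, and then pass to the limit.

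\emph{Step 1: Scalar-channel representation.} First I would invoke the state evolution theorem of \cite{venkataramanan2022estimation} for the AMP iteration (4.2)--(4.3) in the present rank-one rectangular i.i.d.\ Gaussian setting. This gives, for every fixed $t$, convergence of the empirical joint distributions in $W_2$ to the scalar Gaussian laws $\mathsf U_t \stackrel{d}{=} \bar\mu_{u,t}\mathsf U_* + \bar\sigma_{u,t} Z_u$ and $\mathsf V_t \stackrel{d}{=} \bar\mu_{v,t}\mathsf V_* + \bar\sigma_{v,t} Z_v$, with $(Z_u,Z_v)$ standard Gaussian independent of $(\mathsf U_*,\mathsf V_*)$. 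After dividing by $(\bar\mu_{u,t}^{\,2}+\bar\sigma_{u,t}^{\,2})^{1/2}$, the $\bm u$-iterate becomes the normalized channel $\sqrt{\Bar{w}_{1,t}}\,\mathsf U_* + \sqrt{1-\Bar{w}_{1,t}}\,Z$, and analogously for the $\bm v$-channel.

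\emph{Step 2: Nishimori identities.} Since the denoisers $\eta_{u,t},\eta_{v,t}$ in \cite[Alg.\,(4.2)--(4.3)]{venkataramanan2022estimation} are the posterior-mean estimators of $\mathsf U_*$ and $\mathsf V_*$ in these scalar Gaussian channels, the standard I--MMSE/Nishimori relations yield
\[
\mathbb E\!\bigl[\mathsf U_*\,\eta_{u,t}(\mathsf U_t)\bigr]
= \mathbb E\!\bigl[\eta_{u,t}(\mathsf U_t)^{2}\bigr]
= 1-\mathrm{mmse}_{\mathsf U}(\Bar{w}_{1,t}),
\]
and the analogous identities with $\mathsf V_*$ in place of $\mathsf U_*$. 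Substituting these into the one-step update formulas that the SE of \cite{venkataramanan2022estimation} prescribes for $(\bar\mu_{v,t+1},\bar\sigma_{v,t+1}^{2})$ and $(\bar\mu_{u,t+1},\bar\sigma_{u,t+1}^{2})$ collapses the covariance bookkeeping of the original recursion into a two-parameter recursion on the effective SNRs $\gamma_{u,t}\bydef \bar\mu_{u,t}^{\,2}/\bar\sigma_{u,t}^{\,2}$ and $\gamma_{v,t}\bydef \bar\mu_{v,t}^{\,2}/\bar\sigma_{v,t}^{\,2}$,
\[
\gamma_{v,t+1} = \theta^{2}\bigl(1-\mathrm{mmse}_{\mathsf U}(\Bar{w}_{1,t})\bigr),
\qquad
\gamma_{u,t+1} = \frac{\theta^{2}}{\delta}\bigl(1-\mathrm{mmse}_{\mathsf V}(\Bar{w}_{2,t})\bigr).
\]
The $1/\delta$ factor in the $\bm u$-channel arises from the rectangular normalization $1/\sqrt{MN}$ in \eqref{eq:rectangular spiked model} combined with $M/N\to\delta$, which asymmetrically inflates the signal strength driving the $\bm u$-update.

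\emph{Step 3: Passing to the fixed point.} Using the defining identity $\Bar{w}_{i,t}=\gamma_{i,t}/(1+\gamma_{i,t})$, equivalently $\gamma_{i,t}=\Bar{w}_{i,t}/(1-\Bar{w}_{i,t})$, the two recursions above read
$\Bar{w}_{2,t+1}/(1-\Bar{w}_{2,t+1}) = \theta^{2}(1-\mathrm{mmse}_{\mathsf U}(\Bar{w}_{1,t}))$ and $\Bar{w}_{1,t+1}/(1-\Bar{w}_{1,t+1}) = (\theta^{2}/\delta)(1-\mathrm{mmse}_{\mathsf V}(\Bar{w}_{2,t}))$. Taking $t\to\infty$ under the assumed convergence $\Bar{w}_{1,t}\to\overline w_1$, $\Bar{w}_{2,t}\to\overline w_2$ and using continuity of $\mathrm{mmse}_{\mathsf U},\mathrm{mmse}_{\mathsf V}$ (an immediate consequence of Assumption~\ref{assump:lip-MMSE}) yields exactly \eqref{eq:fp-rect}.

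\emph{Main obstacle.} The principal non-routine step is Step~2: extracting the scalar SNR recursion from the general SE of \cite{venkataramanan2022estimation}, which is stated in terms of covariance-type state tensors, and identifying the correct placement of the $1/\delta$ factor between the $\bm u$- and $\bm v$-channels. Once this algebraic reduction is established, both the Nishimori simplification and the limit passage are immediate.
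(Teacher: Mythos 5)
Your proposal is correct and follows essentially the same route as the paper's proof: specialize the state evolution of \cite{venkataramanan2022estimation}, apply the tower-property/Nishimori identity $\E[\mathsf U_*\,\eta_{u,t}(\mathsf U_t)]=\E[\eta_{u,t}(\mathsf U_t)^2]=1-\mathrm{mmse}_{\mathsf U}(\Bar w_{1,t})$ to collapse the mean and variance updates into the odds-ratio recursions with the $\theta^2$ and $\theta^2/\delta$ factors, and pass to the limit. The only cosmetic difference is that you route the argument through the effective SNRs $\gamma_{u,t},\gamma_{v,t}$, which is just a reparametrization of the paper's direct computation of $\bar\mu^2/\bar\sigma^2$.
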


\begin{proof}
By \cite[Theorem.~3]{venkataramanan2022estimation}, the SE gives
\begin{align}
\bar\mu_{v,t+1}&=\theta\,\E\!\left[\mathsf U_*\,g_t\!\big(\bar\mu_{u,t}\mathsf U_*+\bar\sigma_{u,t}G\big)\right],
&
\bar\sigma_{v,t+1}^{\,2}&=\E\!\left[g_t\!\big(\bar\mu_{u,t}\mathsf U_*+\bar\sigma_{u,t}G\big)^2\right],\label{eq:SE-v}\\
\bar\mu_{u,t}&=\frac{\theta}{\delta}\,\E\!\left[\mathsf V_*\,f_t\!\big(\bar\mu_{v,t}\mathsf V_*+\bar\sigma_{v,t}G\big)\right],
&
\bar\sigma_{u,t}^{\,2}&=\frac{1}{\delta}\,\E\!\left[f_t\!\big(\bar\mu_{v,t}\mathsf V_*+\bar\sigma_{v,t}G\big)^2\right],\label{eq:SE-u}
\end{align}
where $G\sim\mathcal N(0,1)$ is independent of $(\mathsf U_*,\mathsf V_*)$, and the denoisers are the posterior means
\begin{align}
g_t(x)=\E\!\left[\mathsf U_*\,\big|\,\bar\mu_{u,t}\mathsf U_*+\bar\sigma_{u,t}G=x\right],\quad f_t(y)=\E\!\left[\mathsf V_*\,\big|\,\bar\mu_{v,t}\mathsf V_*+\bar\sigma_{v,t}G=y\right].\label{eq:ft-def}
\end{align}
By the tower property applied to \eqref{eq:ft-def},
\begin{align}
\E\!\Big[\mathsf U_*\,g_t\!\big(\bar\mu_{u,t}\mathsf U_*+\bar\sigma_{u,t}G\big)\Big]
=\E\!\Big[g_t\!\big(\bar\mu_{u,t}\mathsf U_*+\bar\sigma_{u,t}G\big)^2\Big].\label{eq:tower-U}
\end{align}
Using $\E[\mathsf U_*^2]=1$ from \assumpref{assump:main}, the scalar-channel identity gives
\begin{align}
\mathrm{mmse}_{\mathsf U}(\Bar{w}_{1,t})
&=\E\!\Big[\big(\mathsf U_*-\E[\mathsf U_*\,|\,\bar\mu_{u,t}\mathsf U_*+\bar\sigma_{u,t}G]\big)^2\Big]=1-\E\!\Big[g_t\!\big(\bar\mu_{u,t}\mathsf U_*+\bar\sigma_{u,t}G\big)^2\Big].\label{eq:mmse-U}
\end{align}
Combining \eqref{eq:SE-v} with \eqref{eq:tower-U}–\eqref{eq:mmse-U} yields
\begin{align}
\bar\mu_{v,t+1}&=\theta\Big(1-\mathrm{mmse}_{\mathsf U}(\Bar{w}_{1,t})\Big),\quad
\bar\sigma_{v,t+1}^{\,2}=1-\mathrm{mmse}_{\mathsf U}(\Bar{w}_{1,t}).\label{eq:sig-v-upd}
\end{align}
Hence, by definition of $\Bar{w}_{2,t}$,
\begin{align}
\frac{\Bar{w}_{2,t+1}}{1-\Bar{w}_{2,t+1}}
=\frac{\bar\mu_{v,t+1}^{\,2}}{\bar\sigma_{v,t+1}^{\,2}}
=\theta^2\Big(1-\mathrm{mmse}_{\mathsf U}(\Bar{w}_{1,t})\Big).\label{eq:w2-odds}
\end{align}
An analogous application of the tower property to \eqref{eq:ft-def} gives
\begin{align}
\frac{\Bar{w}_{1,t}}{1-\Bar{w}_{1,t}}
=\frac{\bar\mu_{u,t}^{\,2}}{\bar\sigma_{u,t}^{\,2}}
=\frac{\theta^2}{\delta}\Big(1-\mathrm{mmse}_{\mathsf V}(\Bar{w}_{2,t})\Big).\label{eq:w1-odds}
\end{align}
Finally, letting $t\to\infty$ in \eqref{eq:w2-odds}–\eqref{eq:w1-odds} under the assumed convergence
$\Bar{w}_{1,t}\to\overline w_1$ and $\Bar{w}_{2,t}\to\overline w_2$ gives exactly \eqref{eq:fp-rect}.
\end{proof}

\subsection{Proof of \propref{thm:OAMP_Wigner_FP}}\label{app:pf-thm:OAMP_Wigner_FP}

We now prove the main proposition regarding the equivalence of the OAMP and AMP state evolution equations.
The goal is to show that, by Lemma~\ref{lem:bayes-amp-fp}, the fixed-point equations of the OAMP algorithm \eqref{eq:OptimalRecursion} can be simplified into
\BS
\begin{align}
\frac{w_1}{1-w_1} & = \frac{\theta^2}{\delta} \left(1 -\mathrm{mmse}_{\mathsf{V}}(w_2)\right), \label{eq:Equivalence of OAMP and AMP in IID Gaussian_u_appendix}\\
\frac{w_2}{1-w_2}  & = \theta^2 \left(1 -\mathrm{mmse}_{{\mathsf{U}}}(w_1)\right). \label{eq:Equivalence of OAMP and AMP in IID Gaussian_v_appendix}
\end{align}
\ES
Our proof relies on the following lemma.
\begin{lemma}\label{lem:deformed_MP_system}
\BS
Define the auxiliary parameters
\begin{align}
&a(\rho_1) \bydef \frac{\delta}{\theta^2}(1+\rho_1) + 1, 
\quad 
b(\rho_2) \bydef \frac{\delta}{\theta^2}(1+\rho_2) + \delta,\\
&T(\lambda_*) = \rho_2\left(1+\frac{\delta}{\theta^2}\right) + \rho_1\left(\delta+\frac{\delta}{\theta^2}\right) + \rho_1\rho_2\frac{\delta}{\theta^2} + \lambda_*.
\end{align}
Let $\Smu(z)$ denote the Stieltjes transform of the MP law
\begin{equation}\label{eq: steiltjes transform of MP law}
\Smu(z) = \frac{1}{2\delta z}\left[z + \delta - 1 - \sqrt{(z - \delta - 1)^2 - 4\delta}\right].
\end{equation}
Then the resolvent $\Smu$ satisfies the coupled system for $T(\lambda_*)$ outside the bulk
\begin{align}
a(\rho_1)\bigl[\delta  \Smu(T(\lambda_*)) + \tfrac{1-\delta}{T(\lambda_*)}\bigr] &= 1 + \frac{\delta}{\theta^2}\left(1 - \frac{1 - \rho_1\, b(\rho_2)\, \Smu(T(\lambda_*))}{b(\rho_2)\, \Smu(T(\lambda_*))}\right), \label{eq: Wigner hatS}\\
b(\rho_2)\, \Smu(T(\lambda_*)) &= 1 + \frac{1}{\theta^2}\left(1 - \frac{1 - \rho_2\, a(\rho_1)\bigl[\delta  \Smu(T(\lambda_*)) + \tfrac{1-\delta}{T(\lambda_*)}\bigr]}{a(\rho_1)\bigl[\delta  \Smu(T(\lambda_*)) + \tfrac{1-\delta}{T(\lambda_*)}\bigr]}\right).\label{eq: Wigner S}
\end{align}
\ES
\end{lemma}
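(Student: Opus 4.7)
My plan is to reduce both \eqref{eq: Wigner hatS} and \eqref{eq: Wigner S} to a single use of the Mar\v{c}henko--Pastur self-consistency
\[
\delta z\, m^{2} \;-\; (z+\delta-1)\, m \;+\; 1 \;=\; 0, \qquad m=\Smu(z),
\]
applied at the point $z = T(\lambda_*)$. Since the hypothesis places $T(\lambda_*)$ outside the bulk $\supp(\mu)=[a_-,b_+]$, this self-consistency follows immediately by squaring the closed form \eqref{eq: steiltjes transform of MP law}.

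Write $S \bydef \Smu(T(\lambda_*))$ and $U \bydef \delta S + (1-\delta)/T(\lambda_*)$. Noting the elementary identities $1 + \delta(1+\rho_1)/\theta^{2} = a(\rho_1)$ and $1 + (1+\rho_2)/\theta^{2} = b(\rho_2)/\delta$, I first rewrite the RHS of \eqref{eq: Wigner hatS} as $a(\rho_1) - \delta/[\theta^{2}\,b(\rho_2)\,S]$ and the RHS of \eqref{eq: Wigner S} as $b(\rho_2)/\delta - 1/[\theta^{2}\,a(\rho_1)\,U]$. After clearing denominators, \eqref{eq: Wigner hatS} and \eqref{eq: Wigner S} become, respectively,
\[
a(\rho_1)\,b(\rho_2)\,S\,(1-U) \;=\; \delta/\theta^{2}
\qquad\text{and}\qquad
a(\rho_1)\,b(\rho_2)\,U\,(1/\delta - S) \;=\; 1/\theta^{2}.
\]

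The bridge between these reformulations and MP self-consistency is the bookkeeping identity
\[
\theta^{2}\,a(\rho_1)\,b(\rho_2) \;=\; \delta\,T(\lambda_*),
\]
which I will verify by direct algebraic expansion: substituting $\lambda_* = 1+\delta+\theta^{2}+\delta/\theta^{2}$ from \eqref{eq:MP-right-outlier} and multiplying out both sides yields the same polynomial in $(\rho_1,\rho_2,\delta,\theta^{2})$. Granted this identity, the first reformulated equation becomes $T(\lambda_*)\,S\,(1-U)=1$, and substituting $U = \delta S + (1-\delta)/T(\lambda_*)$ turns this into $\delta T(\lambda_*)\,S^{2} - (T(\lambda_*)+\delta-1)\,S + 1 = 0$, which is exactly MP self-consistency at $z=T(\lambda_*)$. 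An analogous substitution in the second reformulated equation collapses to the same equation, so both claims follow from a single invocation.

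The main obstacle is algebraic rather than conceptual: verifying $\theta^{2}\,a(\rho_1)\,b(\rho_2) = \delta\,T(\lambda_*)$ requires expanding a product of two binomials in $\theta^{2}$ and matching the result term by term against the nine-term expansion of $T(\lambda_*)\,\theta^{2}$, and the match depends crucially on the specific value of $\lambda_*$ in \eqref{eq:MP-right-outlier}---a generic outlier location would not produce the identity. Once this bookkeeping is dispatched, the remaining reductions are mechanical.
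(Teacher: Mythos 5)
Your proposal is correct, and all the key computations check out: the rewritings of the two right-hand sides as $a(\rho_1)-\delta/[\theta^{2}b(\rho_2)S]$ and $b(\rho_2)/\delta-1/[\theta^{2}a(\rho_1)U]$, the bookkeeping identity $\theta^{2}a(\rho_1)b(\rho_2)=\delta T(\lambda_*)$ (which does require $\lambda_*=1+\delta+\theta^{2}+\delta/\theta^{2}$, exactly as you note), and the collapse of each equation to $\delta T\,S^{2}-(T+\delta-1)S+1=0$ all verify by direct expansion. The paper uses the same toolbox but a different logical structure: it solves \eqref{eq: Wigner S} for $a(\rho_1)U$, substitutes into \eqref{eq: Wigner hatS}, and checks that the resulting single \emph{eliminant} reduces to the MP quadratic at $z=T(\lambda_*)$. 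You instead reduce \emph{each} of the two equations separately to that same quadratic. Your route is slightly longer in arithmetic but logically tighter: eliminating $U$ only verifies the compatibility of the system, whereas one must ultimately recover both equations from the MP quadratic, and your equation-by-equation equivalences do exactly that. One small point worth making explicit when you write this up: the divisions by $S=\Smu(T(\lambda_*))$ and by $U=\delta S+(1-\delta)/T(\lambda_*)$ are legitimate because $T(\lambda_*)\ge\lambda_*\ge b_+$ for $\rho_1,\rho_2\ge0$, so $\Smu(T(\lambda_*))>0$ and hence $U>0$; and the squaring step used to pass from the closed form \eqref{eq: steiltjes transform of MP law} to the self-consistency $\delta z m^{2}-(z+\delta-1)m+1=0$ is an implication in the direction you need.
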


\begin{proof}
We show that \eqref{eq: Wigner hatS}–\eqref{eq: Wigner S} are equivalent to the MP self-consistent equation at $z=T(\lambda_*)$. Throughout, set
\[
\Smu \;\bydef\; \Smu\!\bigl(T(\lambda_*)\bigr), 
\qquad 
T \;\bydef\; T(\lambda_*),
\qquad 
\mathcal{S}_{\tilde{\mu}} \;\bydef\; \delta\,\Smu + \frac{1-\delta}{T}.
\]
Start from \eqref{eq: Wigner S} and isolate $a(\rho_1)\,\mathcal{S}_{\tilde{\mu}}$:
\begin{align}
b(\rho_2)\,\Smu 
&= 1 + \frac{1}{\theta^2}\!\left(1 - \frac{1 - \rho_2\,a(\rho_1)\,\mathcal{S}_{\tilde{\mu}}}{a(\rho_1)\,\mathcal{S}_{\tilde{\mu}}}\right)
= 1 + \frac{1}{\theta^2}\!\left(1+\rho_2 - \frac{1}{a(\rho_1)\,\mathcal{S}_{\tilde{\mu}}}\right),\\
\theta^2\bigl(b(\rho_2)\,\Smu - 1\bigr) &= 1+\rho_2 - \frac{1}{a(\rho_1)\,\mathcal{S}_{\tilde{\mu}}}
\quad\Longrightarrow\quad
a(\rho_1)\,\mathcal{S}_{\tilde{\mu}} \;=\; \frac{1}{\,1+\rho_2+\theta^2 - \theta^2 b(\rho_2)\,\Smu\,}.
\label{eq:aX-iso}
\end{align}
Substitute \eqref{eq:aX-iso} into \eqref{eq: Wigner hatS} and simplify the right-hand side:
\begin{align}
1 + \frac{\delta}{\theta^2}\!\left(1 - \frac{1 - \rho_1\, b(\rho_2)\,\Smu}{b(\rho_2)\,\Smu}\right)
= 1 + \frac{\delta}{\theta^2}\!\left(\frac{(1+\rho_1)\,b(\rho_2)\,\Smu - 1}{b(\rho_2)\,\Smu}\right)
= a(\rho_1) - \frac{\delta}{\theta^2\,b(\rho_2)\,\Smu}.
\label{eq:rhs-simpl}
\end{align}
Equating \eqref{eq:aX-iso} and \eqref{eq:rhs-simpl} gives a single identity in $\Smu$:
\begin{equation}
\frac{1}{\,1+\rho_2+\theta^2 - \theta^2 b(\rho_2)\,\Smu\,}
\;=\;
a(\rho_1) - \frac{\delta}{\theta^2\,b(\rho_2)\,\Smu}.
\label{eq:single}
\end{equation}
Clearing denominators and rearranging yields the quadratic
\begin{align}
\bigl(a(\rho_1)\,\theta^2\, b(\rho_2)\bigr)\,\Smu^2 
\;-\; \Bigl[a(\rho_1)\,(1+\rho_2+\theta^2) + \delta - 1\Bigr]\Smu 
\;+\; \frac{\delta(1+\rho_2+\theta^2)}{\theta^2\, b(\rho_2)} \;=\; 0.
\label{eq:int-quad}
\end{align}
Using $a(\rho_1)=1+\frac{\delta}{\theta^2}(1+\rho_1)$ and $b(\rho_2)=\delta+\frac{\delta}{\theta^2}(1+\rho_2)$, one directly computes
\begin{align}
\theta^2 a(\rho_1) b(\rho_2) \;=\; \delta\,T, 
\qquad
\frac{\delta(1+\rho_2+\theta^2)}{\theta^2\, b(\rho_2)} \;=\; 1,
\qquad
a(\rho_1)(1+\rho_2+\theta^2)+\delta-1 \;=\; T+\delta-1.
\end{align}
Substituting these into \eqref{eq:int-quad} gives
\begin{align}
\delta\,T\,\Smu^2 \;-\; (T+\delta-1)\Smu \;+\; 1 \;=\; 0,
\end{align}
which is precisely the MP self-consistent quadratic at $z=T(\lambda_*)$. This completes the proof.
\end{proof}

\begin{proof}[Proof of \propref{thm:OAMP_Wigner_FP}]
We prove the identity for the $\bm{u}$-channel; the proof for the symmetric identity is analogous. The derivation proceeds by simplifying the LHS and showing its equivalence to the RHS via the resolvent system in \lemref{lem:deformed_MP_system}.
\begin{align*}
\text{mmse}_{\mathsf{U}}(w_1)
    &\stackrel{\text{(a)}}{=} \frac{1}{\rho_1 + \frac{1}{1-w_1}}
      \stackrel{\text{(b)}}{=} b(\rho_2)\,\Smu(T(\lambda_*))
      \stackrel{\text{(c)}}{=} 1 + \frac{1}{\theta^2}\left(1 - \frac{1 - \rho_2\, a(\rho_1)\!\left[\delta \Smu(T(\lambda_*) ) + \frac{1-\delta}{T(\lambda_*)}\right]}{a(\rho_1)\!\left[\delta \Smu(T(\lambda_*) ) + \frac{1-\delta}{T(\lambda_*)}\right]}\right) \\
    &= 1 - \frac{1}{\theta^2}\left(\frac{1}{a(\rho_1)\!\left[\delta \Smu(T(\lambda_*)) + \frac{1-\delta}{T(\lambda_*)}\right]} - 1 - \rho_2\right)
      \stackrel{\text{(d)}}{=} 1 - \frac{1}{\theta^2} \frac{w_2}{1-w_2}.
\end{align*}
Here, step (a) follows from the definitions of the parameters. Step (b) is a simplification using the fixed-point equation for $w_1$ \eqref{eq:relationship:etau and wu,wv}. Step (c) applies the resolvent system identity \eqref{eq: Wigner S} from \lemref{lem:deformed_MP_system}. Step (d) uses the expression for effective SNR derived from the fixed-point equation for $w_2$ \eqref{eq:relationship: etav and wu,wv}.
\end{proof}

\section{Proof of The Optimal Spectral Estimators}

This appendix provides a detailed analysis comparing the performance of optimal
OAMP with that of optimal linear spectral methods. The core of the analysis is
to first establish the theoretical performance limit of any estimator based on
linear combinations of outlying singular vectors (cf.\
\propref{prop:outlier_characterization}). We then demonstrate how to construct
estimators that achieve this optimal performance in
\propref{prop:optimal_data_driven_estimators}. A crucial element of this
construction is a procedure for determining the relative signs of the different
outlying singular vector components, which is achievable under a non-Gaussian
signal assumption.

\subsection{Proof of \propref{prop:optimal_linear_estimator_cos}}
\label{app:pf-opt-linear-cos}

We prove the claim for $\bm u$; the argument for $\bm v$ is identical.
Using the empirical outlier index set $\mathcal I_M$ from \eqref{eq:emp-outlier-index}, write
\[
\bm u_{\mathrm{PCA}}(\bm c_u)
=\sqrt{M}\sum_{i\in\mathcal I_M}c_{u,i}\,\bm u_i(\bm Y),
\]
and fix any coefficient vector $\bm c_u\neq \bm 0$. By
Proposition~\ref{prop:outlier_characterization}(2)--(3), almost surely for all sufficiently large $M$,
each $i\in\mathcal I_M$ corresponds to a unique $\lambda_i\in\mathcal K_*$, and we use this identification below.
Then
\begin{align}
\limsup_{M\to\infty}\frac{\langle \bm u_{\mathrm{PCA}}(\bm c_u),\bm u_*\rangle^2}
{\|\bm u_{\mathrm{PCA}}(\bm c_u)\|^2\,\|\bm u_*\|^2}
&\stackrel{(a)}{=}\limsup_{M\to\infty}
\frac{\Big(\sum_{i\in\mathcal I_M} c_{u,i}\,\langle \bm u_i(\bm Y),\bm u_*\rangle\Big)^2}
{\Big\|\sum_{i\in\mathcal I_M} c_{u,i}\,\bm u_i(\bm Y)\Big\|^2\,\|\bm u_*\|^2}
\notag\\
&\stackrel{(b)}{=}\limsup_{M\to\infty}
\frac{\Big(\sum_{i\in\mathcal I_M} c_{u,i}\,\langle \bm u_i(\bm Y),\bm u_*/\sqrt M\rangle\Big)^2}
{\big(\|\bm u_*\|^2/M\big)\,\sum_{i\in\mathcal I_M} c_{u,i}^2}
\notag\\
&\stackrel{(c)}{\le}\limsup_{M\to\infty}
\frac{1}{\|\bm u_*\|^2/M}\;
\sum_{i\in\mathcal I_M}\big\langle \bm u_i(\bm Y),\bm u_*/\sqrt M\big\rangle^2
\notag\\
&\stackrel{(d)}{=}\sum_{\lambda\in\mathcal K^*}\nu_1(\{\lambda\}),\quad \text{a.s.}
\label{eq:cos-upper-final}
\end{align}
where (a) substitutes $\bm u_{\mathrm{PCA}}(\bm c_u)$ and cancels the common factor $M$; (b) uses the orthonormality
of $\{\bm u_i(\bm Y)\}_{i=1}^M$ and $\langle \bm u_i(\bm Y),\bm u_*\rangle=\sqrt M\langle \bm u_i(\bm Y),\bm u_*/\sqrt M\rangle$;
(c) follows from Cauchy--Schwarz in $\mathbb R^{|\mathcal I_M|}$; and (d) uses $M^{-1}\|\bm u_*\|^2\to 1$ and
Proposition~\ref{prop:outlier_characterization}(4) together with Claim~(2) and (3).
Moreover, (c) is tight if and only if $(c_{u,i})_{i\in\mathcal I_M}$ is proportional to
$\big(\langle \bm u_i(\bm Y),\bm u_*/\sqrt M\rangle\big)_{i\in\mathcal I_M}$, i.e., the oracle choice in
\eqref{eq:oracle_optimal_spectral_est.}.

\subsection{Proof of Proposition~\ref{prop:signal_plus_noise_sv}}
\label{app:pf-signal-plus-noise-sv}

To justify the empirical procedure for relative sign detection, we must
understand the \emph{joint} distribution of the singular vectors of
$\bm{Y}$ associated with all outlying singular values in
$\mathcal{K}^\ast$. This can be obtained from a self-consistent
representation of these singular vectors, in parallel with the OAMP state
evolution in Theorem~\ref{Thm: State Evolution}.

\begin{proof}[Proof of Proposition~\ref{prop:signal_plus_noise_sv}]
We focus on the left singular vectors; the proof for the right singular
vectors is entirely symmetric.


\textbf{Proof of Claim (1).} Fix the finite set of population outlying squared singular values
$\mathcal{K}^\ast=\{\lambda_1,\ldots,\lambda_K\}$, and write
$\sigma_k\bydef\sqrt{\lambda_k}$ for $1\le k\le K$.
For each $k$, let $\lambda_{k,M}$ denote the corresponding empirical
outlying squared singular value of $\bm Y$, with associated singular value
$\sigma_{k,M}\bydef\sqrt{\lambda_{k,M}}$, and let
$(\bm u_k,\bm v_k)$ be the associated left and right singular vectors.
By Proposition~\ref{prop:outlier_characterization}, for all sufficiently
large $M$, the indexing is well defined and
$\lambda_{k,M}\ac\lambda_k$
(equivalently, $\sigma_{k,M}\ac\sigma_k$);
in particular, $\sigma_{k,M}\neq 0$ almost surely for all large $M$ (cf. \lemref{lem:Gamma-analytic},~Claim~(3)).

Since $\sigma_{k,M}\neq 0$, for each $k$ we can write the singular vector into the following resolvent-based representation (cf.~Fact \ref{lem:specSE:resolvent-outliers}):
\BS
\BE
\bm u_k
=
\bigl(\lambda_{k,M}\bm I-\bm W\bm W^\UT\bigr)^{-1}
\left(
\frac{\theta\sigma_{k,M}}{\sqrt{MN}}
\langle\bm v_\ast,\bm v_k\rangle\,\bm u_\ast
+\frac{\theta}{\sqrt{MN}}
\langle\bm u_\ast,\bm u_k\rangle\,\bm W\bm v_\ast
\right),
\EE
\ES
where the inverse exists almost surely for all sufficiently large $M$, since
$\lambda_{k,M}$ lies outside the spectrum of $\bm W\bm W^\UT$. We are interested in the projections of $\bm u_\ast$ along the outlying left
singular vectors $\bm u_k$.
Multiplying both sides by $\langle\bm u_\ast,\bm u_k\rangle$ yields
\BS
\begin{eqnarray*}
\bm{u}_k^{\mathrm{OUT}}\bydef\langle\bm u_\ast,\bm u_k\rangle\,\bm u_k
& = &
\bigl(\lambda_{k,M}\bm I-\bm W\bm W^\UT\bigr)^{-1}
\left(
\frac{\theta\sigma_{k,M}}{\sqrt{MN}}
\langle\bm v_\ast,\bm v_k\rangle
\langle\bm u_\ast,\bm u_k\rangle\,\bm u_\ast
+
\frac{\theta}{\sqrt{MN}}
\langle\bm u_\ast,\bm u_k\rangle^2\,\bm W\bm v_\ast
\right)\\
&\explain{$M\to\infty$}{\simeq} &
\bigl(\lambda_k\bm I-\bm W\bm W^\UT\bigr)^{-1}
\left(
\frac{\theta\sigma_k}{\sqrt{MN}}
\langle\bm v_\ast,\bm v_k\rangle
\langle\bm u_\ast,\bm u_k\rangle\,\bm u_\ast
+
\frac{\theta}{\sqrt{MN}}
\langle\bm u_\ast,\bm u_k\rangle^2\,\bm W\bm v_\ast
\right)\\
&\bydef & \bm p_{u,k}.
\end{eqnarray*}
\ES
The approximation error above is due to the replacement of the empirical
eigenvalue $\lambda_{k,M}$ and singular value $\sigma_{k,M}$ by their
population limits $\lambda_k$ and $\sigma_k$.
To justify this replacement, note first that $\sigma_{k,M}\ac\sigma_k$, so the
difference between the corresponding scalar prefactors vanishes.
For the resolvent term, we use the resolvent identity
\BE\label{Eqn:resolvent_identity}
\bigl(\lambda_{k,M}\bm I-\bm W\bm W^\UT\bigr)^{-1}
-
\bigl(\lambda_k\bm I-\bm W\bm W^\UT\bigr)^{-1}
=
(\lambda_k-\lambda_{k,M})\,
\bigl(\lambda_{k,M}\bm I-\bm W\bm W^\UT\bigr)^{-1}
\bigl(\lambda_k\bm I-\bm W\bm W^\UT\bigr)^{-1}.
\EE
Since $\lambda_{k,M}$ and $\lambda_k$ remain at a strictly positive distance
from $\spp(\bm W\bm W^\UT)$ almost surely for all sufficiently large $M$, the
standard resolvent bound
$
\|(\lambda I-\bm W\bm W^\UT)^{-1}\|_{\mathrm{op}}
=
\mathrm{d}\bigl(\lambda,\spp(\bm W\bm W^\UT)\bigr)^{-1}
$
implies that both
$\|(\lambda_{k,M}\bm I-\bm W\bm W^\UT)^{-1}\|_{\mathrm{op}}$ and
$\|(\lambda_k\bm I-\bm W\bm W^\UT)^{-1}\|_{\mathrm{op}}$
are uniformly bounded.
Together with $\lambda_{k,M}\ac\lambda_k$, this shows that the error incurred
by replacing $(\lambda_{k,M},\sigma_{k,M})$ with $(\lambda_k,\sigma_k)$ is
asymptotically negligible under any of the convergence notions used below.

We next analyze the asymptotic distributions of $\bm p_{u,k}$:
\BE
\bm p_{u,k}
=
\bigl(\lambda_k\bm I-\bm W\bm W^\UT\bigr)^{-1}
\left(
\frac{\theta\sigma_k}{\sqrt{MN}}
\langle\bm v_\ast,\bm v_k\rangle
\langle\bm u_\ast,\bm u_k\rangle\,\bm u_\ast
+
\frac{\theta}{\sqrt{MN}}
\langle\bm u_\ast,\bm u_k\rangle^2\,\bm W\bm v_\ast
\right).
\EE
To isolate the deterministic signal-aligned contribution, we subtract and add
the averaged resolvent trace acting on $\bm u_\ast$.
Specifically, write
$$
\bigl(\lambda_k\bm I-\bm W\bm W^\UT\bigr)^{-1}\bm u_\ast
=
\frac{1}{M}\Tr\!\bigl(\lambda_k\bm I-\bm W\bm W^\UT\bigr)^{-1}\bm u_\ast
+
\Bigl[
\bigl(\lambda_k\bm I-\bm W\bm W^\UT\bigr)^{-1}
-
\frac{1}{M}\Tr\!\bigl(\lambda_k\bm I-\bm W\bm W^\UT\bigr)^{-1}\bm I_M
\Bigr]\bm u_\ast.
$$
Accordingly, we decompose $\bm p_{u,k}$ as
\BS
\BE
\bm p_{u,k}=\bm s_{u,k}+\bm n^{(1)}_{u,k}+\bm n^{(2)}_{u,k},
\EE
where the signal component $\bm s_{u,k}$ and the noise components
$\bm n^{(1)}_{u,k}$ and $\bm n^{(2)}_{u,k}$ are respectively given by
\begin{align}
\bm s_{u,k}
&\bydef
\left(
\frac{\theta\sigma_k}{\sqrt{MN}}
\langle\bm v_\ast,\bm v_k\rangle
\langle\bm u_\ast,\bm u_k\rangle
\right)
\left(
\frac{1}{M}\Tr\bigl(\lambda_k\bm I_M-\bm W\bm W^\UT\bigr)^{-1}
\right)\bm u_\ast,\\
\bm n^{(1)}_{u,k}
&\bydef
\left(
\frac{\theta\sigma_k}{\sqrt{MN}}
\langle\bm v_\ast,\bm v_k\rangle
\langle\bm u_\ast,\bm u_k\rangle
\right)
\Bigl[
\bigl(\lambda_k\bm I_M-\bm W\bm W^\UT\bigr)^{-1}
-
\frac{1}{M}\Tr\bigl(\lambda_k\bm I_M-\bm W\bm W^\UT\bigr)^{-1}\bm I_M
\Bigr]\bm u_\ast,\\
\bm n^{(2)}_{u,k}
&\bydef
\left(
\frac{\theta}{\sqrt{MN}}
\langle\bm u_\ast,\bm u_k\rangle^2
\right)
\bigl(\lambda_k\bm I_M-\bm W\bm W^\UT\bigr)^{-1}\bm W\bm v_\ast.
\end{align}
\ES

Collecting these vectors columnwise, define the matrices
$$
\bm P_u\bydef[\bm p_{u,1}\ \cdots\ \bm p_{u,K}],\qquad
\bm S_u\bydef[\bm s_{u,1}\ \cdots\ \bm s_{u,K}],\qquad
\bm N_u\bydef[\bm n_{u,1}\ \cdots\ \bm n_{u,K}],
$$
so that
$$
\bm P_u=\bm S_u+\bm N_u\in\mathbb R^{M\times K}.
$$
We now analyze the joint limit of $(\bm P_u,\bm S_u,\bm N_u)$, where each
$\bm n_{u,k}=\bm n^{(1)}_{u,k}+\bm n^{(2)}_{u,k}$ is the sum of the centered
(resolvent--$u_\ast$) fluctuation and the transverse ($\bm W\bm v_\ast$)
fluctuation.

\medskip\noindent{\it Signal component.}
Recall that
$$
\bm s_{u,k}
=
\left(
\frac{\theta\sigma_k}{\sqrt{MN}}
\langle\bm v_\ast,\bm v_k\rangle
\langle\bm u_\ast,\bm u_k\rangle
\right)
\left(
\frac{1}{M}\Tr(\lambda_k\bm I-\bm W\bm W^\UT)^{-1}
\right)\bm u_\ast.
$$
By Proposition~\ref{prop:outlier_characterization} (Claim~2) and the
definition of $\nu_3(\{\sigma_k\})$, the scalar prefactor converges almost
surely to
$$
\frac{\theta\sigma_k}{\sqrt{MN}}
\langle\bm v_\ast,\bm v_k\rangle
\langle\bm u_\ast,\bm u_k\rangle
\ac
2\theta\sigma_k\frac{1+\delta}{\sqrt\delta}\,\nu_3(\{\sigma_k\}).
$$
Moreover, by the standard trace convergence for resolvents:
$$
\frac{1}{M}\Tr(\lambda_k\bm I-\bm W\bm W^\UT)^{-1}
\ac
\Smu(\lambda_k).
$$
Combining these and substituting the explicit expression the point mass $\nu_3(\{\sigma_k\})$ from \lemref{lem:spectral_measures_properties} yields
\begin{align*}
\left( 2\theta\sigma_k \frac{1+\delta}{\sqrt{\delta}}\nu_3(\{\sigma_k\}) \right) \Smu(\lambda_k)
&= 2\theta\sigma_k \frac{1+\delta}{\sqrt{\delta}}
   \left( - \frac{\sqrt{\delta}}{1+\delta} \frac{1}{2\theta^3\sigma_k C'(\lambda_k)} \right) \Smu(\lambda_k) \\
&= \frac{\Smu(\lambda_k)}{-\theta^2 C'(\lambda_k)}
 = \nu_1(\{\lambda_k\}).
\end{align*}
Hence, in the sense of empirical row laws,
$$
\bm S_u \wc \bm{\mathsf S}_u
\bydef\bigl(\nu_1(\{\lambda_1\})\mathsf U_\ast,\ldots,
\nu_1(\{\lambda_K\})\mathsf U_\ast\bigr)^\UT,
$$
where $\mathsf U_\ast$ is the limiting signal coordinate distribution.

\medskip\noindent{\it Noise component: joint Gaussianity of }
$\bm n^{(1)}_{u,k}$ {\it and } $\bm n^{(2)}_{u,k}$.
Let $\bm W=\bm U_W\bm\Sigma_W\bm V_W^\UT$ be the singular value decomposition of
$\bm W$.  Write
$$
\tilde{\bm u}\bydef \bm U_W^\UT\bm u_\ast,\qquad
\tilde{\bm v}\bydef \bm V_W^\UT\bm v_\ast.
$$
Under Assumption~\ref{assump:main}, $\bm U_W$ and $\bm V_W$ are independent Haar
matrices and are independent of $(\bm u_\ast,\bm v_\ast)$, so
$\tilde{\bm u}$ and $\tilde{\bm v}$ have asymptotically standard normal
coordinates and are independent of $\bm\Sigma_W$ in the sense of empirical laws
(cf.\ \cite[Appendix~E--F]{fan2022approximate}).

Define the diagonal matrix
$$
\bm G_k\bydef (\lambda_k\bm I-\bm\Sigma_W\bm\Sigma_W^\UT)^{-1},
\qquad
\bar g_{k,M}\bydef \frac{1}{M}\Tr(\bm G_k).
$$
Then the centered resolvent term can be written as
$$
\bm n^{(1)}_{u,k}
=
\alpha_{k,M}\,
\bm U_W(\bm G_k-\bar g_{k,M}\bm I)\tilde{\bm u},
\qquad
\alpha_{k,M}\bydef
\frac{\theta\sigma_k}{\sqrt{MN}}
\langle\bm v_\ast,\bm v_k\rangle
\langle\bm u_\ast,\bm u_k\rangle,
$$
and the transverse term can be written as
$$
\bm n^{(2)}_{u,k}
=
\beta_{k,M}\,
\bm U_W\bm D_k\tilde{\bm v},
\qquad
\bm D_k\bydef (\lambda_k\bm I-\bm\Sigma_W\bm\Sigma_W^\UT)^{-1}\bm\Sigma_W,
\qquad
\beta_{k,M}\bydef \frac{\theta}{\sqrt{MN}}\langle\bm u_\ast,\bm u_k\rangle^2.
$$
Both $\alpha_{k,M}$ and $\beta_{k,M}$ converge almost surely to deterministic
limits (jointly over $k$) by Proposition~\ref{prop:outlier_characterization}.

For each $k$, set
$$
\bm q^{(1)}_{k}\bydef \alpha_{k,M}(\bm G_k-\bar g_{k,M}\bm I)\tilde{\bm u},
\qquad
\bm q^{(2)}_{k}\bydef \beta_{k,M}\bm D_k\tilde{\bm v},
\qquad
\bm q_k\bydef \bm q^{(1)}_{k}+\bm q^{(2)}_{k},
$$
so that
$$
\bm n_{u,k}=\bm n^{(1)}_{u,k}+\bm n^{(2)}_{u,k}=\bm U_W\bm q_k.
$$
Collecting $\bm q_k$ columnwise yields an $M\times K$ matrix
$$
\bm Q\bydef[\bm q_1\ \cdots\ \bm q_K],
\qquad\text{so that}\qquad
\bm N_u=\bm U_W\bm Q.
$$

\smallskip
{\it Step 1 (Gaussian limit for the coefficient rows).}
Each coordinate of $(\bm q_1,\ldots,\bm q_K)$ is an affine function of
$(\tilde u_i,\tilde v_i)$ with coefficients given by bounded continuous
functions of the singular values of $\bm W$ (through $\bm G_k$ and $\bm D_k$).
By \cite[Propositions~E.2 and~E.4]{fan2022approximate} and
\cite[Lemma~G.4]{zhong2021approximate}, for fixed $K$, the empirical joint law
of the rows of $\bm Q$ converges to a centered Gaussian vector
$\bm{\mathsf Q}\in\mathbb R^K$ with deterministic covariance matrix
$\bm\Sigma_Q$, i.e.,
$$
\bm Q \wc \bm{\mathsf Q},\qquad \bm{\mathsf Q}\sim\mathcal N(\bm 0,\bm\Sigma_Q).
$$

\smallskip
{\it Step 2 (Haar mixing).}
Since $\bm Q\indep \bm U_W$ under Assumption~\ref{assump:main}, rank-$K$ Haar
mixing (cf.\ \cite[Lemma~G.5]{zhong2021approximate}) implies that the empirical
row law of $\bm N_u=\bm U_W\bm Q$ converges to a centered Gaussian vector
$\bm{\mathsf N}_u\in\mathbb R^K$ with covariance $\bm\Sigma_Q$, independent of
$\mathsf U_\ast$, namely,
$$
(\bm S_u,\bm N_u)\xrightarrow[]{W}(\bm{\mathsf S}_u,\bm{\mathsf N}_u),
\qquad
\bm{\mathsf N}_u\sim\mathcal N(\bm 0,\bm\Sigma_Q),
\qquad
\bm{\mathsf N}_u\indep \mathsf U_\ast.
$$
Finally, since addition is Lipschitz on $\mathbb R^K$, applying similar arguments as in
\cite[Lemma~G.4]{zhong2021approximate} once more yields
$$
\bm P_u=\bm S_u+\bm N_u \xrightarrow[]{W} \bm{\mathsf S}_u+\bm{\mathsf N}_u
\bydef (\mathsf U^{\mathrm{OUT}}_1,\ldots,\mathsf U^{\mathrm{OUT}}_K)^\UT,
$$
where
$$
\mathsf U^{\mathrm{OUT}}_k=\nu_1(\{\lambda_k\})\,\mathsf U_\ast+\mathsf N_{u,k},
\qquad 1\le k\le K,
\qquad
(\mathsf N_{u,1},\ldots,\mathsf N_{u,K})\sim\mathcal N(\bm 0,\bm\Sigma_Q).
$$

\medskip\noindent
{\it Identification of $\bm\Sigma_Q$ via orthogonality.}
For $k\neq\ell$, the vectors $\bm p_{u,k}=\langle\bm u_\ast,\bm u_k\rangle\bm u_k$
are orthogonal for each finite $M$, hence $\langle\bm p_{u,k},\bm p_{u,\ell}\rangle=0$.
Passing to the row-law limit gives
$$
\mathbb E[\mathsf U^{\mathrm{OUT}}_k\mathsf U^{\mathrm{OUT}}_\ell]=0,
\qquad k\neq \ell.
$$
Since $\mathsf U^{\mathrm{OUT}}_k=\nu_1(\{\lambda_k\})\mathsf U_\ast+\mathsf N_{u,k}$
with $\bm{\mathsf N}_u\indep \mathsf U_\ast$ and $\mathbb E[\mathsf U_\ast^2]=1$,
this determines the off-diagonal entries of $\bm\Sigma_Q$ and yields the stated
formula for $\Sigma_u^{\mathrm{OUT}}$.
Moreover, $\mathbb{E}\left[\left(\mathsf U^{\mathrm{OUT}}_k\right)^2\right]=\nu_1(\{\lambda_k\})$ and therefore
$$
\mathrm{Var}(\mathsf N_{u,k})
=\mathbb{E}\left[\left(\mathsf U^{\mathrm{OUT}}_k\right)^2\right]-\nu_1(\{\lambda_k\})^2
=\nu_1(\{\lambda_k\})-\nu_1^2(\{\lambda_k\}).
$$
Thus we have
\begin{align}
 \mathsf{U}_k^{\mathrm{OUT}}
    &=\nu_1(\{\lambda_k\})\,\mathsf{U}_*
      +\sqrt{\nu_1(\{\lambda_k\})-\nu_1^2(\{\lambda_k\})}\,\mathsf{Z}_{u,k}, \qquad \mathsf{Z}_{u,k}\sim \mathcal{N}(0,1) \indep \mathsf{U}_*.   
\end{align}
The proof for the right singular vectors is identical, replacing
$\bm{W}\bm{W}^\UT$ by $\bm{W}^\UT\bm{W}$ and interchanging the roles
of $(\bm{u}_*,M)$ and $(\bm{v}_*,N)$ throughout, so we omit the
details.

\medskip\noindent \textbf{Proof of Claim (2).}
For $k\neq\ell$, the orthogonality of the empirical projected vectors
$\bm{p}_{u,k}=\langle\bm{u}_*,\bm{u}_k(\bm{Y})\rangle\bm{u}_k(\bm{Y})$ implies
$\langle\bm{p}_{u,k},\bm{p}_{u,\ell}\rangle=0$ for each finite $M$. Passing to
the row–law limit in~\eqref{eq:joint-distribution-projector-u} gives
\[
\mathbb{E}\bigl[\mathsf{U}_k^{\mathrm{OUT}}\mathsf{U}_\ell^{\mathrm{OUT}}\bigr]=0,
\qquad k\neq\ell.
\]
Using the independence 
$\bigl(\mathsf{Z}_{u,1},\ldots,\mathsf{Z}_{u,K}\bigr)\indep\mathsf{U}_*$ with $\E[\mathsf{U}_*^2]=1$,
\begin{align*}
0
&=\mathbb{E}\bigl[\mathsf{U}_k^{\mathrm{OUT}}\mathsf{U}_\ell^{\mathrm{OUT}}\bigr]=\nu_1(\{\lambda_k\})\nu_1(\{\lambda_\ell\})
+\sqrt{\nu_1(\{\lambda_k\})-\nu_1^2(\{\lambda_k\})}
 \sqrt{\nu_1(\{\lambda_\ell\})-\nu_1^2(\{\lambda_\ell\})}\,
 \mathbb{E}[\mathsf{Z}_{u,k}\mathsf{Z}_{u,\ell}],
\end{align*}
since by \lemref{lem:spectral_measures_properties} we have $\nu_1(\{\lambda_k\}) \in (0,1)$, so for $k\neq\ell$,
\[
\mathbb{E}[\mathsf{Z}_{u,k}\mathsf{Z}_{u,\ell}]
=-\frac{\nu_1(\{\lambda_k\})\,\nu_1(\{\lambda_\ell\})}
       {\sqrt{\nu_1(\{\lambda_k\})-\nu_1^2(\{\lambda_k\})}\,
        \sqrt{\nu_1(\{\lambda_\ell\})-\nu_1^2(\{\lambda_\ell\})}}.
\]
This yields the stated covariance entries for $\Sigma_u^{\mathrm{OUT}}$.
The same argument with $\nu_2$ and $(\mathsf{V}_k^{\mathrm{OUT}})$ gives the
expression for $\Sigma_v^{\mathrm{OUT}}$.
\medskip
It remains to justify that $\Sigma_u^{\mathrm{OUT}}$ is positive definite.
Since $\nu_1$ is a probability
measure and $\nu_1^{\parallel}(\mathbb{R}_+)>0$, the total atomic mass on the
outliers satisfies
\[
\sum_{k=1}^K \nu_1(\{\lambda_k\})
=1-\nu_1^{\parallel}(\mathbb{R}_+)\in(0,1).
\]
For any $\bm{x}=(x_1,\ldots,x_K)^\UT\in\mathbb{R}^K$, define
\[
y_k
\bydef
\frac{x_k}{\sqrt{\nu_1(\{\lambda_k\})-\nu_1^2(\{\lambda_k\})}},
\qquad k=1,\ldots,K.
\]
A direct computation using the explicit entries of $\Sigma_u^{\mathrm{OUT}}$
shows that
\begin{equation}\label{eq:xSigmax-rewrite}
\bm{x}^\UT\Sigma_u^{\mathrm{OUT}}\bm{x}
=\sum_{k=1}^K \nu_1(\{\lambda_k\})\,y_k^2
-\Bigl(\sum_{k=1}^K \nu_1(\{\lambda_k\})\,y_k\Bigr)^2.
\end{equation}
By the Cauchy--Schwarz inequality with weights
$\{\nu_1(\{\lambda_k\})\}_{k=1}^K$,
\[
\Bigl(\sum_{k=1}^K \nu_1(\{\lambda_k\})\,y_k\Bigr)^2
\le
\Bigl(\sum_{k=1}^K \nu_1(\{\lambda_k\})\Bigr)
\Bigl(\sum_{k=1}^K \nu_1(\{\lambda_k\})\,y_k^2\Bigr),
\]
and hence, combining with~\eqref{eq:xSigmax-rewrite},
\[
\bm{x}^\UT\Sigma_u^{\mathrm{OUT}}\bm{x}
\ge
\Bigl(1-\sum_{k=1}^K \nu_1(\{\lambda_k\})\Bigr)
\sum_{k=1}^K \nu_1(\{\lambda_k\})\,y_k^2.
\]
As observed above, $1-\sum_{k=1}^K \nu_1(\{\lambda_k\})>0$ and
$\nu_1(\{\lambda_k\})>0$ for all $k$, so the right-hand side is strictly
positive whenever $\bm{x}\neq\bm{0}$ (equivalently,
$(y_1,\ldots,y_K)\neq\bm{0}$). Thus $\Sigma_u^{\mathrm{OUT}}$ is positive
definite. The same reasoning, with $\nu_2$ in place of $\nu_1$, shows that
$\Sigma_v^{\mathrm{OUT}}$ is also positive definite.
\end{proof}

\subsection{Proof of \propref{prop:optimal_data_driven_estimators}}
\label{app:optimal_data_driven_estimators}

We treat the $\bm u$--channel; the $\bm v$--channel is identical. Recall from
\eqref{eq: optimal PCA estimators} that
\[
  \bm u^*_{\mathrm{PCA}}
  =
  \sum_{i\in\mathcal I_M} s_i^u \sqrt{\nu_1(\{\lambda_i\})}\,\bm u_i^\sharp,
  \qquad
  \bm u_i^\sharp=\sqrt{M}\,\xi_i\,\bm u_i(\bm Y).
\]
By Proposition~\ref{prop:outlier_characterization}(2)--(3), almost surely for all sufficiently large $M$,
each outlier window contains exactly one eigenvalue of $\bm Y\bm Y^\UT$. Hence, for each $i\in\mathcal I_M$
we may associate a unique population outlier $\lambda_i\in\mathcal K_*$.
Then
\begin{align}
\lim_{M\to\infty}\frac{\langle \bm{u}^*_{\mathrm{PCA}}, \bm{u}_* \rangle^2}
     {\| \bm{u}^*_{\mathrm{PCA}} \|^2 \|\bm{u}_*\|^2}
&\stackrel{(a)}{=}
\lim_{M\to\infty}\frac{\Big(\sum_{i\in\mathcal I_M} s_i^u\sqrt{\nu_1(\{\lambda_i\})}\,\langle \bm u_i^\sharp,\bm u_*\rangle\Big)^2}
     {\Big(M\sum_{i\in\mathcal I_M}\nu_1(\{\lambda_i\})\Big)\,\|\bm u_*\|^2}
\notag\\
&\stackrel{(b)}{=}\lim_{M\to\infty}
\frac{\Big(\sum_{i\in\mathcal I_M} s_i^u\sqrt{\nu_1(\{\lambda_i\})}\,
               \big(\tfrac1M\langle \bm u_i^\sharp,\bm u_*\rangle\big)\Big)^2}
     {\Big(\sum_{i\in\mathcal I_M}\nu_1(\{\lambda_i\})\Big)\,\big(\tfrac1M\|\bm u_*\|^2\big)}
\notag\\
&\stackrel{(c)}{\xrightarrow[\mathrm{a.s.}]{}}
\sum_{\lambda\in\mathcal K^*}\nu_1(\{\lambda\}).
\end{align}
Here (a) expands $\bm u^*_{\mathrm{PCA}}$ and uses
$\langle \bm u_i^\sharp,\bm u_j^\sharp\rangle=M\,\mathbf 1\{i=j\}$ to evaluate
$\|\bm u^*_{\mathrm{PCA}}\|^2=M\sum_{i\in\mathcal I_M}\nu_1(\{\lambda_i\})$.
Step (b) divides the numerator and denominator by $M^2$.
Step (c) uses $\tfrac1M\|\bm u_*\|^2\xrightarrow{\mathrm a.s.}1$ and, for each fixed $i\in\mathcal I_M$,
\[
\frac1M\langle \bm u_i^\sharp,\bm u_*\rangle
\xrightarrow{\mathrm a.s.}
[\bm s_{u,*}^{\mathrm R}]_i\,\sqrt{\nu_1(\{\lambda_i\})}
\qquad\text{and}\qquad
s_i^u\xrightarrow{\mathrm a.s.}[\bm s_{u,*}^{\mathrm R}]_i,
\]
which follow from Proposition~\ref{prop:signal_plus_noise_sv} together with
\eqref{eq:limit-matrix-U-rand}, and from Proposition~\ref{prop:outlier_characterization}(4) for the limiting overlaps.
Finally, by Proposition~\ref{prop:outlier_characterization}(2) the correspondence $i\in\mathcal I_M\leftrightarrow\lambda_i\in\mathcal K_*$ for all large $M$, so $\sum_{i\in\mathcal I_M}\nu_1(\{\lambda_i\})=\sum_{\lambda\in\mathcal K^*}\nu_1(\{\lambda\})$.

\subsection{Proof of Proposition~\ref{prop:MLE_relative_sign}}
\label{app:MLE_relative_sign}

We work throughout with the limit model \eqref{eq:limit-vector-U-true} and the notation introduced in Section~\ref{sec:relativesign}. The goal is to characterize when the global sign vector is identifiable from the rows of $\bm U^\sharp$, and to show that the MLE is asymptotically consistent whenever identifiability holds. We first analyze the Gaussian case, then the non-Gaussian case in the $\bm u$–channel, and finally use the inter-channel coupling to transfer the result to the $\bm v$–channel.

\subsubsection{Both Gaussian priors: impossibility of sign recovery}
In this part we show that, when both $\mathsf U_*$ and $\mathsf V_*$ are standard Gaussian, the distribution of the observed rows is independent of the sign vector. Consequently, the relative signs are not identifiable and no estimator can be consistent.

\begin{lemma}[Gaussian non-identifiability]\label{lem:gaussian_row_law}
Assume the setting of Proposition~\ref{prop:signal_plus_noise_sv}, and suppose $\mathsf U_*\sim\mathcal N(0,1)$.
Let $P_s$ be defined as in Proposition~\ref{prop:MLE_relative_sign}. Then $P_s=\mathcal N(\bm 0,\bm I_K)$; in particular, its law does not depend on $\bm s$.
\end{lemma}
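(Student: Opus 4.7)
The plan is to exploit the fact that when $\mathsf{U}_*\sim\mathcal{N}(0,1)$, every coordinate of the random vector in \eqref{eq:Usharps} is a linear combination of jointly Gaussian variables, so $P_s$ is a centered multivariate Gaussian and is completely determined by its covariance matrix. The task thus reduces to verifying that the covariance of $P_s$ equals $\bm{I}_K$, regardless of the sign hypothesis $\bm s \in \mathcal S_r$.

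First I would read off the diagonal entries. Using $[\bm s]_\ell^2 = 1$ together with $\E[\mathsf{U}_*^2] = \E[\mathsf{Z}_\ell^2] = 1$ (the latter being the unit-variance convention built into \eqref{eq:limit-vector-U-true-body} from the normalization $\|\bm u_\ell^\sharp\|^2 = M$), one immediately gets
\begin{align*}
\mathrm{Var}\!\left([\bm s]_\ell\sqrt{\nu_1(\{\lambda_\ell\})}\mathsf{U}_* + \sqrt{1-\nu_1(\{\lambda_\ell\})}\mathsf{Z}_\ell\right) = \nu_1(\{\lambda_\ell\}) + \bigl(1-\nu_1(\{\lambda_\ell\})\bigr) = 1.
\end{align*}

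The substantive step is handling the off-diagonal entries. Here I would relate the rescaled row variable to the projected outlier variable of Proposition~\ref{prop:signal_plus_noise_sv} via the identity $\mathsf{U}_\ell^\sharp = ([\bm s]_\ell/\sqrt{\nu_1(\{\lambda_\ell\})})\,\mathsf{U}_\ell^{\mathrm{OUT}}$, which shows that the noise variables in \eqref{eq:limit-vector-U-true-body} satisfy $\mathsf{Z}_k = [\bm s]_k \mathsf{Z}_{u,k}$ in law. Substituting the covariance formula \eqref{eq:sv-parral-cov} for $(\mathsf{Z}_{u,k},\mathsf{Z}_{u,\ell})$ and simplifying with $\sqrt{\nu_1(\{\lambda_k\}) - \nu_1^2(\{\lambda_k\})} = \sqrt{\nu_1(\{\lambda_k\})(1-\nu_1(\{\lambda_k\}))}$ gives, for $k\neq\ell$,
\begin{align*}
\E[\mathsf{Z}_k\mathsf{Z}_\ell] = -\,[\bm s]_k[\bm s]_\ell\,\frac{\sqrt{\nu_1(\{\lambda_k\})\nu_1(\{\lambda_\ell\})}}{\sqrt{(1-\nu_1(\{\lambda_k\}))(1-\nu_1(\{\lambda_\ell\}))}}.
\end{align*}
Plugging this into the covariance of the $k$-th and $\ell$-th components of $P_s$ then produces exact cancellation between the signal contribution $[\bm s]_k[\bm s]_\ell\sqrt{\nu_1(\{\lambda_k\})\nu_1(\{\lambda_\ell\})}$ and the noise contribution, so the off-diagonal is zero. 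This is not coincidental: the finite-sample orthogonality $\langle\bm u_k^\sharp,\bm u_\ell^\sharp\rangle = 0$ forces exactly this identity in the limit under \emph{any} hypothesis $\bm s$. I anticipate the only delicate point to be bookkeeping: distinguishing the joint law of $(\mathsf{Z}_\ell)$ induced by the true signs $\bm s^R_{u,*}$ from the joint law imagined under the hypothesis $\bm s$ in the MLE, and verifying that the sign pattern $[\bm s]_k[\bm s]_\ell$ appears in both the signal and noise covariances in precisely the same way.

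Combining the diagonal and off-diagonal computations yields $P_s = \mathcal{N}(\bm 0,\bm I_K)$ for every admissible $\bm s \in \mathcal S_r$. Since the log-likelihood in \eqref{eq:MLE-est-global} is then identical across all hypotheses, the MLE degenerates to an arbitrary choice on $\mathcal S_r$ and no estimator based on $\bm U^\sharp$ can consistently recover $\bm s^R_{u,*}$. By symmetry of the construction, the same argument in the $\bm v$-channel (with $\nu_2$ in place of $\nu_1$) handles the $\mathsf V_*$-Gaussian case, completing Claim~(2) of Proposition~\ref{prop:MLE_relative_sign}.
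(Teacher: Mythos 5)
Your proof is correct and follows essentially the same route as the paper: both arguments observe that under a Gaussian prior the vector in \eqref{eq:Usharps} is centered jointly Gaussian, and then verify that the signal covariance $\bm\gamma(\bm s)\bm\gamma(\bm s)^\UT$ and the noise covariance $\bm I_K-\bm\gamma(\bm s)\bm\gamma(\bm s)^\UT$ sum to $\bm I_K$ independently of $\bm s$. The only cosmetic difference is that you carry out the cancellation entrywise by tracing the noise correlations back to \eqref{eq:sv-parral-cov}, whereas the paper states the same identity in matrix form.
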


\begin{proof}
By definition, $P_s$ is the joint density of
\[
\bigl([\bm s]_\ell \sqrt{\nu_1(\{\lambda_\ell\})}\,\mathsf U_*+\sqrt{1-\nu_1(\{\lambda_\ell\})}\,\mathsf Z_\ell\bigr)_{\ell\in\mathcal I_M},
\]
where $\mathsf U_*$ is independent of
\[
\bigl(\sqrt{1-\nu_1(\{\lambda_\ell\})}\,\mathsf Z_\ell\bigr)_{\ell\in\mathcal I_M}
\sim \mathcal N\!\bigl(\bm 0,\ \bm I_K-\bm\gamma(\bm s)\bm\gamma(\bm s)^\UT\bigr),
\qquad
\bm\gamma(\bm s)\bydef\bigl([\bm s]_\ell\sqrt{\nu_1(\{\lambda_\ell\})}\bigr)_{\ell\in\mathcal I_M}.
\]
If $\mathsf U_*\sim\mathcal N(0,1)$, then
\[
\bigl([\bm s]_\ell \sqrt{\nu_1(\{\lambda_\ell\})}\,\mathsf U_*\bigr)_{\ell\in\mathcal I_M}
\sim \mathcal N\!\bigl(\bm 0,\ \bm\gamma(\bm s)\bm\gamma(\bm s)^\UT\bigr),
\]
and hence, by independence,
\[
P_s=\mathcal N\!\bigl(\bm 0,\ \bm\gamma(\bm s)\bm\gamma(\bm s)^\UT+\bm I_K-\bm\gamma(\bm s)\bm\gamma(\bm s)^\UT\bigr)
=\mathcal N(\bm 0,\bm I_K),
\]
which does not depend on $\bm s$.
\end{proof}

\subsubsection{Non-Gaussian Priors: Identifiability and MLE Consistency}\label{sec:NonGaussian-MLE}

We now assume that $\mathsf U_*$ is non-Gaussian and focus on the
$\bm u$–channel. For each $\bm s\in\mathcal S_r$, let $P_{\bm s}$ be as defined in \propref{prop:MLE_relative_sign}. The proof mostly follows the classical MLE scheme \cite[Chapter~5]{vanderVaart1998asymptotic}, where the only caveat being that we obtain uniform convergence not from a direct law of large numbers but from the Wasserstein convergence of the empirical row measure to $P_{\bm{s}_{u,*}^{\mathrm{R}}}$ in \eqref{eq:limit-matrix-U-rand}. We proceed in the following steps: 

\paragraph{Uniform convergence of empirical log-likelihoods.}
Throughout this section we reindex the empirical outlier index set as
$\mathcal I_M=\{1,\ldots,K\}$, where $K=\card{\mathcal I_M}$.
For each $\bm s\in\mathcal S_r$, let $P_{\bm s}$ be as defined in \propref{prop:MLE_relative_sign}.
Denote the $M$ rows of the singular vectors matrix $\bm U^\sharp$ by $(\bm U^\sharp_{i,:})_{i\le M}$.
Define the sample and population log-likelihoods, respectively, by
\begin{align}
  L_M(\bm s;\bm U^\sharp)
  &\bydef \frac{1}{M}\sum_{i=1}^M \log P_s\big(\bm U^\sharp_{i,:}\big),
  \label{eq:LM-def-lem}\\
  L(\bm s;\bm{\mathsf U}^\sharp)
  &\bydef \E_{P_{\bm s_{u,*}^{\mathrm R}}}\!\big[\log P_s(\bm{\mathsf U}^\sharp)\big],
  \label{eq:L-def-lem}
\end{align}
i.e., $L(\bm s;\bm{\mathsf U}^\sharp)$ is computed under the true law $P_{\bm s_{u,*}^{\mathrm R}}$
(equivalently, conditioning on the ground truth sign vector $\bm s_{u,*}^{\mathrm R}$).
The following lemma establishes uniform convergence of the empirical criterion
$L_M(\bm s;\bm U^\sharp)$ to $L(\bm s;\bm{\mathsf U}^\sharp)$ over the finite set $\mathcal S_r$.

\begin{lemma}[Uniform convergence of empirical log-likelihoods]\label{lem:uniform-emp-log}
Let $L_M(\bm s;\bm U^\sharp)$ and $L(\bm s;\bm{\mathsf U}^\sharp)$ be defined in
\eqref{eq:LM-def-lem}--\eqref{eq:L-def-lem}. Then
\begin{align}
  \sup_{\bm s\in\mathcal S_r}
  \big|L_M(\bm s;\bm U^\sharp)-L(\bm s;\bm{\mathsf U}^\sharp)\big|
  \ac 0,
  \label{eq:LM-uniform-proof}
\end{align}
where it is understood that the same ground truth sign vector $\bm s_{u,*}^{\mathrm R}$ is shared by both
$\bm U^\sharp$ and $\bm{\mathsf U}^\sharp$.
\end{lemma}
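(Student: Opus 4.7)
The plan is to decouple the uniform statement into two independent ingredients: a reduction from uniform to pointwise convergence over $\mathcal{S}_r$, and a single Wasserstein argument applied to $\log P_{\bm{s}}$ at each fixed $\bm{s}$. For the reduction, Proposition~\ref{prop:outlier_characterization} ensures that for all $M$ sufficiently large $\mathcal{I}_M$ is in one-to-one correspondence with the finite set $\mathcal{K}^*$, so $K=\card{\mathcal{I}_M}$ is eventually constant and $\mathcal{S}_r$ has fixed cardinality $2^{K-1}$. Hence
\[
\sup_{\bm{s}\in\mathcal{S}_r}\bigl|L_M(\bm{s};\bm{U}^\sharp)-L(\bm{s};\bm{\mathsf{U}}^\sharp)\bigr|
\;\le\;\sum_{\bm{s}\in\mathcal{S}_r}\bigl|L_M(\bm{s};\bm{U}^\sharp)-L(\bm{s};\bm{\mathsf{U}}^\sharp)\bigr|,
\]
and uniform convergence follows once pointwise convergence is known for each $\bm{s}$.

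For fixed $\bm{s}$, I would set $h_{\bm{s}}(\bm{x})\bydef\log P_{\bm{s}}(\bm{x})$, so that $L_M(\bm{s};\bm{U}^\sharp)$ and $L(\bm{s};\bm{\mathsf{U}}^\sharp)$ are the integrals of $h_{\bm{s}}$ against, respectively, the empirical row measure of $\bm{U}^\sharp$ and the limit law $P_{\bm{s}^{\mathrm{R}}_{u,*}}$ identified in \eqref{eq:limit-matrix-U-rand}. It therefore suffices to exhibit $h_{\bm{s}}$ as a valid pseudo-Lipschitz test function in the sense of Definition~\ref{def:W2} for some $p\ge 1$, after which \eqref{eq:limit-matrix-U-rand} (which holds for all orders) delivers the desired almost-sure limit. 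Smoothness and strict positivity of $P_{\bm{s}}$, and hence $h_{\bm{s}}\in C^\infty(\R^K)$, are recorded in Remark~\ref{rem:LLR-well-posed}. For the growth, writing $P_{\bm{s}}$ as the density of $\bm{\gamma}(\bm{s})\mathsf{U}_*+\bm{\eta}_{\bm{s}}$ with $\bm{\eta}_{\bm{s}}$ centered Gaussian of positive-definite covariance $\Sigma_{\bm{s}}$ and $\bm{\gamma}(\bm{s})_\ell\bydef[\bm{s}]_\ell\sqrt{\nu_1(\{\lambda_\ell\})}$, I would use Tweedie's identity
\[
\nabla h_{\bm{s}}(\bm{x})=-\Sigma_{\bm{s}}^{-1}\bigl(\bm{x}-\bm{\gamma}(\bm{s})\,\E_{P_{\bm{s}}}[\mathsf{U}_*\mid\bm{X}=\bm{x}]\bigr),
\]
so that a linear bound on the posterior mean translates directly into $|\nabla h_{\bm{s}}(\bm{x})|\le C(1+\|\bm{x}\|)$ and, via the mean value theorem, into the pseudo-Lipschitz condition of Definition~\ref{def:W2} with $p=2$. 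Since $\mathcal{S}_r$ is finite, $\|\Sigma_{\bm{s}}^{-1}\|_{\mathrm{op}}$ is trivially uniform in $\bm{s}$.

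The step I expect to be the main obstacle is the posterior-mean bound $|\E_{P_{\bm{s}}}[\mathsf{U}_*\mid\bm{x}]|\le c(1+\|\bm{x}\|)$ in the generality of Assumption~\ref{assump:main}(b), which only guarantees finite moments of all orders for $\mathsf{U}_*$. For priors with bounded support or sub-Gaussian tails the bound is routine; in the general case I would localize the defining integral on $\{|u|\le R\}$, use strict positivity of $P_{\bm{s}}$ to lower-bound the denominator by a Gaussian term in $\bm{x}$, and control the tail $\int_{|u|>R}|u|\,P_{\mathsf{U}_*}(\dd u)\,\phi_{\Sigma_{\bm{s}}}(\bm{x}-\bm{\gamma}(\bm{s})u)$ by Cauchy--Schwarz against the higher moments of $\mathsf{U}_*$. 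Should this estimate prove delicate, the fallback is to bypass Tweedie altogether and instead truncate $h_{\bm{s}}$ at radius $R$: the truncated function is bounded continuous, so weak convergence (implied by $W_1$) handles the body, while the tail contribution is controlled using the almost-sure uniform moment bound $\sup_M\frac{1}{M}\sum_{i=1}^M\|\bm{U}^\sharp_{i,:}\|^{2k}<\infty$ for any fixed $k$, which follows from Assumption~\ref{assump:main}(b)--(c) together with \eqref{eq:limit-matrix-U-rand}. Either route closes the pointwise step and, via the reduction above, delivers \eqref{eq:LM-uniform-proof}.
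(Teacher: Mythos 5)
Your skeleton matches the paper's: reduce the sup over the (eventually constant, finite) set $\mathcal S_r$ to pointwise a.s.\ convergence, and obtain each pointwise limit from the Wasserstein convergence \eqref{eq:limit-matrix-U-rand} applied to the test function $h_{\bm s}=\log P_{\bm s}$. Where you diverge is the regularity step, and this is where your primary route has a real gap. The paper does not verify the pseudo-Lipschitz condition of Definition~\ref{def:W2} at all; it instead establishes the two-sided quadratic bound $|\log P_{\bm s}(\bm x)|\le C(1+\|\bm x\|^2)$ directly --- the upper bound is immediate because the Gaussian quadratic form inside the mixture representation is nonnegative, so $\log P_{\bm s}\le\log c_{\max}$, and the lower bound is one application of Jensen's inequality, $\log\E[e^{-Q/2}]\ge-\tfrac12\E[Q]$, which produces an explicit quadratic in $\bm x$. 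Continuity plus this growth bound is exactly what is needed to pass to the limit under $W_p$ convergence for all $p$ (the Villani characterization), so no gradient control and no posterior-mean estimate are required.

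Your Tweedie route, by contrast, hinges on the linear posterior-mean bound $|\E_{P_{\bm s}}[\mathsf U_*\mid\bm x]|\le c(1+\|\bm x\|)$, which you correctly flag as unestablished: under Assumption~\ref{assump:main}(b) the prior is only guaranteed to have all moments finite, and such a bound does not come for free in that generality. Your fallback (truncation) is closer to working, but as written it still does not close: to control the tail term $\frac1M\sum_i|h_{\bm s}(\bm U^\sharp_{i,:})|\mathbf 1\{\|\bm U^\sharp_{i,:}\|>R\}$ by Cauchy--Schwarz against empirical moments, you need a polynomial bound on $|h_{\bm s}|$ itself, i.e.\ a quantitative lower bound $P_{\bm s}(\bm x)\ge c\,e^{-C\|\bm x\|^2}$; the qualitative ``strict positivity'' of Remark~\ref{rem:LLR-well-posed} is not enough. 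You in fact gesture at exactly this Gaussian lower bound when discussing the posterior-mean denominator --- if you promote that observation (or the Jensen inequality) to a bound on $-\log P_{\bm s}$ rather than on the posterior mean, your argument collapses to the paper's and both the Tweedie step and the truncation become unnecessary.
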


\begin{proof}
By Proposition~\ref{prop:signal_plus_noise_sv}, the convergence
\begin{align}
  \bigl(\,\langle \bm u_*,\bm u_1(\bm Y)\rangle\,\bm u_1(\bm Y),\ \ldots,\
  \langle \bm u_*,\bm u_K(\bm Y)\rangle\,\bm u_K(\bm Y)\,\bigr)
  \xrightarrow[]{W}
  \bigl(\,\mathsf U_1^{\mathrm{OUT}},\ldots,\mathsf U_K^{\mathrm{OUT}}\,\bigr)^\UT
  \label{eq:Uout-W-proof}
\end{align}
holds in the sense of Wasserstein convergence of the empirical row measure.
Moreover, by Proposition~\ref{prop:outlier_characterization},
$|\langle \bm u_*,\bm u_k(\bm Y)\rangle|/\sqrt M \ac \sqrt{\nu_1(\{\lambda_k\})}$ for each $k\in[K]$.
Together with the sign randomization in \eqref{eq:limit-matrix-U-rand}, this implies that the relative sign vector
$\bm s^{\mathrm R}_{u,*}$ is independent of the remaining limit randomness, and hence the empirical row measure of
$\bigl(\bm u_1^\sharp,\ldots,\bm u_K^\sharp\bigr)$ converges in Wasserstein distance, under the conditional law given
$\bm s^{\mathrm R}_{u,*}$, i.e.,
\begin{align}\label{eq:limit-usharp-proof}
  \bigl(\bm u_1^\sharp,\ldots,\bm u_K^\sharp\bigr)
  \wc
  \bm{\mathsf U}^\sharp(\bm s^{\mathrm R}_{u,*})
  \in \R^K,
\end{align}
where the $\ell$-th marginal is
\begin{align}\label{eq:limit-vector-U-true}
  \mathsf U_\ell^\sharp\big([\bm s^{\mathrm R}_{u,*}]_\ell\big)
  =
  [\bm s^{\mathrm R}_{u,*}]_\ell \sqrt{\nu_1(\{\lambda_\ell\})}\,\mathsf U_*
  +\sqrt{1-\nu_1(\{\lambda_\ell\})}\,\mathsf Z_\ell,
  \qquad \ell\in[K].
\end{align}

We first show that for any fixed $\bm s\in\mathcal S_r$,
\begin{align}
  L_M(\bm s;\bm U^\sharp)\ac L(\bm s;\bm{\mathsf U}^\sharp).
  \label{eq:LM-fixeds-proof}
\end{align}
To invoke the Wasserstein convergence \eqref{eq:limit-usharp-proof}, it suffices to verify the quadratic growth condition
(cf.\ \cite[Definition~6.8]{Villani2009OptimalTransport})
\begin{align}
  \bigl|\log P_s(\bm x)\bigr|
  \le C(1+\|\bm x\|^2),
  \qquad \forall\,\bm x\in\R^K,\ \forall\,\bm s\in\mathcal S_r.
  \label{eq:logp-growth-proof}
\end{align}
Conditioning \eqref{eq:Usharps} on a fixed $\bm s\in\mathcal S_r$ gives
\begin{align}
  P_s(\bm x)
  &= c(\bm s)\,
     \E_{\mathsf U_*}\Big[
       \exp\Big(
         -\frac12
         (\bm x-\bm\gamma(\bm s)\mathsf U_*)^\UT
         \Sigma(\bm s)^{-1}
         (\bm x-\bm\gamma(\bm s)\mathsf U_*)
       \Big)
     \Big],
  \label{eq:density-mixture-proof}
\end{align}
where
\BS
\begin{align}
  \bm\gamma(\bm s)
  &\bydef \big([\bm s]_k\sqrt{\nu_1(\{\lambda_k\})}\big)_{k\in[K]}\in\R^K,
  \label{eq:gamma-def-proof}\\
  \Sigma(\bm s)
  &\bydef \bm I_K-\bm\gamma(\bm s)\bm\gamma(\bm s)^\UT,
  \label{eq:Ns-cov-proof}\\
  c(\bm s)
  &\bydef (2\pi)^{-K/2}\big(\det\Sigma(\bm s)\big)^{-1/2}.
  \label{eq:c-def-proof}
\end{align}
\ES
By Lemma~\ref{lem:spectral_measures_properties}, $\sum_{k=1}^K\nu_1(\{\lambda_k\})<1$, hence $\Sigma(\bm s)\succ0$ for all $\bm s\in\mathcal S_r$.
Since $\mathcal S_r$ is finite, the eigenvalues of $\Sigma(\bm s)$ are uniformly bounded away from $0$ and $\infty$, and there exist
$0<c_{\min}\le c_{\max}<\infty$ such that
\begin{align}
  c_{\min}\le c(\bm s)\le c_{\max},
  \qquad \bm s\in\mathcal S_r.
  \label{eq:c-bounds-proof}
\end{align}
Taking logarithms in \eqref{eq:density-mixture-proof} yields
\begin{align}
  \log P_s(\bm x)
  &= \log c(\bm s)
     + \log \E_{\mathsf U_*}\Big[
        \exp\Big(
          -\frac12
          (\bm x-\bm\gamma(\bm s)\mathsf U_*)^\UT
          \Sigma(\bm s)^{-1}
          (\bm x-\bm\gamma(\bm s)\mathsf U_*)
        \Big)
      \Big].
  \label{eq:logp-expand-proof}
\end{align}
For the upper bound, the quadratic form in \eqref{eq:logp-expand-proof} is nonnegative, hence
\begin{align}
  \log P_s(\bm x)
  &\le \log c(\bm s)
   \le \log c_{\max},
  \qquad \forall\,\bm x\in\R^K,\ \forall\,\bm s\in\mathcal S_r.
  \label{eq:logp-upper-proof}
\end{align}
For the lower bound we apply Jensen, expand the quadratic form, and then use uniform spectral bounds:
\begin{align}
  \log P_s(\bm x)
  &\stackrel{(a)}{\ge}
    \log c(\bm s)
    -\frac12\,
      \E\Big[
        (\bm x-\bm\gamma(\bm s)\mathsf U_*)^\UT
        \Sigma(\bm s)^{-1}
        (\bm x-\bm\gamma(\bm s)\mathsf U_*)
      \Big] \nonumber\\
  &\stackrel{(b)}{=}
    \log c(\bm s)
    -\frac12\Big(
       \bm x^\UT\Sigma(\bm s)^{-1}\bm x
       -2m\,\bm\gamma(\bm s)^\UT\Sigma(\bm s)^{-1}\bm x
       +\bm\gamma(\bm s)^\UT\Sigma(\bm s)^{-1}\bm\gamma(\bm s)
     \Big)\nonumber\\
  &\stackrel{(c)}{\ge}
    \log c_{\min}-\tfrac12(C_1\|\bm x\|^2+C_2)
  \stackrel{(d)}{\ge}
    -C(1+\|\bm x\|^2),
  \qquad \forall\,\bm x\in\R^K,\ \forall\,\bm s\in\mathcal S_r,
  \label{eq:logp-lower-proof}
\end{align}
where $m\bydef\E[\mathsf U_*]$ and $\E[\mathsf U_*^2]=1$.
Here (a) applies Jensen's inequality to \eqref{eq:logp-expand-proof}; (b) expands the expectation; (c) uses
\eqref{eq:c-bounds-proof}, uniform spectral bounds on $\Sigma(\bm s)^{-1}$, and boundedness of $\bm\gamma(\bm s)$ to obtain constants
$C_1,C_2>0$; and (d) absorbs constants into a single $C>0$.
Combining \eqref{eq:logp-upper-proof} and \eqref{eq:logp-lower-proof} yields \eqref{eq:logp-growth-proof}.

By the conditional Wasserstein convergence \eqref{eq:limit-usharp-proof} and the growth bound \eqref{eq:logp-growth-proof},
for each fixed $\bm s\in\mathcal S_r$ we obtain
\begin{align}
  L_M(\bm s;\bm U^\sharp)
  =\frac{1}{M}\sum_{i=1}^M \log P_s\big(\bm U^\sharp_{i,:}\big)
  \ac L(\bm s;\bm{\mathsf U}^\sharp).
  \label{eq:LM-pointwise-proof}
\end{align}
Since $\mathcal S_r$ is finite, pointwise almost-sure convergence implies the uniform convergence \eqref{eq:LM-uniform-proof}.
\end{proof}


\paragraph{Consistency of MLE.} With these ingredients in place, standard results of MLE consistency imply that any maximizer of the empirical log-likelihood over the finite set $\mathcal S_r$ converges almost surely to the unique maximizer of $L$, namely the true sign vector $\bm{s}_{u,*}^{\mathrm{R}}$.

\begin{lemma}[Consistency of the MLE under a non-Gaussian prior]\label{lem:MLE_sign_consistency}
Assume the setting of Proposition~\ref{prop:signal_plus_noise_sv} and Lemma~\ref{lem:uniform-emp-log}, and suppose
that the scalar signal $\mathsf U_*$ in \eqref{eq:limit-vector-U-true} is
not standard Gaussian with $\E[\mathsf U_*^2]=1$. Let $\mathcal I_M$ be the outlier
index set with $K=\card{\mathcal I_M}$, and let
\[
  \mathcal S_r
  \;\bydef\;
  \bigl\{\bm s\in\{\pm1\}^K : [\bm s]_r=+1\bigr\}
\]
and $\bm{s}_{u,*}^{\mathrm{R}}\in\mathcal S_r$ be the ground truth sign vector as in Section~\ref{sec:relativesign}.
Let $\hat{\bm s}_u^{\mathrm{MLE}}$ be an estimator defined in
\eqref{eq:MLE-est-global}. Then
\begin{align}\label{eq:MLE-sign-consistency}
  \hat{\bm s}_u^{\mathrm{MLE}}
  \ac
  \bm{s}_{u,*}^{\mathrm{R}} .
\end{align}
\end{lemma}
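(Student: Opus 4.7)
The plan is to combine the uniform convergence already established in Lemma~\ref{lem:uniform-emp-log} with an identifiability property of the population log-likelihood $L(\bm{s};\bm{\mathsf{U}}^\sharp)$ on the finite set $\mathcal{S}_r$. Once we show that the true relative sign vector $\bm{s}_{u,*}^{\mathrm{R}}$ is the unique maximizer of $L(\cdot\,;\bm{\mathsf{U}}^\sharp)$, the classical $M$-estimator argument closes the proof: for each $\bm{s}\neq\bm{s}_{u,*}^{\mathrm{R}}$ the gap $L(\bm{s}_{u,*}^{\mathrm{R}};\bm{\mathsf{U}}^\sharp)-L(\bm{s};\bm{\mathsf{U}}^\sharp)>0$ is strictly positive, and uniform convergence transfers this gap to the empirical criterion $L_M$ almost surely for all sufficiently large $M$, so every element of $\argmax_{\mathcal{S}_r} L_M(\cdot;\bm{U}^\sharp)$ must equal $\bm{s}_{u,*}^{\mathrm{R}}$.

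By Gibbs' inequality, $L(\bm{s};\bm{\mathsf{U}}^\sharp)\le L(\bm{s}_{u,*}^{\mathrm{R}};\bm{\mathsf{U}}^\sharp)$ with equality if and only if $P_{\bm{s}}=P_{\bm{s}_{u,*}^{\mathrm{R}}}$ as probability measures on $\R^K$ (applying Remark~\ref{rem:LLR-well-posed} to ensure that the relevant densities are positive and the relative entropy is well defined). The proof therefore reduces to the following identifiability claim: if $\mathsf{U}_*$ is not $\mathcal{N}(0,1)$, then the map $\bm{s}\mapsto P_{\bm{s}}$ is injective on $\mathcal{S}_r$.

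I will prove injectivity by computing the characteristic function of $P_{\bm{s}}$. Writing $\bm{\mathsf{U}}^\sharp=\bm{\gamma}(\bm{s})\mathsf{U}_*+\bm{W}(\bm{s})$ with $\bm{\gamma}(\bm{s})_\ell\bydef[\bm{s}]_\ell\sqrt{\nu_1(\{\lambda_\ell\})}$ and $\bm{W}(\bm{s})\sim\mathcal{N}(\bm{0},\,\bm{I}_K-\bm{\gamma}(\bm{s})\bm{\gamma}(\bm{s})^\UT)$ independent of $\mathsf{U}_*$ (this is exactly the conditional representation used in \eqref{eq:density-mixture-proof}--\eqref{eq:Ns-cov-proof}), a direct calculation gives
\[
\phi_{P_{\bm{s}}}(\bm{t})=\exp\!\left(-\tfrac{1}{2}\|\bm{t}\|^2\right)\,\psi\!\left(\bm{t}^\UT\bm{\gamma}(\bm{s})\right),\qquad \psi(u)\bydef\phi_{\mathsf{U}_*}(u)\,e^{u^2/2}.
\]
If $P_{\bm{s}}=P_{\bm{s}'}$ for two distinct elements of $\mathcal{S}_r$, then $\psi(\bm{t}^\UT\bm{\gamma}(\bm{s}))=\psi(\bm{t}^\UT\bm{\gamma}(\bm{s}'))$ for all $\bm{t}\in\R^K$. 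Since $[\bm{s}]_r=[\bm{s}']_r=+1$, there must exist $\ell\neq r$ with $[\bm{s}']_\ell=-[\bm{s}]_\ell$. Restricting $\bm{t}$ to $\mathrm{span}\{\bm{e}_r,\bm{e}_\ell\}$ and absorbing the strictly positive weights $\sqrt{\nu_1(\{\lambda_r\})},\sqrt{\nu_1(\{\lambda_\ell\})}$ (both nonzero by Lemma~\ref{lem:spectral_measures_properties} in the supercritical regime) into new real variables yields $\psi(a+b)=\psi(a-b)$ for all $a,b\in\R$. Setting $b=a$ forces $\psi\equiv\psi(0)=1$, equivalently $\phi_{\mathsf{U}_*}(u)=e^{-u^2/2}$, so $\mathsf{U}_*\sim\mathcal{N}(0,1)$, contradicting the hypothesis.

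The main obstacle is the identifiability step; the surrounding $M$-estimator argument is routine given Lemma~\ref{lem:uniform-emp-log}. One subtlety worth flagging is the role of the reference constraint $[\bm{s}]_r=+1$: without it, $\bm{s}$ and $-\bm{s}$ induce the same law whenever $\mathsf{U}_*$ is symmetric, so the quotient by $\mathcal{S}_r$ is essential for injectivity rather than merely cosmetic. A second minor point is that the Wasserstein convergence used in the proof of Lemma~\ref{lem:uniform-emp-log} is only conditional on $\bm{s}_{u,*}^{\mathrm{R}}$, but since $\bm{s}_{u,*}^{\mathrm{R}}$ is a fixed (non-random) vector in the limit model, this conditioning is harmless and the almost-sure conclusion \eqref{eq:MLE-sign-consistency} follows directly.
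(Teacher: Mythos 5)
Your proposal is correct and follows essentially the same route as the paper: uniform convergence from Lemma~\ref{lem:uniform-emp-log}, identifiability via the characteristic-function factorization $\phi_{P_{\bm s}}(\bm t)=e^{-\|\bm t\|^2/2}\psi(\bm t^\UT\bm\gamma(\bm s))$, and the standard finite-set $M$-estimator argument. The only (cosmetic) difference is in the identifiability step: you derive the functional equation $\psi(a+b)=\psi(a-b)$ and force $\psi\equiv 1$, whereas the paper argues the contrapositive by using non-constancy of $\psi$ together with linear independence of $\bm\gamma(\bm s),\bm\gamma(\bm t)$ to exhibit a separating frequency --- these are the same computation read in opposite directions.
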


\begin{proof}
Throughout we reindex $\mathcal I_M=\{1,\dots,K\}$.
We work under the conditional law given the ground-truth sign vector
$\bm s_{u,*}^{\mathrm R}$. Let $L_M(\bm s;\bm U^\sharp)$ and
$L(\bm s;\bm{\mathsf U}^\sharp)$ be as in Lemma~\ref{lem:uniform-emp-log}.
To prove \eqref{eq:MLE-sign-consistency}, it suffices to show that
$\bm s_{u,*}^{\mathrm R}$ is the unique maximizer of $L(\cdot;\bm{\mathsf U}^\sharp)$ on $\mathcal S_r$ and to invoke the uniform convergence \eqref{eq:LM-uniform-proof}.

We first establish identifiability on $\mathcal S_r$, namely
\begin{equation}\label{eq:Ps-distinct-proof}
  \bm s\neq\bm t \quad\Longrightarrow\quad P_{\bm s}\neq P_{\bm t},
  \qquad \bm s,\bm t\in\mathcal S_r,
\end{equation}
where $P_{\bm s}$ denotes the law induced by \eqref{eq:Usharps}.
Let $\phi_{U_*}(\omega)\bydef\E[e^{i\omega\mathsf U_*}]$ and define
$\Psi(\omega)\bydef \phi_{U_*}(\omega)e^{\omega^2/2}$.
By \eqref{eq:Usharps}, for any $\bm w\in\R^K$, the characteristic function under $P_{\bm s}$ yields
\begin{equation}\label{eq:Phi-s-Psi-proof}
  \Phi_{\bm s}(\bm w)
  = \E\big[e^{i\langle \bm w,\bm{\mathsf U}^\sharp\rangle}\big]
  = e^{-\|\bm w\|^2/2}\,\Psi(\langle \bm w,\bm\gamma(\bm s)\rangle),
\end{equation}
with $\bm\gamma(\bm s)$ defined in \eqref{eq:gamma-def-proof}. To prove \eqref{eq:Ps-distinct-proof}, it suffices to show that
\begin{align}\label{eq:cf-separate-w0}
  \forall\,\bm s\neq\bm t\in\mathcal S_r,\ \exists\,\bm w_0\in\R^K
  \ \text{such that}\ 
  \Phi_{\bm s}(\bm w_0)\neq \Phi_{\bm t}(\bm w_0).
\end{align}
Since $\mathsf U_*$ is not standard Gaussian, $\Psi$ is non-constant (otherwise $\phi_{U_*}(\omega)=e^{-\omega^2/2}$).
Choose $\omega_1\neq\omega_2$ such that $\Psi(\omega_1)\neq\Psi(\omega_2)$.
Fix $\bm s\neq\bm t$ in $\mathcal S_r$. If $\bm\gamma(\bm t)=c\,\bm\gamma(\bm s)$ for some scalar $c$, then
\eqref{eq:gamma-def-proof} yields $c=[\bm t]_k/[\bm s]_k\in\{\pm1\}$ for every $k\in\mathcal I_M$.
Evaluating this identity at $k=r$ gives $c=1$ (since $[\bm s]_r=[\bm t]_r=+1$), whereas evaluating it at an index
$j$ with $[\bm s]_j\neq[\bm t]_j$ gives $c=-1$, a contradiction. Assume $K\ge2$ (the case $K=1$ is trivial since $\mathcal S_r$ is a singleton, no relative sign needed).
Hence the map $T(\bm w)\bydef(\langle\bm w,\bm\gamma(\bm s)\rangle,\langle\bm w,\bm\gamma(\bm t)\rangle)$ is surjective,
so there exists $\bm w_0$ such that $\langle\bm w_0,\bm\gamma(\bm s)\rangle=\omega_1$ and
$\langle\bm w_0,\bm\gamma(\bm t)\rangle=\omega_2$. Plugging into \eqref{eq:Phi-s-Psi-proof} yields \eqref{eq:cf-separate-w0},
and thus $P_{\bm s}\neq P_{\bm t}$, proving \eqref{eq:Ps-distinct-proof}.

Next, identifiability implies that $\bm s_{u,*}^{\mathrm R}$ uniquely maximizes the population log-likelihood. Since
$L(\bm s;\bm{\mathsf U}^\sharp)$ is computed under the true law $P_{\bm s_{u,*}^{\mathrm R}}$, the definition of KL divergence gives
\[
  L(\bm s_{u,*}^{\mathrm R};\bm{\mathsf U}^\sharp)-L(\bm s;\bm{\mathsf U}^\sharp)
  = \mathrm{KL}\!\left(P_{\bm s_{u,*}^{\mathrm R}}\;\Vert\;P_{\bm s}\right)\ge 0,
\]
with equality iff $\bm s=\bm s_{u,*}^{\mathrm R}$ by \eqref{eq:Ps-distinct-proof}. Since $\mathcal S_r$ is finite, the following gap is strictly positive:
\[
  \Delta
  \bydef
  \min_{\bm s\in\mathcal S_r,\ \bm s\neq\bm s_{u,*}^{\mathrm R}}
  \Big(L(\bm s_{u,*}^{\mathrm R};\bm{\mathsf U}^\sharp)-L(\bm s;\bm{\mathsf U}^\sharp)\Big)>0.
\]

Finally, Lemma~\ref{lem:uniform-emp-log} implies that, for any $\varepsilon>0$, almost surely for all sufficiently large $M$,
\begin{equation}\label{eq:uniform-eps}
  \sup_{\bm s\in\mathcal S_r}\big|L_M(\bm s;\bm U^\sharp)-L(\bm s;\bm{\mathsf U}^\sharp)\big|\le \varepsilon.
\end{equation}
Set $\varepsilon=\Delta/3$. Then for such $M$, the triangle inequality yields
\[
  L_M(\bm s_{u,*}^{\mathrm R};\bm U^\sharp)
  \ge L(\bm s_{u,*}^{\mathrm R};\bm{\mathsf U}^\sharp)-\varepsilon,
\]
and for any $\bm s\neq\bm s_{u,*}^{\mathrm R}$,
\[
  L_M(\bm s;\bm U^\sharp)
  \le L(\bm s;\bm{\mathsf U}^\sharp)+\varepsilon
  \le L(\bm s_{u,*}^{\mathrm R};\bm{\mathsf U}^\sharp)-\Delta+\varepsilon
  = L(\bm s_{u,*}^{\mathrm R};\bm{\mathsf U}^\sharp)-2\varepsilon.
\]
Hence $L_M(\bm s_{u,*}^{\mathrm R};\bm U^\sharp)>L_M(\bm s;\bm U^\sharp)$ for all $\bm s\neq\bm s_{u,*}^{\mathrm R}$, and therefore any maximizer
$\hat{\bm s}_u^{\mathrm{MLE}}\in\argmax_{\bm s\in\mathcal S_r}L_M(\bm s;\bm U^\sharp)$ satisfies
$\hat{\bm s}_u^{\mathrm{MLE}}=\bm s_{u,*}^{\mathrm R}$ for all sufficiently large $M$ almost surely, i.e.,
$\hat{\bm s}_u^{\mathrm{MLE}}\xrightarrow{\mathrm{a.s.}}\bm s_{u,*}^{\mathrm R}$.
\end{proof}

\subsubsection{Inter-channel Sign Coupling}
Finally, we couple the $\bm u$-- and $\bm v$--channel signs through the cross spectral measure
$\nu_3$ in \defref{def:spectral_measures}. The key point is that the outlier point mass
$\nu_3(\{\sigma_k\})$ is nonzero, which yields an asymptotically deterministic inter-channel sign relation.

\begin{lemma}[Inter-channel sign coupling]\label{lem:inter_channel_sign_coupling}
Under the assumptions of Lemma~\ref{lem:spectral_measures_properties} and
Proposition~\ref{prop:signal_plus_noise_sv}, for any outlier index $k\in\mathcal I_M$
with limiting eigenvalue $\lambda_k\in\mathcal K^*$ and singular value $\sigma_k=\sqrt{\lambda_k}$,
\begin{equation}\label{eq: sign coupling}
\sign\!\Big(
\langle \bm u_k(\bm Y),\bm u_*\rangle\,
\langle \bm v_k(\bm Y),\bm v_*\rangle
\Big)
\;\ac\;
\sign\big(\nu_3(\{\sigma_k\})\big),
\end{equation}
and $\nu_3(\{\sigma_k\})\neq 0$.
\end{lemma}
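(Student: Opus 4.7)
\medskip

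\noindent The proof will be a direct corollary of two results established earlier: the limiting overlap formula in Proposition~\ref{prop:outlier_characterization}, Claim~(4), and the explicit atomic mass expression for $\nu_3$ in Lemma~\ref{lem:spectral_measures_properties}, Claim~(4). The only real content is handling the discontinuity of $\sign(\cdot)$ at the origin, which is resolved once we know the limit is nonzero.

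\medskip

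\noindent First, I would invoke Proposition~\ref{prop:outlier_characterization}(4), which gives
\[
\frac{1}{\sqrt{MN}}\,
\langle \bm{u}_*,\bm{u}_k(\bm{Y})\rangle
\langle \bm{v}_*,\bm{v}_k(\bm{Y})\rangle
\;\ac\;
2\,\frac{1+\delta}{\sqrt{\delta}}\,\nu_3(\{\sigma_k\}).
\]
Because $\sqrt{MN}>0$ and $2(1+\delta)/\sqrt{\delta}>0$, the sign of the inner-product product on the left coincides with the sign of the rescaled quantity for every $M$, and the latter converges almost surely to a real number whose sign equals $\sign(\nu_3(\{\sigma_k\}))$.

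\medskip

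\noindent Second, I would verify that this limit is strictly nonzero, so that the sign is eventually stable. The atomic mass formula from Lemma~\ref{lem:spectral_measures_properties}(4) gives
\[
\nu_3(\{\sigma_k\})
=-\frac{\sqrt{\delta}}{1+\delta}\,
 \frac{1}{2\theta^{3}\sigma_k\,\mathcal{C}'(\sigma_k^{2})}.
\]
Here $\sigma_k=\sqrt{\lambda_k}>0$ since $\lambda_k\in\mathcal{K}^{*}\subseteq \mathbb{R}\setminus(\supp(\mu)\cup\{0\})$ by Lemma~\ref{lem:Gamma-analytic}(3), and $\mathcal{C}'(\lambda_k)\neq 0$ by Lemma~\ref{lem:Gamma-analytic}(4), because $\Gamma'(\lambda_k)=-\theta^{2}\mathcal{C}'(\lambda_k)$ and every real zero of $\Gamma$ outside $\supp(\mu)$ is simple. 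Hence $\nu_3(\{\sigma_k\})\neq 0$.

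\medskip

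\noindent Combining these two observations, the scaled inner-product product converges a.s.\ to a nonzero limit, so by continuity of $\sign(\cdot)$ away from $0$, the sign of the left-hand side is eventually constant and equals $\sign(\nu_3(\{\sigma_k\}))$ almost surely, which is precisely \eqref{eq: sign coupling}. I do not anticipate any substantive obstacle: the entire argument is a short chain of applications of previously proved statements, and the only point requiring attention is the $\sign$-continuity issue, which is settled by the nonvanishing of $\mathcal{C}'(\lambda_k)$.
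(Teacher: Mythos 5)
Your proposal is correct and follows essentially the same route as the paper's proof: both invoke the overlap convergence from Proposition~\ref{prop:outlier_characterization}(4), identify the limit with a positive multiple of $\nu_3(\{\sigma_k\})$, and deduce nonvanishing from the explicit point-mass formula together with the simplicity of the zero of $\Gamma$ (the paper writes the mass in terms of $\Gamma'(\lambda_k)$ via the master equation, you in terms of $\mathcal{C}'(\lambda_k)$ — these are equivalent since $\Gamma'=-\theta^2\mathcal{C}'$). The only cosmetic difference is the normalization ($1/\sqrt{MN}$ versus the paper's $1/(2(M+N))$), which does not affect the sign argument.
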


\begin{proof}
By the definition of $\nu_{L,3}$ in \defref{def:spectral_measures} and the overlap convergence in
\propref{prop:outlier_characterization},
\begin{equation}\label{eq:limit_to_mass-proof}
  \frac{1}{2(M+N)}\,
  \langle \bm u_k(\bm Y),\bm u_*\rangle\,
  \langle \bm v_k(\bm Y),\bm v_*\rangle
  \;\ac\;
  \nu_3(\{\sigma_k\}).
\end{equation}
Writing $\lambda_k=\sigma_k^2$, Lemma~\ref{lem:spectral_measures_properties} gives
\begin{equation}\label{eq:nu3-pointmass-proof}
  \nu_3(\{\sigma_k\})
  =
  \frac{\sqrt{\delta}}{1+\delta}\cdot
  \frac{\theta\,\CT(\lambda_k)}{2\sigma_k\,\Gamma'(\lambda_k)}
  =
  \frac{\sqrt{\delta}}{1+\delta}\cdot
  \frac{1}{2\theta\,\sigma_k\,\Gamma'(\lambda_k)},
\end{equation}
where the second equality uses the master equation $1-\theta^2\CT(\lambda_k)=0$.
Lemma~\ref{lem:Gamma-analytic} ensures $\Gamma'(\lambda_k)\neq 0$, hence $\nu_3(\{\sigma_k\})\neq 0$.
Since the limit in \eqref{eq:limit_to_mass-proof} is nonzero, taking signs in \eqref{eq:limit_to_mass-proof} yields
\eqref{eq: sign coupling}.
\end{proof}

\begin{proof}[Proof of Proposition~\ref{prop:MLE_relative_sign}]
For Claim~(1), assume without loss of generality that $\mathsf U_*$ is non-Gaussian. Then
Lemma~\ref{lem:MLE_sign_consistency} gives $\hat{\bm s}_u^{\mathrm{MLE}}\ac \bm s_{u,*}^{\mathrm R}$.
Lemma~\ref{lem:inter_channel_sign_coupling} yields an asymptotically deterministic relation between the
channel-wise relative sign vectors; in particular, with the reference index $r$ defining $\mathcal S_r$, set
\[
  \bigl[\hat{\bm s}_v^{\mathrm{MLE}}\bigr]_j
  \;\bydef\;
  \bigl[\hat{\bm s}_u^{\mathrm{MLE}}\bigr]_j\,
  \sign\!\big(\nu_3(\{\sigma_r\})\,\nu_3(\{\sigma_j\})\big),
  \qquad j\in\mathcal I_M.
\]
Then $\hat{\bm s}_v^{\mathrm{MLE}}\ac \bm s_{v,*}^{\mathrm R}$. By symmetry, the same conclusion holds when only
$\mathsf V_*$ is non-Gaussian.

For Claim~(2), when both $\mathsf U_*$ and $\mathsf V_*$ are standard Gaussian,
Lemma~\ref{lem:gaussian_row_law} (applied separately in each channel) shows that
the row law of the scaled outlier singular vectors is $\mathcal N(\bm 0,\bm I_K)$ and
does not depend on the sign vector.
Consequently, the likelihood is invariant over $\mathcal S_r$.
\end{proof}

\subsection{Proof of \propref{prop:NGMC-est}}\label{app:NGMC-est}

Before proving Proposition~\ref{prop:NGMC-est}, we explain heuristically why NGMC identifies the
relative sign $[\bm s_{u,*}^{\mathrm R}]_r[\bm s_{u,*}^{\mathrm R}]_j$. Fix distinct outlier indices $r,j\in\mathcal I_M$. Then the corresponding empirical singular-vector coordinates admit the limit representation by \eqref{eq:limit-vector-U-true}
\begin{align}\label{eq:NGMC-heuristic}
     \mathsf U_r^\sharp
  = [\bm s_{u,*}^{\mathrm R}]_r\sqrt{\nu_1(\{\lambda_r\})}\,\mathsf U_*
    + \sqrt{1-\nu_1(\{\lambda_r\})}\,\mathsf Z_r,
  \qquad
  \mathsf U_j^\sharp
  = [\bm s_{u,*}^{\mathrm R}]_j\sqrt{\nu_1(\{\lambda_j\})}\,\mathsf U_*
    + \sqrt{1-\nu_1(\{\lambda_j\})}\,\mathsf Z_j, 
\end{align}
where $\mathsf U_*$ is common and $(\mathsf Z_r,\mathsf Z_j)$ is a standard Gaussian pair independent of $\mathsf U_*$.
For a contrast $f:\R\to\R$, set
\[
  T_f(a,b)\bydef \E\!\left[f(\mathsf U_r^\sharp)\,\mathsf U_j^\sharp\right],
  \qquad a,b\in\{\pm1\},
\]
where $\mathsf U_r^\sharp,\mathsf U_j^\sharp$ are formed with $a=[\bm s_{u,*}^{\mathrm R}]_r$ and $b=[\bm s_{u,*}^{\mathrm R}]_j$.
If $f$ is odd, then by \eqref{eq:NGMC-heuristic} we have $T_f(-a,b)=-T_f(a,b)$ and $T_f(a,-b)=-T_f(a,b)$, hence $T_f$ depends on $(a,b)$ only through the product $ab$:
\[
  T_f(a,b)=ab\,C_f,
  \qquad C_f\bydef T_f(1,1).
\]
Therefore, whenever $C_f\neq 0$,
\[
  \sign T_f\big([\bm s_{u,*}^{\mathrm R}]_r,[\bm s_{u,*}^{\mathrm R}]_j\big)
  = \sign\!\big([\bm s_{u,*}^{\mathrm R}]_r[\bm s_{u,*}^{\mathrm R}]_j\big)\,\sign(C_f),
\]
so the sign of $T_f$ recovers the relative sign up to a fixed global orientation.
In NGMC we take $f(x)=x^{k+1}$ with even $k$ from Assumption~\ref{assump: Non-Gaussian Prior}; the ensuing calculation shows
$C_f\propto \E[\mathsf U_*^{k+2}]-(k+1)!!\neq 0$, ensuring non-degeneracy.

\begin{proof}[Proof of \propref{prop:NGMC-est}]
We treat the $\bm u$--channel; the $\bm v$--channel follows by the same argument together with the inter-channel sign
relation in \propref{prop:signal_plus_noise_sv}. Fix distinct outlier indices $r,j\in\mathcal I_M$. Define the deterministic alignments
\[
  \gamma_k \bydef \sqrt{\nu_1(\{\lambda_k\})}\in(0,1),\qquad
  \tilde\gamma_k \bydef \sqrt{1-\gamma_k^2},\qquad k\in\mathcal I_M,
\]
and the limiting pair
\begin{equation}\label{eq:Usharp-rj-NGMC}
  \mathsf U_r^\sharp
  = [\bm s_{u,*}^{\mathrm R}]_r\,\gamma_r\,\mathsf U_* + \tilde\gamma_r\,\mathsf Z_r,
  \qquad
  \mathsf U_j^\sharp
  = [\bm s_{u,*}^{\mathrm R}]_j\,\gamma_j\,\mathsf U_* + \tilde\gamma_j\,\mathsf Z_j,
\end{equation}
where $\E[\mathsf U_*^2]=1$ and $(\mathsf Z_r,\mathsf Z_j)$ is a standard Gaussian pair independent of $\mathsf U_*$.
By Proposition~\ref{prop:signal_plus_noise_sv} and \eqref{eq:limit-vector-U-true}, the joint law of empirical rows $(\bm u_r^\sharp,\bm u_j^\sharp)\wc(\mathsf U_r^\sharp,\mathsf U_j^\sharp)$ by \eqref{eq:Usharps}. Let $f(x)\bydef x^{k+1}$ and $g(x_1,x_2)\bydef x_1^{k+1}x_2$. Using the Wasserstein convergence above, the polynomial
growth of $g$, we have
\begin{equation}\label{eq:ngmc-stat-moment}
  \frac{1}{M}\,f(\bm u_r^\sharp)^\UT \bm u_j^\sharp
  = \frac{1}{M}\sum_{m=1}^M g([\bm u_r^\sharp]_m,[\bm u_j^\sharp]_m)
  \;\ac\;
  \E\big[(\mathsf U_r^\sharp)^{k+1}\mathsf U_j^\sharp\big].
\end{equation}
We will show that
\begin{equation}\label{eq:ngmc-moment-target}
  \E\big[(\mathsf U_r^\sharp)^{k+1}\mathsf U_j^\sharp\big]
  =
  [\bm s_{u,*}^{\mathrm R}]_r[\bm s_{u,*}^{\mathrm R}]_j\,
  \nu_1(\{\lambda_r\})^{(k+1)/2}\,
  \sqrt{\nu_1(\{\lambda_j\})}\,
  \big(\E[\mathsf U_*^{k+2}]-(k+1)!!\big),
\end{equation}
and therefore \eqref{eq:ngmc-stat-moment} implies
\begin{equation}\label{eq:ngmc-limit-claimed}
  \frac{1}{M}\,f(\bm u_r^\sharp)^\UT \bm u_j^\sharp
  \;\ac\;
  [\bm s_{u,*}^{\mathrm R}]_r[\bm s_{u,*}^{\mathrm R}]_j\,
  \nu_1(\{\lambda_r\})^{(k+1)/2}\,
  \sqrt{\nu_1(\{\lambda_j\})}\,
  \big(\E[\mathsf U_*^{k+2}]-(k+1)!!\big).
\end{equation}
Acknowledging \eqref{eq:ngmc-limit-claimed} and taking sign of both sides, the NGMC sign recovers the relative sign
$[\bm s_{u,*}^{\mathrm R}]_r[\bm s_{u,*}^{\mathrm R}]_j$ up to the deterministic orientation
$\sign(\E[\mathsf U_*^{k+2}]-(k+1)!!)\neq 0$ (Assumption~\ref{assump: Non-Gaussian Prior}); with the convention
$[\bm s_{u,*}^{\mathrm R}]_r=+1$, this yields $\hat s_{u,j}^{\mathrm{NGMC}}\ac[\bm s_{u,*}^{\mathrm R}]_j$.

Hence it remains to compute $\E[(\mathsf U_r^\sharp)^{k+1}\mathsf U_j^\sharp]$ and verify \eqref{eq:ngmc-moment-target}. We first determine $\Cov(\mathsf Z_r,\mathsf Z_j)$ from the limiting orthogonality
$\E[\mathsf U_r^\sharp\mathsf U_j^\sharp]=0$:
\begin{align}
  0=\E[\mathsf U_r^\sharp\mathsf U_j^\sharp]
  &\stackrel{(a)}{=}
    [\bm s_{u,*}^{\mathrm R}]_r[\bm s_{u,*}^{\mathrm R}]_j\,\gamma_r\gamma_j\,\E[\mathsf U_*^2]
    + \tilde\gamma_r\tilde\gamma_j\,\E[\mathsf Z_r\mathsf Z_j]\notag\\
  &\stackrel{(b)}{=}
    [\bm s_{u,*}^{\mathrm R}]_r[\bm s_{u,*}^{\mathrm R}]_j\,\gamma_r\gamma_j
    + \tilde\gamma_r\tilde\gamma_j\,\E[\mathsf Z_r\mathsf Z_j],
\end{align}
hence
\begin{equation}\label{eq:covZ-NGMC}
  \E[\mathsf Z_r\mathsf Z_j]
  = -\, [\bm s_{u,*}^{\mathrm R}]_r[\bm s_{u,*}^{\mathrm R}]_j\,
      \frac{\gamma_r\gamma_j}{\tilde\gamma_r\tilde\gamma_j}.
\end{equation}
Here (a) substitutes \eqref{eq:Usharp-rj-NGMC} and uses $(\mathsf Z_r,\mathsf Z_j)\indep \mathsf U_*$; (b) uses $\E[\mathsf U_*^2]=1$.

\smallskip
Next, expand
\begin{equation}\label{eq:split-NGMC}
  \E\big[(\mathsf U_r^\sharp)^{k+1}\mathsf U_j^\sharp\big]
  =
  [\bm s_{u,*}^{\mathrm R}]_j\gamma_j\,\E\big[(\mathsf U_r^\sharp)^{k+1}\mathsf U_*\big]
  + \tilde\gamma_j\,\E\big[(\mathsf U_r^\sharp)^{k+1}\mathsf Z_j\big]
 \bydef\text{\rm(I)}+\text{\rm(II)}.
\end{equation}

\smallskip
\noindent\emph{Term (I).} Using the binomial expansion and $\mathsf Z_r\indep \mathsf U_*$,
\begin{align}
\text{\rm(I)}
&= [\bm s_{u,*}^{\mathrm R}]_j\gamma_j\,
   \E\!\left[\big([\bm s_{u,*}^{\mathrm R}]_r\gamma_r\mathsf U_*+\tilde\gamma_r\mathsf Z_r\big)^{k+1}\mathsf U_*\right]\notag\\
&\stackrel{(a)}{=}
[\bm s_{u,*}^{\mathrm R}]_j\gamma_j
\sum_{m=0}^{k+1}\binom{k+1}{m}
\big([\bm s_{u,*}^{\mathrm R}]_r\gamma_r\big)^m\tilde\gamma_r^{\,k+1-m}\,
\E[\mathsf U_*^{m+1}]\,\E[\mathsf Z_r^{k+1-m}]\notag\\
&\stackrel{(b)}{=}
[\bm s_{u,*}^{\mathrm R}]_r[\bm s_{u,*}^{\mathrm R}]_j\,\gamma_j
\sum_{p=0}^{k/2}\binom{k+1}{2p}
\gamma_r^{\,k+1-2p}\tilde\gamma_r^{\,2p}\,(2p-1)!!\,\E[\mathsf U_*^{k+2-2p}],
\label{eq:I-NGMC}
\end{align}
where (b) retains only even Gaussian moments $\E[\mathsf Z_r^{2p}]=(2p-1)!!$ and uses
$[\bm s_{u,*}^{\mathrm R}]_r^{\,k+1-2p}=[\bm s_{u,*}^{\mathrm R}]_r$ for even $k$.

\smallskip
\noindent\emph{Term (II).} Let $h_u(z)\bydef\big([\bm s_{u,*}^{\mathrm R}]_r\gamma_r u+\tilde\gamma_r z\big)^{k+1}$ so that
$(\mathsf U_r^\sharp)^{k+1}=h_{\mathsf U_*}(\mathsf Z_r)$. Conditioning on $\mathsf U_*$ and applying Stein's identity for the
Gaussian pair $(\mathsf Z_r,\mathsf Z_j)$ yields
\begin{align}
\text{\rm(II)}
&= \tilde\gamma_j\,\E\big[h_{\mathsf U_*}(\mathsf Z_r)\,\mathsf Z_j\big]\notag\\
&\stackrel{(a)}{=}
\tilde\gamma_j\,\Cov(\mathsf Z_r,\mathsf Z_j)\,
\E\!\left[\frac{\partial}{\partial \mathsf Z_r}(\mathsf U_r^\sharp)^{k+1}\right]\notag\\
&\stackrel{(b)}{=}
\tilde\gamma_j\,\Cov(\mathsf Z_r,\mathsf Z_j)\,(k+1)\tilde\gamma_r\,
\E\big[(\mathsf U_r^\sharp)^k\big]\notag\\
&\stackrel{(c)}{=}
-\,[\bm s_{u,*}^{\mathrm R}]_r[\bm s_{u,*}^{\mathrm R}]_j\,\gamma_r\gamma_j\,(k+1)\,
\E\big[(\mathsf U_r^\sharp)^k\big]\notag\\
&\stackrel{(d)}{=}
-\,[\bm s_{u,*}^{\mathrm R}]_r[\bm s_{u,*}^{\mathrm R}]_j\,\gamma_r\gamma_j\,(k+1)
\sum_{p=0}^{k/2}\binom{k}{2p}\gamma_r^{\,k-2p}\tilde\gamma_r^{\,2p}\,(2p-1)!!\,\E[\mathsf U_*^{k-2p}],
\label{eq:II-NGMC}
\end{align}
where (a) is Stein, (b) differentiates with respect to $\mathsf Z_r$, (c) substitutes \eqref{eq:covZ-NGMC}, and (d) expands
$(\mathsf U_r^\sharp)^k$ and retains even Gaussian moments.

\smallskip
Finally, substituting \eqref{eq:I-NGMC}--\eqref{eq:II-NGMC} into \eqref{eq:split-NGMC} and using
$(k+1)\binom{k}{2p}=(k+1-2p)\binom{k+1}{2p}$ gives
\begin{align}\label{eq:moment-sum-NGMC}
  \E\big[(\mathsf U_r^\sharp)^{k+1}\mathsf U_j^\sharp\big]
  &=
  [\bm s_{u,*}^{\mathrm R}]_r[\bm s_{u,*}^{\mathrm R}]_j\,\gamma_j\gamma_r^{\,k+1}
  \sum_{p=0}^{k/2}\binom{k+1}{2p}(2p-1)!!\,\gamma_r^{-2p}\tilde\gamma_r^{\,2p}\,
  \Big(\E[\mathsf U_*^{k+2-2p}]-(k+1-2p)\E[\mathsf U_*^{k-2p}]\Big).
\end{align}
By the minimality of $k$ in \propref{prop:NGMC-est}, $\E[\mathsf U_*^{2\ell}]=(2\ell-1)!!$ for $2\ell\le k$ while
$\E[\mathsf U_*^{k+2}]\neq (k+1)!!$, which forces every summand in \eqref{eq:moment-sum-NGMC} with $p\ge 1$ to vanish; hence
\[
  \E\big[(\mathsf U_r^\sharp)^{k+1}\mathsf U_j^\sharp\big]
  =
  [\bm s_{u,*}^{\mathrm R}]_r[\bm s_{u,*}^{\mathrm R}]_j\,\gamma_r^{k+1}\gamma_j\,
  \big(\E[\mathsf U_*^{k+2}]-(k+1)!!\big).
\]
Using $\gamma_r^{k+1}\gamma_j=\nu_1(\{\lambda_r\})^{(k+1)/2}\sqrt{\nu_1(\{\lambda_j\})}$ yields \eqref{eq:ngmc-moment-target},
and hence \eqref{eq:ngmc-limit-claimed} by \eqref{eq:ngmc-stat-moment}. This completes the $\bm u$--channel.

For the $\bm v$--channel, define $\hat s_{v,j}^{\mathrm{NGMC}}\bydef \hat s_{u,j}^{\mathrm{NGMC}}\sign\!\big(\nu_3(\{\sigma_r\})\,\nu_3(\{\sigma_j\})\big)$;
the inter-channel sign coupling then yields $\hat s_{v,j}^{\mathrm{NGMC}}\ac[\bm s_{v,*}^{\mathrm R}]_j$.
\end{proof}

\section{Global Sign Detection}\label{app:pf-global_sign}

The spectral initializer \eqref{eq: spectral init} is correlated with the ground truth only up to an \emph{unknown global sign} (cf.\cite{montanari2021estimation,mondelli2021pca}). Under the auxiliary randomization in \eqref{eq:scaled-outlier-matrix-rand}, this sign is Rademacher distributed (cf.\cite[Remark~3.6]{feng2022unifying}). The purpose of this appendix is to estimate this global sign from the observed initializer (when it is identifiable) and to use the resulting estimate to select the signs $(s_1,s_2)$ in the signed DMMSE family \eqref{eq:signed-family} used by \eqref{eq: spec-OAMP}.

Concretely, the normalized spectral initializers in \eqref{eq: spectral init} admit the scalar Gaussian-channel limits
\begin{align}
\tilde{\bm u}_1 &\xrightarrow[]{W}\tilde{\mathsf U}_1(s_{u,*}^{\mathrm G})
\stackrel{d}{=} s_{u,*}^{\mathrm G}\sqrt{w_{1,1}}\,\mathsf U_* + \sqrt{1-w_{1,1}}\,\mathsf Z_u, \quad w_{1,1} = \sum_{\lambda_i \in \mathcal{K}^*} \nu_1(\{\lambda_i\}),
\label{eq:scalar-channel-u-init-app}\\
\tilde{\bm v}_1 &\xrightarrow[]{W}\tilde{\mathsf V}_1(s_{v,*}^{\mathrm G})
\stackrel{d}{=} s_{v,*}^{\mathrm G}\sqrt{w_{2,1}}\,\mathsf V_* + \sqrt{1-w_{2,1}}\,\mathsf Z_v,\quad w_{2,1} = \sum_{\lambda_i \in \mathcal{K}^*} \nu_2(\{\lambda_i\}).
\label{eq:scalar-channel-v-init-app}
\end{align}
where $s_{u,*}^{\mathrm G},s_{v,*}^{\mathrm G}\in\{\pm1\}$ are the realized global signs and all other variables are defined as in \eqref{sec: spec OAMP}. 

This representation determines when ${s}_{u,*}^{\mathrm{G}}$ is statistically
identifiable from $\tilde{\bm u}_1$. If $\mathsf U_*\stackrel{d}{=}-\mathsf U_*$, then the marginal law of $\tilde{\mathsf U}_1$ is invariant under
${s}_{u,*}^{\mathrm{G}}\mapsto -{s}_{u,*}^{\mathrm{G}}$, and the global sign
cannot be recovered from $\tilde{\bm u}_1$ alone. Conversely, under asymmetric
priors the two induced laws are distinct, and ${s}_{u,*}^{\mathrm{G}}$ can be
estimated consistently from the empirical distribution of $\tilde{\bm u}_1$.

Under the asymmetric regime, two constructions of consistent estimators of the true global signs are recorded below (GSMLE and odd-moment contrast), which manifest
global-sign counterparts of the relative-sign procedures in \propref{prop:NGMC-est}; their consistency
proofs repeat the similar arguments and are omitted.

\subsection{Global-Sign Maximum Likelihood Estimator (GSMLE) Scheme}
\label{app:pf-global_sign_gsmle}
We describe a two-hypothesis likelihood test induced by the scalar-channel limits
\eqref{eq:scalar-channel-u-init-app}--\eqref{eq:scalar-channel-v-init-app}.
For $s\in\{\pm1\}$, let $p_s^u$ denote the density of
\[
s\sqrt{w_{1,1}}\,\mathsf U_*+\sqrt{1-w_{1,1}}\,\mathsf Z_u,
\qquad \mathsf Z_u\sim\mathcal N(0,1),\ \mathsf Z_u\indep \mathsf U_*,
\]
and define $p_s^v$ analogously from \eqref{eq:scalar-channel-v-init-app}.
Given $\tilde{\bm u}_1\in\R^M$ and $\tilde{\bm v}_1\in\R^N$, set
\begin{equation}\label{eq:gsmle-LL-u}
\mathcal L_{u,M}(s)\;\bydef\;\frac{1}{M}\sum_{i=1}^M \log p_s^u\big([\tilde{\bm u}_1]_i\big),
\qquad
\mathcal L_{v,N}(s)\;\bydef\;\frac{1}{N}\sum_{i=1}^N \log p_s^v\big([\tilde{\bm v}_1]_i\big),
\end{equation}
and define the GSMLEs
\begin{equation}\label{eq:gsmle-def}
\hat s_u^{\mathrm{GSMLE}}\;\in\;\argmax_{s\in\{\pm1\}} \mathcal L_{u,M}(s),
\qquad
\hat s_v^{\mathrm{GSMLE}}\;\in\;\argmax_{s\in\{\pm1\}} \mathcal L_{v,N}(s).
\end{equation}
As $M,N\to\infty$, the following regimes hold:
\begin{enumerate}[label=\textup{(\roman*)},leftmargin=*]
\item \emph{Both priors asymmetric.}
\[
\hat s_u^{\mathrm{GSMLE}}\ac s_{u,*}^{\mathrm G},
\qquad
\hat s_v^{\mathrm{GSMLE}}\ac s_{v,*}^{\mathrm G}.
\]
\item \emph{Both priors symmetric.}
Then $p_{+1}^u=p_{-1}^u$ and $p_{+1}^v=p_{-1}^v$. Hence neither channel-wise global sign is
identifiable from its initializer. However, the relative global sign is identifiable: for any
baseline outlier $r\in\mathcal I_M$ with limit $\lambda_r=\sigma_r^2$,
\lemref{lem:inter_channel_sign_coupling} implies
\begin{equation}\label{eq:symmetric-prior-relative-global-sign}
s_{u,*}^{\mathrm G}\,s_{v,*}^{\mathrm G}\;\ac\;\sign\!\big(\nu_3(\{\sigma_r\})\big),
\end{equation}
where $\nu_3$ is the cross spectral measure in \lemref{lem:spectral_measures_properties}.
\item \emph{Exactly one prior symmetric.} Suppose $\mathsf U_*$ is asymmetric, then
\[
\hat s_u^{\mathrm{GSMLE}}\ac s_{u,*}^{\mathrm G},
\qquad
\hat s_v^{\mathrm{GSMLE}}\;\bydef\;\hat s_u^{\mathrm{GSMLE}}\cdot \sign\!\big(\nu_3(\{\sigma_r\})\big)
\;\ac\; s_{v,*}^{\mathrm G}.
\]
The converse case is analogous.
\end{enumerate}

\subsection{Global Sign Odd-moment Contrast Scheme (GOMC) Scheme} \label{app:pf-global_sign_GSOC}

In analogy with the NGMC rule for relative-sign alignment in \propref{prop:NGMC-est}, we record a simpler
method-of-moments estimator for the global sign in the $\bm u$--channel.
\begin{assumption}[Odd-moment asymmetry]\label{assump:global-odd-moment}
There exists an odd integer $j\ge1$ such that $\E[\mathsf U_*^{\,j}]\neq 0$.
\end{assumption}
Assume the setting of \propref{prop:optimal_data_driven_estimators} and
\assumpref{assump:global-odd-moment}, and let $j$ be the smallest odd integer such that
$\E[\mathsf U_*^{\,j}]\neq 0$. Define
\begin{equation}\label{eq:odd-moment-estimator}
\hat s_u^{\mathrm{GOMC}}
\;\bydef\;
\sign\!\Big(\frac{1}{M}\sum_{i=1}^M [\tilde{\bm u}_1]_i^{\,j}\Big)\,
\sign\!\big(\E[\mathsf U_*^{\,j}]\big).
\end{equation}
Under the scalar-channel convergence \eqref{eq:scalar-channel-u-init-app}, following similar steps in \eqref{eq:ngmc-stat-moment}-\eqref{eq:moment-sum-NGMC}, it can be shown that
\[
\frac{1}{M}\sum_{i=1}^M [\tilde{\bm u}_1]_i^{\,j}
\;\ac\;
s_{u,*}^{\mathrm G}\,w_{1,1}^{j/2}\,\E[\mathsf U_*^{\,j}],
\]
and hence $\hat s_u^{\mathrm{GOMC}}\ac s_{u,*}^{\mathrm G}$ as $M\to\infty$.

For the $\bm v$--channel, fix the same baseline outlier index $r\in\mathcal I_M$ as in
\eqref{eq:symmetric-prior-relative-global-sign}, and set
\begin{equation}\label{eq:pf-GOMC-v-def}
\hat s_v^{\mathrm{GOMC}}
\;\bydef\;
\hat s_u^{\mathrm{GOMC}}\cdot \sign\!\big(\nu_3(\{\sigma_r\})\big).
\end{equation}
By the inter-channel coupling in \eqref{eq:symmetric-prior-relative-global-sign},
$\hat s_v^{\mathrm{GOMC}}\ac s_{v,*}^{\mathrm G}$.

\subsection{Selection of the Signed DMMSE Estimators}
\label{app:pf-global_sign_dmmse}

We select the denoiser signs $(s_1,s_2)$ in \eqref{eq: spec-OAMP} by combining a global-sign
estimate from \eqref{eq:scalar-channel-u-init-app}--\eqref{eq:scalar-channel-v-init-app} with the
sign behavior of the signed DMMSE family \eqref{eq:signed-family}; see
Fact~\ref{fact:signed-dmmse-sign}.

\begin{itemize}[leftmargin=*]
\item \emph{At least one asymmetric prior.}
Without loss of generality, assume $\mathsf U_*$ is asymmetric.
Let $\hat s_u(\tilde{\bm u}_1)$ be any consistent estimator of $s_{u,*}^{\mathrm G}$, e.g.,
$\hat s_u^{\mathrm{GSMLE}}$ in \eqref{eq:gsmle-def} or $\hat s_u^{\mathrm{GOMC}}$ in
\eqref{eq:odd-moment-estimator}.
If $\mathsf V_*$ is asymmetric, define $\hat s_v(\tilde{\bm v}_1)$ analogously; otherwise, set
\begin{equation}\label{eq:global-coupling-choice}
\hat s_v \;\bydef\; \hat s_u\cdot \sign\!\big(\nu_3(\{\sigma_r\})\big),
\end{equation}
with any fixed baseline outlier $r\in\mathcal I_M$ (cf.\ \eqref{eq:symmetric-prior-relative-global-sign}).
We then choose
\begin{equation}\label{eq:sign-choice-asym}
s_1\;\bydef\;\hat s_u(\tilde{\bm u}_1),\qquad
s_2\;\bydef\;\hat s_v(\tilde{\bm v}_1).
\end{equation}
With this choice, the signed DMMSE update is aligned with the realized global signs, and hence
preserves a positive correlation with each signal; see Fact~\ref{fact:signed-dmmse-sign}. 

\item \emph{Both priors symmetric.}
Neither channel-wise global sign is identifiable from its initializer, but the relative global sign is
fixed by \eqref{eq:symmetric-prior-relative-global-sign}. We adopt the convention
\begin{equation}\label{eq:sign-choice-sym}
s_1 \bydef +1,
\qquad
s_2 \bydef \sign\!\big(\nu_3(\{\sigma_r\})\big),
\end{equation} (cf.\ \eqref{eq:symmetric-prior-relative-global-sign})
where $r\in\mathcal I_M$ is the reference outlier index (cf.\ \eqref{eq:symmetric-prior-relative-global-sign}). This convention is consistent with the asymptotic inter-channel sign coupling:
\begin{align}
\lim_{M\to \infty}\sign( \langle \tilde{\bm u}_1, \bm{u}_*\rangle \langle \tilde{\bm v}_1, \bm{v}_*\rangle) \ase \lim_{M\to \infty}\sign( \langle {\bm u}_r^\sharp, \bm{u}_*\rangle \langle {\bm v}_r^\sharp, \bm{v}_*\rangle) \stackrel{\text{Lemma}\ref{lem:inter_channel_sign_coupling}}{=} \sign\!\big(\nu_3(\{\sigma_r\})\big)
\end{align}
By Fact~\ref{fact:signed-dmmse-sign}, the signed DMMSE
is odd under symmetric priors, so any mismatch with the realized global signs induces only a
\emph{coherent} global flip of the iterates across both channels. Consequently, the SE recursion for
the squared overlaps (and hence the cosine similarities) is invariant under this flip.
\end{itemize}

\section{Proof of Theorem~\ref{thm: spec-SE}}
\label{app:pf-thm-spec-se}

\paragraph{Convention and asymptotic equivalence.}
Throughout this proof we work under Assumption~\ref{assump:main}. We use the asymptotic
vector equivalence notation introduced in Definition~\ref{def:eq}: for sequences
$\bm x\in\R^d$ and $\bm y\in\R^d$,
\[
\bm x\,\explain{$d\to\infty$}{\simeq}\,\bm y
\qquad\Longleftrightarrow\qquad
\frac{\|\bm x-\bm y\|^2}{d}\xrightarrow{\mathrm{a.s.}}0\quad\text{as }d\to\infty.
\]
When applying this notation below, $d$ is the ambient dimension of the corresponding vectors
(e.g., $d=M$ for $\R^M$-valued iterates and $d=N$ for $\R^N$-valued iterates).

\paragraph{Proof strategy.}
We recall the spiked model
\begin{equation}
\label{eq:specSE:model}
\bm Y=\frac{\theta}{\sqrt{MN}}\bm u_\ast\bm v_\ast^\UT+\bm W\in\R^{M\times N},
\end{equation}
with $\|\bm u_\ast\|^2/M\to 1$ and $\|\bm v_\ast\|^2/N\to 1$ almost surely.
The proof has two components.
First, we show that the (optimally) combined spectral initializer is asymptotically equivalent to a
single \emph{$\bm W$-driven} OAMP step in the sense of Definition~\ref{def: Noise OAMP}.
Second, we invoke the spiked-to-noise reduction developed in
Appendix~\ref{Sec:OAMP_SE_proof}---in particular
the construction of the $\bm W$-driven auxiliary recursion around \eqref{eq:aux-OAMP} and the induction establishing \eqref{eq:OAMP-to-auxOAMP-induction}---to transfer the state evolution
from the auxiliary recursion to the original $\bm Y$-driven recursion.

\paragraph{Spectral initializer as a $\bm W$-driven OAMP step.}

Let $(\sigma_{k,M},\bm u_k,\bm v_k)_{k\in\mathcal I_M}$ denote the outlier singular
triplets of $\bm Y$ (with $\|\bm u_k\|=\|\bm v_k\|=1$), where $\mathcal I_M$ is the (finite)
outlier index set and $\sigma_{k,M}^2\to\lambda_k\in\mathcal K_\ast$.
We adopt the randomized orientation convention
\begin{equation}
\label{eq:spec:init-sharp}
\bm u_k^\sharp \bydef \sqrt{M}\,\xi_k\,\bm u_k,
\qquad
\bm v_k^\sharp \bydef \sqrt{N}\,\xi_k\,\bm v_k,
\qquad k\in\mathcal I_M,
\end{equation}
where $(\xi_k)_{k\in\mathcal I_M}$ are i.i.d.\ Rademacher, independent of all other randomness.
Let $(s_k^u)_{k\in\mathcal I_M}$ and $(s_k^v)_{k\in\mathcal I_M}$ be consistent relative-sign estimators.
Define the \emph{normalized} combined spectral initializer by
\begin{align}
\label{eq:spec:init-u}
\tilde{\bm u}_1
&\bydef
\Big(\sum_{k\in\mathcal I_M}\nu_1(\{\lambda_k\})\Big)^{-1/2}
\sum_{k\in\mathcal I_M}s_k^u\sqrt{\nu_1(\{\lambda_k\})}\,\bm u_k^\sharp,
\\
\label{eq:spec:init-v}
\tilde{\bm v}_1
&\bydef
\Big(\sum_{k\in\mathcal I_M}\nu_2(\{\lambda_k\})\Big)^{-1/2}
\sum_{k\in\mathcal I_M}s_k^v\sqrt{\nu_2(\{\lambda_k\})}\,\bm v_k^\sharp.
\end{align}

\begin{lemma}[Optimal spectral estimate as a $\bm W$-driven OAMP step]
\label{lem:spec-init-as-OAMP}
Assume Assumption~\ref{assump:main} and that $\theta$ is super-critical so that
$\mathcal K_\ast=\{\lambda\in\R\setminus\supp(\mu):\Gamma(\lambda)=0\}$ is finite and nonempty.
Then there exist deterministic (dimension-independent) functions
$\Psi_1,\widetilde\Psi_1,\Phi_1,\widetilde\Phi_1:\R\to\R$, whose restrictions to
the (compact) support $\supp(\mu)$ are continuous, such that
\begin{align}
\label{eq:spec-init-as-OAMP-u}
\tilde{\bm u}_1
&\explain{$N\to\infty$}{\simeq}
\Psi_1(\bm W\bm W^\UT)\,\bm u_\ast
+\widetilde\Psi_1(\bm W\bm W^\UT)\,\bm W\bm v_\ast,
\\
\label{eq:spec-init-as-OAMP-v}
\tilde{\bm v}_1
&\explain{$N\to\infty$}{\simeq}
\Phi_1(\bm W^\UT\bm W)\,\bm v_\ast
+\widetilde\Phi_1(\bm W^\UT\bm W)\,\bm W^\UT\bm u_\ast.
\end{align}
In particular, $(\tilde{\bm u}_1,\tilde{\bm v}_1)$ is (up to $\explain{$N\to\infty$}{\simeq}$) a valid
(one-shot) $\bm W$-driven step in the sense of Definition~\ref{def: Noise OAMP}.
\end{lemma}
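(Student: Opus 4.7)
The plan is to rewrite each scaled outlier component $s_k^u\sqrt{\nu_1(\{\lambda_k\})}\,\bm u_k^\sharp$ as the true signed projection $\langle\bm u_\ast,\bm u_k\rangle\,\bm u_k$ in the sense of Definition~\ref{def:eq}, then substitute the resolvent representation of this projection, and finally recognize the deterministic limit as a spectral functional of $\bm W\bm W^\UT$ applied to $\bm u_\ast$ and $\bm W\bm v_\ast$. First I would unpack the normalization: by \propref{prop:outlier_characterization}(4), $\sqrt{\nu_1(\{\lambda_k\})}$ is the a.s.\ limit of $|\langle\bm u_\ast,\bm u_k\rangle|/\sqrt{M}$; by the consistency of the relative-sign estimators (\propref{prop:MLE_relative_sign} or \propref{prop:NGMC-est}), combined either with a global-sign resolver (Appendix~\ref{app:pf-global_sign}) or with the symmetric-prior convention of Section~\ref{sec:spec-optimal-oamp}, one has $s_k^u\,\xi_k\ase\sign\langle\bm u_\ast,\bm u_k\rangle$. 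Putting these together gives the key reduction
\[
s_k^u\sqrt{\nu_1(\{\lambda_k\})}\,\bm u_k^\sharp
\;=\;s_k^u\,\xi_k\,\sqrt{M}\sqrt{\nu_1(\{\lambda_k\})}\,\bm u_k
\;\explain{$M\to\infty$}{\simeq}\;
\langle\bm u_\ast,\bm u_k\rangle\,\bm u_k,
\]
and hence $\tilde{\bm u}_1\;\explain{$M\to\infty$}{\simeq}\;C_u\sum_{k\in\mathcal I_M}\langle\bm u_\ast,\bm u_k\rangle\,\bm u_k$ with $C_u\bydef\bigl(\sum_k\nu_1(\{\lambda_k\})\bigr)^{-1/2}$.

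Next I would invoke the singular-equation identity recalled in Remark~\ref{Rem:outlying_distribution}, namely
\[
\langle\bm u_\ast,\bm u_k\rangle\,\bm u_k
=\bigl(\lambda_{k,M}\bm I_M-\bm W\bm W^\UT\bigr)^{-1}\!\bigl(a_{k,M}\,\bm u_\ast+b_{k,M}\,\bm W\bm v_\ast\bigr),
\]
with $a_{k,M}\bydef\tfrac{\theta\sigma_{k,M}}{\sqrt{MN}}\langle\bm v_\ast,\bm v_k\rangle\langle\bm u_\ast,\bm u_k\rangle$ and $b_{k,M}\bydef\tfrac{\theta}{\sqrt{MN}}\langle\bm u_\ast,\bm u_k\rangle^{2}$. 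Applying \propref{prop:outlier_characterization}(4), the deterministic limits are $a_{k,M}\ase a_k\bydef 2\theta\sigma_k\tfrac{1+\delta}{\sqrt{\delta}}\nu_3(\{\sigma_k\})$ and $b_{k,M}\ase b_k\bydef\theta\sqrt{\delta}\,\nu_1(\{\lambda_k\})$. The simultaneous replacements $\lambda_{k,M}\to\lambda_k$ and $(a_{k,M},b_{k,M})\to(a_k,b_k)$ are justified by the standard resolvent identity together with the uniform operator-norm bound $\|(\lambda\bm I_M-\bm W\bm W^\UT)^{-1}\|_{\mathrm{op}}\le C$ for $\lambda$ in a small closed neighborhood of $\mathcal K_\ast$ (valid eventually a.s.\ by Assumption~\ref{assump:main}(d) and \propref{prop:outlier_characterization}(2)), combined with $\|\bm u_\ast\|^2/M\to 1$ and $\|\bm W\bm v_\ast\|^2/M=O(1)$ almost surely.

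Summing over the finite set $\mathcal I_M$ and invoking the spectral functional calculus for $\bm W\bm W^\UT$ yields \eqref{eq:spec-init-as-OAMP-u} with the explicit denoisers
\[
\Psi_1(\lambda)\bydef C_u\!\sum_{k\in\mathcal I_M}\!\frac{a_k}{\lambda_k-\lambda},
\qquad
\widetilde\Psi_1(\lambda)\bydef C_u\!\sum_{k\in\mathcal I_M}\!\frac{b_k}{\lambda_k-\lambda}.
\]
Because $\mathcal K_\ast\subset\R\setminus\supp(\mu)$ is finite and bounded away from $\supp(\mu)$ (\lemref{lem:Gamma-analytic} and \propref{prop:outlier_characterization}), both $\Psi_1$ and $\widetilde\Psi_1$ are rational functions with no poles in $\supp(\mu)$, hence real-analytic (in particular continuous) on $\supp(\mu)$ as required. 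The companion identity \eqref{eq:spec-init-as-OAMP-v} follows by the completely symmetric argument: replace $(\bm u_\ast,M,\nu_1)$ by $(\bm v_\ast,N,\nu_2)$ and use the analogous singular-equation for $\langle\bm v_\ast,\bm v_k\rangle\,\bm v_k$.

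The main obstacle is the coherent bookkeeping of the two layers of sign ambiguity---the auxiliary Rademacher randomization $(\xi_k)$ and the estimated signs $(s_k^u,s_k^v)$---together with the empirical-to-population substitutions $(\lambda_{k,M},\sigma_{k,M},\langle\bm u_\ast,\bm u_k\rangle^2/M,\ldots)\to(\lambda_k,\sigma_k,\nu_1(\{\lambda_k\}),\ldots)$ performed simultaneously and uniformly over the random (but eventually finite and stable) outlier index set $\mathcal I_M$, without destroying the $\|\cdot\|^{2}/d\to 0$ estimate. Once Lemma~\ref{lem:spec-init-as-OAMP} is established, Theorem~\ref{thm: spec-SE} follows by concatenating this one-shot $\bm W$-driven update with the subsequent iterates \eqref{eq: spec-OAMP} and invoking the spiked-to-noise reduction of Appendix~\ref{Sec:OAMP_SE_proof}, whose induction transfers the state evolution from the $\bm W$-driven auxiliary recursion \eqref{eq:aux-OAMP} to the original $\bm Y$-driven spectrally-initialized iteration.
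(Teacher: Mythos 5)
Your proposal is correct and follows essentially the same route as the paper's proof: the resolvent (singular-equation) representation of each outlier projection, replacement of the empirical quantities $(\lambda_{k,M},\sigma_{k,M},\text{overlaps})$ by their population limits via the resolvent identity and uniform operator-norm bounds off $\supp(\mu)$, handling of the sign ambiguity through the consistent relative-sign estimators modulo one global flip, and identification of the resulting kernels as finite combinations of $x\mapsto(\lambda_k-x)^{-1}$, continuous on $\supp(\mu)$. Your version is slightly more explicit in writing out closed forms for $\Psi_1,\widetilde\Psi_1$, but this is a presentational difference only.
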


\begin{proof}
Fix $k\in\mathcal I_M$ and define the normalized overlaps
\[
\alpha_{k,M}\bydef \frac{1}{\sqrt{M}}\langle \bm u_k,\bm u_\ast\rangle,
\qquad
\beta_{k,M}\bydef \frac{1}{\sqrt{N}}\langle \bm v_k,\bm v_\ast\rangle.
\]
By Lemma~\ref{lem:specSE:resolvent-outliers}, whenever
$\sigma_{k,M}^2\notin\spp(\bm W\bm W^\UT)$ and $\sigma_{k,M}^2\notin\spp(\bm W^\UT\bm W)$,
\begin{align}
\bm u_k
&=
\bm R_M(\sigma_{k,M}^2)\Big(
\theta\,\frac{\alpha_{k,M}}{\sqrt{N}}\,\bm W\bm v_\ast
+\theta\,\frac{\sigma_{k,M}\beta_{k,M}}{\sqrt{M}}\,\bm u_\ast
\Big),
\label{eq:spec:init-resolvent-u}
\\
\bm v_k
&=
\widetilde{\bm R}_N(\sigma_{k,M}^2)\Big(
\theta\,\frac{\beta_{k,M}}{\sqrt{M}}\,\bm W^\UT\bm u_\ast
+\theta\,\frac{\sigma_{k,M}\alpha_{k,M}}{\sqrt{N}}\,\bm v_\ast
\Big),
\label{eq:spec:init-resolvent-v}
\end{align}
where $\bm R_M(x)\bydef(x\bm I_M-\bm W\bm W^\UT)^{-1}$ and
$\widetilde{\bm R}_N(x)\bydef(x\bm I_N-\bm W^\UT\bm W)^{-1}$.

\emph{Uniform resolvent control and replacement of $\sigma_{k,M}^2$ by $\lambda_k$.}
Since $\lambda_k\in\R\setminus\supp(\mu)$ and Assumption~\ref{assump:main}(d) enforces spectral
containment of $\bm W\bm W^\UT$ and $\bm W^\UT\bm W$, there is an a.s.\ event on which, for all large $M$,
\[
\mathrm{d}\big(\sigma_{k,M}^2,\spp(\bm W\bm W^\UT)\big)\wedge
\mathrm{d}\big(\sigma_{k,M}^2,\spp(\bm W^\UT\bm W)\big)\ge c_k>0,
\]
and hence $\|\bm R_M(\sigma_{k,M}^2)\|_{\op}\vee\|\widetilde{\bm R}_N(\sigma_{k,M}^2)\|_{\op}\le c_k^{-1}$.
Moreover, by the resolvent identity (cf.~\eqref{Eqn:resolvent_identity}),
\[
\big\|\bm R_M(\sigma_{k,M}^2)-\bm R_M(\lambda_k)\big\|_{\op}
\le \frac{|\sigma_{k,M}^2-\lambda_k|}{c_k^2}\xrightarrow{\mathrm{a.s.}}0,
\]
and similarly for $\widetilde{\bm R}_N(\sigma_{k,M}^2)$.
Since $\|\bm u_\ast\|^2/M$ and $\|\bm v_\ast\|^2/N$ stay bounded and $\|\bm W\|_{\op}=O(1)$,
replacing $\bm R_M(\sigma_{k,M}^2)$ by $\bm R_M(\lambda_k)$ (and similarly for $\tilde{\bm{R}}_N$)
incurs negligible normalized MSE errors.

\emph{Replacement of scalar coefficients by deterministic limits.}
Proposition~\ref{prop:outlier_characterization} yields
$\alpha_{k,M}^2\to\nu_1(\{\lambda_k\})$ and $\beta_{k,M}^2\to\nu_2(\{\lambda_k\})$ almost surely, and
$\sigma_{k,M}\to\sqrt{\lambda_k}$.
Because $|\mathcal I_M|<\infty$, we may replace the finitely many scalar coefficients
$\alpha_{k,M},\beta_{k,M},\sigma_{k,M}$ in \eqref{eq:spec:init-resolvent-u}--\eqref{eq:spec:init-resolvent-v}
by their almost sure\ limits at the cost of an overall $o_{\mathrm{a.s.}}(1)$ normalized MSE error in the resulting sums.

\emph{Handling the (relative) sign convention.}
By consistency of $(s_k^u)_{k\in\mathcal I_M}$ and $(s_k^v)_{k\in\mathcal I_M}$ and finiteness of
$\mathcal I_M$, there exists an almost sure\ event on which, for all large $M$, the effective orientations in
\eqref{eq:spec:init-u}--\eqref{eq:spec:init-v} agree across all outliers up to a single global sign.
Equivalently, the only remaining ambiguity is the global orientation of
$(\tilde{\bm u}_1,\tilde{\bm v}_1)$; this ambiguity is already accounted for in the signed denoiser
convention and the definition of the state evolution variables used in Theorem~\ref{thm: spec-SE} (see discussions in Section \ref{sec:spec-optimal-oamp}).

Putting these points together, substituting \eqref{eq:spec:init-resolvent-u}--\eqref{eq:spec:init-resolvent-v}
into \eqref{eq:spec:init-u}--\eqref{eq:spec:init-v} shows that $\tilde{\bm u}_1$ and $\tilde{\bm v}_1$ admit representations of the form
\eqref{eq:spec-init-as-OAMP-u}--\eqref{eq:spec-init-as-OAMP-v}, with $\Psi_1(x),\,\widetilde\Psi_1(x),\,\Phi_1(x),\,\widetilde\Phi_1(x)$ related to finite linear combinations of the resolvent kernels $ x\mapsto\frac{1}{\lambda_k-x}$
and hence continuous on $\supp(\mu)$ because $\text{dist}(\lambda_k,\spp(\mu))>0$ for each outlier $\lambda_k$.
This proves the claim.
\end{proof}

\paragraph{Reduction to the $\bm W$-driven auxiliary recursion.}

Let $(\tilde{\bm u}_t,\tilde{\bm v}_t)_{t\ge 1}$ denote the spectrally-initialized $\bm Y$-driven OAMP
iterates defined in Theorem~\ref{thm: spec-SE}, with initialization
$(\tilde{\bm u}_1,\tilde{\bm v}_1)$ given by \eqref{eq:spec:init-u}--\eqref{eq:spec:init-v}.
Let $(\hat{\bm u}_t,\hat{\bm v}_t)_{t\ge 1}$ denote the $\bm W$-driven {auxiliary} recursion
constructed as \eqref{eq:aux-OAMP} (in Appendix~\ref{Sec:OAMP_SE_proof}), using the same iterate denoisers as the original recursion, and initialized by
\[
(\hat{\bm u}_1,\hat{\bm v}_1)\bydef(\tilde{\bm u}_1,\tilde{\bm v}_1).
\]

\begin{lemma}[Spiked-to-noise reduction for spectrally-initialized OAMP]
\label{lem:specSE:reduction}
For every fixed $T<\infty$,
\begin{equation}
\label{eq:specSE:reduction}
(\tilde{\bm u}_t,\tilde{\bm v}_t)\,\explain{$d\to\infty$}{\simeq}\,(\hat{\bm u}_t,\hat{\bm v}_t),
\qquad \forall\,t\le T.
\end{equation}
\end{lemma}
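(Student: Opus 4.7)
The plan is to mimic the inductive argument used to establish \eqref{eq:OAMP-to-auxOAMP-induction} in the proof of \thref{Thm: State Evolution}, with the spectral initialization taking the place of the generic first iterate. The base case $t=1$ is immediate because $(\hat{\bm u}_1,\hat{\bm v}_1)\bydef(\tilde{\bm u}_1,\tilde{\bm v}_1)$ by construction. For the inductive step, suppose that for all $s<t$ we have $\tilde{\bm u}_s\explain{$d\to\infty$}{\simeq}\hat{\bm u}_s$ and $\tilde{\bm v}_s\explain{$d\to\infty$}{\simeq}\hat{\bm v}_s$. The Lipschitz continuity of the iterate denoisers $f_t,g_t$ together with this inductive hypothesis propagates the equivalence to
\[
f_t(\tilde{\bm u}_{<t};\bm a)\explain{$M\to\infty$}{\simeq} f_t(\hat{\bm u}_{<t};\bm a),\qquad
g_t(\tilde{\bm v}_{<t};\bm b)\explain{$N\to\infty$}{\simeq} g_t(\hat{\bm v}_{<t};\bm b),
\]
and, using the operator-norm bound $\|\bm Y\bm Y^\UT\|_{\op}=O(1)$ in \eqref{Eqn:YY_operator}, we may substitute inside the spectral denoisers. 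Decomposing $f_t(\hat{\bm u}_{<t};\bm a)=\alpha_t\bm u_\ast+\bm f_t^\perp$ and analogously for $g_t$ in the spirit of \eqref{eq: ort.decomp.ap}, the $\bm Y$-driven update for $\tilde{\bm u}_t$ splits into a signal contribution and an orthogonal residual contribution.

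Next, I would invoke \lemref{lem:aux1} twice. Claim~(1) converts the action of the $\bm Y$-driven matrix denoisers on the signal directions $\bm u_\ast$ and $\bm Y\bm v_\ast$ into the $\bm W$-driven action with the composite denoisers $\Psi_t,\widetilde\Psi_t,\Phi_t,\widetilde\Phi_t$ appearing in \eqref{eq:aux-OAMP}. Claim~(2) then replaces $F_t(\bm Y\bm Y^\UT)\bm f_t^\perp$ and $\widetilde F_t(\bm Y\bm Y^\UT)\bm Y\bm g_t^\perp$ by $F_t(\bm W\bm W^\UT)\bm f_t^\perp$ and $\widetilde F_t(\bm W\bm W^\UT)\bm W\bm g_t^\perp$, provided $\bm f_t^\perp,\bm g_t^\perp$ satisfy the asymptotic orthogonality conditions \eqref{Eqn:lemma6_app_orthgonality}. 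For the generic recursion in \appref{Sec:OAMP_SE_proof}, these conditions were delivered by \lemref{lem:aux-OAMP-dyn}; the task here is therefore to verify the analogous properties for the spectrally initialized auxiliary recursion $(\hat{\bm u}_t,\hat{\bm v}_t)_{t\ge1}$. Combining these two claims with the polynomial approximation in \lemref{lem: poly approx} and the a.s.\ bound $\|\bm Y\bm Y^\UT\|_{\op}=O(1)$ then identifies the $\bm Y$-driven update at iteration $t$ with the $\bm W$-driven auxiliary update, completing the induction.

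The key ingredient that enables this reduction is \lemref{lem:spec-init-as-OAMP}: it exhibits $(\tilde{\bm u}_1,\tilde{\bm v}_1)$ as a one-shot $\bm W$-driven OAMP step in the sense of \defref{def: Noise OAMP}, with deterministic matrix denoisers $\Psi_1,\widetilde\Psi_1,\Phi_1,\widetilde\Phi_1$ continuous on $\supp(\mu)$. This allows us to regard the entire spectrally initialized auxiliary recursion as a single $\bm W$-driven OAMP of the form Definition~\ref{def: Noise OAMP}, and hence to apply \thref{Th:OAMP_general_SE_app} together with \corref{Cor:app_degeneracy} to obtain the joint $W_2$-limit of $(\hat{\bm u}_t,\hat{\bm v}_t;\bm a,\bm b)$. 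The state-evolution-level orthogonality between the residual iterates and the signal directions then yields the conditions \eqref{Eqn:lemma6_app_orthgonality} required by \lemref{lem:aux1}(2); the trace-free constraints \eqref{eq:trace free} carry over to the composite denoisers, since they only depend on the bulk $\mu,\widetilde\mu$, which are unaffected by the rank-one spike.

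The main obstacle is handling the $\bm W$-dependence introduced by the spectral initialization. Unlike the independent initialization of \thref{Thm: State Evolution}, $(\tilde{\bm u}_1,\tilde{\bm v}_1)$ is explicitly built from outlier singular vectors of $\bm Y$ and is therefore entangled with $\bm W$; naively applying conditioning arguments analogous to \eqref{eq:conditioning-haar} would break the independence structure required by the Haar-mixing step in \sref{Sec:OAMP_general_SE_proof}. The resolvent reformulation in \lemref{lem:spec-init-as-OAMP} resolves this by absorbing the $\bm W$-dependence into deterministic matrix denoisers acting on $\bm u_\ast$ and $\bm W\bm v_\ast$, so that the initialization fits inside the $\bm W$-driven OAMP template. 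The only technical subtlety is that $\Psi_1,\widetilde\Psi_1,\Phi_1,\widetilde\Phi_1$ are resolvent-type (rational in $\lambda$), not polynomial; however, since $\mathcal K_\ast$ is separated from $\supp(\mu)$ by a positive distance under the supercritical assumption and Assumption~\ref{assump:main}(d), these denoisers are continuous on the compact set $\supp(\mu)$ and can be uniformly approximated by polynomials à la \lemref{lem: poly approx}, after which the state evolution of the auxiliary recursion follows verbatim from \thref{Th:OAMP_general_SE_app}.
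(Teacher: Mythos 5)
Your proposal is correct and follows essentially the same route as the paper's proof: base case from the definition of the auxiliary recursion plus Lemma~\ref{lem:spec-init-as-OAMP}, then an iteration-by-iteration induction mirroring Appendix~\ref{Sec:OAMP_SE_proof}, using the orthogonal decomposition, the transfer Lemma~\ref{lem:aux1}, the propagated orthogonality relations, the operator-norm bound \eqref{Eqn:YY_operator}, and the regularity/polynomial approximation of the denoisers. The subtleties you flag (the $\bm W$-dependence of the spectral initializer absorbed via the resolvent reformulation, and the rational-but-continuous matrix denoisers handled by Weierstrass approximation) are exactly the points the paper relies on.
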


\begin{proof}
The proof is an iteration-by-iteration comparison, and is identical in structure to the induction carried out
in Appendix~\ref{Sec:OAMP_SE_proof} leading to \eqref{eq:OAMP-to-auxOAMP-induction}. For completeness,
we isolate the only inputs used at each inductive step.

First, decompose the iterate denoiser output into its signal-aligned and orthogonal parts, as in
\eqref{eq:original-OAMP-decomp} of Appendix~\ref{Sec:OAMP_SE_proof}; the induction maintains the
orthogonality relations required to apply Lemma~\ref{lem:aux1} there (the ``transfer lemma'' controlling
the difference between applying polynomial spectral denoisers to $\bm Y\bm Y^\UT$ versus $\bm W\bm W^\UT$,
and similarly on the $N$-side). Concretely, Lemma~\ref{lem:aux1} is invoked exactly as in the derivation
around \eqref{eq:aux-OAMP}: it converts each matrix-denoiser action on $\bm Y\bm Y^\UT$ (resp.\ $\bm Y^\UT\bm Y$)
applied to an orthogonal residual into the corresponding $\bm W$-counterpart, plus explicit
$\bm u_\ast$ or $\bm v_\ast$ aligned forcing terms; the remainder is $o_{\mathrm{a.s.}}(1)$ in normalized MSE.
The error control uses only (i) the uniform operator-norm bounds on the involved matrices
(cf.\ \eqref{Eqn:YY_operator} in Appendix~\ref{Sec:OAMP_SE_proof}), (ii) the regularity of the matrix and iterate denoisers and (iii) the propagated orthogonality
relations (cf.\ \eqref{Eqn:lemma6_app_orthgonality2} there).

The base case $t=1$ is valid by construction (since $\hat{\bm u}_1=\tilde{\bm u}_1$ and
$\hat{\bm v}_1=\tilde{\bm v}_1$), and Lemma~\ref{lem:spec-init-as-OAMP} ensures that this common
initialization already has the $\bm W$-driven OAMP form (with initialization independent of $\bm{W}$) required by the transfer lemma and the auxiliary
construction. The inductive step then proceeds verbatim as in
Appendix~\ref{Sec:OAMP_SE_proof}, yielding \eqref{eq:specSE:reduction} for any fixed $T$.
\end{proof}

\paragraph{State evolution and conclusion.}

By construction, the auxiliary recursion \eqref{eq:aux-OAMP} is a $\bm W$-driven OAMP algorithm
in the sense of Definition~\ref{def: Noise OAMP}, with initialization satisfying the admissibility
requirements thanks to Lemma~\ref{lem:spec-init-as-OAMP} (and with the usual global-sign convention
handled exactly as in Theorem~\ref{thm: spec-SE}). Therefore, the general OAMP state evolution
theorem (Theorem~\ref{Th:OAMP_general_SE_app} and its specialization used in
Appendix~\ref{Sec:OAMP_SE_proof}) applies to $(\hat{\bm u}_t,\hat{\bm v}_t)$ and yields the state
evolution limits claimed in Theorem~\ref{thm: spec-SE} for the auxiliary iterates.

Finally, Lemma~\ref{lem:specSE:reduction} transfers these limits from the auxiliary recursion back to the
original $\bm Y$-driven recursion.
This completes the proof of Theorem~\ref{thm: spec-SE}.
\qed

\section{Some Miscellaneous Results}\label{sec:misc}

\begin{fact}[Non-tangential limit as Point Mass]\label{fact: Non-tangential limit}
The following result, a variant of \cite[Proposition~8]{mingo2017free}, shows that the point mass of a finite measure $\nu$ on $\mathbb{R}$ can be recovered from its Stieltjes transform $\mathcal{S}(z)$. For any $a \in \mathbb{R}$:
\[
\lim_{\substack{z \to a \\ \sphericalangle}} (z - a)\mathcal{S}(z) = \nu(\{a\}),
\]
where the limit $z \to a$ is non-tangential, meaning it is restricted to any cone of the form $\{ x + iy \in \mathbb{C}^+ \mid y > 0 \text{ and } |x - a| < \gamma y \}$ for any $\gamma > 0$.
\end{fact}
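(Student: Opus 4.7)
The plan is to expand the Stieltjes transform and isolate the contribution of the atom at $a$ via dominated convergence. Write
\[
(z-a)\mathcal{S}(z) \;=\; \int_{\mathbb{R}} \frac{z-a}{z-\lambda}\,d\nu(\lambda),
\]
and decompose $\mathbb{R} = \{a\} \cup (\mathbb{R}\setminus\{a\})$. The contribution from $\{a\}$ equals $\nu(\{a\})\cdot 1 = \nu(\{a\})$ for every $z \neq a$, so the task reduces to showing that the integral over $\mathbb{R}\setminus\{a\}$ vanishes as $z \to a$ non-tangentially. If $\nu$ is a signed measure, I would first reduce to the positive case by decomposing $\nu = \nu^+ - \nu^-$ and treating each piece separately.

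Next, I would establish the pointwise limit. For any fixed $\lambda \in \mathbb{R}\setminus\{a\}$, as $z \to a$ we have $z-a \to 0$ while $z-\lambda \to a-\lambda \neq 0$, so the integrand $\frac{z-a}{z-\lambda}$ tends to $0$.

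The key step is the uniform bound that makes dominated convergence applicable. Write $z = x + iy$ with $y > 0$ and $|x-a| < \gamma y$ (the non-tangential cone). Then
\[
|z-a|^2 \;=\; (x-a)^2 + y^2 \;\le\; (1+\gamma^2)\,y^2,
\qquad
|z-\lambda|^2 \;=\; (x-\lambda)^2 + y^2 \;\ge\; y^2,
\]
so that
\[
\left|\frac{z-a}{z-\lambda}\right| \;\le\; \sqrt{1+\gamma^2}
\qquad \text{for all } z \text{ in the cone and all } \lambda \in \mathbb{R}.
\]
Since $\nu$ is finite, the constant function $\sqrt{1+\gamma^2}$ is integrable on $\mathbb{R}\setminus\{a\}$, so the dominated convergence theorem gives
\[
\lim_{\substack{z \to a \\ \sphericalangle}} \int_{\mathbb{R}\setminus\{a\}} \frac{z-a}{z-\lambda}\,d\nu(\lambda) \;=\; 0,
\]
and combining with the contribution from $\{a\}$ yields $\nu(\{a\})$.

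There is no serious obstacle in this argument; the only delicate point is verifying the cone bound on the integrand, which is exactly what the non-tangential restriction is designed to provide (tangential approach, in contrast, can send $|z-a|/|z-\lambda|$ to infinity along sequences). Once that bound is in hand, the rest follows from standard measure-theoretic convergence.
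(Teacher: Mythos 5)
Your proof is correct. The paper does not prove this fact at all; it is stated as a known result with a citation to \cite[Proposition~8]{mingo2017free}, so there is no in-paper argument to compare against. Your decomposition into the atom at $a$ plus the rest, the pointwise limit on $\mathbb{R}\setminus\{a\}$, and the uniform cone bound $\bigl|\tfrac{z-a}{z-\lambda}\bigr|\le\sqrt{1+\gamma^2}$ feeding dominated convergence constitute the standard, complete proof of this classical statement.
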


\begin{fact}[Location of Empirical Outliers]\label{fact:loc of emp out}
Any positive empirical eigenvalues of $\bm{YY}^\UT$ which are not eigenvalues of $\bm{WW}^\UT$ are solution to the following equation for $z \in \R$
\begin{align}\label{eq: empirical master equation}
\Gamma_M(z) = (1 -  \frac{\theta}{\sqrt{MN}}\bm{v_*}^\UT \bm{W}^\UT \bm{S}_1(z)\bm{u_*})(1 -  \frac{\theta}{\sqrt{MN}} \bm{u_*}^\UT \bm{W} \bm{S}_2(z)\bm{v_*}) - z( \frac{\theta}{M}  \bm{u_*}^\UT \bm{S}_1(z)\bm{u_*})( \frac{\theta}{N} \bm{v_*}^\UT \bm{S}_2(z)\bm{v_*}) =0,
\end{align}
where $\bm{S}_1(z)\bydef (z\bm{I}_M -\bm{WW}^\UT)^{-1}, \bm{S}_2(z)\bydef (z\bm{I}_N -\bm{W^\UT W})^{-1}$.
\end{fact}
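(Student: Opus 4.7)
The plan is to reduce the eigenvalue condition $\lambda\in\spp(\bm{YY}^\UT)\setminus\spp(\bm{WW}^\UT)$ to a $2\times 2$ determinantal equation via the matrix determinant lemma, and then verify that this equation coincides with $\Gamma_M(\lambda)=0$.

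First, I would expand $\bm{YY}^\UT$ using the spike structure in \eqref{eq:rectangular spiked model} to obtain the symmetric rank-two decomposition
\[
\bm{YY}^\UT-\bm{WW}^\UT=\bm{A}\bm{D}\bm{A}^\UT,\qquad
\bm{A}\bydef[\bm{u}_*,\bm{W}\bm{v}_*]\in\R^{M\times 2},\qquad
\bm{D}\bydef\begin{pmatrix}\tfrac{\theta^2\|\bm{v}_*\|^2}{MN} & \tfrac{\theta}{\sqrt{MN}}\\ \tfrac{\theta}{\sqrt{MN}} & 0\end{pmatrix}.
\]
Because $\lambda>0$ and $\lambda\notin\spp(\bm{WW}^\UT)$, both resolvents $\bm{S}_1(\lambda)$ and $\bm{S}_2(\lambda)$ are well-defined, since the nonzero spectra of $\bm{WW}^\UT$ and $\bm{W}^\UT\bm{W}$ coincide. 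Writing $\lambda\bm{I}_M-\bm{YY}^\UT=(\lambda\bm{I}_M-\bm{WW}^\UT)\bigl(\bm{I}_M-\bm{S}_1(\lambda)\bm{A}\bm{D}\bm{A}^\UT\bigr)$ and applying the Sylvester / Weinstein--Aronszajn identity then gives
\[
\det(\lambda\bm{I}_M-\bm{YY}^\UT)=\det(\lambda\bm{I}_M-\bm{WW}^\UT)\cdot\det\bigl(\bm{I}_2-\bm{D}\bm{A}^\UT\bm{S}_1(\lambda)\bm{A}\bigr),
\]
so any $\lambda$ of the stated type forces the $2\times 2$ determinant on the right to vanish.

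Next, I would evaluate the three entries of the quadratic form $\bm{A}^\UT\bm{S}_1(\lambda)\bm{A}$ via the standard commutation identities $\bm{W}^\UT\bm{S}_1(\lambda)=\bm{S}_2(\lambda)\bm{W}^\UT$ and $\bm{W}^\UT\bm{S}_1(\lambda)\bm{W}=\lambda\bm{S}_2(\lambda)-\bm{I}_N$. These give the $(1,1)$-entry $\bm{u}_*^\UT\bm{S}_1(\lambda)\bm{u}_*$, the off-diagonal $\bm{u}_*^\UT\bm{S}_1(\lambda)\bm{W}\bm{v}_*=\bm{u}_*^\UT\bm{W}\bm{S}_2(\lambda)\bm{v}_*=\bm{v}_*^\UT\bm{W}^\UT\bm{S}_1(\lambda)\bm{u}_*$ (the two cross-products appearing in the first factor of $\Gamma_M$), and the $(2,2)$-entry $\lambda\,\bm{v}_*^\UT\bm{S}_2(\lambda)\bm{v}_*-\|\bm{v}_*\|^2$. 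Expanding $\det(\bm{I}_2-\bm{D}\bm{A}^\UT\bm{S}_1(\lambda)\bm{A})$ produces a product of the two linear factors in $\Gamma_M$, plus a cross term proportional to $\lambda\cdot\bm{u}_*^\UT\bm{S}_1(\lambda)\bm{u}_*\cdot\bm{v}_*^\UT\bm{S}_2(\lambda)\bm{v}_*$, which is precisely the second summand of \eqref{eq: empirical master equation}. All terms carrying the factor $\|\bm{v}_*\|^2\cdot\bm{u}_*^\UT\bm{S}_1(\lambda)\bm{u}_*$, coming from $\bm{D}_{11}$ and from the $-\|\bm{v}_*\|^2$ piece of the $(2,2)$-entry, should cancel identically.

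The main obstacle is choosing the rank-two factorization so that this cancellation happens cleanly; the asymmetric-looking choice $\bm{D}_{22}=0$ above is the key, as it makes the $\|\bm{v}_*\|^2$ terms drop out after expansion and leaves exactly the two factors and cross term in $\Gamma_M$. Once this factorization is fixed, the remaining steps, namely the Sylvester reduction, substitution of the two resolvent identities, and collection of terms, are purely mechanical.
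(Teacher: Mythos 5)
Your argument is correct; I verified the $2\times 2$ determinant expansion and the cancellation of the $\|\bm v_*\|^2$ terms does go through exactly as you claim, yielding
\[
\det\bigl(\bm I_2-\bm D\bm A^\UT\bm S_1(\lambda)\bm A\bigr)
=\Bigl(1-\tfrac{\theta}{\sqrt{MN}}\,\bm u_*^\UT\bm W\bm S_2(\lambda)\bm v_*\Bigr)^2
-\lambda\Bigl(\tfrac{\theta}{M}\bm u_*^\UT\bm S_1(\lambda)\bm u_*\Bigr)\Bigl(\tfrac{\theta}{N}\bm v_*^\UT\bm S_2(\lambda)\bm v_*\Bigr),
\]
which equals $\Gamma_M(\lambda)$ since the two cross terms in \eqref{eq: empirical master equation} coincide by the intertwining $\bm W^\UT\bm S_1=\bm S_2\bm W^\UT$. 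However, your route is genuinely different from the paper's. The paper never perturbs $\bm W\bm W^\UT$ directly: it passes to the symmetric dilation $\begin{psmallmatrix}\bm 0&\bm Y\\ \bm Y^\UT&\bm 0\end{psmallmatrix}$, views the spike as a symmetric rank-two perturbation $\widehat{\bm U}\widehat{\bm\Theta}\widehat{\bm U}^\UT$ of the $(M+N)$-dimensional matrix $\widehat{\bm W}$, applies Weinstein--Aronszajn there, and reads the four entries of the resulting $2\times2$ matrix off the Schur-complement block formula for $\widehat{\bm W}^{-1}$. That version produces the two distinct cross terms and the factor $z=\sigma^2$ automatically, at the cost of requiring $\sigma=\sqrt z$ (hence positivity) and invertibility of the full dilation. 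Your version stays in $\R^{M\times M}$, needs only the elementary resolvent identities $\bm W^\UT\bm S_1\bm W=\lambda\bm S_2-\bm I_N$ and $\bm S_1\bm W=\bm W\bm S_2$, and avoids the dilation entirely; the price is that the rank-two factorization is not symmetric in an obvious way, and the whole argument hinges on your specific choice $\bm D_{22}=0$, without which the $\|\bm v_*\|^2$ contributions from $\bm D_{11}$ and from the $(2,2)$-entry $\lambda\,\bm v_*^\UT\bm S_2\bm v_*-\|\bm v_*\|^2$ would not cancel. You should make that cancellation explicit rather than asserting it, but as a proof strategy it is complete and arguably more elementary than the one in the paper.
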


\begin{proof}[Proof of \factref{fact:loc of emp out}]
A positive number $z$ is an eigenvalue of $\bm{YY}^\UT$ if and only if its square root, $\sigma = \sqrt{z}$, is a singular value of $\bm{Y}$. This is equivalent to the augmented matrix $\begin{pmatrix} \bm{0} & \bm{Y} \\ \bm{Y}^\UT & \bm{0} \end{pmatrix}$ having $\sigma$ as an eigenvalue. The characteristic equation for this condition is 
\begin{align}
    \det\begin{pmatrix} \sigma\bm{I}_M & -\bm{Y} \\ -\bm{Y}^\UT & \sigma\bm{I}_N \end{pmatrix} = 0.
\end{align}
Substituting the spiked model for $\bm{Y}$ from  \eqref{eq:rectangular spiked model}, we get
\begin{align}
    \det\left( \begin{pmatrix} \sigma\bm{I}_M & -\bm{W} \\ -\bm{W}^\UT & \sigma\bm{I}_N \end{pmatrix} - \begin{pmatrix} \bm{0} & \frac{\theta}{\sqrt{MN}}\bm{u_*}\bm{v_*}^\UT \\ \frac{\theta}{\sqrt{MN}}\bm{v_*}\bm{u_*}^\UT & \bm{0} \end{pmatrix} \right) = 0.
\end{align}
Let us define the unperturbed matrix $\widehat{\bm{W}} \bydef \begin{pmatrix} \sigma\bm{I}_M & -\bm{W} \\ -\bm{W}^\UT & \sigma\bm{I}_N \end{pmatrix}$. The perturbation is a rank-2 matrix which can be factored as $\widehat{\bm{U}}\widehat{\bm{\Theta}}\widehat{\bm{U}}^\UT$, where
\begin{align}
    \widehat{\bm{U}} \bydef \begin{pmatrix} \bm{u_*} & \bm{0} \\ \bm{0} & \bm{v_*} \end{pmatrix}, \quad \text{and} \quad
    \widehat{\bm{\Theta}} \bydef \frac{\theta}{\sqrt{MN}} \begin{pmatrix} 0 & 1 \\ 1 & 0 \end{pmatrix}.
\end{align}
The characteristic equation is now $\det(\widehat{\bm{W}} - \widehat{\bm{U}}\widehat{\bm{\Theta}}\widehat{\bm{U}}^\UT) = 0$. By the Weinstein-Aronszajn formula, this is equivalent to
\begin{align}
\det(\widehat{\bm{W}} - \widehat{\bm{U}}\widehat{\bm{\Theta}}\widehat{\bm{U}}^\UT) =    \det(\widehat{\bm{W}}) \det(\bm{I}_2 - \widehat{\bm{\Theta}}\widehat{\bm{U}}^\UT\widehat{\bm{W}}^{-1}\widehat{\bm{U}}) = 0.
\end{align}
From Assumption \ref{assump:main}(c), the Lemma's premise is that $z$ is not an eigenvalue of $\bm{WW}^\UT$. This ensures that $\widehat{\bm{W}}$ is invertible, so $\det(\widehat{\bm{W}}) \neq 0$. The condition thus simplifies to the singularity of the $2\times2$ matrix
\begin{align}
    \det(\bm{I}_2 - \widehat{\bm{\Theta}}\widehat{\bm{U}}^\UT\widehat{\bm{W}}^{-1}\widehat{\bm{U}}) = 0.
\end{align}
The inverse of $\widehat{\bm{W}}$ is given by the Shur complement formula
\begin{align}
    \widehat{\bm{W}}^{-1} = \begin{pmatrix} \sigma(z\bm{I}_M - \bm{WW}^\UT)^{-1} & \bm{W}(z\bm{I}_N - \bm{W}^\UT\bm{W})^{-1} \\ \bm{W}^\UT(z\bm{I}_M - \bm{WW}^\UT)^{-1} & \sigma(z\bm{I}_N - \bm{W}^\UT\bm{W})^{-1} \end{pmatrix} \bydef \begin{pmatrix} \sigma \bm{S}_1(z) & \bm{W}\bm{S}_2(z) \\ \bm{W}^\UT\bm{S}_1(z) & \sigma \bm{S}_2(z) \end{pmatrix},
\end{align}
where we have used $z=\sigma^2$ and the definitions of $\bm{S}_1(z)$ and $\bm{S}_2(z)$. The $2\times2$ core matrix is then
\BS
\begin{align}
\widehat{\bm{U}}^\UT\widehat{\bm{W}}^{-1}\widehat{\bm{U}} &= \begin{pmatrix} \bm{u_*}^\UT & \bm{0} \\ \bm{0} & \bm{v_*}^\UT \end{pmatrix} \begin{pmatrix} \sigma \bm{S}_1(z) & \bm{W}\bm{S}_2(z) \\ \bm{W}^\UT\bm{S}_1(z) & \sigma \bm{S}_2(z) \end{pmatrix} \begin{pmatrix} \bm{u_*} & \bm{0} \\ \bm{0} & \bm{v_*} \end{pmatrix} = \begin{pmatrix} \sigma \bm{u_*}^\UT \bm{S}_1(z) \bm{u_*} & \bm{u_*}^\UT \bm{W}\bm{S}_2(z) \bm{v_*} \\ \bm{v_*}^\UT \bm{W}^\UT\bm{S}_1(z) \bm{u_*} & \sigma \bm{v_*}^\UT \bm{S}_2(z) \bm{v_*} \end{pmatrix}.
\end{align}
\ES
Substituting this into the determinant condition gives
\begin{align}
    \det\left( \bm{I}_2 - \frac{\theta}{\sqrt{MN}} \begin{pmatrix} 0 & 1 \\ 1 & 0 \end{pmatrix} \begin{pmatrix} \sigma\bm{u_*}^\UT \bm{S}_1(z) \bm{u_*} & \bm{u_*}^\UT \bm{W}\bm{S}_2(z) \bm{v_*} \\ \bm{v_*}^\UT \bm{W}^\UT\bm{S}_1(z) \bm{u_*} & \sigma \bm{v_*}^\UT \bm{S}_2(z) \bm{v_*} \end{pmatrix} \right) = 0.
\end{align}
This expands to the determinant of the following $2\times2$ matrix
\begin{align}
    \det \begin{pmatrix} 1 - \frac{\theta}{\sqrt{MN}} \bm{v_*}^\UT \bm{W}^\UT\bm{S}_1(z) \bm{u_*} & - \frac{z\theta}{\sqrt{MN}} \bm{v_*}^\UT \bm{S}_2(z) \bm{v_*} \\ - \frac{z\theta}{\sqrt{MN}} \bm{u_*}^\UT \bm{S}_1(z) \bm{u_*} & 1 - \frac{\theta}{\sqrt{MN}} \bm{u_*}^\UT \bm{W}\bm{S}_2(z) \bm{v_*} \end{pmatrix} = 0.
\end{align}
Evaluating the determinant and recalling that $\sigma^2=z$ yields \eqref{eq: empirical master equation}.
\end{proof}

\begin{fact}[Special Form of the Sokhotski–Plemelj Formula]
\label{fact:special_SP_hilbert_form}
Let $\nu$ be a finite signed measure on $\mathbb{R}$, with its Stieltjes transform $\mathcal{S}_\nu$ and Hilbert transform $\mathcal{H}_\nu$ given by \eqref{eq:def of Stieltjes Transform} and \eqref{eq: Hilbert Transform Def}, respectively. For any non-zero point $x \in \mathbb{R} \setminus \{0\}$ where the measure $\nu$ possesses a density at $x^2$, denoted by $\nu(x^2)$, the following boundary limit for the Stieltjes transform holds
\begin{equation}
\lim_{\epsilon \to 0^+} \mathcal{S}_{\nu}((x - \mathrm{i}\epsilon)^2) = \pi \mathcal{H}_{\nu}(x^2) + \mathrm{i}\pi \, \mathrm{sign}(x) \nu(x^2).
\label{eq:special_inversion_hilbert}
\end{equation}
\end{fact}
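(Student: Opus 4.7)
The plan is to reduce the statement directly to the standard Sokhotski--Plemelj formula \eqref{eq:sp-lemma-ap} by substituting $z = (x-\mathrm{i}\epsilon)^2$ and tracking which half-plane we approach $x^2$ from. First I would expand
\[
(x-\mathrm{i}\epsilon)^2 = (x^2-\epsilon^2) - 2\mathrm{i} x\epsilon,
\]
so that, as $\epsilon \downarrow 0$, the real part tends to $x^2$ (at rate $O(\epsilon^2)$) while the imaginary part equals $-2x\epsilon$ (at rate $O(\epsilon)$). Crucially, the sign of the imaginary part is $-\mathrm{sign}(x)$, and the ratio $|\mathrm{Re} - x^2|/|\mathrm{Im}| = \epsilon/(2|x|) \to 0$, so the approach to $x^2$ is non-tangential (in fact asymptotically vertical), which is what is needed to invoke boundary value results for $\mathcal{S}_\nu$.

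Next I would apply the standard Sokhotski--Plemelj formula for the Stieltjes transform of a finite signed measure: whenever $\nu$ has a density at $y \in \mathbb{R}$ and the Hilbert transform $\mathcal{H}_\nu$ admits a boundary value at $y$,
\begin{align*}
\lim_{\eta\downarrow 0}\ \mathcal{S}_\nu(y-\mathrm{i}\eta) &= \pi\,\mathcal{H}_\nu(y) + \mathrm{i}\pi\,\nu(y), \\
\lim_{\eta\downarrow 0}\ \mathcal{S}_\nu(y+\mathrm{i}\eta) &= \pi\,\mathcal{H}_\nu(y) - \mathrm{i}\pi\,\nu(y),
\end{align*}
which follows from the Poisson-kernel representation of $\mathrm{Im}\,\mathcal{S}_\nu$ combined with the Cauchy principal value representation of $\mathrm{Re}\,\mathcal{S}_\nu$. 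Applied with $y = x^2$, and choosing the correct half-plane based on the sign of $-2x\epsilon$, we obtain: if $x>0$ the approach is from the lower half-plane, yielding $\pi\mathcal{H}_\nu(x^2) + \mathrm{i}\pi\nu(x^2)$; if $x<0$ it is from the upper half-plane, yielding $\pi\mathcal{H}_\nu(x^2) - \mathrm{i}\pi\nu(x^2)$. Both cases are captured uniformly by the factor $\mathrm{sign}(x)$ in front of the imaginary part.

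The only subtlety is justifying that the one-sided boundary value of $\mathcal{S}_\nu$ exists along the specific curve $\{(x-\mathrm{i}\epsilon)^2\}_{\epsilon>0}$ rather than along the vertical line $\{x^2 - \mathrm{i}\eta\}_{\eta>0}$. The non-tangential estimate $|\mathrm{Re} - x^2|/|\mathrm{Im}| \to 0$ above reduces this to the existence of the usual non-tangential boundary limit, which is precisely the content of Fact~\ref{fact: Non-tangential limit} and the hypothesis that $\nu$ has a density at $x^2$ (ensuring the Poisson-kernel approximation of the identity converges to $\nu(x^2)$). This is the main (and essentially only) point requiring care; the rest of the argument is a bookkeeping of signs. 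Finally I would combine the two cases into the compact form
\[
\lim_{\epsilon\downarrow 0}\ \mathcal{S}_\nu\bigl((x-\mathrm{i}\epsilon)^2\bigr)
= \pi\,\mathcal{H}_\nu(x^2) + \mathrm{i}\pi\,\mathrm{sign}(x)\,\nu(x^2),
\]
which is \eqref{eq:special_inversion_hilbert}.
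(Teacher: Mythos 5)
Your proposal is correct and follows essentially the same route as the paper's proof: expand $(x-\mathrm{i}\epsilon)^2$ to see that the imaginary part $-2x\epsilon$ places the approach in the lower (resp.\ upper) half-plane for $x>0$ (resp.\ $x<0$), then apply the standard one-sided Sokhotski--Plemelj boundary values at $y=x^2$ and combine the two cases via $\mathrm{sign}(x)$. The only cosmetic difference is that the paper first passes through the Jordan decomposition $\nu=\nu^+-\nu^-$ to invoke the positive-measure version of the formula, which your appeal to linearity covers implicitly.
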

\begin{proof}
Fix $x\in\mathbb{R}\setminus\{0\}$ such that the density $\frac{d\nu}{d\lambda}(x^2)$ exists, and denote this value by $\nu(x^2)$. Since $\nu$ is a finite signed measure, there exist finite positive measures $\nu^+$ and $\nu^-$ such that $\nu = \nu^+ - \nu^-$ by Jordan decomposition (cf.\ \cite[Theorem~4.1.5]{Cohn2013MeasureTheory}). By linearity of the Stieltjes transform, the Hilbert transform and the Radon--Nikodym derivative, this implies
\begin{align}\label{eq:linear-S-transform}
  \mathcal{S}_\nu = \mathcal{S}_{\nu^+} - \mathcal{S}_{\nu^-},\qquad
  \mathcal{H}_\nu = \mathcal{H}_{\nu^+} - \mathcal{H}_{\nu^-},\qquad
  \frac{d\nu}{d\lambda} = \frac{d\nu^+}{d\lambda} - \frac{d\nu^-}{d\lambda}.    
\end{align}
For a finite positive measure $\chi$ and for Lebesgue-almost every $t\in\mathbb{R}$ such that $\frac{d\chi}{d\lambda}(t)$ exists, the Sokhotski–Plemelj boundary value theorem (cf.\ \cite[Section~3.1]{belinschi2017outliers}) yields
\begin{align}
  \lim_{\epsilon\downarrow0}\mathcal{S}_\chi(t - \mathrm{i}\epsilon)
  &= \pi\mathcal{H}_\chi(t) + \mathrm{i}\pi\,\frac{d\chi}{d\lambda}(t),
  \label{eq:SP-pos-below}\\
  \lim_{\epsilon\downarrow0}\mathcal{S}_\chi(t + \mathrm{i}\epsilon)
  &= \pi\mathcal{H}_\chi(t) - \mathrm{i}\pi\,\frac{d\chi}{d\lambda}(t).
  \label{eq:SP-pos-above}
\end{align}
Applying \eqref{eq:SP-pos-below}–\eqref{eq:SP-pos-above} to $\nu^+$ and $\nu^-$ at $t=x^2$, and using the relations \eqref{eq:linear-S-transform}, we obtain
\begin{align}
  \lim_{\epsilon\downarrow0}\mathcal{S}_\nu(x^2 - \mathrm{i}\epsilon)
  &= \pi\mathcal{H}_\nu(x^2) + \mathrm{i}\pi\,\nu(x^2),
  \label{eq:SP-nu-below}\\
  \lim_{\epsilon\downarrow0}\mathcal{S}_\nu(x^2 + \mathrm{i}\epsilon)
  &= \pi\mathcal{H}_\nu(x^2) - \mathrm{i}\pi\,\nu(x^2).
  \label{eq:SP-nu-above}
\end{align}
Now consider the path
\begin{align}\label{eq:path-w}
  w(\epsilon) = (x - \mathrm{i}\epsilon)^2
  = (x^2 - \epsilon^2) - \mathrm{i}\,2x\epsilon,\qquad \epsilon>0.
\end{align}
Then $w(\epsilon)\to x^2$ as $\epsilon\downarrow0$. Moreover,
\[
\Im w(\epsilon) = -2x\epsilon,
\]
so for $x>0$ the points $w(\epsilon)$ lie in the lower half-plane, while for $x<0$ they lie in the upper half-plane.\\
\medskip
\noindent\textit{Case $x>0$.}
Here $\Im w(\epsilon)<0$ for all $\epsilon>0$, and $w(\epsilon)\to x^2$ from the lower half-plane. Since $\mathcal{S}_\nu$ is analytic on $\mathbb{C}\setminus\mathbb{R}$, the boundary value in \eqref{eq:SP-nu-below} is independent of the particular approach within the lower half-plane. Hence
\begin{align}\label{eq:limit-x-pos}
\lim_{\epsilon\downarrow0}\mathcal{S}_\nu\bigl(w(\epsilon)\bigr)
  = \lim_{\epsilon\downarrow0}\mathcal{S}_\nu(x^2 - \mathrm{i}\epsilon)
  = \pi\mathcal{H}_\nu(x^2) + \mathrm{i}\pi\,\nu(x^2).
\end{align}\\
\medskip
\noindent\textit{Case $x<0$.}
Now $\Im w(\epsilon)>0$ for all $\epsilon>0$, and $w(\epsilon)\to x^2$ from the upper half-plane. Using \eqref{eq:SP-nu-above} and the same reasoning,
\begin{align}\label{eq:limit-x-neg}
\lim_{\epsilon\downarrow0}\mathcal{S}_\nu\bigl(w(\epsilon)\bigr)
  = \lim_{\epsilon\downarrow0}\mathcal{S}_\nu(x^2 + \mathrm{i}\epsilon)
  = \pi\mathcal{H}_\nu(x^2) - \mathrm{i}\pi\,\nu(x^2).
\end{align}
\medskip
Combining \eqref{eq:limit-x-pos} and \eqref{eq:limit-x-neg}, and recalling $w(\epsilon)=(x-\mathrm{i}\epsilon)^2$, we obtain for all $x\neq0$ with density at $x^2$:
\[
\lim_{\epsilon\downarrow0}\mathcal{S}_\nu\bigl((x-\mathrm{i}\epsilon)^2\bigr)
= \pi\mathcal{H}_\nu(x^2) + \mathrm{i}\pi\,\mathrm{sign}(x)\,\nu(x^2),
\]
which is exactly \eqref{eq:special_inversion_hilbert}.
\end{proof}

\begin{fact}[Boundary values for the Stieltjes transform of a signed measure]
\label{fact:boundary-signed}
Let $\chi$ be a finite signed Borel measure on $\R$, and let
\[
\mathcal{S}_\chi(z)
\;\bydef\;
\int_{\R}\frac{1}{z-x}\,d\chi(x),
\qquad z\in\C\setminus\R,
\]
denote its Stieltjes transform. Let
\[
\chi = \chi_{\parallel} + \chi_{\perp}
\]
be the Lebesgue decomposition of $\chi$ with respect to Lebesgue measure
$\lambda$, where $\chi_{\parallel}\ll\lambda$ and $\chi_{\perp}\perp\lambda$.
Then the boundary values of $\mathcal{S}_\chi$ satisfy
\begin{align}
\lim_{\epsilon\downarrow0}
  \Im\mathcal{S}_\chi(x-\mathrm{i}\epsilon)
&=
\pi\,\frac{d\chi_{\parallel}}{d\lambda}(x),
&&\text{for Lebesgue-a.e.\ }x\in\R,
\label{eq:boundary-chi-ac} 
\\[2mm]
\lim_{\epsilon\downarrow0}
  \bigl|\Im\mathcal{S}_\chi(x-\mathrm{i}\epsilon)\bigr|
&=\infty,
&&\text{for }|\chi_{\perp}|\text{-almost every }x\in\R.
\label{eq:boundary-chi-sing} 
\end{align}
Here $|\chi_{\perp}|$ denotes the total variation measure of the signed
measure $\chi_{\perp}$. If $\chi_{\perp}=\chi_{\perp}^+-\chi_{\perp}^-$ is the Jordan decomposition
of $\chi_{\perp}$ into mutually singular finite positive measures, then\begin{equation}\label{eq:total-variation-jordan}
|\chi_{\perp}|=\chi_{\perp}^+ + \chi_{\perp}^-.
\end{equation}
In particular, \eqref{eq:boundary-chi-sing} is equivalent to the existence
of a Borel set $N\subset\R$ such that
\[
|\chi_{\perp}|(N)=0
\quad\text{and}\quad
\lim_{\epsilon\downarrow0}
  \bigl|\Im\mathcal{S}_\chi(x-\mathrm{i}\epsilon)\bigr|
= \infty
\quad\text{for all }x\in\R\setminus N.
\]
\end{fact}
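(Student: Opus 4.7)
The plan is to reduce both claims to the classical Fatou-type boundary theorem for Stieltjes transforms of finite \emph{positive} Borel measures via the Jordan decomposition. Writing $\chi=\chi^+-\chi^-$ with $\chi^+\perp\chi^-$ finite positive Borel measures, linearity of the Stieltjes transform gives $\mathcal{S}_\chi=\mathcal{S}_{\chi^+}-\mathcal{S}_{\chi^-}$. Decomposing each $\chi^{\pm}=\chi^{\pm,\parallel}+\chi^{\pm,\perp}$ with respect to Lebesgue measure and using uniqueness of the Lebesgue decomposition, one identifies $\chi_{\parallel}=\chi^{+,\parallel}-\chi^{-,\parallel}$ and $\chi_{\perp}=\chi^{+,\perp}-\chi^{-,\perp}$. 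The mutual singularity $\chi^+\perp\chi^-$ is inherited by $\chi^{+,\perp}\perp\chi^{-,\perp}$ (each is concentrated on a Borel set disjoint from the other and both are singular to Lebesgue measure), so $|\chi_{\perp}|=\chi^{+,\perp}+\chi^{-,\perp}$ is the Jordan decomposition of $\chi_{\perp}$, matching \eqref{eq:total-variation-jordan}.

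For \eqref{eq:boundary-chi-ac}, I would invoke the classical Fatou-type boundary theorem for a finite positive Borel measure $\mu$ on $\R$ (\cite[Section~2.1]{pastur2011eigenvalue}): $\lim_{\epsilon\downarrow0}\Im\mathcal{S}_\mu(x-\mathrm{i}\epsilon)=\pi\,(d\mu_{\parallel}/d\lambda)(x)$ for Lebesgue-a.e.\ $x$. Applying this separately to $\chi^+$ and $\chi^-$ and subtracting yields
\[
\lim_{\epsilon\downarrow0}\Im\mathcal{S}_\chi(x-\mathrm{i}\epsilon)
=\pi\bigl((d\chi^{+,\parallel}/d\lambda)(x)-(d\chi^{-,\parallel}/d\lambda)(x)\bigr)
=\pi\,(d\chi_{\parallel}/d\lambda)(x)
\]
for Lebesgue-a.e.\ $x$, which is \eqref{eq:boundary-chi-ac}.

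For \eqref{eq:boundary-chi-sing}, the corresponding positive-measure statement is that $\Im\mathcal{S}_\mu(x-\mathrm{i}\epsilon)\to+\infty$ for $\mu_{\perp}$-a.e.\ $x$. The principal obstacle is that the subtraction $\Im\mathcal{S}_\chi=\Im\mathcal{S}_{\chi^+}-\Im\mathcal{S}_{\chi^-}$ could in principle exhibit exact cancellation: both summands are nonnegative and could simultaneously blow up at the same rate at a point of $\chi^{+,\perp}$. To rule this out I would employ a comparison principle that exploits the mutual singularity $\chi^+\perp\chi^-$: for finite positive Borel measures $\mu\perp\nu$ on $\R$, one has
\[
\lim_{\epsilon\downarrow0}\frac{\Im\mathcal{S}_\nu(x-\mathrm{i}\epsilon)}{\Im\mathcal{S}_\mu(x-\mathrm{i}\epsilon)}=0
\qquad\text{for $\mu$-a.e.\ }x,
\]
a Poltoratskii-type result provable via the Besicovitch covering lemma applied to the Poisson-integral representation $\Im\mathcal{S}_\mu(x-\mathrm{i}\epsilon)=\int\epsilon/((x-t)^2+\epsilon^2)\,d\mu(t)$. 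Applied with $(\mu,\nu)=(\chi^+,\chi^-)$, this yields $\Im\mathcal{S}_{\chi^-}=o(\Im\mathcal{S}_{\chi^+})$ at $\chi^+$-a.e.\ $x$, in particular at $\chi^{+,\perp}$-a.e.\ $x$, so $|\Im\mathcal{S}_\chi(x-\mathrm{i}\epsilon)|=\Im\mathcal{S}_{\chi^+}(x-\mathrm{i}\epsilon)\bigl(1-o(1)\bigr)\to+\infty$. A symmetric argument with the roles of $\chi^+$ and $\chi^-$ exchanged covers the support of $\chi^{-,\perp}$; together these span a $|\chi_{\perp}|$-conull set, proving \eqref{eq:boundary-chi-sing}. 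The hard part of the proof is precisely the comparison step above, where the delicate cancellation between two diverging Poisson integrals must be resolved; the final reformulation in terms of an exceptional Borel null set $N$ is then a routine measure-theoretic repackaging.
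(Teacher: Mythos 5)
Your proposal is correct and follows essentially the same route as the paper: Jordan decomposition plus the classical positive-measure Fatou boundary theorem for the absolutely continuous part, and a mutual-singularity comparison of Poisson integrals (your Poltoratskii-type ratio lemma is exactly the role played by the paper's citation of \cite[Lemma~1]{Watson1994Applications}) to rule out cancellation in the singular part. The only difference is organizational — you apply the comparison once to the full positive parts $\chi^+\perp\chi^-$ and use $\Im\mathcal{S}_{\chi^+}\ge\Im\mathcal{S}_{\chi^{+,\perp}}$, whereas the paper applies it pairwise to $\chi_\perp^+$, $\chi_\perp^-$, and $|\chi_\parallel|$ — and both variants are valid.
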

\begin{proof}
Let $\chi=\chi^+-\chi^-$ be the Jordan decomposition, where $\chi^\pm$ are finite positive
Borel measures with $\chi^+\perp\chi^-$. For each $\chi^\pm$, let
\[
\chi^\pm=\chi^\pm_{\parallel}+\chi^\pm_{\perp}
\]
be the Lebesgue decomposition with respect to $\lambda$.
By uniqueness of the Lebesgue decomposition (see, e.g., \cite[Thm.~4.3.2]{Cohn2013MeasureTheory}),
\[
\chi_{\parallel}=\chi^+_{\parallel}-\chi^-_{\parallel},
\qquad
\chi_{\perp}=\chi^+_{\perp}-\chi^-_{\perp}.
\]

We first establish \eqref{eq:boundary-chi-ac}. For each finite positive measure $\chi^\pm$, the boundary-value formula for its Stieltjes transform
(see \cite[Section~3.1]{belinschi2017outliers}, with the convention $x-\mathrm{i}\epsilon$) yields
\begin{equation}\label{eq:boundary-chipm-ac}
\lim_{\epsilon\downarrow0}\Im\mathcal{S}_{\chi^\pm}(x-\mathrm{i}\epsilon)
=\pi\,\frac{d\chi^\pm_{\parallel}}{d\lambda}(x),
\qquad \text{for Lebesgue-a.e.\ }x\in\R.
\end{equation}
Since $\mathcal{S}_\chi=\mathcal{S}_{\chi^+}-\mathcal{S}_{\chi^-}$ and both the Stieltjes transform
and the Radon--Nikod\'ym derivative are linear in the measure, \eqref{eq:boundary-chipm-ac} implies,
for Lebesgue-a.e.\ $x\in\R$,
\[
\lim_{\epsilon\downarrow0}\Im\mathcal{S}_\chi(x-\mathrm{i}\epsilon)
=
\pi\!\left(\frac{d\chi^+_{\parallel}}{d\lambda}(x)-\frac{d\chi^-_{\parallel}}{d\lambda}(x)\right)
=\pi\,\frac{d\chi_{\parallel}}{d\lambda}(x),
\]
which is \eqref{eq:boundary-chi-ac}.

We now prove \eqref{eq:boundary-chi-sing}. Let the Jordan decomposition of the singular part be
\[
\chi_{\perp}=\chi_{\perp}^+-\chi_{\perp}^-,
\]
where $\chi_{\perp}^\pm$ are finite positive measures with $\chi_{\perp}^+\perp\chi_{\perp}^-$.
Then $|\chi_{\perp}|=\chi_{\perp}^+ + \chi_{\perp}^-$ by \eqref{eq:total-variation-jordan}.

For any finite signed Borel measure $\eta$ on $\R$ and $\epsilon>0$,
\[
\Im\mathcal{S}_\eta(x-\mathrm{i}\epsilon)
=
\epsilon\int_{\R}\frac{1}{(x-t)^2+\epsilon^2}\,d\eta(t).
\]
In particular, for $\eta\ge 0$ this is (up to the factor $\pi$) the Poisson integral of $\eta$.
Applying \cite[Section~3.1]{belinschi2017outliers} to the purely singular finite positive measures
$\chi_{\perp}^\pm$ yields
\[
\Im\mathcal{S}_{\chi_{\perp}^+}(x-\mathrm{i}\epsilon)\xrightarrow[\epsilon\downarrow0]{}+\infty
\quad\text{for }\chi_{\perp}^+\text{-a.e.\ }x\in\R,
\qquad
\Im\mathcal{S}_{\chi_{\perp}^-}(x-\mathrm{i}\epsilon)\xrightarrow[\epsilon\downarrow0]{}+\infty
\quad\text{for }\chi_{\perp}^-\text{-a.e.\ }x\in\R.
\]
Define
\[
u^+(x,\epsilon)\;\bydef\;\pi^{-1}\Im\mathcal{S}_{\chi_{\perp}^+}(x-\mathrm{i}\epsilon),
\qquad
u^-(x,\epsilon)\;\bydef\;\pi^{-1}\Im\mathcal{S}_{\chi_{\perp}^-}(x-\mathrm{i}\epsilon).
\]
By \cite[Lemma~1]{Watson1994Applications}, applied to the mutually singular finite positive measures
$\chi_{\perp}^+$ and $\chi_{\perp}^-$, as $\epsilon\downarrow0$ we have
\begin{align*}
u^-(x,\epsilon)&=o\bigl(u^+(x,\epsilon)\bigr),
&&\text{for }\chi_{\perp}^+\text{-a.e.\ }x\in\R,\\
u^+(x,\epsilon)&=o\bigl(u^-(x,\epsilon)\bigr),
&&\text{for }\chi_{\perp}^-\text{-a.e.\ }x\in\R.
\end{align*}

Since the Poisson kernel is nonnegative, for all $x\in\R$ and $\epsilon>0$ we have the domination
\begin{equation}\label{eq:ImS-TV-domination}
\bigl|\Im\mathcal{S}_{\chi_{\parallel}}(x-\mathrm{i}\epsilon)\bigr|
\le
\Im\mathcal{S}_{|\chi_{\parallel}|}(x-\mathrm{i}\epsilon).
\end{equation}
Set
\[
u^{\parallel}(x,\epsilon)\;\bydef\;\pi^{-1}\Im\mathcal{S}_{|\chi_{\parallel}|}(x-\mathrm{i}\epsilon)\;\ge\;0.
\]
Moreover, $|\chi_{\parallel}|\ll\lambda$ while $\chi_{\perp}^\pm\perp\lambda$, hence
$|\chi_{\parallel}|\perp\chi_{\perp}^\pm$. Another application of \cite[Lemma~1]{Watson1994Applications}
therefore yields, as $\epsilon\downarrow0$,
\begin{align*}
u^{\parallel}(x,\epsilon)&=o\bigl(u^+(x,\epsilon)\bigr),
&&\text{for }\chi_{\perp}^+\text{-a.e.\ }x\in\R,\\
u^{\parallel}(x,\epsilon)&=o\bigl(u^-(x,\epsilon)\bigr),
&&\text{for }\chi_{\perp}^-\text{-a.e.\ }x\in\R.
\end{align*}

Combining \eqref{eq:ImS-TV-domination} with the preceding little-$o$ relations, we obtain
\begin{align}
\Im \mathcal{S}_\chi(x-\mathrm{i}\epsilon)
&=
\Im\mathcal{S}_{\chi_{\parallel}}(x-\mathrm{i}\epsilon)
+\pi u^+(x,\epsilon)-\pi u^-(x,\epsilon)\notag\\
&\sim
\pi\,u^+(x,\epsilon)
\xrightarrow[\epsilon\downarrow0]{}+\infty,
&&\text{for }\chi_{\perp}^+\text{-a.e.\ }x\in\R,
\label{eq:boundary-chi-plus}\\
\Im\mathcal{S}_\chi(x-\mathrm{i}\epsilon)
&\sim
-\pi\,u^-(x,\epsilon)
\xrightarrow[\epsilon\downarrow0]{}-\infty,
&&\text{for }\chi_{\perp}^-\text{-a.e.\ }x\in\R.
\label{eq:boundary-chi-minus}
\end{align}
In particular,
\[
\lim_{\epsilon\downarrow0}\bigl|\Im\mathcal{S}_\chi(x-\mathrm{i}\epsilon)\bigr|
=\infty
\quad\text{for }\chi_{\perp}^+\text{-a.e.\ and }\chi_{\perp}^-\text{-a.e.\ }x\in\R.
\]
Since $|\chi_{\perp}|=\chi_{\perp}^+ + \chi_{\perp}^-$, the same property holds for
$|\chi_{\perp}|$-almost every $x\in\R$, which is \eqref{eq:boundary-chi-sing}.
\end{proof}
\medskip

\begin{fact}[Resolvent representation of singular vectors]
\label{lem:specSE:resolvent-outliers}
Let $(\sigma,{\bm u},{\bm v})$ be a singular value--vector triplet of
\[
\bm Y \;=\; \frac{\theta}{\sqrt{MN}}\,{\bm u}_*{\bm v}_*^{\mathsf T} + \bm W
\;\in\; \mathbb{R}^{M\times N}.
\]
Assume that $\sigma^2 \notin \mathrm{spec}(\bm W\bm W^{\mathsf T}) \cup \{0\}$.
Then,
\begin{align}
{\bm u}
&= \bm R_M(\sigma^2)
\Bigg(
\theta\,\frac{\langle {\bm u},{\bm u}_*\rangle}{\sqrt{MN}}\,
\bm W{\bm v}_*
+ \theta\,\sigma\,\frac{\langle {\bm v},{\bm v}_*\rangle}{\sqrt{MN}}\,
{\bm u}_*
\Bigg),
\label{eq:specSE:u-resolvent-solved}\\
{\bm v}
&= \widetilde{\bm R}_N(\sigma^2)
\Bigg(
\theta\,\frac{\langle {\bm v},{\bm v}_*\rangle}{\sqrt{MN}}\,
\bm W^{\mathsf T}{\bm u}_*
+ \theta\,\sigma\,\frac{\langle {\bm u},{\bm u}_*\rangle}{\sqrt{MN}}\,
{\bm v}_*
\Bigg),
\label{eq:specSE:v-resolvent-solved}
\end{align}
where
\[
\bm R_M(x) \bydef (x\bm I_M - \bm W\bm W^{\mathsf T})^{-1},
\qquad
\widetilde{\bm R}_N(x) \bydef (x\bm I_N - \bm W^{\mathsf T}\bm W)^{-1}.
\]
\end{fact}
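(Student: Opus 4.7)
The statement is a purely algebraic identity, and my plan is to derive it directly from the two coupled singular-value equations satisfied by $(\sigma,\bm u,\bm v)$, namely
\[
\bm Y\bm v=\sigma\bm u,\qquad \bm Y^{\mathsf T}\bm u=\sigma\bm v.
\]
Substituting $\bm Y=\tfrac{\theta}{\sqrt{MN}}\bm u_*\bm v_*^{\mathsf T}+\bm W$ into both equations yields
\[
\bm W\bm v+\frac{\theta\,\langle\bm v_*,\bm v\rangle}{\sqrt{MN}}\,\bm u_*=\sigma\bm u,\qquad
\bm W^{\mathsf T}\bm u+\frac{\theta\,\langle\bm u_*,\bm u\rangle}{\sqrt{MN}}\,\bm v_*=\sigma\bm v.
\]
The idea is to decouple these into a single equation involving $\bm u$ alone (and, symmetrically, $\bm v$ alone) and then invert a shifted resolvent.

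\textbf{Key steps.} First, I would solve the second relation for $\bm v=\sigma^{-1}\!\bigl(\bm W^{\mathsf T}\bm u+\tfrac{\theta\langle\bm u_*,\bm u\rangle}{\sqrt{MN}}\bm v_*\bigr)$, which is legitimate since $\sigma\neq 0$ by the hypothesis $\sigma^2\neq 0$. Substituting into the first relation and multiplying through by $\sigma$ gives
\[
(\sigma^2\bm I_M-\bm W\bm W^{\mathsf T})\bm u
=\theta\,\frac{\langle\bm u_*,\bm u\rangle}{\sqrt{MN}}\,\bm W\bm v_*
+\theta\sigma\,\frac{\langle\bm v_*,\bm v\rangle}{\sqrt{MN}}\,\bm u_*.
\]
Since by assumption $\sigma^2\notin\mathrm{spec}(\bm W\bm W^{\mathsf T})$, the matrix $\sigma^2\bm I_M-\bm W\bm W^{\mathsf T}$ is invertible with inverse $\bm R_M(\sigma^2)$, and applying it to both sides yields \eqref{eq:specSE:u-resolvent-solved}. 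The second identity \eqref{eq:specSE:v-resolvent-solved} is obtained by the symmetric procedure: solve the first relation for $\bm u$ in terms of $\bm v$ (valid since $\sigma\neq 0$), plug into the second relation, and invert $\sigma^2\bm I_N-\bm W^{\mathsf T}\bm W$.

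\textbf{Anticipated subtlety.} The only point that deserves a short justification, and is really the whole content of the proof beyond bookkeeping, is the invertibility of $\sigma^2\bm I_N-\bm W^{\mathsf T}\bm W$ needed in the second identity. The hypothesis only states $\sigma^2\notin\mathrm{spec}(\bm W\bm W^{\mathsf T})\cup\{0\}$, but not directly $\sigma^2\notin\mathrm{spec}(\bm W^{\mathsf T}\bm W)$. Here I would invoke the elementary fact that $\bm W\bm W^{\mathsf T}$ and $\bm W^{\mathsf T}\bm W$ share the same \emph{nonzero} eigenvalues; combined with $\sigma^2\neq 0$, this gives $\sigma^2\notin\mathrm{spec}(\bm W^{\mathsf T}\bm W)$ as well, so $\widetilde{\bm R}_N(\sigma^2)$ is well-defined. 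With this observation in place, no further obstacle arises, and the two identities follow by pure linear algebra.
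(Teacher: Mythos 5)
Your proposal is correct and is precisely the argument the paper alludes to when it says the fact "follows directly from the singular equation" (the paper omits all details). Your extra remark that $\bm W\bm W^{\mathsf T}$ and $\bm W^{\mathsf T}\bm W$ share the same nonzero eigenvalues, so that $\widetilde{\bm R}_N(\sigma^2)$ is well-defined under the stated hypothesis, correctly fills the one point the statement leaves implicit.
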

\begin{proof}[Proof of \factref{lem:specSE:resolvent-outliers}]
This follows directly from the singular equation and we omit the details.
\end{proof}

\medskip

\begin{fact}[Properties of the signed DMMSE]\label{fact:signed-dmmse-sign}
Let
\begin{align}
\mathsf X^{\mathrm{sgn}} &\bydef s_*\sqrt{w}\,\mathsf X_*+\sqrt{1-w}\,\mathsf Z, \label{eq:ap-signed-channel}\\
\mathsf X^{\mathrm{std}} &\bydef s_*\mathsf X^{\mathrm{sgn}} \stackrel{d}{=} \sqrt{w}\,\mathsf X_*+\sqrt{1-w}\,\mathsf Z,\label{eq:ap-mirror-channel}
\end{align}
be scalar Gaussian channels with $w\in[0,1)$, $s_*\in\{\pm1\}$, $\E[\mathsf X_*^2]=1$, and
$\mathsf Z\sim\mathcal N(0,1)\indep \mathsf X_*$. Under \assumpref{assump:lip-MMSE}, let $\bar\phi(\cdot\,|\,w)$ be the DMMSE estimator
associated with \eqref{eq:ap-mirror-channel} defined in \eqref{eq:mmse-scalar}, and define the signed DMMSE
estimator associated with \eqref{eq:ap-signed-channel} by
\begin{equation}\label{eq:signed_DMMSE}
\bar\phi(x\,|\,w,s)\;\bydef\; \bar\phi(sx\,|\,w),\qquad s\in\{\pm1\}.
\end{equation}
Then:
\begin{enumerate}[label=\textup{(\roman*)},leftmargin=*]
\item (Matched Sign.) If $s=s_*$, then
\begin{equation}\label{eq:DMMSE-match}
\mathbb E\!\left[\mathsf X_*\,\bar\phi(\mathsf X^{\mathrm{sgn}}\,|\,w,s_*)\right]\ge 0,
\end{equation}
with strict positivity for non-degenerate $\mathsf X_*$ and $w\in(0,1)$.
\item (Possible Mismatched Sign.) If $\mathsf X_*\stackrel{d}{=}-\mathsf X_*$, then $\bar\phi(\cdot\,|\,w)$ is odd and, for any $s\in\{\pm1\}$,
\begin{equation}\label{eq:DMMSE-mismatch}
\mathbb E\!\left[\mathsf X_*\,\bar\phi(\mathsf X^{\mathrm{sgn}}\,|\,w,s)\right]
= ss_*\ \mathbb E\!\left[\bar\phi(\mathsf X^{\mathrm{std}}\,|\,w)^2\right].
\end{equation}
Hence the correlation flips sign with $ss_*$ while its magnitude is unchanged.
\end{enumerate}
\end{fact}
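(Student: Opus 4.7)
The statement is an algebraic consequence of two facts: (a) for any deterministic $s\in\{\pm1\}$, the random variable $\tilde{\mathsf Z}\bydef s\mathsf Z$ is again standard Gaussian and independent of $\mathsf X_*$, so sign flips of the noise can be absorbed by relabeling; and (b) the DMMSE identity $\mathbb E[\mathsf X_*\,\bar\phi(\mathsf X^{\mathrm{std}}\,|\,w)]=\mathbb E[\bar\phi(\mathsf X^{\mathrm{std}}\,|\,w)^2]$, which is the divergence-free content of \cite[Lemma~2]{dudeja2024optimality}. Throughout, the Rademacher signs $s$ and $s_*$ are treated as fixed (we condition on their realization, as in Appendix~\ref{app:pf-global_sign_dmmse}), so $\tilde{\mathsf Z}=s\mathsf Z$ remains independent of $\mathsf X_*$.

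\textbf{Matched sign.} For part (i), the definition $\bar\phi(x\,|\,w,s_*)=\bar\phi(s_*x\,|\,w)$ together with $s_*^2=1$ gives
\[
s_*\mathsf X^{\mathrm{sgn}}=\sqrt{w}\,\mathsf X_*+\sqrt{1-w}\,\tilde{\mathsf Z},
\qquad \tilde{\mathsf Z}\bydef s_*\mathsf Z\sim\mathcal N(0,1),\ \tilde{\mathsf Z}\indep\mathsf X_*,
\]
so that $(\mathsf X_*,s_*\mathsf X^{\mathrm{sgn}})\stackrel{d}{=}(\mathsf X_*,\mathsf X^{\mathrm{std}})$. Applying the DMMSE identity yields $\mathbb E[\mathsf X_*\,\bar\phi(\mathsf X^{\mathrm{sgn}}\,|\,w,s_*)]=\mathbb E[\bar\phi(\mathsf X^{\mathrm{std}}\,|\,w)^2]\ge 0$. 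For strict positivity, note that $\bar\phi(\cdot\,|\,w)$ vanishes identically only when the MMSE estimator $\phi(\cdot\,|\,w)$ is exactly linear in its argument, which (by uniqueness of the linear MMSE under the Gauss--Markov theorem) happens iff $\mathsf X_*$ is Gaussian; for any non-Gaussian $\mathsf X_*$ and $w\in(0,1)$ the posterior mean is nonlinear, hence $\mathbb E[\bar\phi(\mathsf X^{\mathrm{std}}\,|\,w)^2]>0$.

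\textbf{Mismatched sign.} For part (ii), the first step is to verify that under $\mathsf X_*\stackrel{d}{=}-\mathsf X_*$ the scalar denoiser $\bar\phi(\cdot\,|\,w)$ is odd. Since $\mathsf Z\stackrel{d}{=}-\mathsf Z$ and $\mathsf X_*\indep\mathsf Z$, the joint law satisfies $(\mathsf X_*,\mathsf X^{\mathrm{std}})\stackrel{d}{=}(-\mathsf X_*,-\mathsf X^{\mathrm{std}})$, so the MMSE estimator obeys $\phi(-y\,|\,w)=\mathbb E[\mathsf X_*\mid \mathsf X^{\mathrm{std}}=-y]=-\phi(y\,|\,w)$; oddness of $\bar\phi$ then follows because the Onsager correction in \eqref{eq:mmse-scalar} subtracts a term linear in $y$ and rescales. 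With oddness in hand, expand
\[
s\mathsf X^{\mathrm{sgn}}=ss_*\sqrt{w}\,\mathsf X_*+\sqrt{1-w}\,\tilde{\mathsf Z},
\qquad \tilde{\mathsf Z}\bydef s\mathsf Z\stackrel{d}{=}\mathsf Z,
\]
and split on $ss_*\in\{\pm1\}$. If $ss_*=+1$ the matched-sign computation gives $\mathbb E[\bar\phi(\mathsf X^{\mathrm{std}}\,|\,w)^2]$; if $ss_*=-1$, oddness produces $\bar\phi(s\mathsf X^{\mathrm{sgn}}\,|\,w)=-\bar\phi(\sqrt{w}\mathsf X_*+\sqrt{1-w}(-\tilde{\mathsf Z})\,|\,w)$, and relabeling $-\tilde{\mathsf Z}$ back to a standard Gaussian independent of $\mathsf X_*$ gives $-\mathbb E[\bar\phi(\mathsf X^{\mathrm{std}}\,|\,w)^2]$. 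Both cases combine to $\eqref{eq:DMMSE-mismatch}$.

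\textbf{Main obstacle.} The entire argument is driven by distributional substitutions and one invocation of the DMMSE identity, so there is no serious analytical difficulty. The only step requiring care is the oddness of $\bar\phi$ under symmetric priors: one must check that the scalar normalization $1-\tfrac{\sqrt{w}}{\sqrt{1-w}}\mathbb E[\mathsf Z\,\phi(\mathsf X^{\mathrm{std}}\,|\,w)]$ in \eqref{eq:mmse-scalar} is a strictly positive constant (so that dividing by it preserves oddness), which follows from \assumpref{assump:lip-MMSE} together with the identity $\mathbb E[\mathsf Z\,\phi(\mathsf X^{\mathrm{std}}\,|\,w)]=\sqrt{1-w}\,\mathbb E[\phi'(\mathsf X^{\mathrm{std}}\,|\,w)]$ from Stein's lemma, which was already exploited in the derivation of $\rho_{i,t}$ in Appendix~\ref{app:optim_iterate}.
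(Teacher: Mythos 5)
Your proof is correct and follows essentially the same route as the paper's: both rest on the DMMSE projection identity $\mathbb E[\mathsf X_*\,\bar\phi(\mathsf X^{\mathrm{std}}\,|\,w)]=\mathbb E[\bar\phi(\mathsf X^{\mathrm{std}}\,|\,w)^2]$, unwind the definition $\bar\phi(x\,|\,w,s)=\bar\phi(sx\,|\,w)$ for the matched case, and use oddness of $\phi$ (hence of $\bar\phi$) under a symmetric prior for the mismatched case. The only difference is that you additionally sketch the strict-positivity claim in (i), which the paper's proof leaves implicit.
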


\begin{proof}[Proof of \factref{fact:signed-dmmse-sign}]
We use the DMMSE projection identity (see \cite[Appendix~A.2]{dudeja2024optimality}): for the standard channel
\eqref{eq:ap-mirror-channel},
\begin{equation}\label{eq:DMMSE-proj}
\mathbb E\!\left[\mathsf X_*\,\bar\phi(\mathsf X^{\mathrm{std}}\,|\,w)\right]
=\mathbb E\!\left[\bar\phi(\mathsf X^{\mathrm{std}}\,|\,w)^2\right].
\end{equation}

\smallskip
\noindent\textbf{Proof of (i).}
If $s=s_*$, then $\bar\phi(\mathsf X^{\mathrm{sgn}}\,|\,w,s_*)=\bar\phi(s_*\mathsf X^{\mathrm{sgn}}\,|\,w)=\bar\phi(\mathsf X^{\mathrm{std}}\,|\,w)$, and thus
\[
\mathbb E\!\left[\mathsf X_*\,\bar\phi(\mathsf X^{\mathrm{sgn}}\,|\,w,s_*)\right]
=\mathbb E\!\left[\bar\phi(\mathsf X^{\mathrm{std}}\,|\,w)^2\right]\ge 0,
\]
where the equality follows from \eqref{eq:DMMSE-proj}.

\smallskip
\noindent\textbf{Proof of (ii).}
If $\mathsf X_*\stackrel{d}{=}-\mathsf X_*$, then $\phi(\cdot\,|\,w)$ is odd by symmetry of
\eqref{eq:ap-mirror-channel}. Moreover, $\bar\phi(\cdot\,|\,w)$ is odd as well, since by \eqref{eq:mmse-scalar}
it is an affine combination of $\phi(\cdot\,|\,w)$ and $x$ with coefficients depending only on $w$.
By \eqref{eq:ap-mirror-channel} and \eqref{eq:signed_DMMSE}, and using the oddness of $\bar\phi(\cdot|w)$, we have
\[
\bar\phi(\mathsf X^{\mathrm{sgn}}\,|\,w,s)
=\bar\phi(ss_*\mathsf X^{\mathrm{std}}\,|\,w)
=ss_*\,\bar\phi(\mathsf X^{\mathrm{std}}\,|\,w).
\]
Therefore, applying \eqref{eq:DMMSE-proj},
\[
\mathbb E\!\left[\mathsf X_*\,\bar\phi(\mathsf X^{\mathrm{sgn}}\,|\,w,s)\right]
=ss_*\mathbb E\!\left[\mathsf X_*\,\bar\phi(\mathsf X^{\mathrm{std}}\,|\,w)\right]
=ss_*\mathbb E\!\left[\bar\phi(\mathsf X^{\mathrm{std}}\,|\,w)^2\right],
\]
which is \eqref{eq:DMMSE-mismatch}.
\end{proof}

\begin{remark}\label{rem:signed-dmmse-sign}
Fact~\ref{fact:signed-dmmse-sign} is used to select $s$ in the signed DMMSE \eqref{eq:signed-family} from a global-sign estimate (cf.\ Appendix~\ref{app:pf-global_sign}):
\begin{itemize}[leftmargin=*]
\item \emph{Asymmetric priors.} The ground truth $s_*$ is consistently estimable. Choosing $s$ accordingly yields a positively correlated signed DMMSE update, as in \eqref{eq:DMMSE-match}.
\item \emph{Symmetric priors.} The ground truth $s_*$ is not identifiable from a single channel. By \eqref{eq:DMMSE-mismatch}, a sign mismatch ($s=-s_*$) induces only a global flip of the output, while leaving its squared magnitude unchanged.
\end{itemize}
\end{remark}

\end{document}